\documentclass[11pt]{article}
\usepackage{amsmath}
\usepackage{amsfonts}
\usepackage{amssymb}
\usepackage{amsthm}
\usepackage{mathtools}
\usepackage{bbm}
\usepackage{graphicx}
\usepackage{enumitem}
\usepackage{booktabs}
\usepackage{multirow}
\usepackage{array}
\usepackage{adjustbox}
\usepackage{subcaption}
\usepackage{natbib}
\usepackage{url}
\usepackage[colorlinks=true,linkcolor=blue,citecolor=blue,urlcolor=blue]{hyperref}
\usepackage{setspace}
\usepackage{titlesec}
\titlespacing*{\section}{0pt}{2.6ex plus .8ex minus .3ex}{1.6ex plus .3ex}
\titlespacing*{\subsection}{0pt}{2.2ex plus .6ex minus .3ex}{1.2ex plus .3ex}
\titleformat*{\section}{\large\bfseries}
\titleformat*{\subsection}{\normalsize\bfseries}
\usepackage{bibspacing}
\allowdisplaybreaks

\newtheorem{theorem}{Theorem}
\newtheorem{lemma}{Lemma}
\newtheorem{corollary}{Corollary}
\newtheorem{assumption}{Assumption}
\newtheorem{example}{Example}
\newtheorem{proposition}{Proposition}
\theoremstyle{definition}

\theoremstyle{plain}
\newtheorem{remark}{Remark}

\newcommand{\E}{\mathbb{E}}
\newcommand{\R}{\mathbb{R}}

\newcommand{\Var}{\mathbb{V}{\rm ar}}
\DeclareMathOperator*{\argmax}{arg\,max}
\DeclareMathOperator*{\argmin}{arg\,min}

\begin{document}

\def\spacingset#1{\renewcommand{\baselinestretch}%
{#1}\small\normalsize} \spacingset{1}

\title{Bayesian Change-Plane Regression}
\author{Yuki Ohnishi and Fan Li\\
  Department of Biostatistics, 
  Yale School of Public Health}
\date{}
\maketitle

\begin{abstract}
Change-plane regression represents treatment effect heterogeneity through an interpretable rule that assigns patients to two groups according to whether a linear score of baseline covariates crosses a threshold. Although easy to communicate, likelihood-based inference for the hard threshold is nonregular, invalidating standard large-sample approximations. We develop a Bayesian framework that replaces the sharp indicator with a smooth probit gate at a declared smoothing scale and treats the smoothed subgroup rule as the reported estimand. At a fixed smoothing scale, the posterior satisfies a misspecified Bernstein--von Mises theorem centered at the smoothed rule. Under a vanishing smoothing schedule, the posterior learns the boundary faster than the parametric rate, while within an explicit schedule window the regression block and treatment-effect contrast satisfy a Bernstein--von Mises theorem centered at the hard-threshold values with known-subgroup oracle efficiency. Thus, effect inference pays no first-order price for regularization. A decision-theoretic reporting protocol with consistency guarantees separates evidence of heterogeneity from reporting a boundary. Computation uses latent-variable augmentation, a great-circle elliptical slice sampler, and a normalized horseshoe prior. Simulations and an application to the PREMIER trial illustrate the method.
\end{abstract}
\noindent{\it Keywords: Change-plane regression; Bernstein--von Mises theorem; model misspecification; subgroup analysis; elliptical slice sampling; horseshoe prior}

\spacingset{1.1}

\section{Introduction}\label{sec:intro}

Randomized clinical trials target average treatment effects, yet clinically meaningful benefit often concentrates in a subpopulation defined by baseline characteristics, and guidance asks that subgroup claims rest on prespecified analyses rather than post hoc exploration \citep{ema2019subgroup, ich2019estimands}. A finding must also be communicable, so a clinician can evaluate the rule from a few measurements. Linear-index rules, classifying a patient by whether a single score crosses a cutoff, are richer than one-covariate splits, which are underpowered and ignore joint action, yet more actionable than fully flexible estimates of conditional effects.

We study the change-plane model
\[
Y_i \;=\; W_i^\top\beta_0 \;+\; X_i^\top\gamma_0\,\mathbbm{1}\{Z_i^\top\theta_0\ge 0\} \;+\; \varepsilon_i,
\qquad i=1,\dots,n,
\]
where $W_i$ has baseline effects $\beta_0$, $X_i$ encodes a treatment contrast with subgroup-specific effect $\gamma_0$, and the unit vector $\theta_0$ defines the subgroup through $Z_i^\top\theta_0$. The boundary likelihood is a step function of $\theta$. Scalar-threshold maximizers converge at the $n$ rate with compound Poisson limits \citep{Chan1993}, while the multivariate objective is piecewise constant on the sphere. Moreover, $\theta_0$ is unidentified when $\gamma_0=0$. Standard approximations are therefore fragile where calibration matters most, and two-stage Wald intervals conditional on an estimated boundary understate uncertainty.

We replace the indicator by the probit gate $\Phi(Z_i^\top\theta/\tau)$ and base inference on its deliberately misspecified working likelihood. At fixed $\tau>0$ the posterior targets a Kullback--Leibler pseudo-true parameter defining a $\tau$-regularized rule, which classifies patients near the boundary gradually. We report this rule with $\tau$, while the hard-threshold rule remains the scientific ideal whose distance from the declared target is quantified by our theory.

At fixed $\tau$ we prove posterior contraction and a misspecified Bernstein--von Mises theorem for the gated Gaussian mixture \citep{Kleijn2012}. When $\tau$ vanishes with $n$, Gaussian errors and a positive continuous boundary-score density yield parametric localization for the regression block and faster localization for the direction. On schedules vanishing faster than $n^{-1/2}$ but no faster than essentially $n^{-1}$, the regression-block posterior is centered at the hard-threshold value with known-subgroup precision. As in classical threshold regression \citep{Chan1993,Hansen2000}, fast boundary learning prevents contamination of effect inference. The direction retains a smoothing shift, an asymmetry operationalized by our reporting protocol.

Related work falls into four strands. The first is frequentist threshold and change-plane regression. At the scalar threshold \citep{Chan1993, Hansen2000}, \citet{SeoLinton2007} smoothed the objective for Gaussian limits, extending maximum score \citep{Manski1985, Horowitz1992}. Change-plane work spans identification-loss testing \citep{BanerjeeMcKeague2007, LeeSeoShin2011, Fan2017, Kang2017}, smoothed-indicator estimation \citep{Li2021, Zhang2022}, nonregular $n$-rate inference \citep{Kang2025}, grouped and multi-threshold regimes \citep{Wei2018, YuFan2021}, and margin-condition rates \citep{Mukherjee2023}, leaving boundary uncertainty to nonstandard limits or resampling, and effect inference exposed to post-selection error.

The second is Bayesian and quasi-Bayesian inference for nonregular models. Change-point and threshold-autoregression analyses trace to \citet{CarlinGelfandSmith1992} and \citet{GewekeTerui1993}, with posterior asymptotics \citep{GhosalGhoshSamanta1999}. Building on \citet{IbragimovHasminskii1981}, \citet{ChernozhukovHong2004} gave the scalar-threshold posterior an $n$-rate nonstandard limit, and \citet{Yu2012, Yu2015} proved interval validity and minimax efficiency. Laplace-type estimators \citep{ChernozhukovHong2003}, single-index priors \citep{Antoniadis2004}, and constraint relaxation \citep{Duan2019} complete it. We give the first posterior-asymptotic treatment of the multivariate change-plane boundary, trading the $n$-rate for a jointly Gaussian posterior with tunable, reported bias (Supplementary Corollary~\ref{cor:joint_bvm}).

The third is working likelihoods and Bayesian inference under misspecification, with a frequentist history \citep{White1982} and a Bayesian counterpart spanning posterior limit theory \citep{Kleijn2012, Miller2021}, generalized posteriors \citep{Bissiri2016}, robustness and calibration \citep{GrunwaldVanOmmen2017, Syring2019, Huggins2021}, and a decision-theoretic justification \citep{Mueller2013}. We invert it, engineering misspecification at one interpretable scale and pairing it with theory quantifying its distance to the ideal.

The fourth is subgroup analysis in practice. Single-covariate screening and simple interactions are underpowered and spurious-prone \citep{Wang2007, Lipkovich2017}. Decision-theoretic subgroup-reporting treatments \citep{Sivaganesan2011, BergerWangShen2014,Schnell2016} formalize when to make a claim, an aim our protocol shares on a joint posterior for an explicit boundary. Bayesian additive regression trees and causal forests \citep{Hill2011, HahnMurrayCarvalho2020} estimate effect surfaces that the change-plane posterior complements with an interpretable rule. Selection-aware inference \citep{GuoHe2021} and policy learning with linear rules \citep{Zhao2012owl, KitagawaTetenov2018, AtheyWager2021} target welfare, not inference on the rule.

Our contributions are threefold. First, the model is paired with latent augmentation, an invariant tuning-free great-circle slice sampler, a collapsed update for small $\tau$, a normalized horseshoe prior, and a semiparametric BART extension. Second, the theory spans fixed smoothing and the anisotropic vanishing-smoothing limit, including oracle-precision inference for hard-threshold effects. Third, a decision-theoretic protocol separates clinically meaningful heterogeneity from boundary reporting, with consistency guarantees for the Bayes action and patient-level membership probabilities.


\section{Bayesian Change-Plane Regression}\label{sec:cp_model}
\subsection{Model and identification}\label{sec:model_identification}

Suppose $O_i=(Y_i,W_i,X_i,Z_i)$, $i=1,\dots,n$, are independent and identically distributed, where $Y_i\in\R$, $W_i\in\R^p$ contains baseline regressors, $X_i\in\R^r$ encodes a treatment contrast, and $Z_i\in\R^q$ contains candidate effect modifiers. The hard-threshold model is
\begin{equation}
Y_i=W_i^\top\beta_0+X_i^\top\gamma_0\,\mathbbm 1\{Z_i^\top\theta_0\ge0\}+\varepsilon_i,
\label{eq:true_dgp}
\end{equation}
where $\E[\varepsilon_i\mid W_i,X_i,Z_i]=0$ and $\E[\varepsilon_i^2\mid W_i,X_i,Z_i]=\sigma_0^2\in(0,\infty)$. Rescaling invariance is removed by $\|\theta_0\|_2=1$, placing the direction on $\mathbb S^{q-1}=\{\theta\in\R^q:\|\theta\|_2=1\}$. Write $a=(\beta,\gamma,\sigma^2)\in\mathcal A=\R^p\times\R^r\times(0,\infty)$, $\eta=(a,\theta)$, $a_0=(\beta_0,\gamma_0,\sigma_0^2)$, and $\eta_0=(a_0,\theta_0)$.

The subgroup $\{z:z^\top\theta_0\ge0\}$ is a single interpretable score with a cutoff at zero. We recommend taking $X$ a subvector of the overlapping $W$, so that $\beta_0$ is the global effect of $X$ and $\gamma_0$ the incremental effect inside the subgroup. With binary $X\in\{0,1\}$ and $W=(1,\text{baseline covariates},X)^\top$, the treatment effect is $\delta$ outside and $\delta+\gamma_0$ inside.

The antipodal flip $\theta\mapsto-\theta$ swaps the subgroup and its complement. If the heterogeneity term cannot be absorbed into the baseline, the model is point identified. If it can, the identified set is the two-point flip orbit and the hemisphere convention with positive first coordinate selects one representative. We exclude a true direction on this hemisphere seam and diagnose proximity to it through the sign-invariant summaries of Section~\ref{sec:reporting_protocol}. Supplementary Assumption~\ref{asmp:design} and Proposition~\ref{prop:identification} give the formal conditions and result.

In the remainder of the paper we impose the convention
\begin{equation}
\theta_0\in\mathbb S^{q-1}_+:=\{\theta\in\mathbb S^{q-1}:\theta_1\ge0\},
\label{eq:hemisphere}
\end{equation}
with $\theta_{0,1}>0$. Thus 
\[
\Theta=\R^p\times\R^r\times\mathbb S^{q-1}_+\times(0,\infty)
\] 
is noncompact only in the regression block, as handled in Section~\ref{sec:theorems}. The true parameters are fixed in $n$, with only $\tau_n$ varying. Supplementary Section~\ref{sec:semiparametric_extension_details} replaces $W^\top\beta$ by a BART baseline $\mu_0(W)$ \citep{Chipman2010} as a misspecification safeguard. The main text uses the parametric baseline.

\subsection{The probit-gated working likelihood and the smoothed estimand}\label{sec:working_likelihood}

Likelihood-based inference for \eqref{eq:true_dgp} is nonregular, since the likelihood is a step function in $\theta$, flat between the sample's hyperplane arrangements and discontinuous across them. Rather than confront the resulting nonstandard asymptotics \citep{ChernozhukovHong2004,Yu2012} and computational fragility, we conduct inference under a deliberately smoothed working likelihood. With $\Phi$ and $\phi$ the standard normal distribution and density, fix a smoothing scale $\tau>0$ and define the probit gate $\pi_{\theta,\tau}(z)=\Phi(z^\top\theta/\tau)\in(0,1)$.
Following the data-augmentation tradition of \citet{Albert1993}, introduce latent variables and specify the conditional working model
\begin{gather}
T_i= Z_i^\top\theta+e_i,\qquad e_i\sim\mathcal N(0,\tau^2),\qquad D_i=\mathbbm 1\{T_i\ge0\},
\label{eq:augmentation}\\
Y_i \mid (D_i,W_i,X_i,Z_i) \sim \mathcal N\big(W_i^\top\beta+X_i^\top\gamma D_i,\ \sigma^2\big), \qquad
D_i \mid Z_i \sim \mathrm{Bernoulli}\big(\pi_{\theta,\tau}(Z_i)\big).
\label{eq:working_model}
\end{gather}
Integrating out $(T_i,D_i)$, the working conditional density of $Y_i$ is the two-component Gaussian mixture
\begin{equation}
p_{\eta,\tau}(y\mid w,x,z)
=\big(1-\pi_{\theta,\tau}(z)\big)\,\varphi_{\sigma}(y-w^\top\beta)
+\pi_{\theta,\tau}(z)\,\varphi_{\sigma}(y-w^\top\beta-x^\top\gamma),
\label{eq:working_mix}
\end{equation}
where $\varphi_\sigma(u)=(2\pi\sigma^2)^{-1/2}\exp\{-u^2/(2\sigma^2)\}$, with working conditional mean
$\mu_{\eta,\tau}(w,x,z)=w^\top\beta+x^\top\gamma\,\pi_{\theta,\tau}(z)$. The construction is a Bayesian constraint relaxation \citep{Duan2019}, replacing a sharp indicator by a kernel at scale $\tau$ and restoring smoothness in $\theta$ at the price of deliberate misspecification.

At any fixed $\tau>0$ the inferential target of the working posterior is the Kullback--Leibler pseudo-true parameter
\begin{equation}
\eta^\star(\tau)\in\argmin_{\eta\in\bar\Theta}K_\tau(\eta),\qquad
K_\tau(\eta)=\E_{P_0}\big[-\log p_{\eta,\tau}(Y\mid W,X,Z)\big],
\label{eq:pseudo_true}
\end{equation}
where the prior is supported on the compact effective space $\bar\Theta$ and a minimum exists by Supplementary Lemma~\ref{lem:exist_pseudotrue_fixed_tau_new}. We report $\eta^\star(\tau)$ as the $\tau$-regularized estimand, analogous to a bandwidth-indexed smoothing functional. Section~\ref{sec:asymptotic_tau_new} places it within $O(\tau)$ of the scientific ideal $\eta_0$ and identifies schedules giving exact hard-threshold inference for $a_0$, including $\gamma_0$. Accordingly, $\tau$ is a reported sample-size-informed design choice rather than an estimated parameter, with practical guidance in Section~\ref{sec:choice_tau}.

\subsection{Prior specification}\label{sec:priors}

The default independent priors are Gaussian for $(\beta,\gamma)$, inverse-gamma for $\sigma^2$, and uniform on $\mathbb S^{q-1}_+$ for $\theta$. For theory, the regression priors are truncated to $\|\beta\|_2,\|\gamma\|_2\le10^3$ and $\sigma^2\in[10^{-6},10^6]$, giving compact support $\bar\Theta=\mathcal K_a\times\mathbb S^{q-1}_+$. Compactness is needed because the working risk is noncoercive (Section~\ref{sec:posterior_contraction_BvM_fixed_tau_new}). The sampler uses untruncated conjugate updates, but its posterior does not approach these remote bounds. With many modifiers we instead project an unconstrained normalized horseshoe prior \citep{Carvalho2009} to the hemisphere. Supplementary Section~\ref{sec:horseshoe_construction} and Proposition~\ref{prop:horseshoe_prior} establish its global-scale-free angular central Gaussian law and the local prior positivity used by the theory away from coordinate axes.

\subsection{Posterior computation}\label{sec:inference}

Posterior simulation alternates conditionally conjugate updates for $(D,T,\beta,\gamma,\sigma^2)$ with a slice-sampling update for the boundary direction. The complete cycle, including the horseshoe blocks, is summarized in Supplementary Algorithm~\ref{alg:gibbs}, and we describe here the two components specific to the change-plane structure.

\paragraph{Great-circle slice update for $\theta$.}
Given $T$, $p(\theta\mid T,\cdot)\propto\mathbbm 1\{\theta\in\mathbb S^{q-1}_+\}\exp\{-\sum_i(T_i-Z_i^\top\theta)^2/(2\tau^2)\}$. We update it with a great-circle analogue of elliptical slice sampling \citep{Murray2010,Neal2003}. A tangent-space Gaussian direction defines $\theta(\varphi)=\theta\cos\varphi+u\sin\varphi$, searched by randomized-bracket shrinkage with the hemisphere constraint in the target. For an absorbable specification we instead sample the full sphere and canonicalize each draw, avoiding the spurious seam mode left by hard truncation. Supplementary Algorithm~\ref{alg:gibbs} gives the steps. Its randomized bracket is essential for exchangeability of the cut location, requires no tuning, and avoids the stationarity bias of a fixed antipodal cut.

\begin{proposition}[Invariance of the great-circle slice update]\label{prop:ess_invariance}
The Markov transition defined by the update just described, stated precisely as steps 1--3 of Supplementary Algorithm~\ref{alg:gibbs}, leaves the conditional posterior $p(\theta\mid T,\cdot)$ on $\mathbb S^{q-1}_+$ invariant.
\end{proposition}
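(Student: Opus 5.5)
Write the target as $\pi(\theta)\propto L(\theta)\,\mathbbm 1\{\theta\in\mathbb S^{q-1}_+\}$ with respect to the uniform (Haar) measure $\mu$ on $\mathbb S^{q-1}$, where $L(\theta)=\exp\{-\sum_i(T_i-Z_i^\top\theta)^2/(2\tau^2)\}$. I would view one update as a composition of auxiliary-variable moves, in the style of \citet{Murray2010} and \citet{Neal2003}, each preserving a suitable joint law. The joint law has four coordinates. The first is $\theta\sim\pi$. The second is a slice level $s\mid\theta\sim\mathrm{Unif}(0,L(\theta))$. The third is a tangent direction $u\mid\theta$, standard Gaussian in the tangent space $T_\theta\mathbb S^{q-1}$, normalized to the unit tangent circle. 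The fourth is an angle $\varphi_0$ representing the current point's position on the great circle. The joint density of $(\theta,s,u)$ is then proportional to $\mathbbm 1\{0<s<L(\theta)\}\mathbbm 1\{\theta\in\mathbb S^{q-1}_+\}$ times the uniform density of $u$ on the unit tangent sphere.

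The key geometric step is to reparameterize the current point. Rather than $(\theta,u)$ with $\theta$ at angle $0$, I take the pair $(\theta,u)$ as a point of the Stiefel manifold $V_2(\mathbb R^q)$ of orthonormal 2-frames, and this point defines the great circle $\theta(\varphi)=\theta\cos\varphi+u\sin\varphi$. I then show that the map $(\theta,u,\varphi)\mapsto(\theta(\varphi),\theta'(\varphi))$ is a rotation within the plane spanned by the frame. It therefore preserves the $O(q)$-invariant measure on $V_2$, whose disintegration is exactly Haar on $\theta$ times uniform on unit tangent vectors. Consequently, if the proposal picks the new angle uniformly among angles on the circle with $L(\theta(\varphi))>s$ and $\theta(\varphi)\in\mathbb S^{q-1}_+$, the invariant measure on frames times the slice indicator is preserved.

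The randomized bracket is where the real work lies, and I expect it to be the main obstacle. I would draw the initial bracket location $\varphi_{\max}\sim\mathrm{Unif}(0,2\pi)$ and set the bracket $[\varphi_{\max}-2\pi,\varphi_{\max}]$, then apply shrinkage toward $0$. I would prove detailed balance with respect to the uniform measure on the slice set $S=\{\varphi:L(\theta(\varphi))>s,\ \theta(\varphi)\in\mathbb S^{q-1}_+\}$, using Neal's argument. The argument needs a map of the random cut and all rejected points that transports the shrinkage sequence run from $0$ to $\varphi'$ into the one run from $\varphi'$ back to $0$, with equal probability density. Because the cut is uniform, it is exchangeable under re-centering the circle at $\varphi'$. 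The sequence of brackets generated from either endpoint then has the same law, since each rejected point shrinks the bracket to the side containing the current point. This is precisely where a fixed antipodal cut fails: it is not translation-invariant on the circle, and the reverse move would face a different bracket, which explains the bias noted in the text. The hemisphere constraint merely removes points from $S$. Because $S$ need not be connected, I would rely on the fact that Neal's shrinkage argument requires only a measurable set, not an interval.

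Finally, I would marginalize. Detailed balance on the circle, for fixed $(s,\text{frame})$, together with the invariance of the frame measure under rotation, gives invariance of the joint law. Integrating out $s$, $u$ and the cut leaves $\pi$ invariant on $\mathbb S^{q-1}_+$. I would also note that the loop terminates almost surely, since $\varphi=0$ always lies in $S$ and the bracket shrinks toward it.
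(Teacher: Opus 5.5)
Your proposal is correct and follows essentially the same route as the paper: an extended target on the unit tangent bundle (your Stiefel frames), invariance of that frame measure under the in-plane rotation $R_\psi$, and reduction on each great-circle orbit to the randomized-bracket shrinkage transition of \citet{Murray2010}, whose detailed balance is unaffected by the hemisphere indicator, followed by marginalization. Two small points to tighten. First, the ``therefore'' in your rotation step needs the observation that $R_\psi$ is right multiplication by a fixed element of $O(2)$, which commutes with the left $O(q)$-action and so preserves the unique invariant probability on $V_2$; the paper argues this via Haar measure on the frames of each plane. Second, termination needs an open neighborhood of $\varphi=0$ in the acceptable set, which holds because $s<L(\theta)$ and the current state lies in the open hemisphere almost surely, and not merely $0\in S$.
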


The proof in Supplementary Section~\ref{proof:prop:ess_invariance} reduces the update to the invariant elliptical-slice transition of \citet{Murray2010} using the uniform angular reference measure induced by the tangent draw and geodesic flow.

\paragraph{Collapsed update for small $\tau$.}
The conditional step given $T$ has scale $\tau n^{-1/2}$, whereas the marginal posterior scale is $(\tau/n)^{1/2}$, producing a relative $O(\tau^{1/2})$ step. Our collapsed update instead uses
\[
p(\theta\mid Y,a)\propto\mathbbm 1\{\theta\in\mathbb S^{q-1}_+\}\prod_i p_{\eta,\tau}(Y_i\mid W_i,X_i,Z_i)
\]
and then refreshes $(D,T)$ exactly (Supplementary Algorithm~\ref{alg:gibbs}, step 6$'$). This invariant block removes the $O(\tau^{-1/2})$ slowdown at about $2.6$ times the sweep cost. We recommend it throughout. The simulations retain augmented wide-gate fits to document their diagnostic failures.

By default we run four dispersed chains and monitor convergence with rank-normalized $\widehat R$ and effective sample sizes on sign-invariant functionals of $\theta$, such as $(z^\top\theta)^2$ and the entries of $\theta\theta^\top$, alongside $(\beta,\gamma,\sigma^2)$, paying particular attention to the smallest-$\tau$ settings. The experiments of Section~\ref{sec:simulations} follow this workflow.
\section{Asymptotic Theory}
\label{sec:theorems}

At fixed $\tau$, the posterior contracts at the parametric rate around $\eta^\star(\tau)$ and admits a misspecified Bernstein--von Mises (BvM) approximation \citep{Kleijn2012,Bissiri2016}. The main contribution is the vanishing-smoothing analysis $\tau=\tau_n\downarrow0$. Boundary information accumulates at rate $n/\tau_n$, so the boundary and regression blocks are normalized by $(n/\tau_n)^{1/2}$ and $n^{1/2}$. Up to polylogarithmic factors, with precise forms in the theorems, the relevant window is
\begin{equation}
\frac{\log n}{n}\;\ll\;\tau_n\;\ll\;\frac{1}{\sqrt n}
\label{eq:window}
\end{equation}
the regression-block posterior is centered at $a_0$ with known-subgroup oracle precision, while the faster boundary block retains a smoothing shift. Section~\ref{sec:reporting_protocol} operationalizes this division between exact effect inference and regularized boundary inference.

Let $P_0$ be the true law of $O=(Y,W,X,Z)$, $\E_{P_0}$ its expectation, and $\mathcal D_n=(O_1,\dots,O_n)$. Supplementary Assumption~\ref{asmp:A_fixed_tau}\ref{subasmp:A_fixed_chart} supplies a local $C^2$ chart $\vartheta\mapsto\theta(\vartheta)$ for the hemisphere, with nondegenerate Jacobian. Write $\tilde\eta=(a,\vartheta)\in\R^d$, $d=p+r+q$, and use subscripts $a,\vartheta$ for its blocks. Also let $\ell_\tau(\eta;O)=\log p_{\eta,\tau}(Y\mid W,X,Z)$ and $\tilde\ell_\tau(\tilde\eta;O)=\ell_\tau((a,\theta(\vartheta));O)$. The fixed-prior posterior is
\begin{equation*}
    \Pi^{(\tau)}(A\mid\mathcal D_n)
    =
    \frac{\int_A \prod_{i=1}^n p_{\eta,\tau}(Y_i\mid W_i,X_i,Z_i)\,\Pi(d\eta)} {\int_\Theta \prod_{i=1}^n p_{\eta,\tau}(Y_i\mid W_i,X_i,Z_i)\,\Pi(d\eta)}.
\end{equation*}
Supplementary Assumption~\ref{asmp:design} is maintained. For sequences, $x_n\lesssim y_n$ means $x_n\le Cy_n$ for $C$ independent of $n$, and $x_n\asymp y_n$ means both directions. The symbols $\|\cdot\|_{\mathrm{TV}},\Rightarrow,\|\cdot\|_{\psi_1},\|\cdot\|_{\psi_2}$ denote total variation, weak convergence, and sub-exponential and sub-Gaussian Orlicz norms \citep{Vershynin_2018}.

\subsection{Fixed smoothing scale}
\label{sec:posterior_contraction_BvM_fixed_tau_new}

Under the five conditions in Supplementary Assumption~\ref{asmp:A_fixed_tau}, the compact-prior posterior contracts at the parametric rate around $\eta^\star(\tau)$ and admits a total-variation misspecified BvM approximation (Supplementary Theorem~\ref{thm:mis_contraction_fixed_tau_new}). This verifies the conditions of \citet{Kleijn2012} for the gated Gaussian mixture. Supplementary Lemma~\ref{lem:exist_pseudotrue_fixed_tau_new} shows that $K_\tau$ is finite, continuous, and minimized over $\bar\Theta$ but is noncoercive on $\Theta$, so compactness cannot be removed. We assume the minimizer is unique with standard local regularity and write its chart image as $\tilde\eta^\star$. Fixed-$\tau$ credible sets require an information equality for exact calibration. Supplementary Remark~\ref{rem:sandwich_note} gives the sandwich caveat and its disappearance for the regression block as $\tau$ vanishes.

\begin{remark}[Scope of the theory]
\label{rem:scope_theory}
The theory covers compact regression-block priors and boundary priors with continuous positive local density, including the uniform law and the normalized horseshoe away from coordinate axes (Supplementary Proposition~\ref{prop:horseshoe_prior}). The sampler and protocol still apply, without contraction or BvM guarantees, to two settings used later. These are a BART baseline and the sparse high-dimensional regime, where the pseudo-true direction lies on axes at which the horseshoe envelope diverges.
\end{remark}

\subsection{Anisotropic theory under vanishing smoothing}
\label{sec:asymptotic_tau_new}

We now let $\tau=\tau_n\downarrow0$ and relate the working posterior to the hard-threshold parameter $\eta_0$. Two obstacles distinguish this limit from fixed-$\tau$ theory. Only observations in the shrinking layer $|Z^\top\theta_0|=O(\tau)$ inform the boundary at leading order, so the layer's probability and its ability to distinguish the two mixture components determine the information rate. Local curvature is also insufficient by itself because the changing working risk could develop competing minima away from $\eta_0$. Our conditions separate these local and global requirements. Throughout this subsection $U_\theta=Z^\top\theta$, $N_0\subset\mathbb S^{q-1}_+$ is a closed neighborhood of $\theta_0$ contained, with its chart preimage, in the hemisphere interior, and $\mathcal A_\delta\subset\mathrm{int}\,\mathcal K_a$ is a fixed compact neighborhood of $a_0$.

The local conditions make the boundary layer regular and informative (Supplementary Assumption~\ref{asmp:V}). Gaussian errors provide the change-of-measure identities used to compare the true component with the working mixture (\ref{subasmp:V_gauss}), while bounded design and nondegenerate oracle designs provide uniform derivative bounds and positive regression-block information (\ref{subasmp:V_bounded}). The margin condition gives $U_\theta$ a positive continuous density near zero with continuous conditional moments (\ref{subasmp:V_margin}), so a layer of width $\tau$ contains probability of order $\tau$. The wedge condition converts a directional perturbation into sign disagreement of order $\|\theta-\theta_0\|$ (\ref{subasmp:V_wedge}). Finally, positive boundary information ensures that observations in this layer distinguish the mixture components in every tangent direction (\ref{subasmp:V_boundary_info}). Its weight $\chi^2_{\mathrm B}$, defined in Supplementary Equation~\eqref{eq:chisq_weight}, integrates their squared normalized discrepancy through the layer and is positive exactly when $x^\top\gamma_0\neq0$. Together these conditions produce boundary information of order $\tau^{-1}$, bounded cross-information, and order-one regression information, which is the anisotropy driving the results below.

The remaining conditions turn this local geometry into a global posterior statement. An anchor confines all small-$\tau$ minimizers to $\mathcal A_\delta\times N_0$ (\ref{subasmp:V_anchor}), and seam identifiability rules out near-minima between the outer risk-localization region and the inner quadratic region (\ref{subasmp:V_seam}). These are computable population conditions on $K_\tau$, with the seam playing the role of multivariate profile identifiability in threshold regression \citep{Hansen2000}. Local prior positivity then supplies mass near $\tilde\eta_0$ (\ref{subasmp:V_prior}). Supplementary Assumption~\ref{asmp:V} states all conditions formally. Supplementary Section~\ref{sec:pseudo_true_discussion} verifies the margin, information, and wedge conditions analytically for a canonical design class and checks the anchor and seam by quadrature for the simulation designs.

This is the generic margin regime $\alpha=1$ \citep{Tsybakov2005,Audibert2007,Mukherjee2023}, arising when the boundary-score density is bounded and nonvanishing at the cutoff. Vanishing-density geometries remain open (Supplementary Remark~\ref{rem:nonregular_rates}). Gaussianity is essential to our change-of-measure argument. Boundedness should relax to sub-Gaussian designs by truncation, though we do not prove that extension.

The proof route uses two lemmas in Supplementary Section~\ref{sec:proofs_tau_n_new}. The smoothed and hard risks differ uniformly by $O(\tau)$, while hard excess risk is quadratic in $a-a_0$ and linear in the boundary error. This first localizes minimizers to $\|a^\star-a_0\|\lesssim\sqrt\tau$ and $\|\theta^\star-\theta_0\|\lesssim\tau$ (Supplementary Lemma~\ref{lem:risk_localization}). On that neighborhood, boundary information is $O(\tau^{-1})$, regression information is $O(1)$, and cross-information and the boundary score at $\eta_0$ are bounded (Supplementary Lemma~\ref{lem:aniso_information}). A Newton step then gives $O(\tau)$ bias in both blocks. The boundary score need not vanish. The rate comes from the diverging boundary information.

\begin{proposition}[The pseudo-true path]
\label{prop:eta_star_bias_rate_new}
Under Supplementary Assumption~\ref{asmp:V} there are $\tau_1>0$ and $C<\infty$ such that for all $\tau\in(0,\tau_1]$ any minimizer $\eta^\star(\tau)$ of $K_\tau$ satisfies 
\begin{equation*}
    \|a^\star(\tau)-a_0\|\le C\tau, \qquad \|\theta^\star(\tau)-\theta_0\|\le C\tau.
\end{equation*}
Moreover $\eta^\star(\tau)$ is eventually unique: for small $\tau$ the risk $K_\tau$ has exactly one stationary point in $\mathcal A_\delta\times N_0$, and $\tau\mapsto\tilde\eta^\star(\tau)$ is continuous.
\end{proposition}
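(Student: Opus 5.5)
The proof follows the route sketched before the statement. It first localizes any minimizer crudely, then sharpens the localization by a Newton step exploiting the anisotropic information.

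\emph{Step 1 (coarse localization).} By the anchor condition \ref{subasmp:V_anchor}, every minimizer $\eta^\star(\tau)$ with $\tau$ small lies in $\mathcal A_\delta\times N_0$. On that set, Supplementary Lemma~\ref{lem:risk_localization} gives $\sup|K_\tau-K_0|\lesssim\tau$, where $K_0$ is the hard-threshold risk. It also gives $K_0(\eta)-K_0(\eta_0)\gtrsim\|a-a_0\|^2+\|\theta-\theta_0\|$. Since $K_\tau(\eta^\star)\le K_\tau(\eta_0)$, these combine to $\|a^\star-a_0\|\lesssim\sqrt\tau$ and $\|\theta^\star-\theta_0\|\lesssim\tau$. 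The seam condition \ref{subasmp:V_seam} then excludes near-minima in the annulus between this region and the inner quadratic neighborhood $\mathcal N_\tau=\{\|a-a_0\|\le c_1,\ \|\vartheta-\vartheta_0\|\le c_2\tau\}$ in chart coordinates. So all minimizers, and indeed all stationary points of $K_\tau$ in $\mathcal A_\delta\times N_0$, lie in $\mathcal N_\tau$.

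\emph{Step 2 (Newton step with anisotropic scaling).} Work in the chart with $\tilde\eta=(a,\vartheta)$. Introduce the rescaling $D_\tau=\mathrm{diag}(I,\tau^{-1/2}I)$ and set $H_\tau(\tilde\eta)=D_\tau^{-1}\nabla^2K_\tau(\tilde\eta)D_\tau^{-1}$. Supplementary Lemma~\ref{lem:aniso_information} gives, uniformly on $\mathcal N_\tau$, an $aa$-block of order one and positive definite, and a $\vartheta\vartheta$-block equal to $\tau^{-1}$ times a positive definite matrix built from $\chi^2_{\mathrm B}$. The cross block is $O(1)$, hence $O(\tau^{1/2})$ after scaling. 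It follows that $H_\tau$ is uniformly positive definite on $\mathcal N_\tau$, with eigenvalues in $[\lambda,\Lambda]$. The score at $\tilde\eta_0$ is also bounded: $\|\nabla_aK_\tau(\tilde\eta_0)\|\lesssim\tau$, because the mean is smoothed only on a layer of mass $O(\tau)$ where the integrand is bounded. The boundary score $\|\nabla_\vartheta K_\tau(\tilde\eta_0)\|$ is $O(1)$. Write the stationarity equation $0=\nabla K_\tau(\tilde\eta^\star)=\nabla K_\tau(\tilde\eta_0)+\bar H(\tilde\eta^\star-\tilde\eta_0)$, where $\bar H$ is the integrated Hessian along the segment, which lies in the convex set $\mathcal N_\tau$. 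Solving by block elimination gives $\vartheta^\star-\vartheta_0=O(\tau\cdot\|\nabla_\vartheta K_\tau\|)+O(\tau\|\nabla_aK_\tau\|)=O(\tau)$. Substituting back, $a^\star-a_0=-\bar H_{aa}^{-1}\{\nabla_aK_\tau(\tilde\eta_0)+\bar H_{a\vartheta}(\vartheta^\star-\vartheta_0)\}=O(\tau)$. This gives the two $C\tau$ bounds after transferring through the chart, whose Jacobian is nondegenerate.

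\emph{Step 3 (uniqueness and continuity).} Uniform positive definiteness of $H_\tau$ on the convex set $\mathcal N_\tau$ makes $K_\tau$ strictly convex there, so it has at most one stationary point in $\mathcal N_\tau$. By Step 1 there are none in the rest of $\mathcal A_\delta\times N_0$, and existence follows from Supplementary Lemma~\ref{lem:exist_pseudotrue_fixed_tau_new}. For continuity, note that $(\tilde\eta,\tau)\mapsto\nabla K_\tau$ is $C^1$ for $\tau>0$, since the probit gate is smooth in $\tau$ and the design is bounded. The Hessian is nonsingular at $\tilde\eta^\star(\tau)$, so the implicit function theorem gives a continuous local solution branch. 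Uniqueness identifies this branch with $\tilde\eta^\star(\tau)$.

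The main obstacle is the uniform Hessian control in Step 2. The $\vartheta\vartheta$ block must be shown to be $\asymp\tau^{-1}$ and positive definite uniformly over the whole $O(\tau)$-neighborhood in $\vartheta$, not just at $\tilde\eta_0$. This requires expanding the layer integrals $\int\phi(u/\tau)^2(\cdots)$ via the margin density and the Gaussian change of measure, while simultaneously letting $\theta$ move by $O(\tau)$ inside the layer. I would handle this by substituting $u=\tau s$ and using the continuity of the conditional moments in \ref{subasmp:V_margin}. Under this rescaling, a shift $\vartheta-\vartheta_0=O(\tau)$ becomes an $O(1)$ translation in $s$. Strict positivity of the limit then comes from \ref{subasmp:V_boundary_info}, together with compactness of the set of translations.
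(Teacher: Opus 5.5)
Your route is the paper's: the anchor condition plus Lemma~\ref{lem:risk_localization} give the crude $\sqrt\tau$ and $\tau$ localization, and the seam condition pushes minimizers into the anisotropic tube. An integrated-Hessian Newton step is then solved by block elimination using Lemma~\ref{lem:aniso_information}, followed by strict convexity on the tube and the implicit function theorem. Your Schur-complement elimination and the paper's use of the crude bound $\|x\|\lesssim\sqrt\tau$ in the $\vartheta$-equation are interchangeable.

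\textbf{The gap.} The step that does not hold is the claim closing Step~1 and reused in Step~3: that $K_\tau$ has no stationary points in $(\mathcal A_\delta\times N_0)\setminus\mathcal N_\tau$.
\begin{itemize}
\item Every ingredient of Step~1 compares risk values. Lemma~\ref{lem:risk_localization} enters only through $K_\tau(\eta^\star)\le K_\tau(\eta_0)$. Assumption~\ref{asmp:V}\ref{subasmp:V_seam} only says $K_\tau(\eta)\ge K_\tau(\eta_0)+c_s\tau$ on the annulus.
\item These facts exclude near-minima, as your own wording says. They say nothing about a saddle, a local maximum, or a non-global local minimum of $K_\tau$ at higher risk.
\item What your argument actually proves is that the minimizer is unique and is the only stationary point in $\mathcal N_\tau$. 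That is enough for the two $C\tau$ bounds, and, with the implicit function theorem, for continuity of $\tau\mapsto\tilde\eta^\star(\tau)$.
\item ``Exactly one stationary point in $\mathcal A_\delta\times N_0$'' needs a gradient statement off the tube, e.g.\ $\|\nabla K_\tau\|$ bounded away from zero there. That requires transferring derivative information, not just values, from $K_\infty$ to $K_\tau$, and Assumption~\ref{asmp:V} does not supply it directly.
\item The paper's proof has the same scope. Its strict-convexity argument is carried out only on the anisotropic neighborhood, so it also establishes uniqueness of the stationary point there, not on all of $\mathcal A_\delta\times N_0$.
\end{itemize}

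\textbf{A smaller point on your ``main obstacle''.} Assumption~\ref{asmp:V}\ref{subasmp:V_boundary_info} is a statement at zero shift, and the shift $Z^\top(\theta-\theta_0)/\tau$ depends on $Z$. On the sign-mismatch part of the layer, the term $t\phi(t)\{d_0-\Phi(t)\}\chi^2_t$ of the pointwise curvature $\E_{f_{d_0}}[-\partial_t^2\log p_t]$ (at $a=a_0$) is negative. So positivity at a nonzero shift needs its own argument; that assumption plus compactness of the translations does not deliver it.

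The paper avoids this by keeping the tube thin. Third-derivative envelopes give $\|I_{\tau,\vartheta\vartheta}(\tilde\eta)-I_{\tau,\vartheta\vartheta}(\tilde\eta_0)\|\le C(C_1+\delta_1)\tau^{-1}$, which is dominated by $\tfrac12c\tau^{-1}$ once $C_1$ and $\delta_1$ are small. This is admissible because the seam condition holds for every pair $C_1<c_0$. Taking your $c_1,c_2$ small and arguing by continuity at zero shift works the same way.
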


Thus the regularized rule is within $C\tau$ of the hard-threshold rule in both blocks. The proof gives a boundary constant driven by asymmetry of the discrimination profile, and no general improvement follows without symmetry.

We can now state the two main theorems. Let $\tau_n\downarrow0$, abbreviate $\eta_n^\star=\eta^\star(\tau_n)$, $\tilde\ell_n=\tilde\ell_{\tau_n}$ and $K_n=K_{\tau_n}$, and define
\[
D_n=\mathrm{diag}\big(n^{-1/2}I_{p+r+1},\ (\tau_n/n)^{1/2}I_{q-1}\big).
\]

\begin{theorem}[Vanishing $\tau_n$: anisotropic posterior contraction]
\label{thm:tau_contraction_new}
Let Supplementary Assumption~\ref{asmp:V} hold and let $\tau_n=n^{-\varrho}$ for some $\varrho\in(0,1)$. Then there exists $M_0<\infty$ such that
\[
\Pi^{(\tau_n)}\Big(\|a-a_n^\star\|\ge M_0\,\tfrac{\log n}{\sqrt n}
\ \ \text{or}\ \
\|\theta-\theta_n^\star\|\ge M_0\,\sqrt{\tau_n}\,\tfrac{\log n}{\sqrt n}
\,\Big|\,\mathcal D_n\Big)
\xrightarrow[n\to\infty]{P_0}0 .
\]
Combining with Proposition~\ref{prop:eta_star_bias_rate_new}, the posterior locates the hard-threshold parameter at the rates
\[
\|a-a_0\|\lesssim \tfrac{\log n}{\sqrt n}+\tau_n,
\qquad
\|\theta-\theta_0\|\lesssim \tau_n+\sqrt{\tau_n}\,\tfrac{\log n}{\sqrt n},
\]
with posterior probability tending to one. The same conclusions hold for general sequences $\tau_n\downarrow0$ with $n\tau_n/(\log n)^{6}\to\infty$.
\end{theorem}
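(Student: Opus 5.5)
The proof follows the classical Ghosal--Ghosh--van der Vaart template, adapted to a misspecified, anisotropic setting with a moving center $\eta_n^\star$. We work in the chart $\tilde\eta=(a,\vartheta)$ and in the rescaled coordinates $h=D_n^{-1}(\tilde\eta-\tilde\eta_n^\star)$. Write the log-likelihood ratio as
\[
L_n(\tilde\eta)=\sum_{i=1}^n\{\tilde\ell_n(\tilde\eta;O_i)-\tilde\ell_n(\tilde\eta_n^\star;O_i)\}.
\]
The posterior mass of the complement event $B_n=\{\|D_n^{-1}(\tilde\eta-\tilde\eta_n^\star)\|\ge M_0\log n\}$ is a ratio. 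We bound its numerator from above and its denominator from below. For the denominator, local prior positivity (\ref{subasmp:V_prior}) together with a second-order expansion of $L_n$ on the ellipsoid $\{\|h\|\le 1\}$ is used. This expansion relies on the bounded information blocks of Supplementary Lemma~\ref{lem:aniso_information}. It shows that
\[
\int e^{L_n}\,d\Pi\ \ge\ \Pi(\|h\|\le1)\,e^{-C\log n}\ \gtrsim\ n^{-(p+r+1)/2}(\tau_n/n)^{(q-1)/2}e^{-C\log n}\ \ge\ e^{-C'\log n}
\]
with probability tending to one.

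The numerator is handled in three regions. \emph{(i) Outer region} $\tilde\eta\notin\mathcal A_\delta\times N_0$. The anchor and seam conditions (\ref{subasmp:V_anchor}, \ref{subasmp:V_seam}), combined with the uniform $O(\tau)$ smoothed-versus-hard risk comparison, give
\[
K_n(\eta)-K_n(\eta_n^\star)\ge c>0.
\]
A uniform law of large numbers over the compact $\bar\Theta$ then makes $e^{L_n}\le e^{-cn/2}$ there. The ULLN holds because bounded design and Gaussian errors make $\ell_\tau$ sub-exponential and Lipschitz in $\eta$, with constant $O(\tau^{-1})$, which only adds a $\log$ to the covering numbers. \emph{(ii) Intermediate region.} Between the inner quadratic neighbourhood and the outer region, the seam condition again supplies a uniform gap. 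Here it is combined with the anisotropic curvature lower bound
\[
K_n(\eta)-K_n(\eta_n^\star)\gtrsim \|a-a_n^\star\|^2+\tau_n^{-1}\|\theta-\theta_n^\star\|^2\wedge\|\theta-\theta_n^\star\|,
\]
where the linear branch comes from the wedge condition through hard excess risk. \emph{(iii) Inner shell.} On $M_0\log n\le\|h\|\le r_n$, a Taylor expansion around $\tilde\eta_n^\star$ gives a drift of $-c\|h\|^2$, because the Hessian blocks in $D_n$-scaling are bounded below by the positive boundary information (\ref{subasmp:V_boundary_info}) and by the oracle regression information. This holds uniformly on the shell by continuity of the conditional moments (\ref{subasmp:V_margin}).

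The key step, and the main obstacle, is controlling the centered empirical process
\[
\mathbb G_n\big[\tilde\ell_n(\tilde\eta)-\tilde\ell_n(\tilde\eta_n^\star)\big]
\]
uniformly on shells $\{2^{j}\le\|h\|\le2^{j+1}\}$ at the anisotropic scale. The boundary score is only nonnegligible on the layer $|U_\theta|\lesssim\tau_n$. There it has size $\tau_n^{-1}$ on a set of probability $O(\tau_n)$, so its variance is $O(\tau_n^{-1})$. This matches the $(\tau_n/n)^{1/2}$ scaling, but its envelope is unbounded relative to that variance. I would apply Bernstein's inequality with the $\psi_1$ norm plus chaining, for example Talagrand/Bousquet, to the scaled increments. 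The result is a fluctuation bound of $O_P(\|h\|\log n)$, provided the Bernstein range term $\tau_n^{-1}\cdot(\tau_n/n)^{1/2}\|h\|\,\log n$ times the chaining factor is dominated. After peeling, this range term imposes exactly $n\tau_n\gg(\log n)^{6}$: the power of $\log$ collects factors from the Gaussian tails, the entropy, the peeling and the square. Against the drift $-c\|h\|^2$ this yields $L_n\le -c\|h\|^2/2$ once $\|h\|\ge M_0\log n$. Integrating against the prior over the shells gives a numerator of order $e^{-cM_0^2(\log n)^2/2}$, which beats the denominator bound $e^{-C'\log n}$. The case $\tau_n=n^{-\varrho}$ is a special instance.

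Finally, the second display follows from the first by the triangle inequality together with Proposition~\ref{prop:eta_star_bias_rate_new}, which gives $\|a_n^\star-a_0\|,\|\theta_n^\star-\theta_0\|\le C\tau_n$. The chart Jacobian is nondegenerate, so chart and sphere distances are equivalent on $N_0$.
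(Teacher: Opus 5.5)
Your architecture matches the paper's proof: a denominator bound on the $D_n$-ellipsoid, a fixed-gap outer zone, an intermediate zone where the hard excess risk is linear in $\|\theta-\theta_0\|$ with the seam condition bridging the $O(\tau_n)$ shell, an inner tube with anisotropic quadratic curvature, Bernstein bounds with a $\log n$ loss, peeling, and the triangle inequality with Proposition~\ref{prop:eta_star_bias_rate_new}. However, the outer-region separation in (i) does not follow from what you cite.

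\begin{itemize}
\item The anchor condition only locates the \emph{exact} minimizers of $K_\tau$ for each small $\tau$. It gives no lower bound on $K_n(\eta)-K_n(\eta_n^\star)$ for near-minimizers that is uniform as $\tau_n\downarrow0$.
\item The seam condition only concerns the shell $C_1\tau\le\|\theta-\theta_0\|\le c_0\tau$ inside $\mathcal A_\delta\times N_0$.
\item The $O(\tau)$ smoothed-versus-hard comparison (Lemma~\ref{lem:risk_localization}(a)) holds only on $\mathcal A_\delta\times N_0$, because the margin condition is assumed only for $\theta\in N_0$.
\end{itemize}

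So nothing you invoke rules out points $\eta_n\notin\mathcal A_\delta\times N_0$ with $K_n(\eta_n)-K_n(\eta_n^\star)\to0$. The missing ingredient is identification of the hard-threshold risk, obtained in three steps.
\begin{itemize}
\item First, $K_\tau\to K_\infty$ uniformly on all of $\bar\Theta$, without a rate. This follows from Dini's theorem applied to $\theta\mapsto\E[\Phi(-|U_\theta|/\tau)]\downarrow P_0(U_\theta=0)=0$, using Assumption~\ref{asmp:design}\ref{asmp:design_nohyper}.
\item Second, $K_\infty$ is continuous with unique minimizer $\eta_0$ on the nondegenerate-gate class under the hemisphere convention (Proposition~\ref{prop:identification}).
\item Third, $\eta_n^\star\to\eta_0$.
\end{itemize}
Together these give $\inf_n\inf\{K_n(\eta)-K_n(\eta_n^\star):\|\eta-\eta_n^\star\|\ge\varepsilon_0\}>0$. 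Only then does your ULLN (or a net-plus-Bernstein bound) close the zone.

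The rest is sound, though your deviation machinery differs from the paper's and needs one more justification.
\begin{itemize}
\item The paper never chains. It uses finite nets at resolution proportional to $\tau_n^2$, so covering costs only $O(\log n)$, together with the pointwise truncated Bernstein bound of Lemma~\ref{lem:truncated_bernstein} and a union bound.
\item In the intermediate zone the paper peels on the excess-risk lower bound $g(\eta)$. It uses the variance bound of Lemma~\ref{lem:variance_bounds_tau_n}\ref{vb:zone2}, which is linear in $\|\theta-\theta'\|$.
\item Your single $h$-shell scheme instead needs $\Var(r_\eta-r_{\eta_n^\star})\lesssim\|a-a_n^\star\|^2+\tau_n^{-1}\|\theta-\theta_n^\star\|^2$ \emph{outside} the tube, where the paper states it only inside (Lemma~\ref{lem:variance_bounds_tau_n}\ref{vb:zone3}). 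It does extend to $\mathcal A_\delta\times N_0$: the bound $\E[\phi(s)^2q_s^2\mid W,X,Z]\lesssim e^{-cs^2}$ in the proof of Lemma~\ref{lem:gate_factor_moments} holds under either mixture component, and the margin condition covers all of $N_0$. You should state this.
\item Outside the tube the drift is of order $\sqrt{n\tau_n}\,\|h_\vartheta\|$, not $c\|h\|^2$. It still dominates a fluctuation of order $\|h\|\log n$ because $\sqrt{n\tau_n}\gg\log n$, and your closing comparison should be phrased this way.
\item The contraction argument does not need ``exactly'' $n\tau_n\gg(\log n)^6$; it uses a much weaker condition. The sixth power is what the third-order LAN remainder in Theorem~\ref{thm:effect_bvm} consumes.
\end{itemize}
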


The boundary layer yields effective sample size $n/\tau_n$, while total boundary accuracy is dominated by bias $\tau_n$. As $\varrho\uparrow1$ the error approaches the $n^{-1}$ scalar-threshold rate \citep{Chan1993,Hansen2000,ChernozhukovHong2004}, with Gaussian rather than hard-threshold nonstandard shape \citep{KimPollard1990,SeoLinton2007} and without preliminary estimation or sample splitting.

\begin{theorem}[Bernstein--von Mises for the regression block at the hard-threshold target]
\label{thm:effect_bvm}
Let Supplementary Assumption~\ref{asmp:V} hold and let $\tau_n=n^{-\varrho}$ for some $\varrho\in(1/2,1)$, a polynomial schedule interior to the window \eqref{eq:window}. General sequences with $n\tau_n/(\log n)^{6}\to\infty$ and $\sqrt n\,\tau_n\to0$ are covered as well. Let $\Pi_a^{(\tau_n)}(\cdot\mid\mathcal D_n)$ denote the marginal posterior of the regression block and set $h_a=\sqrt n\,(a-a_0)$. Then
\[
\sup_{B\in\mathcal B(\R^{p+r+1})}
\Big|\Pi_a^{(\tau_n)}(h_a\in B\mid\mathcal D_n)-\mathcal N\big(\mathcal I_0^{-1}\Delta_{n,a},\ \mathcal I_0^{-1}\big)(B)\Big|
\xrightarrow[n\to\infty]{P_0}0,
\]
where $\Delta_{n,a}=n^{-1/2}\sum_{i=1}^n\dot\ell_{a}(O_i)$ is the score of the oracle model, the Gaussian regression model \eqref{eq:true_dgp} with the true subgroup indicator $\mathbbm 1\{Z^\top\theta_0\ge0\}$ treated as known, and $\mathcal I_0$ is the corresponding Fisher information, $\mathcal I_0=\mathrm{blockdiag}\Big(\sigma_0^{-2}\,\E\big[V_0V_0^\top\big],\ (2\sigma_0^4)^{-1}\Big)$, $V_0=\big(W^\top,\ \mathbbm 1\{Z^\top\theta_0\ge0\}X^\top\big)^\top$.
Moreover $\Delta_{n,a}\Rightarrow\mathcal N(0,\mathcal I_0)$ under $P_0$, so that for any coordinate functional $c^\top a$ the central $1-\alpha$ credible interval for $c^\top a$ has frequentist coverage converging to $1-\alpha$.
\end{theorem}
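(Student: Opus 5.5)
The plan is to prove a joint local asymptotic normality (LAN) expansion under the anisotropic scaling $D_n$, integrate out the boundary block, and then re-center from $a_n^\star$ to $a_0$.

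\emph{Localization and rescaling.} By Theorem~\ref{thm:tau_contraction_new}, the posterior concentrates on the set
\[
\{\|a-a_n^\star\|\le M_0\log n/\sqrt n,\ \|\theta-\theta_n^\star\|\le M_0\sqrt{\tau_n}\log n/\sqrt n\}.
\]
We therefore restrict the posterior to this set, at a total-variation cost of $o_{P_0}(1)$. On it we introduce the local parameter $h=D_n^{-1}(\tilde\eta-\tilde\eta_n^\star)=(h_a,h_\vartheta)$.

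\emph{Anisotropic LAN at $\tilde\eta_n^\star$.} A second-order Taylor expansion of $\sum_i\tilde\ell_n(\tilde\eta_n^\star+D_nh;O_i)-\tilde\ell_n(\tilde\eta_n^\star;O_i)$ should give
\[
h^\top G_n - \tfrac12 h^\top J_n h + o_{P_0}(1)
\]
uniformly on $\|h\|\lesssim\log n$. Here $G_n=D_n\sum_i\nabla\tilde\ell_n(\tilde\eta_n^\star;O_i)$, which is a centred sum because $\tilde\eta_n^\star$ is stationary for $K_n$ (Proposition~\ref{prop:eta_star_bias_rate_new}). The normalised Hessian $J_n=nD_n\,\nabla^2K_n(\tilde\eta_n^\star)D_n$ has bounded spectrum by Supplementary Lemma~\ref{lem:aniso_information}.

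\emph{Vanishing cross-block.} The cross entries of $J_n$ are $n\cdot n^{-1/2}(\tau_n/n)^{1/2}\,O(1)=O(\sqrt{\tau_n})\to0$, since the cross-information is bounded. Hence $J_n\to\mathrm{blockdiag}(\mathcal I_a,\mathcal I_\vartheta)$ asymptotically.

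\emph{Limit of the regression block.} I would show $\mathcal I_a(\tau_n)\to\mathcal I_0$. As $\tau\to0$ the gate $\Phi(Z^\top\theta^\star/\tau)$ converges to $\mathbbm 1\{Z^\top\theta_0\ge0\}$ except on a layer of probability $O(\tau)$, and the mixture posterior weight of the correct component tends to one there. The $a$-score converges in $L_2$ to the oracle score $\dot\ell_a$, with $\E\|\dot\ell_{n,a}-\dot\ell_a\|^2=O(\tau_n)$ after a Gaussian change of measure.

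\emph{Remainder control.} The remainder requires uniform third-derivative bounds. The $\vartheta$-derivatives scale as $\tau^{-k}$ but are supported on the layer, so each $\vartheta$-derivative contributes $\tau^{-1}$ times layer mass $\tau$ per order. After $D_n$ scaling, the cubic term is $O(\sqrt{\tau_n/n}\,\tau_n^{-1}\cdot\mathrm{polylog})$. This is $o(1)$ precisely when $n\tau_n/(\log n)^6\to\infty$, which explains that condition.

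\emph{Empirical-process fluctuations.} Fluctuations of the Hessian over the shrinking neighbourhood would be handled with Bernstein bounds in $\psi_1$ norm together with bracketing over the local ball, using boundedness of the design and Gaussian tails.

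\emph{Posterior approximation.} Prior positivity (Supplementary Assumption~\ref{asmp:V}) makes the prior density essentially constant on the shrinking set. The standard Le Cam--Kleijn argument then yields total-variation convergence of the joint posterior of $h$ to $\mathcal N(J_n^{-1}G_n,J_n^{-1})$. Marginalising, and using the block-diagonal limit, the posterior of $\sqrt n(a-a_n^\star)$ is approximately $\mathcal N(\mathcal I_0^{-1}G_{n,a},\mathcal I_0^{-1})$.

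\emph{Re-centring at $a_0$.} This is the main obstacle. We have $\sqrt n(a-a_0)=\sqrt n(a-a_n^\star)+\sqrt n(a_n^\star-a_0)$, and Proposition~\ref{prop:eta_star_bias_rate_new} gives $\sqrt n\|a_n^\star-a_0\|\le C\sqrt n\tau_n\to0$, because $\varrho>1/2$. It remains to show $G_{n,a}-\Delta_{n,a}=o_{P_0}(1)$. This difference is a centred i.i.d.\ sum with variance $\E\|\nabla_a\tilde\ell_n(\eta_n^\star)-\dot\ell_a(\eta_0)\|^2$. I would bound that variance by $O(\tau_n)+O(\|\eta_n^\star-\eta_0\|^2)$, where the first term is the layer contribution and the second uses Lipschitz continuity in $\eta$ away from the layer. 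Both terms vanish, and the difference in means is $o(n^{-1/2})$ because $\E\dot\ell_a=0$ while the working score has mean zero at $\eta_n^\star$. Total-variation invariance under vanishing shifts then gives the displayed result.

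\emph{Coverage.} Finally, $\Delta_{n,a}\Rightarrow\mathcal N(0,\mathcal I_0)$ follows from the central limit theorem and the oracle information equality in the Gaussian model. Combining this with the BvM centred at $\mathcal I_0^{-1}\Delta_{n,a}$ gives asymptotically exact coverage for $c^\top a$ by the usual quantile argument.
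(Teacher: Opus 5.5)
Your proposal is correct and follows essentially the same route as the paper: anisotropic LAN at $\tilde\eta_n^\star$ under the norming $D_n$, an $O(\sqrt{\tau_n})$ normed cross block that decouples the two blocks on marginalization, recentring through $\sqrt n\,\tau_n\to0$ via Proposition~\ref{prop:eta_star_bias_rate_new}, and a mean-zero score difference with vanishing variance, with the same worst remainder $(n\tau_n)^{-1/2}(\log n)^3$. Two small refinements relative to the paper. First, the paper controls the Hessian fluctuation only at the centre, by a second-moment (Chebyshev) bound of order $(n\tau_n)^{-1/2}$, and handles everything else through third-order Taylor envelopes, so your uniform Bernstein and bracketing step is not needed. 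Second, the $\theta$-displacement part of $\Var(G_{n,a}-\Delta_{n,a})$ is $O(\tau_n^2)\cdot O(\tau_n^{-1})=O(\tau_n)$ because of the $\tau^{-1}$ mixed $a\vartheta$ envelope, rather than $O(\|\eta_n^\star-\eta_0\|^2)$, though it still vanishes.
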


This is the paper's central affirmative result. The nonempty window contains $\tau_n\asymp n^{-2/3}$ and removes the first-order cost of misspecification for $\gamma_0$. After norming, cross-information is $O(\sqrt{\tau_n})$, so boundary uncertainty decouples from $a$, while $\sqrt n\tau_n\to0$ removes transported bias. The lower condition $n\tau_n\gg\log n$ preserves a quadratic approximation before the likelihood reaches the nonregular hard-threshold regime \citep{ChernozhukovHong2004,Yu2012}. Supplementary Corollary~\ref{cor:joint_bvm} gives the joint Gaussian approximation at the pseudo-true center.

\begin{remark}[The boundary block keeps its shift]
\label{rem:boundary_shift}
No analogue of Theorem~\ref{thm:effect_bvm} holds for $\theta$ centered at $\theta_0$, since in the norming $(n/\tau_n)^{1/2}$ the smoothing shift is of order $(n\tau_n)^{1/2}\to\infty$ on the window, so the boundary posterior is inference for the $\tau_n$-regularized boundary $\theta^\star(\tau_n)$, within $O(\tau_n)$ of $\theta_0$. The asymmetry is intrinsic, since removing the shift with a Gaussian limit would need $\tau_n=o((\tau_n/n)^{1/2})$, which is impossible, and exact hard-threshold boundary inference requires nonstandard limit theory \citep{ChernozhukovHong2004,Yu2012,SeoLinton2007}. The reporting protocol of Section~\ref{sec:reporting_protocol} instead propagates the regularized-boundary uncertainty into covariate-level membership statements.
\end{remark}

Supplementary Remarks~\ref{rem:rate_comparison} and \ref{rem:nonregular_rates} relate the window to the $n$ rate of hard-threshold estimation \citep{Kang2025} and treat margin exponents $\alpha\neq1$ and purely discrete $Z$, where the direction is set identified.

\subsection{Role and practical choice of the smoothing scale}
\label{sec:choice_tau}

The analyst fixes and reports one $\tau$ per data set, while $\tau_n$ describes its scaling. With standardized boundary covariates, the gate moves from $0.05$ to $0.95$ across a score width $3.29\tau$, giving $\tau$ a gray-zone interpretation. Because the window is an asymptotic sufficient condition with nonquantitative constants, we use the following theory-motivated heuristic, validated in Section~\ref{sec:simulations}. Center at $\tau=c n^{-2/3}$ for $c\in[0.5,2]$. Prespecify a sensitivity grid spanning the lower edge $\log n/n$ through a value above the window near $n^{-1/3}$, and run the collapsed update with diagnostics throughout. Also prespecify an aggregation rule requiring the reporting decision to agree over all in-window values, treating conclusions supported only by heavy smoothing as exploratory. For $n=500$--$1{,}000$, the center is about $0.005$--$0.03$ and the upper edge is $0.03$--$0.045$, which motivates the application grid. Conclusions should rest on well-diagnosed in-window fits.
\section{A Framework for Decision-Theoretic Reporting}
\label{sec:reporting_protocol}

In change-plane models the boundary direction $\theta$ is learned only through treatment-effect heterogeneity, so when $\gamma_0=0$ (with $X$ encoding the treatment contrast) the boundary is unidentified and when $\|\gamma_0\|$ is small the likelihood in $\theta$ is nearly flat. The theory of Section~\ref{sec:theorems}, like its frequentist counterparts \citep[e.g.,][]{Fan2017,Kang2025}, excludes this regime by a signal-strength condition, yet subgroup analysis must confront it, since whether clinically meaningful heterogeneity exists is usually the primary question. Weak identification then yields a diffuse, prior-driven posterior for $\theta$, so point summaries should not be reported as a subgroup rule absent evidence of heterogeneity. We therefore adopt a decision-theoretic protocol separating heterogeneity evidence from boundary reporting and establish its asymptotic operating characteristics, which to our knowledge give the first formal guarantee for a subgroup-reporting rule in the change-plane setting, valid under the same signal-strength condition.

\paragraph{Step 1. Define clinical heterogeneity.}
Let $\Delta(\gamma)$ be a one-dimensional, decision-relevant heterogeneity contrast, equal to $\gamma$ under binary treatment, to $x_{\rm clin}^\top\gamma$ for a prespecified contrast with multivariate $X$, or to $\|\gamma\|_2$ for overall magnitude. Fix a clinically meaningful threshold $\delta>0$, for instance a minimal clinically important difference, and set $H_\delta=\{|\Delta(\gamma)|\ge\delta\}$, both $\Delta$ and $\delta$ being substantive inputs fixed before analysis.

\paragraph{Step 2: a Bayes rule for whether to report a boundary.}
With actions $\mathsf{a}_0$ (``do not report a subgroup boundary'') and $\mathsf{a}_1$ (``report a subgroup boundary''), and losses $L(\mathsf{a}_1,\gamma)=c_{\rm FP}\mathbbm 1\{H_\delta^c\}$ and $L(\mathsf{a}_0,\gamma)=c_{\rm FN}\mathbbm 1\{H_\delta\}$ penalizing a false claim and a missed heterogeneity, the Bayes action minimizes posterior expected loss and takes the form
\begin{equation}
    \text{choose }\mathsf{a}_1 \iff
\Pi^{(\tau)}(H_\delta\mid \mathcal D_n)\ >\ \frac{c_{\rm FP}}{c_{\rm FP}+c_{\rm FN}}=:p_{\mathrm{report}}.
\label{eq:bayes_action}
\end{equation}
This specializes Bayesian subgroup decision theory \citep{Sivaganesan2011,BergerWangShen2014} to the change-plane setting. Since regulators penalize false subgroup claims more heavily than missed heterogeneity \citep{ema2019subgroup,Dane2019}, we use $p_{\mathrm{report}}=0.9$ as a baseline, with $[0.85,0.95]$ recommended in confirmatory settings and lower thresholds for exploratory ones, and examine sensitivity in Section~\ref{sec:sensitivity_analysis_delta_p}.

\paragraph{Step 3(a): reporting under $\mathsf{a}_0$.}
If $\mathsf{a}_0$ is selected, report posterior summaries of $\Delta(\gamma)$, the tail probability $\Pi(|\Delta(\gamma)|\ge\delta\mid\mathcal D_n)$ and a credible interval, with an explicit statement that $\theta$ is weakly identified. Any numerical summary of $\theta$ here reflects prior regularization, not evidence, and must not be read as a data-supported subgroup rule.

\paragraph{Step 3(b): reporting under $\mathsf{a}_1$.}
If $\mathsf{a}_1$ is selected, report the subgroup rule with uncertainty summaries respecting the spherical geometry and the sign convention. Let $M=\E[\theta\theta^\top\mid\mathcal D_n]$ be the posterior second-moment matrix of the boundary direction, sign-invariant and so unaffected by proximity to the hemisphere seam.

Under the rank-one projection loss $L(\theta,\theta')=\|\theta\theta^\top-\theta'\theta'^\top\|_F^2$ the Bayes point estimator is the posterior principal direction
\begin{equation}
\label{eq:posterior_principal_direction}
    \hat\theta \in \argmax_{\theta\in\mathbb S^{q-1}_+}\ \theta^\top M\theta,
\end{equation}
a leading eigenvector of $M$ (Rayleigh--Ritz), signed to the hemisphere convention, with the intrinsic Fr\'echet mean \citep{BhattacharyaPatrangenaru2003} and the radial projection of the componentwise mean coinciding asymptotically except in the weakly identified regime of Step 3(a). Since $\mathrm{tr}(M)=1$, the leading eigenvalue $\lambda_{\max}(M)\in[1/q,1]$ measures directional concentration, near $1/q$ under near-uniform dispersion, and we recommend reporting it next to $\hat\theta$.

For membership uncertainty, the posterior hard-membership probability of a covariate vector $z$ is $q(z):=\Pi^{(\tau)}\big(z^\top\theta\ge 0\mid\mathcal D_n\big)$, estimated by $B^{-1}\sum_{b=1}^B\mathbbm 1\{z^\top\theta^{(b)}\ge0\}$ over $B$ posterior draws, with $\bar q=n^{-1}\sum_iq(Z_i)$ summarizing the sample. Reporting $q(\cdot)$ propagates boundary uncertainty to the covariate level without a brittle hard classification.

Finally, report posterior summaries of $\Delta(\gamma)$, the implied effects inside and outside the subgroup when $X$ is included in $W$, and the posterior probability of clinically meaningful benefit within the subgroup.

\subsection{Asymptotic operating characteristics}
\label{sec:protocol_theory}

All protocol components are functionals of one joint posterior, so within the signal-strength regime covered by Section~\ref{sec:theorems} the contraction theory delivers guarantees for the protocol itself, with $\Delta^\star(\tau)=\Delta(\gamma^\star(\tau))$.

\begin{proposition}[Consistency of the reporting rule and of membership probabilities]
\label{prop:protocol_consistency}
Let the fixed-$\tau$ conditions of Supplementary Theorem~\ref{thm:mis_contraction_fixed_tau_new} hold at fixed $\tau>0$, and let $\Delta(\cdot)$ be continuous.
\begin{enumerate}[label=(\alph*),nosep]
\item\label{prop:pc_report} If $|\Delta^\star(\tau)|>\delta$ then $\Pi^{(\tau)}(H_\delta\mid\mathcal D_n)\xrightarrow{P_0}1$, so the protocol selects $\mathsf{a}_1$ with probability tending to one. If $|\Delta^\star(\tau)|<\delta$ it selects $\mathsf{a}_0$ with probability tending to one, for any fixed $p_{\mathrm{report}}\in(0,1)$.
\item\label{prop:pc_membership} For any $z$ with $z^\top\theta^\star(\tau)\neq0$,
$q(z)\xrightarrow{P_0}\mathbbm 1\{z^\top\theta^\star(\tau)>0\}$.
\item\label{prop:pc_window} If in addition Supplementary Assumption~\ref{asmp:V} holds and $\tau=\tau_n$ obeys the window \eqref{eq:window} (in the precise form of Theorem~\ref{thm:effect_bvm}), then (a) holds with $\Delta^\star(\tau)$ replaced by the hard-threshold contrast $\Delta(\gamma_0)$, and (b) holds with $\theta^\star(\tau)$ replaced by $\theta_0$, for any fixed $z$ off the true boundary. Under $\gamma_0$ with $|\Delta(\gamma_0)|=\delta$ exactly, and provided $\Delta$ is continuously differentiable at $\gamma_0$ with $\nabla\Delta(\gamma_0)\neq0$ (automatic for linear contrasts $\Delta(\gamma)=x_{\rm clin}^\top\gamma$, and satisfied by $\Delta=\|\cdot\|_2$ at any $\gamma_0\neq0$), the asymptotic reporting probability is governed by the Gaussian limit of Theorem~\ref{thm:effect_bvm}: $\Pi^{(\tau_n)}(H_\delta\mid\mathcal D_n)$ converges in distribution to $\Phi\big(\mathcal Z\big)$ for a standard Gaussian $\mathcal Z$, so the rule is non-degenerate at the clinical boundary.
\end{enumerate}
\end{proposition}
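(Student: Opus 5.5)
Parts (a) and (b) follow from posterior contraction via continuous-mapping arguments; part (c) needs the BvM of Theorem~\ref{thm:effect_bvm} together with a delta-method step.

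\textbf{Part (a).} By Supplementary Theorem~\ref{thm:mis_contraction_fixed_tau_new}, for every $\epsilon>0$ we have $\Pi^{(\tau)}(\|\eta-\eta^\star(\tau)\|>\epsilon\mid\mathcal D_n)\to0$ in $P_0$-probability.

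Suppose $|\Delta^\star(\tau)|>\delta$. Continuity of $\Delta$ at $\gamma^\star(\tau)$ gives an $\epsilon$ such that the ball $\{\|\gamma-\gamma^\star(\tau)\|\le\epsilon\}$ lies inside $H_\delta$. Hence
\[
\Pi^{(\tau)}(H_\delta\mid\mathcal D_n)\ge1-o_{P_0}(1).
\]
Since $p_{\mathrm{report}}<1$ is fixed, the event $\{\Pi^{(\tau)}(H_\delta\mid\mathcal D_n)>p_{\mathrm{report}}\}$ has probability tending to one.

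Suppose instead $|\Delta^\star(\tau)|<\delta$. The same ball argument, with the ball now inside the open set $H_\delta^c$, gives $\Pi^{(\tau)}(H_\delta\mid\mathcal D_n)\to0$. Because $p_{\mathrm{report}}>0$, action $\mathsf a_0$ is selected with probability tending to one.

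\textbf{Part (b).} Take $z$ with $z^\top\theta^\star(\tau)\neq0$. The set $\{\theta:\operatorname{sign}(z^\top\theta)=\operatorname{sign}(z^\top\theta^\star(\tau))\}$ is an open neighbourhood of $\theta^\star(\tau)$ in $\mathbb S^{q-1}_+$. Contraction of the $\theta$-marginal to $\theta^\star(\tau)$ then gives $q(z)\to\mathbbm 1\{z^\top\theta^\star(\tau)>0\}$.

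\textbf{Part (c), strict inequality.} For $\tau_n$ in the window, combine Theorem~\ref{thm:tau_contraction_new} with Proposition~\ref{prop:eta_star_bias_rate_new}. This gives contraction of $a$ to $a_0$ at rate $\log n/\sqrt n+\tau_n\to0$, and of $\theta$ to $\theta_0$ at rate $\tau_n+\sqrt{\tau_n}\log n/\sqrt n\to0$. The arguments of (a) and (b) then go through verbatim with fixed centres $\gamma_0$ and $\theta_0$, for $|\Delta(\gamma_0)|\neq\delta$ and for $z^\top\theta_0\neq0$.

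\textbf{Part (c), boundary case $|\Delta(\gamma_0)|=\delta$.} This is the main obstacle. Assume without loss of generality that $\Delta(\gamma_0)=\delta>0$; the case $\Delta(\gamma_0)=-\delta$ is symmetric.

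\emph{Localizing.} By the contraction above, the posterior mass of $\{\Delta(\gamma)\le-\delta\}$ vanishes. Therefore
\[
\Pi^{(\tau_n)}(H_\delta\mid\mathcal D_n)=\Pi^{(\tau_n)}\big(\sqrt n(\Delta(\gamma)-\delta)\ge0\mid\mathcal D_n\big)+o_P(1).
\]

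\emph{Delta method.} Write $h_\gamma$ for the $\gamma$-block of $h_a$ and $g=\nabla\Delta(\gamma_0)$. On the contraction set, continuous differentiability gives
\[
\sqrt n(\Delta(\gamma)-\delta)=g^\top h_\gamma+o(\|h_\gamma\|)
\]
uniformly. Combined with tightness of $h_a$ under the BvM, the remainder is $o_P(1)$ in posterior probability.

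\emph{Applying the BvM.} Theorem~\ref{thm:effect_bvm} gives, in total variation, that the posterior law of $g^\top h_\gamma$ is approximately
\[
\mathcal N(\mu_n,s^2),\qquad \mu_n=g^\top(\mathcal I_0^{-1}\Delta_{n,a})_\gamma,\qquad s^2=g^\top(\mathcal I_0^{-1})_{\gamma\gamma}g>0,
\]
where $s^2>0$ because $\mathcal I_0$ is positive definite and $g\neq0$. A small perturbation of order $o_P(1)$ does not move the probability of the half-line $[0,\infty)$ appreciably, since the Gaussian limit has bounded density. Hence
\[
\Pi^{(\tau_n)}(H_\delta\mid\mathcal D_n)=\Phi(\mu_n/s)+o_P(1).
\]

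\emph{Limit of the centring.} Since $\Delta_{n,a}\Rightarrow\mathcal N(0,\mathcal I_0)$, the centring satisfies $\mu_n\Rightarrow\mathcal N(0,s^2)$. So $\mu_n/s\Rightarrow\mathcal Z$ with $\mathcal Z$ standard Gaussian. The continuous mapping theorem and Slutsky then give $\Pi^{(\tau_n)}(H_\delta\mid\mathcal D_n)\Rightarrow\Phi(\mathcal Z)$, which is uniform on $(0,1)$.

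For $\Delta=\|\cdot\|_2$ with $\gamma_0\neq0$, the gradient is $g=\gamma_0/\|\gamma_0\|\neq0$, so the hypothesis on $\nabla\Delta$ holds.
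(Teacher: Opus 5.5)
Your proposal is correct and follows essentially the same route as the paper's proof: contraction plus an open-neighbourhood argument for (a)--(b), and Theorem~\ref{thm:tau_contraction_new} with the vanishing bias of Proposition~\ref{prop:eta_star_bias_rate_new} for the strict cases of (c). At the knife edge you likewise use the BvM of Theorem~\ref{thm:effect_bvm}, a delta-method step and the weak limit of $\Delta_{n,a}$. Your delta step needs only half-line probabilities plus the bounded density of the Gaussian limit, so it is slightly more careful than the paper's direct total-variation claim for the posterior of $\sqrt n\{\Delta(\gamma)-\delta\}$.
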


The proof (Supplementary Section~\ref{proof:prop:protocol_consistency}) shows the report decision and the membership statements are consistent for a declared estimand at fixed $\tau$ and for the hard-threshold quantities on the window, with the transport error from $\eta^\star(\tau)$ to $\eta_0$ controlled by Proposition~\ref{prop:eta_star_bias_rate_new}. Part~\ref{prop:pc_window} motivates a bias-aware margin treating effects within $O(\tau)\cdot\|\partial\Delta\|$ of $\delta$ as boundary cases, and the $\tau$-grid analysis of Section~\ref{sec:choice_tau} reveals when a decision is this fragile. The coherence and propagation advantages of the joint posterior over the frequentist test-then-report workflow are discussed in Supplementary Section~\ref{sec:frequentist_comparison}, with the empirical comparison in Section~\ref{sec:simulations}.
\section{Simulation Studies}
\label{sec:simulations}

We evaluate the method through a simulation suite built around the asymptotic theory of Section~\ref{sec:theorems}, one canonical data generating process with controlled departures, each study targeting a named theoretical object.
The canonical data generating process is
\begin{equation}
Y_i = W_i^\top\beta_0 + \gamma_0\,X_i\,\mathbbm 1\{Z_i^\top\theta_0\ge0\} + \varepsilon_i,
\label{eq:sim_dgp}
\end{equation}
with $W_i=(1,W_{i2},\ldots,W_{i5})^\top$, $W_{i,2:5}\sim\mathcal N_4(0,I_4)$, $Z_i=W_i$, $X_i\sim\mathrm{Bernoulli}(0.5)$, and $\varepsilon_i\sim\mathcal N(0,1)$. The fixed parameters are
\[
\begin{gathered}
\beta_0=(1.0,-0.3,-0.4,-0.65,0.4)^\top,\qquad \gamma_0=2,\\
\theta_0=\theta^*/\|\theta^*\|_2,\qquad \theta^*=(1.0,-0.8,0.7,1.2,-1.5)^\top,
\end{gathered}
\]
unless stated otherwise, using the closed-halfspace convention of Section~\ref{sec:cp_model}. The Gaussian score has a bounded positive density near zero. Supplementary Section~\ref{sec:pseudo_true_discussion} verifies the anchor and seam and computes the pseudo-true path by quadrature. A skewed companion replaces $W_{i5}$ by a centered unit-variance exponential variable to expose the order-$\tau_n$ shift. We vary $n\in\{250,500,1000,2000,4000\}$, schedules $\tau_n=cn^{-\rho}$ for $\rho\in\{1/3,1/2,2/3,5/6,1\}$ and $c\in\{0.5,1,2\}$, and fixed values $\tau\in\{0.1,0.035,0.01\}$. Thus $\rho\in\{2/3,5/6\}$ lies inside the window, $\rho\le1/2$ above it, and $\rho=1$ below its lower edge up to a logarithm. Each cell has $500$ replicates, four dispersed chains, the collapsed update, and diagnostics on sign-invariant boundary functionals. Augmented wide-gate fits document the failure mode of Section~\ref{sec:sim_effect_block}.

\subsection{Regression-block inference and the smoothing window}
\label{sec:sim_effect_block}

This study probes Theorem~\ref{thm:effect_bvm} directly, and its predictions hold across the grid (Figure~\ref{fig:sim_suite}(a,b), full schedule grid in Supplementary Table~\ref{tab:sim_gamma_coverage}). Over every cell, including the fixed values up to $\tau=0.1$ and the schedules above the window, empirical coverage of the central $95\%$ credible interval for $\gamma$ lies between $0.920$ and $0.974$, and relative to the binomial standard error at the nominal level, $0.0097$ for $500$ replicates, $74$ of $75$ cells lie within three standard errors of $0.95$. The one exception is the undercoverage cell $0.920$, the canonical design at fixed $\tau=0.035$ and $n=500$, at $3.1$ standard errors, the kind of extreme expected among $75$ cells, while the largest overcoverage cell, $0.974$ for the skewed design at $\rho=2/3$ and $n=4000$, sits at $2.5$. The absolute bias of the posterior mean never exceeds $0.02$, and the ratio of the posterior standard deviation of $\gamma$ to the known-subgroup oracle standard deviation falls from at most $1.033$ at $n=250$ to at most $1.010$ at $n=4000$, reaching $1.003$ on the interior schedule, within one percent of the oracle efficiency of Theorem~\ref{thm:effect_bvm}. The constants $c\in\{0.5,2\}$ leave these conclusions unchanged (coverage $0.934$ to $0.966$), and both window edges are conservative for this design. Above the window the transported smoothing bias is real but small, the posterior mean bias at fixed $\tau=0.1$ settling at $+0.008$ to $+0.010$ for $n\ge1000$, matching the quadrature pseudo-true shift $\gamma^\star(0.1)-\gamma_0=0.010$ of Supplementary Section~\ref{sec:pseudo_true_discussion} within Monte Carlo error. The window of Theorem~\ref{thm:effect_bvm} is therefore sufficient but not necessary for accurate regression-block inference, the sharp object above it being the pseudo-true path, computable by quadrature at design time.

A deliberately adversarial stress design shows the window binding exactly where that path predicts. Found by a quadrature search over the canonical model class for the configuration that maximizes the transported bias, it retains the canonical structure but places a two-component covariate mixture so that boundary mass concentrates asymmetrically inside the smoothing gray zone (Supplementary Section~\ref{sec:stress_design}). Coverage of the $95\%$ interval for $\gamma$ stays nominal on the interior schedule, from $0.934$ to $0.956$, but falls under deliberate oversmoothing to $0.902$ at $(n,\tau)=(4000,0.1)$, $0.866$ at $(4000,0.15)$, and $0.850$ at $(8000,0.1)$, as the fixed gate holds the transported bias nearly constant while the interval contracts with $n$. That bias tracks the quadrature pseudo-true shift $\gamma^\star(\tau)-\gamma_0$ computed before the Monte Carlo at every cell, for example $0.0451$ against the predicted $0.0469$ at $(4000,0.1)$ and $0.0548$ against $0.0598$ at $(4000,0.15)$, so the window binds precisely where the precomputable pseudo-true path says it will (Supplementary Table~\ref{tab:sim_stress}).

\begin{figure}[!htbp]
  \centering
  \includegraphics[width=\textwidth]{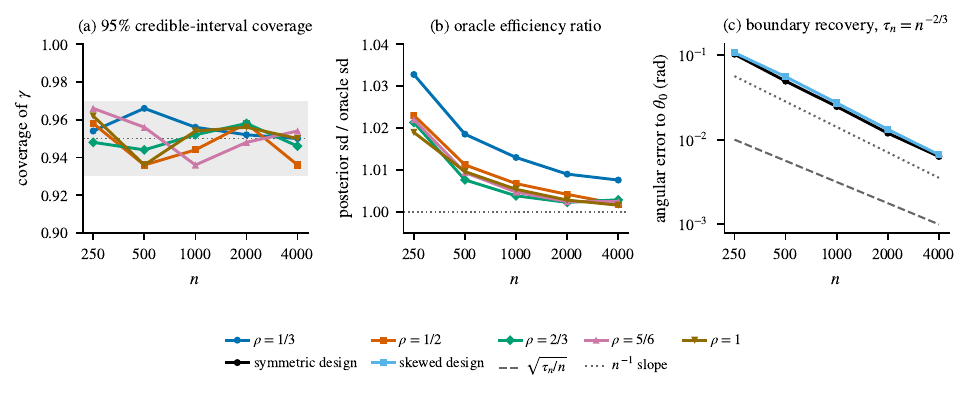}
  \caption{The simulation suite against the theory. (a) Empirical coverage of the central 95\% credible interval for $\gamma$ under the schedules $\tau_n=n^{-\rho}$, with $500$ replicates per cell. The band marks two Monte Carlo standard errors. (b) Posterior over oracle standard deviation of $\gamma$, lines and markers as in the legend of (a). (c) Angular error to $\theta_0$ at the interior schedule, with the localization scale $\sqrt{\tau_n/n}$ and an $n^{-1}$ guide.}
  \label{fig:sim_suite}
\end{figure}

The uniformity just described required the collapsed update. An earlier augmented-sampler run of the wide-gate cells failed convergence diagnostics with spurious attenuation, the collapsed rerun restores $\widehat R=1.000$ and the operating characteristics above, and the full account is in Supplementary Section~\ref{sec:sim_diagnostics}.
\subsection{Boundary recovery and the reporting protocol}
\label{sec:sim_boundary}
\label{sec:sim_protocol}

Boundary recovery probes Theorem~\ref{thm:tau_contraction_new} and Proposition~\ref{prop:eta_star_bias_rate_new} (Figure~\ref{fig:sim_suite}(c), numerical grid in Supplementary Section~\ref{sec:additional_simulation}). At the interior schedule the mean angular error to $\theta_0$ falls from $0.103$ radians at $n=250$ to $0.0063$ at $n=4000$, an empirical scaling of almost exactly $n^{-1}$, while subgroup misclassification falls from $2.7$ to $0.16$ percent and $\lambda_{\max}(M)$ rises to $0.99994$. The error sits within a factor of roughly six to ten of the localization scale $\sqrt{\tau_n/n}$, consistent with the theorem's logarithmic factors. The angular error to the pseudo-true direction $\theta^\star(\tau_n)$ is indistinguishable from the error to $\theta_0$ in both designs, so at accessible sample sizes and smoothing scales the total boundary error is dominated by posterior spread rather than the smoothing shift. That shift is a deterministic population quantity, quantified by the quadrature protocol of Supplementary Section~\ref{sec:pseudo_true_discussion}, where the skewed design exhibits it with a fitted log-log slope of $0.86$ in $\tau$.

We evaluated the reporting protocol at $\delta=1$ and $p_{\mathrm{report}}=0.9$ over six true contrasts from $\gamma_0=0$ to $2$ and $n\in\{500,1000,2000\}$, with $500$ replicates per cell. Table~\ref{tab:sim_protocol} reports the representative sample size $n=1000$. The protocol never reports a subclinical contrast, reports with probability $0.078$ at the clinical threshold, close to the limit $1-p_{\mathrm{report}}=0.10$ of Proposition~\ref{prop:protocol_consistency}, and reports every contrast above the threshold. Its credible intervals retain coverage between $0.942$ and $0.964$ across the signal grid. The \citet{Fan2017} test rejects from probability $0.626$ at $\gamma_0=0.25$ to probability one from $\gamma_0=0.5$ onward. These rejections below $\delta$ are not false positives, since the test targets any heterogeneity rather than clinically meaningful heterogeneity. The limitation appears in the subsequent plug-in interval, whose coverage falls to $0.554$ at the clinical threshold and to $0.096$ at $\gamma_0=2$, while the joint posterior continues to quantify the effect honestly after the decision.

\begin{table}[!htbp]
  \centering
  \caption{Reporting and interval operating characteristics under the canonical design \eqref{eq:sim_dgp} at the interior schedule with $n=1000$ and $500$ replicates, using $\delta=1$ and $p_{\mathrm{report}}=0.9$. The proposed columns give the probability of reporting a clinically meaningful boundary and empirical coverage of the central $95\%$ credible interval for $\gamma$. The \citet{Fan2017} columns give the rejection rate for the null of no heterogeneity and empirical coverage of the plug-in $95\%$ confidence interval for $\gamma$. Since the two decision rules target different hypotheses, rejection below $\delta$ is not a false positive for the Fan test.}
  \label{tab:sim_protocol}
  \begin{adjustbox}{max width=0.9\textwidth}
  \begin{tabular}{c cc cc}
    \toprule
    & \multicolumn{2}{c}{Proposed posterior} & \multicolumn{2}{c}{Test then plug-in} \\
    \cmidrule(lr){2-3} \cmidrule(lr){4-5}
    $\gamma_0$ & $P(\mathrm{report})$ & Coverage & Reject rate & Plug-in CI coverage \\
    \midrule
    $0$    & $0.000$ & $0.960$ & $0.034$ & $0.352$ \\
    $0.25$ & $0.000$ & $0.954$ & $0.626$ & $0.976$ \\
    $0.5$  & $0.000$ & $0.954$ & $1.000$ & $0.930$ \\
    $1$    & $0.078$ & $0.964$ & $1.000$ & $0.554$ \\
    $1.5$  & $1.000$ & $0.956$ & $1.000$ & $0.242$ \\
    $2$    & $1.000$ & $0.942$ & $1.000$ & $0.096$ \\
    \bottomrule
  \end{tabular}
  \end{adjustbox}
\end{table}

The full sample-size grid reinforces this separation of scientific targets (Supplementary Section~\ref{sec:additional_simulation}). The protocol produces no false reports at any subclinical contrast, while at the knife edge its report probabilities are $0.106$, $0.078$, and $0.094$ across the three sample sizes, with Brier scores $0.33$ to $0.35$ bracketing the value $1/3$ implied by the uniform limit. At $\gamma_0\in\{1.5,2\}$ the report probability is $0.996$ to $1.000$. The stability metric $\lambda_{\max}(M)$ rises from $0.29$ under the canonical null to $0.9998$ at the largest contrast, and the membership probabilities $q(z)$ are calibrated to within $0.03$. Under the harder absorbable specification with a homogeneous benefit and no heterogeneity, the protocol produces at most one false report in $1500$ replicates, even though the posterior can carve noise subgroups and $\Pi(|\gamma|\ge1)$ lies between $0.65$ and $0.67$ (Supplementary Table~\ref{tab:sim_protocol_null}). The demanding reporting bar, rather than the tail probability alone, holds the false-report rate between $0.000$ and $0.002$. Across the full grid the Fan plug-in interval covers only $34$ percent under the null and deteriorates to $0.078$ at $\gamma_0=1.5$ with $n=2000$.

\subsection{Comparisons}
\label{sec:sim_comparisons}

We compare the proposed posterior with four alternatives, each isolating a specific question. The exact hard-threshold posterior, fit under the same priors with the indicator gate $\mathbbm 1\{Z^\top\theta\ge0\}$ in place of the probit gate, isolates what the smoothing buys. A smoothed frequentist M estimator of the same gated objective with plug-in sandwich intervals isolates the value of full posterior uncertainty propagation, and a two-stage refit that estimates the boundary then fits the effect by ordinary least squares with a Wald interval, with and without bootstrap calibration, isolates the cost of treating the estimated boundary as fixed. The plug-in test-then-report workflow of \citet{Fan2017}, implemented through the \texttt{subdetect} package, is the nearest off-the-shelf frequentist competitor.

\begin{table}[!htbp]
  \centering
  \caption{Comparison of methods at $n=1000$ under the canonical design at the interior schedule ($\tau_n=n^{-2/3}$), $500$ replicates. Bias and RMSE of the point estimate of $\gamma$, empirical coverage and mean length of nominal $95\%$ intervals, and mean angular error of the boundary estimate to $\theta_0$. The bootstrap two-stage does not produce a boundary estimate beyond the two-stage one. The replicate set is independent of the coverage grid underlying the other tables, so small differences from the full-grid values, for example coverage $0.944$ here against $0.952$ in Supplementary Table~\ref{tab:sim_gamma_coverage}, reflect Monte Carlo error rather than a methodological discrepancy.}
  \label{tab:sim_comparisons}
  \begin{adjustbox}{max width=\textwidth}
  \begin{tabular}{lrrrrr}
    \toprule
    Method & Bias & RMSE & Coverage & CI length & Angular error\\
    \midrule
    Smoothed posterior (this paper) & $-0.002$ & $0.075$ & $0.944$ & $0.285$ & $0.024$\\
    Hard-threshold posterior        & $-0.003$ & $0.075$ & $0.944$ & $0.285$ & $0.024$\\
    Smoothed MLE, sandwich CI       & $-0.185$ & $0.319$ & $0.608$ & $0.354$ & $0.283$\\
    Two-stage refit, Wald CI        & $-0.161$ & $0.288$ & $0.606$ & $0.288$ & $0.268$\\
    Two-stage refit, bootstrap CI   & $-0.269$ & $0.384$ & $0.810$ & $0.485$ & ---\\
    \citet{Fan2017} plug-in         & $-0.473$ & $0.512$ & $0.116$ & $0.499$ & $0.297$\\
    \bottomrule
  \end{tabular}
  \end{adjustbox}
\end{table}

Table~\ref{tab:sim_comparisons} reports the results at the representative sample size $n=1000$. The smoothed and exact hard-threshold posteriors are nearly indistinguishable, with bias $-0.002$ and $-0.003$, coverage $0.944$ for both, and identical interval lengths and boundary errors, and this agreement holds across the grid ($n=250$ to $4000$, coverage $0.930$ to $0.948$). Nothing is lost against the exact comparator at the interior schedule, and the agreement is itself informative, since the smoothed family is the object for which the guarantees of Section~\ref{sec:theorems} hold uniformly over the window and the exact posterior is its $\tau\to0$ limit, so the theory developed here is the available explanation for the exact posterior's regular behavior. The practical case for smoothing therefore rests not on a coverage gap but on the diagnosable computation, the wide-gate pathology of Section~\ref{sec:sim_effect_block} being real and documented, and on the Gaussian boundary summaries. Every plug-in workflow undercovers. The smoothed maximum likelihood estimator with sandwich intervals is attenuated ($-0.18$ at $n=1000$, $-0.09$ at $n=4000$) and covers $0.57$ to $0.77$. A dedicated diagnostic (Supplementary Section~\ref{sec:mle_diagnostic}) attributes this gap entirely to optimization, since the five-start quasi-Newton search lands in an inferior basin in $60$ percent of replicates while one oracle start restores bias $+0.002$, matching the posterior on the identical likelihood. The advantage is therefore computational rather than inferential. The two-stage refit covers $0.56$ to $0.78$ with bootstrap calibration recovering only part of the gap ($0.81$ to $0.90$) while the attenuation persists, and the plug-in interval of \citet{Fan2017} is severely biased at every $n$ ($-0.44$ to $-0.48$) with coverage collapsing to $0.012$ at $n=4000$ (two degenerate-subgroup replicates at $n=250$ were caught by the harness and recorded as failures). We emphasize that \texttt{subdetect} is a sound testing procedure whose authors do not advocate the plug-in interval, which we include because it is what an analyst seeking an effect estimate after a significant test would naively form, its failure being precisely the post-selection error the joint posterior avoids.

\subsection{Robustness, high dimension, and scalability}
\label{sec:simulation_variable_selection}

Three departures from the canonical design test robustness and scale, a nonlinear baseline mean, a boundary index augmented with $50$ pure-noise covariates under the normalized horseshoe prior on $\theta$, and per-sweep timing with and without the collapsed update. Under the nonlinear baseline the parametric coverage for $\gamma$ falls to $0.155$ at $n=4000$ while the BART variant holds $0.92$ to $0.96$, the horseshoe delivers coverage $0.93$ to $0.96$ on the active boundary, both variants sitting outside the formal theory (Remark~\ref{rem:scope_theory}), and the collapsed update costs $2.3$ to $2.6$ times the augmented sweep. Full grids are in Supplementary Table~\ref{tab:sim_robust_timing}. A fourth departure fits the absorbable specification of Section~\ref{sec:model_identification} itself, the canonical design augmented with a global treatment effect $\delta_0=1$ at $\gamma_0=2$ and with opposite-signed effects ($\delta_0=1$, $\gamma_0=-2$), where coverage of the $95\%$ interval runs from $0.922$ to $0.968$ for $\gamma$ and from $0.936$ to $0.964$ for $\delta$ across the grid, with absolute bias at most $0.026$ for $\gamma$ and $0.021$ for $\delta$ and the fixed $\tau=0.1$ cells showing the same small transported bias as the canonical design (Supplementary Section~\ref{sec:absorbable_results}).

\section{Empirical Data Example}
\label{sec:analysis}
The data come from the PREMIER randomized clinical trial \citep{Svetkey2005}, a multi-site study of lifestyle modification for blood pressure control among $N=810$ generally healthy adults with above-optimal blood pressure through stage 1 hypertension, none taking antihypertensive medication at baseline. Participants were randomized to ``Advice Only,'' a multicomponent lifestyle intervention, or the same intervention augmented with counseling toward a structured dietary component, and we combine the two active groups as the treatment arm. The primary outcome is change in systolic blood pressure from baseline to six months, and because race, age, baseline hypertension status, and sex are clinically relevant candidate effect modifiers, the trial is a natural setting for detecting whether intervention benefit concentrates in a data-learned subgroup. We report a two-part analysis kept separate throughout, the primary analysis using continuous boundary covariates, standardized age and baseline systolic blood pressure with the binary indicators female, African American, and baseline hypertension status, so the boundary score has a continuous component and the continuous-margin condition of Supplementary Assumption~\ref{asmp:V}\ref{subasmp:V_margin} is plausible. The secondary analysis uses the purely binary profile covariates of the original design and illustrates the discrete-$Z$ set-identified regime discussed in Supplementary Remark~\ref{rem:nonregular_rates}.

\subsection{Primary analysis (continuous boundary covariates)}
\label{sec:analysis_result}
We follow Section~\ref{sec:reporting_protocol} with $Z_i=(1,\mathrm{age}_i,\mathrm{SBP}_i,\mathrm{female}_i,\mathrm{AA}_i,\mathrm{HTN}_i)^\top$ and $W_i=(Z_i^\top,X_i)^\top$, where $X_i$ is treatment. Under this absorbable specification, $\delta$, $\gamma$, and $\delta+\gamma$ are the effects outside, between subgroups, and inside. We fit parametric and BART baselines with normalized horseshoe priors on $\theta$ (Supplementary Section~\ref{sec:semiparametric_extension_details}). The prespecified grid is $\tau\in\{0.005,0.01,0.02,0.035,0.05,0.1\}$. For $n=810$, the approximate window $0.008$--$0.035$ places $\{0.01,0.02\}$ inside, $0.035$ at the upper edge, $0.005$ just below, and $\{0.05,0.1\}$ above. We set $\Delta(\gamma)=\gamma$, $H_\delta=\{|\gamma|\ge3\}$ mmHg, and $p_{\mathrm{report}}=0.9$, where the subscript in $H_\delta$ labels the protocol threshold rather than the treatment coefficient. The clinical threshold was fixed before analysis because PREMIER reported reductions of $1$--$6$ mmHg \citep{Svetkey2005}, while a meta-analysis of one million adults in 61 studies associated a $2$ mmHg usual-pressure reduction with about seven percent lower ischaemic heart disease mortality and ten percent lower stroke mortality in middle age \citep{Lewington2002}. A boundary is reported only if every in-window fit selects $\mathsf a_1$. Arm-specific contrasts are sensitivity analyses.

The protocol outcome is unambiguous and the two baselines agree on it. The posterior probability of the clinically meaningful event is $0.43$ at $\tau=0.02$ under both baselines, ranges from $0.41$ to $0.47$ over the in-window values, and stays far short of $p_{\mathrm{report}}=0.9$ at every grid value, so the Bayes action is $\mathsf a_0$ throughout, no pooled boundary is reported, and following Step~3(a) we report the effect summaries and the heterogeneity evidence without interpreting any point summary of $\theta$. What the data identify tightly is the treatment effect inside the learned subgroup. At $\tau=0.02$ the identified functional $\delta+\gamma$ has posterior mean $-4.31$ with standard deviation $1.32$ and $95\%$ credible interval $(-7.09,-1.90)$ under the parametric baseline and $-4.37$ ($1.33$) with interval $(-7.16,-2.01)$ under BART, and its posterior mean lies between $-4.31$ and $-4.43$ at every grid value under both baselines with standard deviation $1.26$ to $1.45$ (Figure~\ref{fig:premier_stability}(a)), clearly beyond the $3$ mmHg threshold. The decomposition of this effect is weakly identified. At $\tau=0.02$ the contrast $\gamma$ has posterior mean $-1.28$ with standard deviation $3.81$ and interval $(-9.19,6.23)$ under the parametric baseline and $-1.32$ ($4.31$) under BART, the outside effect $\delta$ is $-3.03$ ($3.43$) and $-3.05$ ($3.93$), and the posterior correlation between the two components is $-0.94$ under the parametric baseline and $-0.95$ under BART, so the sum is pinned down while its split into a global effect and a subgroup increment is not, the real-data face of the weak identification the protocol is designed for. The boundary posterior is correspondingly diffuse, with $\lambda_{\max}(M)$ between $0.26$ and $0.32$ across the pooled grid against the dispersion floor $1/q\approx0.17$ and average hard membership $\bar q$ between $0.63$ and $0.71$. The substantive conclusion is a clinically meaningful blood pressure benefit among treated participants assigned to the learned subgroup side, with no reportable evidence that the benefit differs across any covariate-defined boundary. For the identified sum, four dispersed chains give rank-normalized $\widehat R$ at most $1.012$ with bulk effective sample sizes of at least $289$ for the pooled fits and $231$ for the arm fits reported below, while the split components mix more slowly near the diffuse boundary, with in-window bulk effective sizes of $110$ to $339$ after four pooled fits, the parametric at $\tau\in\{0.05,0.1\}$ and BART at $\tau\in\{0.035,0.1\}$, were extended to $60000$ iterations, slow mixing along the genuinely weakly identified direction rather than a sampler failure (Supplementary Section~\ref{sec:additional_analysis_results}). An earlier run of the oversmoothed fits with the augmented sampler failed these diagnostics, with $\widehat R$ up to $1.40$ under the BART baseline, a real-data instance of the wide-gate mixing pathology of Section~\ref{sec:simulations} that the collapsed update resolves, and every value reported here uses the collapsed update. These fits impose the hemisphere convention by truncation, and a canonicalized full-sphere refit of the central fit leaves the reporting functional unchanged up to Monte Carlo error ($\Pi(H_\delta\mid\mathcal D_n)=0.45$ against $0.43$) while widening the convention-dependent marginals of the individual components, consistent with the weak identification of the split.

The pooled contrast averages two different active interventions, and a sensitivity analysis fixed in advance of these analyses repeats the primary analysis under the parametric baseline within each arm against advice only (Supplementary Table~\ref{tab:premier_arms}). Both arms return $\mathsf a_0$ at every grid value. In the established recommendations arm the identified effect $\delta+\gamma$ at $\tau=0.02$ is $-4.32$ with interval $(-9.84,-1.47)$, $\Pi(H_\delta\mid\mathcal D_n)$ is $0.53$ to $0.56$ across the grid, and the boundary is diffuse with $\lambda_{\max}(M)$ near $0.24$. In the arm augmented with the structured dietary component the identified effect at $\tau=0.02$ is $-4.79$ with interval $(-7.99,-0.72)$, the outside effect $\delta$ is $-3.36$ with interval $(-8.14,0.74)$, $\Pi(H_\delta\mid\mathcal D_n)$ rises to $0.72$ to $0.75$, and the boundary posterior shows a suggestively concentrated direction with $\lambda_{\max}(M)$ between $0.61$ and $0.66$, yet the evidence stays short of the reporting bar, exactly the pattern the protocol is built to decline to over-read. The arm-specific samples are roughly one third smaller and the reporting inputs were fixed in advance for the pooled contrast, so these are sensitivity findings rather than confirmatory claims.

\begin{table*}
  \centering
  \caption{Primary continuous-covariate analysis of PREMIER at the central in-window smoothing value $\tau=0.02$ for $n=810$. Columns report the posterior probability of the clinically meaningful heterogeneity event $H_\delta=\{|\gamma|\ge3\}$, the Bayes action under $p_{\mathrm{report}}=0.9$, the boundary stability $\lambda_{\max}(M)$, the average hard membership $\bar q$, and posterior means with $95\%$ credible intervals for the outside effect $\delta$, the heterogeneity contrast $\gamma$, and the identified effect inside the subgroup $\delta+\gamma$. Both baselines select $\mathsf a_0$.}
  \begin{adjustbox}{width=\textwidth}
  \begin{tabular}{c rrrr ccc}
    \toprule
      \textbf{Model}
      & $\Pi(H_\delta\mid \mathcal D_n)$ & \textup{Bayes action} & $\lambda_{\max}(M)$
      & $\bar{q}$ & $\delta$ (95\% CI) & $\gamma$ (95\% CI)
      & $\delta+\gamma$ (95\% CI)  \\
    \midrule
    Parametric & 0.43 & $\mathsf{a}_0$ & 0.26 & 0.67 & $-3.03$ $(-10.09,\,5.18)$ & $-1.28$ $(-9.19,\,6.23)$ & $-4.31$ $(-7.09,\,-1.90)$  \\
    BART       & 0.43 & $\mathsf{a}_0$ & 0.26 & 0.66 & $-3.05$ $(-9.04,\,6.61)$ & $-1.32$ $(-10.46,\,5.42)$ & $-4.37$ $(-7.16,\,-2.01)$  \\
    \bottomrule
  \end{tabular}
  \end{adjustbox}
  \label{tab:contrast_decision_rule_result}
\end{table*}

\begin{figure}[!htbp]
  \centering
  \includegraphics[width=\textwidth]{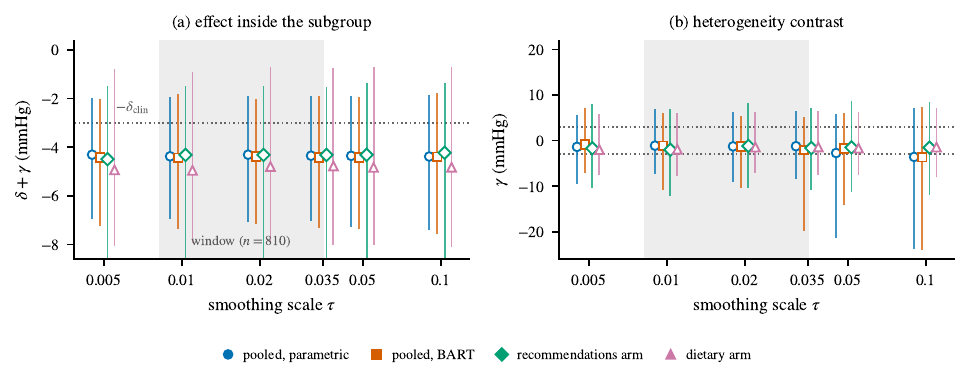}
  \caption{Stability of the PREMIER effect summaries across the smoothing grid. Panel (a) shows the identified effect inside the subgroup $\delta+\gamma$ and panel (b) the heterogeneity contrast $\gamma$, for the pooled parametric, pooled BART, established recommendations arm, and dietary arm fits. Markers show posterior means and vertical bars the central 95\% credible intervals, dodged horizontally for legibility, and all markers are open because the reporting rule selects $\mathsf{a}_0$ at every value. Dotted horizontal lines mark the clinical threshold, $-3$ mmHg in (a) and $\pm3$ mmHg in (b), and the shaded band marks the window for $n=810$. The identified effect in (a) is stable and clinically relevant across the grid, while the contrast in (b) is diffuse.}
  \label{fig:premier_stability}
\end{figure}

\subsection{Secondary analysis and sensitivity}
\label{sec:sensitivity_analysis_delta_p}
The secondary analysis uses the purely binary profile covariates of the original design, following \citet{Svetkey2005} in taking self-reported race, sex, hypertension status, and age above fifty with their two-way interactions, under the same absorbable specification with the treatment indicator in $W$. Because every component of $Z$ is a binary indicator or an interaction of indicators, the boundary score $Z^\top\theta$ takes at most $16$ distinct values, the direction is only set-identified, and the analysis is inference on the induced partition under $\tau$-regularization rather than on a unique direction (Supplementary Remark~\ref{rem:nonregular_rates}). The conclusions match the primary analysis. At $\tau=0.035$ the contrast $\gamma$ has posterior mean $-0.70$ with standard deviation $7.18$ under the parametric baseline and $-0.50$ ($3.92$) under BART, with tail probabilities $0.52$ and $0.48$, while the identified effect $\delta+\gamma$ is $-4.03$ ($1.47$) and $-4.04$ ($1.46$) and stays between $-4.0$ and $-4.3$ across the smoothing values examined, with standard deviation $1.2$ to $1.5$. The tail probability is $0.57$ and $0.65$ at $\tau=0.01$ and $0.72$ and $0.75$ at $\tau=0.1$ under the two baselines, never approaching the bar, so the Bayes action is $\mathsf a_0$ everywhere and no secondary boundary is reported. At $\tau=0.035$ the principal boundary directions of both baselines load on the product of the female and African American indicators, the leading coordinate shifting with the smoothing value, but with $\lambda_{\max}(M)$ below $0.37$ everywhere, against a dispersion floor near $0.09$ for the eleven binary and interaction covariates, they are not interpreted. Table~\ref{tab:premier_profile_summary} instead makes the boundary uncertainty visible at the patient-profile level through posterior hard-membership probabilities $q(z)$. The eight profiles cover $72.5$ percent of the sample, and their memberships are compressed between $0.53$ and $1.00$. Only younger non-African American men without baseline hypertension receive near-certain membership, no profile is confidently assigned to the complement, and the two baselines agree closely on every profile. Full decision summaries, model-dependent boundary loadings, and smoothing-sensitivity results are in Supplementary Section~\ref{sec:additional_analysis_results}.

\begin{table}[!htbp]
  \centering
  \caption{Eight most common observed clinical profiles in PREMIER and their posterior hard-membership probabilities $q(z)$ at $\tau=0.035$ in the secondary binary analysis. The eight profiles cover $72.5$ percent of the sample.}
  \begin{adjustbox}{width=0.9\textwidth}
  \begin{tabular}{l l l l r r r}
    \toprule
    \textbf{Age 50+} & \textbf{Female} & \textbf{African American} & \textbf{Baseline HTN} & \textbf{$n$ (\%)} & \textbf{$q(z)$ Parametric} & \textbf{$q(z)$ BART} \\
    \midrule
    No  & Yes & No  & No  & 103 (12.7) & 0.76 & 0.75 \\
    Yes & Yes & No  & No  & 81 (10.0)  & 0.70 & 0.68 \\
    No  & Yes & Yes & No  & 77 (9.5)   & 0.60 & 0.53 \\
    No  & No  & No  & No  & 76 (9.4)   & 1.00 & 1.00 \\
    Yes & No  & No  & No  & 73 (9.0)   & 0.78 & 0.74 \\
    Yes & Yes & No  & Yes & 73 (9.0)   & 0.67 & 0.66 \\
    Yes & No  & No  & Yes & 58 (7.2)   & 0.70 & 0.68 \\
    Yes & Yes & Yes & No  & 46 (5.7)   & 0.58 & 0.53 \\
    \bottomrule
  \end{tabular}
  \end{adjustbox}
  \label{tab:premier_profile_summary}
\end{table}

For the reporting thresholds, neither decision is fragile. The pooled primary tail probability at $\tau=0.02$ is $0.43$ under both baselines, so at the $3$ mmHg threshold the $\mathsf a_0$ action would reverse only for an evidence threshold below about $0.43$, well outside any conventional choice, and the secondary tail probabilities move between $0.48$ and $0.75$ across the smoothing values examined without approaching the bar, so no smoothing value alone reverses the secondary decision either. The protocol makes the location of the decision surface explicit rather than hiding it, and full sensitivity tables over the reporting preferences and the smoothing grid for the secondary analysis are in Supplementary Section~\ref{sec:additional_analysis_results}.

\section{Discussion}
\label{sec:conclusion}
This paper developed a Bayesian framework for change-plane regression organized around an explicitly declared smoothed estimand, replacing the brittle hard-threshold likelihood by a probit-gated working likelihood at a reported scale $\tau$ and treating the $\tau$-regularized subgroup rule $\eta^\star(\tau)$ as a legitimate estimand whose relation to the hard-threshold ideal is quantified. Under the anisotropic theory of Section~\ref{sec:theorems} the pseudo-true path lies within $O(\tau)$ of $\eta_0$, the posterior recovers the hard-threshold boundary at rate $\tau_n$ up to logarithmic factors, and on the window \eqref{eq:window} the regression-block posterior satisfies a Bernstein--von Mises theorem centered at the hard-threshold value with the known-subgroup oracle information (Theorem~\ref{thm:effect_bvm}), so contrast inference is asymptotically exact while the boundary keeps its smoothing shift at its faster scale. The reporting protocol of Section~\ref{sec:reporting_protocol} operationalizes this division of labor with consistency guarantees for the report decision and the covariate-level membership probabilities (Proposition~\ref{prop:protocol_consistency}). The framework is modular rather than tied to the Gaussian baseline emphasized here, and the same augmentation device combines with semiparametric baselines, sparsity priors for high-dimensional effect modifiers, and outcome-specific likelihoods for generalized linear \citep{Huang2021}, survival \citep{Wei2018}, and other structured settings \citep[e.g.,][]{Zhang2022}.

Several open problems remain, the first boundary geometry beyond the generic margin, where margin exponents $\alpha\neq1$ change the anisotropic rescaling and purely discrete $Z$ makes the direction set-identified, so the natural target becomes the induced partition rather than a unique direction. The second is posterior asymptotics for the semiparametric BART baseline, for which we currently offer only the working-model construction of Supplementary Section~\ref{sec:semiparametric_extension_details}. The third is weak-identification asymptotics for the reporting protocol when $\gamma_0$ is local to zero rather than fixed. The fourth is extending the theory to generalized linear \citep{Huang2021} and survival \citep{Wei2018} outcomes while preserving the interpretability of the protocol, as in related structured settings \citep{Zhang2022}.

\section*{Acknowledgment}
Research in this article was supported by the United States National Institutes of Health (NIH), National Heart, Lung, and Blood Institute (NHLBI, grant number R01-HL168202). All statements in this report, including its findings and conclusions, are solely those of the authors and do not necessarily represent the views of the NIH.

\bibliographystyle{plainnat}
\bibliography{literature}

\newpage
\appendix
\setcounter{page}{1}
\setcounter{section}{0}
\renewcommand{\thesection}{S\arabic{section}}
\renewcommand{\theequation}{S\arabic{equation}}
\renewcommand{\thefigure}{S\arabic{figure}}
\renewcommand{\thetable}{S\arabic{table}}
\renewcommand{\thetheorem}{S\arabic{theorem}}
\renewcommand{\thelemma}{S\arabic{lemma}}
\renewcommand{\thecorollary}{S\arabic{corollary}}
\renewcommand{\theproposition}{S\arabic{proposition}}
\renewcommand{\theassumption}{S\arabic{assumption}}
\renewcommand{\thedefinition}{S\arabic{definition}}
\renewcommand{\theremark}{S\arabic{remark}}
\renewcommand{\theexample}{S\arabic{example}}
\renewcommand{\appendixname}{Supplementary Material}

\section{Algorithm Details and Semiparametric Extension}
\label{sec:semiparametric_extension_details}

This section specifies the complete posterior sampling cycle summarized in Section~\ref{sec:inference}, and develops the semiparametric extension of the baseline mean referenced in Section~\ref{sec:model_identification}.

\subsection{The complete Gibbs cycle}
\label{sec:algorithm_details}

We first record every conditional update of the sampler for the parametric baseline of Section~\ref{sec:cp_model}; the modification for a nonparametric baseline is given in Supplementary Section~\ref{sec:generic_post_sample}. Throughout, $\varphi_\sigma(u)=(2\pi\sigma^2)^{-1/2}\exp\{-u^2/(2\sigma^2)\}$ and $\pi_{\theta,\tau}(z)=\Phi(z^\top\theta/\tau)$ as in Section~\ref{sec:working_likelihood}.

\newcounter{algsweep}
\providecommand{\theHalgsweep}{\arabic{algsweep}}
\refstepcounter{algsweep}\label{alg:gibbs}%
\paragraph{Algorithm~\thealgsweep\ (one MCMC sweep, parametric baseline).}
One sweep of the sampler performs, in order:
\begin{enumerate}[leftmargin=*,nosep]
\item \textbf{Update $D_i$ given $(\beta,\gamma,\sigma^2,\theta)$.}
For each $i$, the full conditional is Bernoulli with success probability
\begin{equation*}
\Pr(D_i=1\mid\cdot)=
\frac{\pi_{\theta,\tau}(Z_i)\,\varphi_\sigma\!\big(Y_i-W_i^\top\beta-X_i^\top\gamma\big)}
{\{1-\pi_{\theta,\tau}(Z_i)\}\,\varphi_\sigma\!\big(Y_i-W_i^\top\beta\big)+\pi_{\theta,\tau}(Z_i)\,\varphi_\sigma\!\big(Y_i-W_i^\top\beta-X_i^\top\gamma\big)}.
\end{equation*}
\item \textbf{Update $T_i$ given $(D_i,\theta)$.}
For each $i$, sample
\begin{align*}
    T_i\mid(D_i=1,\theta,Z_i)\sim\mathcal{TN}(Z_i^\top\theta,\tau^2,0,\infty), \quad
    T_i\mid(D_i=0,\theta,Z_i)\sim\mathcal{TN}(Z_i^\top\theta,\tau^2,-\infty,0),
\end{align*}
where $\mathcal{TN}(m,s^2,l,u)$ denotes the normal distribution with location parameter $m$ and variance $s^2$ truncated to $(l,u)$. The second argument is the variance $\tau^2$, not the standard deviation.
\item \textbf{Update $\beta$ given $(D,\gamma,\sigma^2)$.}
The residuals $Y_i-X_i^\top\gamma D_i$ follow the Gaussian regression $Y_i-X_i^\top\gamma D_i=W_i^\top\beta+\varepsilon_i$, $\varepsilon_i\sim\mathcal N(0,\sigma^2)$. Under the prior $\beta\sim\mathcal N(m_\beta,V_\beta)$, the full conditional is $\beta\mid\cdot\sim\mathcal N(m_{\beta\mid\cdot},V_{\beta\mid\cdot})$ with
\begin{align*}
    V_{\beta\mid\cdot}=\Big(V_\beta^{-1}+\sigma^{-2}\sum_{i=1}^n W_iW_i^\top\Big)^{-1},\qquad
    m_{\beta\mid\cdot}=V_{\beta\mid\cdot}\Big(V_\beta^{-1}m_\beta+\sigma^{-2}\sum_{i=1}^n W_i\big(Y_i-X_i^\top\gamma D_i\big)\Big).
\end{align*}
\item \textbf{Update $\gamma$ given $(D,\beta,\sigma^2)$.}
Let $\widetilde X_i:=D_iX_i\in\R^r$ and define the working regression $Y_i-W_i^\top\beta=\widetilde X_i^\top\gamma+\varepsilon_i$, $\varepsilon_i\sim\mathcal N(0,\sigma^2)$.
Under a Gaussian prior $\gamma\sim\mathcal N(m_\gamma,V_\gamma)$, the full conditional is $\gamma\mid\cdot \sim \mathcal N(m_{\gamma\mid\cdot},V_{\gamma\mid\cdot})$ with
\begin{align*}
    V_{\gamma\mid\cdot} =\Big(V_\gamma^{-1}+\sigma^{-2}\sum_{i=1}^n \widetilde X_i\widetilde X_i^\top\Big)^{-1},\qquad
    m_{\gamma\mid\cdot} = V_{\gamma\mid\cdot}\Big(V_\gamma^{-1}m_\gamma+\sigma^{-2}\sum_{i=1}^n \widetilde X_i \big(Y_i-W_i^\top\beta\big)\Big).
\end{align*}
In all experiments we use $m_\beta=0$, $V_\beta=100\,I_p$ and $m_\gamma=0$, $V_\gamma=100\,I_r$.
\item \textbf{Update $\sigma^2$ given $(D,\beta,\gamma)$.}
With an inverse-gamma prior $\sigma^2\sim\mathrm{InvGamma}(\sigma_a,\sigma_b)$, the full conditional is
\begin{equation*}
\sigma^2\mid\cdot \sim \mathrm{InvGamma}\!\left(\sigma_a+\frac n2,\ \sigma_b+\frac12\sum_{i=1}^n\big(Y_i-W_i^\top\beta-\widetilde X_i^\top\gamma\big)^2\right).
\end{equation*}
We fix $\sigma_a=2.0$, $\sigma_b=1.0$ throughout.
\item \textbf{Update $\theta$ given $T$.}
Under the uniform hemisphere prior, the conditional log target is $\ell_T(\theta)=-\tfrac{1}{2\tau^2}\sum_{i=1}^n(T_i-Z_i^\top\theta)^2$ truncated to $\mathbb S^{q-1}_+$, and $\theta$ is updated by the great-circle analogue of elliptical slice sampling \citep{Murray2010,Neal2003}. Given the current $\theta$, one transition performs the following three steps.

When the heterogeneity term is $W$-absorbable, for instance when the treatment is a column of $W$ so that the baseline carries an ungated treatment effect $\delta$, the flip orbit $(\theta,\gamma,\delta)\mapsto(-\theta,-\gamma,\delta+\gamma)$ is an exact reparameterization, and the posterior on the full sphere consists of two mirror-image modes. Truncating the prior to the hemisphere removes the antipodal mode but not the shoulder of its basin, which survives inside the constraint as a spurious local mode near the seam, and with strong heterogeneity dispersed chains can stick there, a failure the rank-normalized diagnostics flag immediately. We therefore impose the sign convention \eqref{eq:hemisphere} by canonicalization rather than truncation in the absorbable case. The sampler runs on the full sphere with the constraint factor removed, and each retained draw with $\theta_1<0$ is mapped to its orbit representative, $\theta\mapsto-\theta$, $\gamma\mapsto-\gamma$, $\delta\mapsto\delta+\gamma$. The two implementations target the same identified quantities, the reporting functional and the inside effect $\delta+\gamma$ are invariant to the choice by construction, and the canonicalized sampler traverses the mirror modes along the great circles that connect them. Dispersed default initialization places chains at $\gamma$ near zero with a random direction, inside a near-homogeneous basin of the absorbable posterior in which the gate is unused, and chains can fail to leave it within the burn-in. The fits therefore initialize each chain from one of the top directions of a profile scan over $128$ random hemisphere directions, distinct across chains, with the regression block initialized by least squares given the induced partition, after which rank-normalized $\widehat R$ for $\gamma$ stays below $1.05$ in all but three of the six thousand absorbable effect-study replicates, each at the smallest sample size $n=250$. The simulations of Section~\ref{sec:simulations} under the absorbable specification and the diagnostics reported there use this canonicalized update.
\begin{enumerate}[label=\textup{(\arabic*)},nosep,leftmargin=2.2em]
\item Draw $\nu\sim\mathcal N(0,I_q)$ and project onto the tangent space at $\theta$, $\nu_\perp=\nu-(\nu^\top\theta)\theta$, $u=\nu_\perp/\|\nu_\perp\|_2$, so that $u^\top\theta=0$ and $\theta(\varphi)=\theta\cos\varphi+u\sin\varphi$ traces the great circle through $\theta$ in direction $u$, with $\theta(0)=\theta$ and $\|\theta(\varphi)\|_2=1$ for all $\varphi$.
\item Draw the slice level $\log y=\ell_T(\theta)+\log U$ with $U\sim\mathrm{Unif}(0,1)$.
\item Draw $\varphi\sim\mathrm{Unif}(0,2\pi)$, initialize the bracket $(\varphi_{\min},\varphi_{\max})=(\varphi-2\pi,\varphi)$, and propose $\theta(\varphi)$. If $\theta(\varphi)\in\mathbb S^{q-1}_+$ and $\ell_T(\theta(\varphi))\ge\log y$, accept. Otherwise shrink the bracket toward zero (setting $\varphi_{\min}\leftarrow\varphi$ if $\varphi<0$ and $\varphi_{\max}\leftarrow\varphi$ otherwise), redraw $\varphi\sim\mathrm{Unif}(\varphi_{\min},\varphi_{\max})$, and repeat.
\end{enumerate}
These are the steps 1--3 referenced by Proposition~\ref{prop:ess_invariance} and its proof, with the randomized bracket of \citet{Murray2010} making the update tuning-free and exactly invariant. Under the normalized-horseshoe prior, this step is replaced by step~7 below.
\item[6$'$.] \textbf{Collapsed variant: joint update of $(\theta,D,T)$.}
First draw $\theta$ from the collapsed conditional
$p(\theta\mid Y,\beta,\gamma,\sigma^2)\propto\mathbbm 1\{\theta\in\mathbb S^{q-1}_+\}\prod_{i=1}^n p_{\eta,\tau}(Y_i\mid W_i,X_i,Z_i)$,
with $(D,T)$ integrated out analytically via the mixture \eqref{eq:working_mix}; the same great-circle slice scheme applies verbatim with $\ell_T$ replaced by the observed-data log-likelihood. Then refresh $(D_i,T_i)$ from their exact conditionals given the new $\theta$: $D_i$ from the Bernoulli of step~1 and $T_i$ from the truncated normal of step~2. Because this is a blocked draw from the joint conditional of $(\theta,D,T)$, invariance of the Gibbs scheme is preserved. The collapsed variant was originally motivated by the small-$\tau$ regime, where the augmented update of step~6 has relative step size of order $\tau^{1/2}$ (Section~\ref{sec:inference}). We recommend it at every smoothing scale (see Section~\ref{sec:simulations} for the diagnostic comparison with the augmented update).
\item[7.] \textbf{Horseshoe blocks (normalized-horseshoe prior).}
When the prior of Section~\ref{sec:priors} is used, $\theta=\theta(\nu)$ with $\nu\in\R^q$, $\nu\mid\lambda,\lambda_0\sim\mathcal N(0,\Sigma_\lambda)$, $\Sigma_\lambda=\lambda_0^2\,\mathrm{diag}(\lambda_1^2,\dots,\lambda_q^2)$. The update has two parts.
\begin{enumerate}[label=(\alph*),nosep]
\item Elliptical slice update of $\nu$. Conditional on $(\lambda_0,\lambda_1,\dots,\lambda_q)$ and $T$, the target for $\nu$ is the Gaussian prior $\mathcal N(0,\Sigma_\lambda)$ times the likelihood $L(\nu)=\exp\{\ell_T(\theta(\nu))\}$, which depends on $\nu$ only through $\theta(\nu)$. This is exactly the setting of standard elliptical slice sampling \citep{Murray2010}, and we use it as published, in particular with the standard randomized bracket. (In the collapsed variant, $\ell_T$ is again replaced by the observed-data log-likelihood.)
\item Makalic--Schmidt updates of the scales. Using the inverse-gamma auxiliary representation of the half-Cauchy priors \citep{MakalicSchmidt2016}, the four full conditionals are, for $j=1,\dots,q$,
\begin{align*}
\lambda_j^2\mid\nu_j,\xi_j,\lambda_0^2 &\sim \mathrm{InvGamma}\!\Big(1,\ \frac{1}{\xi_j}+\frac{\nu_j^2}{2\lambda_0^2}\Big),
&
\xi_j\mid\lambda_j^2 &\sim \mathrm{InvGamma}\!\Big(1,\ 1+\frac{1}{\lambda_j^2}\Big),
\\
\lambda_0^2\mid\nu,\lambda_{1:q}^2,\xi_0 &\sim \mathrm{InvGamma}\!\Big(\frac{q+1}{2},\ \frac{1}{\xi_0}+\frac12\sum_{j=1}^q\frac{\nu_j^2}{\lambda_j^2}\Big),
&
\xi_0\mid\lambda_0^2 &\sim \mathrm{InvGamma}\!\Big(1,\ 1+\frac{1}{\lambda_0^2}\Big).
\end{align*}
By Proposition~\ref{prop:horseshoe_prior}\ref{prop:hs_acg}, the global scale $\lambda_0$ does not affect the induced law of $\theta$; it is retained as a purely computational parameter that improves the conditioning of the elliptical slice update.
\end{enumerate}
\end{enumerate}
Steps 1--7 (with step~6 replaced by 6$'$ in the small-$\tau$ regime and by step~7(a) under the horseshoe prior) constitute one systematic-scan sweep; each component update leaves the joint posterior invariant.

\subsection{Semiparametric baseline}
\label{sec:semiparametric}

A practically important extension of the framework replaces the linear baseline $W^\top\beta$ in \eqref{eq:true_dgp} with an unknown regression surface $\mu_0(W)$. The motivation is robustness: misspecification of the baseline mean can distort subgroup detection, because the gate $\mathbbm 1\{Z^\top\theta_0\ge0\}$ is identified through contrasts in $Y$ that must be separated from systematic variation explained by $W$. Concretely, consider the hard-threshold model
$Y=\mu_0(W)+X^\top\gamma_0\,\mathbbm 1\{Z^\top\theta_0\ge0\}+\varepsilon$,
where $\|\theta_0\|_2=1$, $\mu_0$ is an unknown function on the support of $W$, and $\varepsilon$ is mean-zero noise as in Section~\ref{sec:cp_model}; this is a widely adopted form of the change-plane regression model in the literature \citep[e.g.,][]{Fan2017}.
We retain the probit-gated data augmentation at smoothing scale $\tau>0$ and posit the working model
\begin{equation}
\label{eq:semiparametric_working_model}
    Y\mid(D,W,X,Z) \sim \mathcal N\big(\mu(W)+X^\top\gamma D,\sigma^2\big), \quad
    D\mid Z \sim \mathrm{Bernoulli}\big(\pi_{\theta,\tau}(Z)\big),
\end{equation}
equivalently $T_i\mid(Z_i,\theta)\sim\mathcal N(Z_i^\top\theta,\tau^2)$ and $D_i=\mathbbm 1\{T_i\ge0\}$, where $\mu(\cdot)$ is endowed with a flexible prior, $\theta\in\mathbb S^{q-1}_+$, and $\sigma^2>0$. Under \eqref{eq:semiparametric_working_model}, $\mu(W)$ absorbs complex main effects of the baseline covariates, while treatment-effect heterogeneity is still summarized by the low-dimensional contrast $X^\top\gamma$ and the partition induced by $\theta$. Integrating out $(T_i,D_i)$ yields the mixture representation
$p(y\mid w,x,z)
=\big(1-\pi_{\theta,\tau}(z)\big)\,\varphi_{\sigma}(y-\mu(w))
+\pi_{\theta,\tau}(z)\,\varphi_{\sigma}(y-\mu(w)-x^\top\gamma)$,
the semiparametric analogue of \eqref{eq:working_mix}. Our default specification for $\mu(W)$ is a Bayesian additive regression trees (BART) prior \citep{Chipman2010}, whose use as a flexible baseline in causal applications is well established \citep{Hill2011}; the specification is detailed in Supplementary Section~\ref{sec:bart_specification}.

The extension preserves the main advantages of the approach: the subgroup boundary remains interpretable through the linear score $Z^\top\theta$, the heterogeneous effect is summarized by $X^\top\gamma$, and the nonparametric component $\mu(W)$ provides robustness to baseline misspecification. Because posterior inference is joint, uncertainty in $\mu(W)$ is automatically propagated into posterior summaries for $(\theta,\gamma)$, yielding coherent credible intervals and posterior probabilities for subgroup effects even when the baseline mean is complex.

\subsection{Generic posterior sampling with a nonparametric baseline}
\label{sec:generic_post_sample}

Posterior computation alternates between the latent gate variables $(D,T)$, the parametric block $(\gamma,\sigma^2,\theta)$, and the nonparametric baseline $\mu$. Write $\mu_i=\mu(W_i)$ and define the residualized outcomes $Y_i^\dagger:=Y_i-X_i^\top\gamma\,D_i$. A single MCMC sweep proceeds as follows.

\begin{enumerate}[leftmargin=*,nosep]
\item \textbf{Update $D_i$ given $(\mu,\gamma,\sigma^2,\theta)$.}
As in step~1 of Algorithm~\ref{alg:gibbs} with $W_i^\top\beta$ replaced by $\mu_i$:
\begin{equation*}
\Pr(D_i=1\mid\cdot)=
\frac{\pi_{\theta,\tau}(Z_i)\,\varphi_\sigma\!\big(Y_i-\mu_i-X_i^\top\gamma\big)}
{\{1-\pi_{\theta,\tau}(Z_i)\}\,\varphi_\sigma\!\big(Y_i-\mu_i\big)+\pi_{\theta,\tau}(Z_i)\,\varphi_\sigma\!\big(Y_i-\mu_i-X_i^\top\gamma\big)}.
\end{equation*}
\item \textbf{Update $T_i$ given $(D_i,\theta)$.}
Exactly as in step~2 of Algorithm~\ref{alg:gibbs}: a normal with location parameter $Z_i^\top\theta$ and variance $\tau^2$, truncated to $[0,\infty)$ if $D_i=1$ and to $(-\infty,0)$ if $D_i=0$.
\item \textbf{Update $\gamma$ given $(D,\mu,\sigma^2)$.}
The conjugate Gaussian update of step~4 of Algorithm~\ref{alg:gibbs} applied to the working regression $Y_i-\mu_i=\widetilde X_i^\top\gamma+\varepsilon_i$ with $\widetilde X_i=D_iX_i$; that is, $\gamma\mid\cdot\sim\mathcal N(m_{\gamma\mid\cdot},V_{\gamma\mid\cdot})$ with
\begin{align*}
    V_{\gamma\mid\cdot} =\Big(V_\gamma^{-1}+\sigma^{-2}\sum_{i=1}^n \widetilde X_i\widetilde X_i^\top\Big)^{-1},\qquad
    m_{\gamma\mid\cdot} = V_{\gamma\mid\cdot}\Big(V_\gamma^{-1}m_\gamma+\sigma^{-2}\sum_{i=1}^n \widetilde X_i (Y_i-\mu_i)\Big),
\end{align*}
and $m_\gamma=0$, $V_\gamma=100\,I_r$ throughout.
\item \textbf{Update $\sigma^2$ given $(D,\mu,\gamma)$.}
The inverse-gamma update of step~5 of Algorithm~\ref{alg:gibbs} with residuals $Y_i-\mu_i-\widetilde X_i^\top\gamma$; we fix $\sigma_a=2.0$, $\sigma_b=1.0$ (but see the remark in Supplementary Section~\ref{sec:bart_specification} for the BART case).
\item \textbf{Update $\theta$ given $T$.}
This step is unchanged by the semiparametric baseline. Conditional on $T$, the relevant likelihood term is Gaussian, $T_i\mid(Z_i,\theta)\sim\mathcal N(Z_i^\top\theta,\tau^2)$, with $\theta\in\mathbb S^{q-1}_+$. Under the uniform prior on $\mathbb S^{q-1}_+$ we use the great-circle slice sampler of Section~\ref{sec:inference}, with log target proportional to $-\frac{1}{2\tau^2}\sum_{i=1}^n(T_i-Z_i^\top\theta)^2$ and proposals restricted to the hemisphere via truncation of the log target outside $\mathbb S^{q-1}_+$; under the normalized-horseshoe prior of Section~\ref{sec:priors}, the elliptical-slice and Makalic--Schmidt blocks of step~7 of Algorithm~\ref{alg:gibbs} apply unchanged. The collapsed variant (step~6$'$) likewise applies verbatim with the mixture density evaluated at $\mu_i$ in place of $W_i^\top\beta$.
\item \textbf{Update $\mu$ given $(Y^\dagger,W,\sigma^2)$.}
Given $(D,\gamma)$, the model reduces to a nonparametric Gaussian regression of $Y^\dagger$ on $W$, $Y_i^\dagger=\mu(W_i)+\varepsilon_i$, $\varepsilon_i\sim\mathcal N(0,\sigma^2)$,
so the $\mu$-update depends only on the chosen prior for $\mu(\cdot)$; the BART update is given in Supplementary Section~\ref{sec:bart_specification}.
\end{enumerate}

\subsection{BART specification}
\label{sec:bart_specification}

\subsubsection{Sum-of-trees representation}
BART \citep{Chipman2010} represents the baseline mean as a sum of $m$ regression trees, $\mu(W)=\sum_{t=1}^m g(W;\mathcal T_t,\mathcal M_t)$,
where each tree $\mathcal T_t$ defines a partition of the predictor space and $\mathcal M_t$ collects terminal-node means. Given $(\mathcal T_t,\mathcal M_t)_{t=1}^m$, the baseline vector is $f_i=\mu(W_i)=\sum_{t=1}^m g(W_i;\mathcal T_t,\mathcal M_t)$, and $Y_i^\dagger\mid(\mu,\sigma^2)\sim\mathcal N(\mu(W_i),\sigma^2)$.

\subsubsection{Posterior inference}
The BART update uses the standard Bayesian backfitting strategy, conditional on $(Y^\dagger,W,\sigma^2)$:
\begin{enumerate}[leftmargin=*,nosep]
\item For $t=1,\dots,m$, form the partial residuals $R_i^{(t)}=Y_i^\dagger-\sum_{\ell\neq t} g(W_i;\mathcal T_\ell,\mathcal M_\ell)$.
\item \textbf{Update the tree structure $\mathcal T_t$.}
Propose a local modification of $\mathcal T_t$ (e.g., grow or prune a terminal node) and accept or reject the move by a Metropolis--Hastings step based on the integrated likelihood obtained by analytically integrating out terminal-node means under their Gaussian prior.
\item \textbf{Update terminal-node means $\mathcal M_t$.}
Conditional on $\mathcal T_t$, each terminal node $b$ collects observations $\{i:W_i\in b\}$. Under a Gaussian leaf prior $\mu_b\sim\mathcal N(0,\sigma_\mu^2)$ and Gaussian noise variance $\sigma^2$, the full conditional is
\begin{align*}
\mu_b\mid\cdot &\sim \mathcal N(m_{b\mid\cdot},V_{b\mid\cdot}),\qquad
V_{b\mid\cdot}=\Big(\sigma_\mu^{-2}+\frac{n_b}{\sigma^2}\Big)^{-1},\qquad
m_{b\mid\cdot}=V_{b\mid\cdot}\frac{1}{\sigma^2}\sum_{i:W_i\in b} R_i^{(t)},
\end{align*}
where $n_b$ is the number of observations in node $b$.
\item After updating $(\mathcal T_t,\mathcal M_t)$, refresh the fitted values of tree $t$ and proceed to $t+1$.
\end{enumerate}
After all trees are updated, set $\mu(W_i)=\sum_{t=1}^m g(W_i;\mathcal T_t,\mathcal M_t)$ and return to the generic sampler steps for $(D,T,\gamma,\sigma^2,\theta)$.

\subsubsection{Prior specifications}
The BART prior is characterized by: (i) the number of trees $m$; (ii) a depth-penalizing splitting rule $\Pr(\text{split at depth }d)=\alpha_{\mathrm{tree}}(1+d)^{-\beta_{\mathrm{tree}}}$, which favors shallow trees and regularizes $\mu(\cdot)$; (iii) a choice of splitting variable and cutpoint distribution (typically uniform among available predictors and observed cutpoints); and (iv) a Gaussian prior on leaf means, $\mu_b\sim\mathcal N(0,\sigma_\mu^2)$. Following standard scaling heuristics, $\sigma_\mu$ is chosen as $\sigma_\mu=c/(k\sqrt m)$ after putting $Y^\dagger$ on an interpretable scale, where $k>0$ controls shrinkage and $c$ is proportional to a marginal scale of $Y^\dagger$ (e.g., its empirical standard deviation). These defaults yield a weakly informative prior that keeps individual trees small while allowing the ensemble to represent complex regression surfaces. We fix $(m, \alpha_{\mathrm{tree}}, \beta_{\mathrm{tree}}, k, c)=(200, 0.95, 2.0, 2.0, 1.0)$ in our simulations and analysis for stable mixing.

\begin{remark}[A single source of truth for $\sigma^2$]
\label{rem:sigma_single_source}
Under the BART specification of $\mu$, the residual variance $\sigma^2$ in \eqref{eq:semiparametric_working_model} is the same object as the residual variance maintained by BART internally; the $\sigma^2$-update in step~4 of the generic sampler is then the BART scaled-inverse-$\chi^2$ update of \citet{Chipman2010}, and a single source of truth for $\sigma^2$ must be enforced at the implementation level, meaning the variance is updated exactly once per sweep and shared by the regression and tree blocks, to avoid double counting.
\end{remark}

\section{The normalized horseshoe prior}\label{sec:horseshoe_construction}

When $Z$ contains many candidate effect modifiers, an unregularized direction is pulled toward noise coordinates. To learn a sparse boundary while preserving the interpretable score $z^\top\theta$, we replace the uniform prior by a normalized horseshoe prior \citep{Carvalho2009}. Introduce an unconstrained $\nu\in\R^q$ with
\begin{equation}
\nu_j \mid \lambda_j,\lambda_0 \sim \mathcal N\big(0,\lambda_0^2\lambda_j^2\big),\qquad
\lambda_0\sim\mathcal C^+(0,1),\qquad \lambda_j\sim\mathcal C^+(0,1)\quad (j=1,\dots,q),
\label{eq:horseshoe}
\end{equation}
where $\mathcal C^+(0,1)$ is the standard half-Cauchy distribution, and define
\[
\theta(\nu)=\mathrm{sign}(\nu_1)\,\nu/\|\nu\|_2\in\mathbb S^{q-1}_+,\qquad \mathrm{sign}(0):=1.
\]
This construction follows projected and normalized priors for directional parameters \citep{Presnell1998,Antoniadis2004}. Proposition~\ref{prop:horseshoe_prior} shows the induced direction is angular central Gaussian, free of the global scale $\lambda_0$, with a hemisphere density continuous and positive off the coordinate axes and an envelope proportional to $\prod_j|\theta_j|^{-1}$ that spikes along sparse directions. The prior-positivity condition of Section~\ref{sec:theorems} then holds at any direction with no zero coordinate, covering the horseshoe used in our high-dimensional simulations and application whenever the pseudo-true direction is non-sparse. The half-Cauchy scales admit the inverse-gamma representation $\lambda_j^2\mid\xi_j\sim\mathrm{IG}(1/2,1/\xi_j)$, $\xi_j\sim\mathrm{IG}(1/2,1)$ \citep{MakalicSchmidt2016}, yielding closed-form updates.

\subsection{Normalized horseshoe prior, statement and proof}\label{proof:prop:horseshoe_prior}

\begin{proposition}[Induced prior on the hemisphere]\label{prop:horseshoe_prior}
Let $\nu$ follow \eqref{eq:horseshoe} and $\theta=\theta(\nu)$.
\begin{enumerate}[label=(\alph*),nosep]
\item\label{prop:hs_acg} Conditional on $(\lambda_1,\dots,\lambda_q)$, the law of $\theta$ does not depend on the global scale $\lambda_0$: the direction of $\mathcal N(0,s^2\Lambda)$ follows the angular central Gaussian law $\mathrm{ACG}(\Lambda)$, $\Lambda=\mathrm{diag}(\lambda_1^2,\dots,\lambda_q^2)$, for every $s>0$. Sparsity of the direction is therefore driven entirely by the local scales, and $\lambda_0$ is a purely computational parameter.
\item\label{prop:hs_density} The marginal law of $\theta$ has a density $\pi_{\mathrm{HS}}$ with respect to the surface measure on $\mathbb S^{q-1}_+$ satisfying, for a constant $C_q<\infty$,
\[
0<\pi_{\mathrm{HS}}(\theta)\le C_q\prod_{j=1}^q|\theta_j|^{-1}\quad\text{for all }\theta\text{ with }\theta_j\neq0\ \forall j,
\]
and $\pi_{\mathrm{HS}}$ is continuous and strictly positive on $\{\theta\in\mathbb S^{q-1}_+:\theta_j\neq0\ \forall j\}$. In particular, the local prior-positivity condition of Assumption~\ref{asmp:A_fixed_tau}\ref{subasmp:A_fixed_prior} holds at any $\theta^\star$ with no zero coordinate.
\end{enumerate}
\end{proposition}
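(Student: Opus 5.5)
For part (a) we use only scale invariance of the normalization. Conditional on $(\lambda_0,\lambda_1,\dots,\lambda_q)$, write $\nu=\lambda_0\Lambda^{1/2}g$ with $g\sim\mathcal N(0,I_q)$. Since $\lambda_0>0$, we have $\nu/\|\nu\|_2=\Lambda^{1/2}g/\|\Lambda^{1/2}g\|_2$ and $\mathrm{sign}(\nu_1)=\mathrm{sign}(g_1)$. Hence $\theta(\nu)$ is a measurable function of $(\Lambda,g)$ alone, and its conditional law does not depend on $\lambda_0$.

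Next we identify the law of the unsigned direction $\Lambda^{1/2}g/\|\Lambda^{1/2}g\|_2$ as $\mathrm{ACG}(\Lambda)$ by the standard polar-coordinate computation. Write $\nu=r u$ with $r>0$ and $u\in\mathbb S^{q-1}$, so that $d\nu=r^{q-1}\,dr\,d\sigma(u)$. Integrating the Gaussian density over $r$ gives the angular density
\[
\frac{\Gamma(q/2)}{2\pi^{q/2}}\,|\Lambda|^{-1/2}\,(u^\top\Lambda^{-1}u)^{-q/2}.
\]
This density is even in $u$. The sign map therefore folds it onto $\mathbb S^{q-1}_+$, doubling the density there, and the seam $\{u_1=0\}$ is null.

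For part (b) we mix the conditional ACG density over the independent half-Cauchy local scales:
\[
\pi_{\mathrm{HS}}(\theta)=2c_q\,\E\Big[\prod_j\lambda_j^{-1}\Big(\sum_j\theta_j^2/\lambda_j^2\Big)^{-q/2}\Big].
\]
Positivity follows because the integrand is strictly positive and finite for every $\lambda\in(0,\infty)^q$.

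For the envelope, set $\sum_j\theta_j^2/\lambda_j^2\ge\theta_k^2/\lambda_k^2$ for each $k$, together with the AM–GM bound
\[
\sum_j\theta_j^2/\lambda_j^2\ \ge\ q\Big(\prod_j|\theta_j|/\lambda_j\Big)^{2/q}.
\]
This gives
\[
\prod_j\lambda_j^{-1}\Big(\sum_j\theta_j^2\lambda_j^{-2}\Big)^{-q/2}\le q^{-q/2}\prod_j|\theta_j|^{-1}.
\]
The integrand is thus uniformly bounded by $q^{-q/2}\prod_j|\theta_j|^{-1}$, with no moment of $\lambda$ needed, which yields the envelope with $C_q=2c_q q^{-q/2}$.

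Continuity on $\{\theta_j\ne0\ \forall j\}$ follows by dominated convergence. On a compact neighborhood where $|\theta_j|\ge\epsilon$ for all $j$, the same bound gives the $\lambda$-free dominating constant $q^{-q/2}\epsilon^{-q}$, and the integrand is continuous in $\theta$ for each $\lambda$.

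The prior-positivity claim of Assumption~\ref{asmp:A_fixed_tau}\ref{subasmp:A_fixed_prior} then follows. A direction $\theta^\star$ with no zero coordinate has a neighborhood on which $\pi_{\mathrm{HS}}$ is continuous and bounded below by a positive constant. Pulling back through the $C^2$ chart with nondegenerate Jacobian preserves this property for the chart density.

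The main obstacle is keeping the argument valid uniformly despite the heavy half-Cauchy tails. The AM–GM domination is the step that must be checked carefully, since it is what makes the bound hold without integrability assumptions on $\lambda$. A second point to verify is that the sign convention interacts correctly with the evenness of the ACG density, including at the seam, where the measure-zero argument suffices.
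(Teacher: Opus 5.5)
Your proof is correct and follows the paper's route: the polar-coordinate computation of the angular central Gaussian density with antipodal folding, the AM--GM bound $\prod_j\lambda_j^{-1}\big(\sum_j\theta_j^2/\lambda_j^2\big)^{-q/2}\le q^{-q/2}\prod_j|\theta_j|^{-1}$ uniformly in $\Lambda$, and dominated convergence with a $\lambda$-free constant for continuity. The differences are cosmetic. You obtain $\lambda_0$-invariance pathwise from $\nu=\lambda_0\Lambda^{1/2}g$ instead of from cancellation in the radial integral, and you obtain positivity from strict positivity of the mixing integrand instead of the paper's uniform lower bound on the event $\{\lambda_j\in[1,2]\ \forall j\}$; the paper's version also gives a hemisphere-wide lower bound, which the statement does not need.
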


\begin{proof}
Part (a). Fix $\Lambda=\mathrm{diag}(\lambda_1^2,\dots,\lambda_q^2)$ with all $\lambda_j>0$ and $s>0$, and let $\nu\sim\mathcal N(0,s^2\Lambda)$. For a Borel set $A$ of the unit sphere, passing to polar coordinates $\nu=ru$, $r>0$, $u\in\mathbb S^{q-1}$, with $d\nu=r^{q-1}\,dr\,\sigma_{q-1}(du)$,
\[
P\big(\nu/\|\nu\|\in A\big)
=\int_A\bigg[\int_0^\infty\frac{r^{q-1}}{(2\pi s^2)^{q/2}\det(\Lambda)^{1/2}}
\exp\Big\{-\frac{r^2\,u^\top\Lambda^{-1}u}{2s^2}\Big\}dr\bigg]\sigma_{q-1}(du).
\]
Substituting $t=r^2u^\top\Lambda^{-1}u/(2s^2)$, the inner integral equals $\tfrac12\Gamma(q/2)(2s^2)^{q/2}(u^\top\Lambda^{-1}u)^{-q/2}$; multiplying by the prefactor $(2\pi s^2)^{-q/2}\det(\Lambda)^{-1/2}$, every occurrence of $s$ cancels and the product is exactly $\frac{\Gamma(q/2)}{2\pi^{q/2}}\det(\Lambda)^{-1/2}(u^\top\Lambda^{-1}u)^{-q/2}$. Hence the direction $\nu/\|\nu\|$ has the angular central Gaussian density
\begin{equation}
f_{\mathrm{ACG}}(u;\Lambda)=\frac{\Gamma(q/2)}{2\pi^{q/2}}\,\det(\Lambda)^{-1/2}\,\big(u^\top\Lambda^{-1}u\big)^{-q/2},
\qquad u\in\mathbb S^{q-1},
\label{eq:acg_density}
\end{equation}
free of $s$. The hemisphere folding $\theta(\nu)=\mathrm{sign}(\nu_1)\nu/\|\nu\|$ maps $u$ and $-u$ to the same point and $f_{\mathrm{ACG}}(\cdot;\Lambda)$ is antipodally symmetric, so the induced density on $\mathbb S^{q-1}_+$ is $2f_{\mathrm{ACG}}$, still free of $s$. Taking $s=\lambda_0$ and integrating over its prior law leaves the conditional-on-$(\lambda_1,\dots,\lambda_q)$ law of $\theta$ unchanged, proving (a).

Part (b), upper bound. By the arithmetic--geometric mean inequality, for $u$ with all $u_j\neq0$,
\[
u^\top\Lambda^{-1}u=\sum_{j=1}^q\frac{u_j^2}{\lambda_j^2}
\ \ge\ q\prod_{j=1}^q\Big(\frac{u_j^2}{\lambda_j^2}\Big)^{1/q},
\qquad\text{hence}\qquad
\big(u^\top\Lambda^{-1}u\big)^{-q/2}\le q^{-q/2}\prod_{j=1}^q\frac{\lambda_j}{|u_j|}.
\]
Substituting into \eqref{eq:acg_density}, the factors $\prod_j\lambda_j$ cancel against $\det(\Lambda)^{-1/2}=\prod_j\lambda_j^{-1}$:
\[
f_{\mathrm{ACG}}(u;\Lambda)\ \le\ \frac{\Gamma(q/2)}{2\pi^{q/2}}\,q^{-q/2}\prod_{j=1}^q|u_j|^{-1}
\ =:\ \tfrac12 C_q\prod_{j=1}^q|u_j|^{-1},
\]
uniformly in $\Lambda$. Averaging over the half-Cauchy law of the local scales therefore yields
$\pi_{\mathrm{HS}}(u)=2\,\E_\Lambda[f_{\mathrm{ACG}}(u;\Lambda)]\le C_q\prod_j|u_j|^{-1}<\infty$ whenever no coordinate vanishes.

Part (b), positivity. Let $L=\{\lambda_j\in[1,2],\ j=1,\dots,q\}$, an event of probability $p_L=\big(\mathbb P(\mathcal C^+(0,1)\in[1,2])\big)^q>0$. On $L$, $\det(\Lambda)^{-1/2}\ge2^{-q}$ and $u^\top\Lambda^{-1}u\le\sum_ju_j^2=1$, so $f_{\mathrm{ACG}}(u;\Lambda)\ge\Gamma(q/2)\pi^{-q/2}2^{-q-1}$ for every $u\in\mathbb S^{q-1}$. Hence
$\pi_{\mathrm{HS}}(u)\ge2p_L\,\Gamma(q/2)\pi^{-q/2}2^{-q-1}>0$ everywhere on the hemisphere.

Part (b), continuity. Fix $u^\circ$ with all coordinates nonzero and a closed neighborhood $\mathcal U\subset\mathbb S^{q-1}_+$ of $u^\circ$ on which $\min_j|u_j|\ge c_\circ>0$. For each fixed $\Lambda$, $u\mapsto f_{\mathrm{ACG}}(u;\Lambda)$ is continuous on $\mathcal U$, and the family is dominated there by the constant $\tfrac12C_qc_\circ^{-q}$, which is integrable with respect to the (probability) law of $\Lambda$. By dominated convergence, $\pi_{\mathrm{HS}}$ is continuous on $\mathcal U$. Combining the three displays, at any $\theta^\star$ with no zero coordinate, $\pi_{\mathrm{HS}}$ is continuous and strictly positive on a neighborhood, so the local prior-positivity condition of Assumption~\ref{asmp:A_fixed_tau}\ref{subasmp:A_fixed_prior} holds (after composing with the smooth chart, which preserves continuity and positivity of densities).
\end{proof}

\section{Proofs for Section~\ref{sec:cp_model}}\label{sec:proofs_sec2}

\subsection{Formal identification statements}

The design conditions and identification result summarized in Section~\ref{sec:model_identification} are stated here. The heterogeneity term is $W$-absorbable if there exists $b_0\in\R^p$ with $X^\top\gamma_0=W^\top b_0$ almost surely, which holds automatically when $X$ is a subvector of $W$, and nowhere $W$-absorbable if for every $Z$-measurable event $E$ with $P_0(E)>0$ no $b\in\R^p$ satisfies $X^\top\gamma_0=W^\top b$ almost surely on $E$. In the absorbable case the flip acts through the flip map
\[
F(\beta,\gamma,\theta,\sigma^2)=(\beta+b_0,\,-\gamma,\,-\theta,\,\sigma^2),
\]
which replaces the subgroup by its complement and adjusts the global effect. Finally, let
\[
\mathcal E(\theta,\theta')=\big\{\mathbbm 1\{Z^\top\theta\ge0\}\neq\mathbbm 1\{Z^\top\theta'\ge0\}\big\}
\]
be the event that two candidate boundaries classify $Z$ differently.

\begin{assumption}[Design nondegeneracy]\label{asmp:design}
\leavevmode
\begin{enumerate}[label=(\roman*),nosep]
\item\label{asmp:design_nohyper} $P_0(Z^\top\theta=0)=0$ for every $\theta\in\mathbb S^{q-1}$, and $P_0\{\mathcal E(\theta,\theta')\}>0$ for every pair $\theta\neq\theta'\in\mathbb S^{q-1}$ with $\theta\ne-\theta'$.
\item\label{asmp:design_condcov} For every $Z$-measurable event $E$ with $P_0(E)>0$, the matrices $\E_{P_0}[WW^\top\mid E]$ and $\E_{P_0}[XX^\top\mid E]$ are finite and nonsingular; if the heterogeneity term is nowhere $W$-absorbable, the joint matrix $\E_{P_0}\big[(W^\top,X^\top)^\top(W^\top,X^\top)\mid E\big]$ is nonsingular as well.
\item\label{asmp:design_signal} $\gamma_0\neq0$ and $0<P_0(Z^\top\theta_0\ge0)<1$.
\end{enumerate}
\end{assumption}

Part \ref{asmp:design_nohyper} says that no candidate hyperplane carries probability mass and that distinct non-antipodal boundaries induce different partitions, which holds whenever $Z$ has $q-1$ coordinates with a joint Lebesgue density given the remaining one. Part \ref{asmp:design_condcov} prevents any $Z$-defined subpopulation from collapsing the regression design, and when $X$ is a subvector of $W$ the joint matrix is singular by construction, which is why the blockwise form is the right requirement in the absorbable case. Part \ref{asmp:design_signal} is the signal-strength condition under which the boundary is learnable, since at $\gamma_0=0$ the likelihood is constant in $\theta$. Section~\ref{sec:reporting_protocol} treats inference when the data cannot rule out this weakly identified regime.

\begin{proposition}[Identification]\label{prop:identification}
Let Assumption~\ref{asmp:design} hold, and let $\eta=(\beta,\gamma,\theta,\sigma^2)$, with nondegenerate gate $0<P_0(Z^\top\theta\ge0)<1$, and $\eta_0$ induce the same conditional mean and variance of $Y$ given $(W,X,Z)$ under model \eqref{eq:true_dgp}. (Directions with degenerate gates reduce the change-plane term to a global regression term and are observationally redundant reparameterizations of the regression block; in case (a) below the conclusion holds without excluding them.)
\begin{enumerate}[label=(\alph*),nosep]
\item If the heterogeneity term is nowhere $W$-absorbable, then $\eta=\eta_0$: the model is point identified on $\mathcal A\times\mathbb S^{q-1}$.
\item If the heterogeneity term is $W$-absorbable, then $\eta\in\{\eta_0,F(\eta_0)\}$: the identified set is the two-point flip orbit. If in addition $\theta_{0,1}\neq0$, the convention $\theta_1\,\mathrm{sign}(\theta_{0,1})>0$ selects a unique representative.
\end{enumerate}
\end{proposition}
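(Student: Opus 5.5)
The identification argument compares conditional means. It starts from equality of conditional moments: since $\eta$ and $\eta_0$ induce the same conditional variance, $\sigma^2=\sigma_0^2$ at once. Write $G=\mathbbm 1\{Z^\top\theta\ge0\}$ and $G_0=\mathbbm 1\{Z^\top\theta_0\ge0\}$. Equality of conditional means gives, almost surely,
\[
W^\top(\beta-\beta_0)+X^\top\gamma\,G-X^\top\gamma_0\,G_0=0 .
\]
The plan is to partition the $Z$-space into the four $Z$-measurable cells $\{G=g,G_0=g_0\}$ and exploit Assumption~\ref{asmp:design}\ref{asmp:design_condcov} on each cell with positive mass. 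On a cell $E$ the identity reads $W^\top(\beta-\beta_0)+X^\top(g\gamma-g_0\gamma_0)=0$ a.s. on $E$.

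\emph{Case (a), nowhere absorbable.} First suppose $\theta\notin\{\theta_0,-\theta_0\}$. By Assumption~\ref{asmp:design}\ref{asmp:design_nohyper}, the disagreement event $\mathcal E(\theta,\theta_0)$ has positive mass, so at least one mixed cell, say $\{G=0,G_0=1\}$, is nonnull. On that cell the identity becomes $W^\top(\beta-\beta_0)=X^\top\gamma_0$. This expresses the heterogeneity term as a linear function of $W$ on a $Z$-event of positive probability, contradicting nowhere absorbability. The other mixed cell is handled symmetrically: there $X^\top\gamma=-W^\top(\beta-\beta_0)$. Pairing this with a nonnull agreeing cell (either $\{1,1\}$ or $\{0,0\}$, one of which has positive mass because $0<P_0(G_0=1)<1$) forces a contradiction via the joint nonsingularity. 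In particular, if $\{0,0\}$ is nonnull then $\beta=\beta_0$, and then $X^\top\gamma_0=0$ on $\{0,1\}$, contradicting $\gamma_0\ne0$ together with nonsingularity of $\E[XX^\top\mid E]$.

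The antipodal case $\theta=-\theta_0$ is dealt with directly. Then $G=1-G_0$ a.s. (no mass on hyperplanes), so on $\{G_0=1\}$ we have $W^\top(\beta-\beta_0)=X^\top\gamma_0$, again an absorption on a positive-probability event, which is a contradiction. Hence $\theta=\theta_0$, and both agreeing cells are nonnull. On $\{G_0=0\}$, nonsingularity of $\E[WW^\top\mid E]$ gives $\beta=\beta_0$. On $\{G_0=1\}$, the joint matrix gives $\gamma=\gamma_0$. Degenerate gates are covered by the same cell argument, since $\{G_0=1\}$ and $\{G_0=0\}$ are both nonnull.

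\emph{Case (b), absorbable.} Here $X^\top\gamma_0=W^\top b_0$ a.s. First rule out $\theta\notin\{\pm\theta_0\}$. Suppose $\{G=1,G_0=0\}$ is nonnull. There the identity reads $W^\top(\beta-\beta_0)+X^\top\gamma=0$, while on a nonnull agreeing cell we get another linear relation. The aim is to show that the gated term $X^\top\gamma\,G$ would have to coincide with a global linear term, forcing a degenerate partition. I expect this step to be the main obstacle: only blockwise nonsingularity is available, so I would work through the cells $\{0,0\}$ and $\{1,1\}$ to pin down $\beta$ and $\gamma$ separately before deriving the contradiction. The remaining possibilities are then dispatched quickly. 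If $\theta=\theta_0$, the $\{0,0\}$ cell gives $\beta=\beta_0$, and then $\E[XX^\top\mid\{1,1\}]$ nonsingular gives $\gamma=\gamma_0$. If $\theta=-\theta_0$, the cell $\{G=0,G_0=1\}$ gives $\beta=\beta_0+b_0$, and then $\gamma=-\gamma_0$ on the other cell; this is exactly $F(\eta_0)$. Finally, the hemisphere convention $\theta_1\,\mathrm{sign}(\theta_{0,1})>0$ excludes exactly one of $\pm\theta_0$ when $\theta_{0,1}\ne0$, which selects a unique representative.
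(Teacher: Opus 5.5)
Your four-cell decomposition is the one the paper uses, and case (a) is essentially sound. Appealing directly to nowhere-absorbability on $\{G=0,G_0=1\}$ (with $b=\beta-\beta_0$), and likewise in the antipodal case, is a valid alternative to the paper's use of joint nonsingularity. The step for the other mixed cell is garbled, though. Positivity of an agreeing cell does not follow from $0<P_0(G_0=1)<1$ alone, and ``$X^\top\gamma_0=0$ on $\{0,1\}$'' refers to the cell you are not in. The fix is one line: joint nonsingularity on $\{G=1,G_0=0\}$ gives $\beta=\beta_0$ and $\gamma=0$ at once, the mean identity collapses to $X^\top\gamma_0G_0=0$, and $\{G_0=1\}$ forces $\gamma_0=0$.

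The genuine gap is case (b). You announce, but do not carry out, the exclusion of $\theta\notin\{\pm\theta_0\}$. That exclusion is the entire content of the claim that the identified set is the flip orbit. Your computations at $\theta=\pm\theta_0$ only show which parameters those two directions force. You should also verify the converse, that $F(\eta_0)$ reproduces the conditional mean, which is immediate from absorbability.

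The route you sketch, pinning $\beta$ through the agreeing cells, works only when $\{G=0,G_0=0\}$ has mass. In that case the $W$-block gives $\beta=\beta_0$. A nonnull $\{G=1,G_0=0\}$ then gives $\gamma=0$ by the $X$-block, so $\{G_0=1\}$ forces $\gamma_0=0$; a nonnull $\{G=0,G_0=1\}$ gives $X^\top\gamma_0=0$ directly.

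That cell can be empty, however, for $\theta\neq\pm\theta_0$ with both gates nondegenerate, as soon as $Z$ contains an intercept. For $Z=(1,Z_2)$, the gates $G_0=\mathbbm 1\{Z_2\le1\}$ and $G=\mathbbm 1\{Z_2\ge-1\}$ give $\{G=0,G_0=0\}=\emptyset$. There you must anchor on the mixed cell and use absorbability:
\begin{itemize}
\item on $\{G=0,G_0=1\}$, the relation $W^\top(\beta-\beta_0-b_0)=0$ gives $\beta=\beta_0+b_0$;
\item the mean identity then reads $X^\top\gamma\,G+X^\top\gamma_0(1-G_0)=0$;
\item so $\{G=1,G_0=1\}$ gives $\gamma=0$, and then $\{G_0=0\}$ gives $\gamma_0=0$.
\end{itemize}

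This leaves two configurations. The first is $G=1$ a.s., when $\{G=0,G_0=1\}$ is also null; it is excluded only by the nondegenerate-gate hypothesis, which your case (b) sketch never invokes. The second is $G=1-G_0$ a.s., when $\{G=1,G_0=1\}$ is also null; it is excluded by Assumption~\ref{asmp:design}\ref{asmp:design_nohyper} applied to the pair $(\theta,-\theta_0)$ rather than $(\theta,\theta_0)$. Without this case analysis the proof of (b) is incomplete.
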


Proposition~\ref{prop:identification} replaces the informal ``without loss of generality'' sign conventions common in this literature, which are valid only in case (b). In case (a), restricting $\theta$ to a half-sphere could exclude the truth.

\subsection{Proof of Proposition~\ref{prop:identification}}\label{proof:prop:identification}

Throughout the proof, ``a.s.''\ means $P_0$-almost surely, $d_\theta=d_\theta(Z)=\mathbbm 1\{Z^\top\theta\ge0\}$, and we restrict attention to candidate directions with nondegenerate gates, $0<P_0(Z^\top\theta\ge0)<1$: a direction with $d_\theta\in\{0,1\}$ a.s.\ reduces the change-plane term to a global regression term, so any such parameter is an observationally redundant reparameterization of the regression block rather than a competing subgroup model, and we exclude it from the parameter class (the true parameter is in the class by Assumption~\ref{asmp:design}\ref{asmp:design_signal}). The flip map $F$ and its witness $b_0$ are as defined above.

\begin{proof}
Matching conditional variances gives $\sigma^2=\sigma_0^2$ at once. Matching conditional means gives
\begin{equation}
W^\top(\beta-\beta_0)+X^\top\gamma\,d_\theta-X^\top\gamma_0\,d_{\theta_0}=0
\qquad\text{a.s.}
\label{eq:mean_match}
\end{equation}
For $a,b\in\{0,1\}$ define $E_{ab}=\{d_{\theta_0}=a,\ d_\theta=b\}$. On any $E_{ab}$ with $P_0(E_{ab})>0$, \eqref{eq:mean_match} reads
\begin{equation}
W^\top(\beta-\beta_0)+X^\top(\gamma b-\gamma_0 a)=0\qquad\text{a.s.\ on }E_{ab}.
\label{eq:event_relation}
\end{equation}
We use repeatedly: if $R$ is a random vector with $\E[RR^\top\mid E]$ nonsingular and $v^\top R=0$ a.s.\ on $E$ with $P_0(E)>0$, then $v^\top\E[RR^\top\mid E]v=0$, hence $v=0$.

Case (a), nowhere $W$-absorbable. Here Assumption~\ref{asmp:design}\ref{asmp:design_condcov} provides nonsingularity of the joint conditional second moment of $R=(W^\top,X^\top)^\top$ on every positive-probability $Z$-event, so \eqref{eq:event_relation} forces
\begin{equation}
\beta=\beta_0
\quad\text{and}\quad
\gamma b=\gamma_0 a
\qquad\text{whenever }P_0(E_{ab})>0.
\label{eq:joint_force}
\end{equation}
Suppose first $\theta=\theta_0$. Then $E_{00}=\{d_{\theta_0}=0\}$ and $E_{11}=\{d_{\theta_0}=1\}$ have positive probability by Assumption~\ref{asmp:design}\ref{asmp:design_signal} (note $P_0(Z^\top\theta_0=0)=0$ by \ref{asmp:design_nohyper}, so the two events partition the space up to a null set), and \eqref{eq:joint_force} with $(a,b)=(1,1)$ gives $\gamma=\gamma_0$; hence $\eta=\eta_0$. Next suppose $\theta=-\theta_0$. Then $d_\theta=1-d_{\theta_0}$ a.s., $E_{10}=\{d_{\theta_0}=1\}$ has positive probability, and \eqref{eq:joint_force} with $(a,b)=(1,0)$ gives $\gamma_0=0$, contradicting \ref{asmp:design_signal}. Finally suppose $\theta\notin\{\theta_0,-\theta_0\}$. By \ref{asmp:design_nohyper}, $P_0(E_{01}\cup E_{10})=P_0\{\mathcal E(\theta,\theta_0)\}>0$. If $P_0(E_{01})>0$, then \eqref{eq:joint_force} with $(a,b)=(0,1)$ gives $\gamma=0$ and $\beta=\beta_0$; substituting into \eqref{eq:mean_match} yields $X^\top\gamma_0d_{\theta_0}=0$ a.s., and on $\{d_{\theta_0}=1\}$ (positive probability) the $X$-block nonsingularity in \ref{asmp:design_condcov} forces $\gamma_0=0$, a contradiction. If instead $P_0(E_{10})>0$, then \eqref{eq:joint_force} with $(a,b)=(1,0)$ forces $\gamma_0=0$ directly, again a contradiction. Hence $\theta=\theta_0$ and the model is point identified. Finally, the parenthetical in the proposition holds in this case. If $d_\theta$ is degenerate, say $d_\theta=1$ a.s., then \eqref{eq:mean_match} reads $W^\top(\beta-\beta_0)+X^\top\gamma-X^\top\gamma_0 d_{\theta_0}=0$ a.s., and the joint nonsingularity forces $\beta=\beta_0$ and $\gamma=0$ on $\{d_{\theta_0}=0\}$, after which $X^\top\gamma_0=0$ a.s.\ on $\{d_{\theta_0}=1\}$ forces $\gamma_0=0$, contradicting \ref{asmp:design_signal}. The case $d_\theta=0$ a.s.\ is symmetric, so no degenerate-gate parameter reproduces the conditional law in case (a).

Case (b), $W$-absorbable, $X^\top\gamma_0=W^\top b_0$ a.s. Now only the blockwise nonsingularity in \ref{asmp:design_condcov} is available (the joint matrix is singular whenever $X$ is a subvector of $W$).

Step 1, $\theta=\theta_0$. On $E_{00}$ (positive probability), \eqref{eq:event_relation} reads $W^\top(\beta-\beta_0)=0$ a.s., so $\beta=\beta_0$ by $W$-block nonsingularity. On $E_{11}$, \eqref{eq:event_relation} then reads $X^\top(\gamma-\gamma_0)=0$ a.s., so $\gamma=\gamma_0$ by $X$-block nonsingularity. Hence $\eta=\eta_0$.

Step 2, $\theta=-\theta_0$. Then $d_\theta=1-d_{\theta_0}$ a.s. On $E_{10}=\{d_{\theta_0}=1\}$, \eqref{eq:event_relation} reads $W^\top(\beta-\beta_0)-X^\top\gamma_0=0$, i.e., by absorbability, $W^\top(\beta-\beta_0-b_0)=0$ a.s., forcing $\beta=\beta_0+b_0$. On $E_{01}=\{d_{\theta_0}=0\}$, \eqref{eq:event_relation} reads $W^\top(\beta-\beta_0)+X^\top\gamma=W^\top b_0+X^\top\gamma=0$ a.s., i.e., $X^\top(\gamma+\gamma_0)=0$ a.s.\ by absorbability, forcing $\gamma=-\gamma_0$. Hence $\eta=F(\eta_0)$, and conversely $F(\eta_0)$ reproduces the conditional law of $\eta_0$ by construction, so both orbit points are in the identified set.

Step 3, $\theta\notin\{\theta_0,-\theta_0\}$ is impossible. Suppose first that $P_0(E_{00})>0$ and $P_0(E_{11})>0$. As in Step 1, $E_{00}$ forces $\beta=\beta_0$. If $P_0(E_{01})>0$, the relation there reads $X^\top\gamma=0$, so $\gamma=0$; then $E_{11}$ reads $X^\top(0-\gamma_0)=0$, forcing $\gamma_0=0$, a contradiction. If $P_0(E_{10})>0$, the relation there reads $-X^\top\gamma_0=W^\top b_0=0$ a.s.\ on $E_{10}$, forcing $b_0=0$ by $W$-block nonsingularity, whence $X^\top\gamma_0=0$ a.s.\ and $\gamma_0=0$ by $X$-block nonsingularity, a contradiction. Since $P_0(E_{01}\cup E_{10})>0$ by \ref{asmp:design_nohyper}, one of the two contradictions obtains.

It remains to rule out the degenerate containment configurations in which $P_0(E_{00})=0$ or $P_0(E_{11})=0$ (which can occur when $Z$ contains an intercept). If both vanish, then $d_\theta=1-d_{\theta_0}$ a.s., so $\theta$ and $-\theta_0$ induce the same partition a.s.; but $\theta\neq-\theta_0$ and $\theta\neq\theta_0$, so \ref{asmp:design_nohyper} applied to the pair $(\theta,-\theta_0)$ gives $P_0\{\mathcal E(\theta,-\theta_0)\}>0$, a contradiction. Suppose $P_0(E_{00})=0$ but $P_0(E_{11})>0$, so $\{d_{\theta_0}=0\}\subseteq\{d_\theta=1\}$ a.s.\ and $E_{01}=\{d_{\theta_0}=0\}$ has positive probability by \ref{asmp:design_signal}. If also $P_0(E_{10})=0$, then $\{d_{\theta_0}=1\}\subseteq\{d_\theta=1\}$ a.s., whence $d_\theta=1$ a.s.: the gate of $\theta$ is degenerate, and $\theta$ is excluded from the parameter class by the convention fixed at the start of the proof. If $P_0(E_{10})>0$, then the argument of the preceding paragraph applies verbatim with $E_{00}$ replaced as follows: $E_{01}$ and $E_{11}$ give the two relations $W^\top(\beta-\beta_0)+X^\top\gamma=0$ and $W^\top(\beta-\beta_0)+X^\top(\gamma-\gamma_0)=0$ on their respective events, and $E_{10}$ gives $W^\top(\beta-\beta_0)-X^\top\gamma_0=0$; from $E_{10}$ and absorbability, $\beta=\beta_0+b_0$; substituting into the $E_{11}$ relation gives $W^\top b_0+X^\top(\gamma-\gamma_0)=X^\top(\gamma-\gamma_0+\gamma_0)=X^\top\gamma=0$ a.s.\ on $E_{11}$, so $\gamma=0$; the $E_{01}$ relation then reads $W^\top b_0=X^\top\gamma_0=0$ a.s.\ on $E_{01}$, forcing $\gamma_0=0$, a contradiction. In the mirror case $P_0(E_{11})=0$, $P_0(E_{00})>0$, we have $\{d_{\theta_0}=1\}\subseteq\{d_\theta=0\}$ a.s., and $E_{10}=\{d_{\theta_0}=1\}$ has positive probability by \ref{asmp:design_signal}. If $P_0(E_{01})=0$ then $d_\theta=0$ a.s.\ and $\theta$ is excluded as a degenerate gate. If $P_0(E_{01})>0$, then $E_{00}$ forces $\beta=\beta_0$ ($W$-block); $E_{01}$ then reads $X^\top\gamma=0$ a.s., forcing $\gamma=0$ ($X$-block); and $E_{10}$ reads $-X^\top\gamma_0=0$ a.s.\ on $E_{10}$, forcing $\gamma_0=0$, a contradiction. This exhausts all configurations and completes the proof.
\end{proof}

\begin{remark}[The hemisphere seam]\label{rem:identification_seam}
When $\theta_{0,1}=0$, the two points of the flip orbit both lie on the seam $\{\theta_1=0\}$ of the closed hemisphere, and the convention \eqref{eq:hemisphere} fails to select between them: coordinatewise posterior summaries of $\theta$ can then be corrupted by sign switching between the two equivalent representations. Sign-invariant functionals are unaffected: $\theta\theta^\top$, the posterior second-moment matrix $M$, its eigenstructure (hence $\hat\theta$ up to sign and $\lambda_{\max}(M)$), and the pair of membership probabilities $\{q(z),1-q(z)\}$ are identical at the two orbit points. In practice a posterior for $\theta_1$ concentrating near zero is the diagnostic that the seam case is in play, and reporting should then rely on the sign-invariant summaries of Section~\ref{sec:reporting_protocol}.
\end{remark}

\subsection{Proof of Proposition~\ref{prop:ess_invariance}}\label{proof:prop:ess_invariance}

\begin{proof}
Write $f(\theta)=\mathbbm 1\{\theta\in\mathbb S^{q-1}_+\}\exp\{\ell_T(\theta)\}$ for the unnormalized target with respect to the surface measure $\sigma_{q-1}$ on $\mathbb S^{q-1}$, and let
\[
T^1=\big\{(\theta,u):\theta\in\mathbb S^{q-1},\ u\in\mathbb S^{q-1},\ u^\top\theta=0\big\}
\]
be the unit tangent bundle, equipped with the measure $\mu(d\theta,du)=\sigma_{q-1}(d\theta)\,\sigma_{q-2,\theta}(du)$, where $\sigma_{q-2,\theta}$ is the surface measure on the unit sphere of the tangent space at $\theta$. Define the extended target $\rho(d\theta,du)\propto f(\theta)\,\mu(d\theta,du)$; its $\theta$-marginal is the target posterior.

Step 1 (the auxiliary draw). Given $\theta$, step 1 of the algorithm sets $u=\nu_\perp/\|\nu_\perp\|$ with $\nu\sim\mathcal N(0,I_q)$ and $\nu_\perp=(I_q-\theta\theta^\top)\nu$. Since $\nu_\perp$ is a standard Gaussian on the $(q-1)$-dimensional tangent space, its normalization is uniform on the unit sphere of that space, i.e., $u\mid\theta\sim\sigma_{q-2,\theta}/|\sigma_{q-2,\theta}|$. Hence step 1 is an exact conditional refresh of $u$ under $\rho$, and leaves $\rho$ invariant.

Step 2 (the geodesic rotation preserves $\mu$). For $\psi\in\R$ define
$R_\psi(\theta,u)=(\theta\cos\psi+u\sin\psi,\ -\theta\sin\psi+u\cos\psi)$,
a bijection of $T^1$ (the great-circle, or geodesic, flow). We claim $\mu\circ R_\psi^{-1}=\mu$. Represent $\mu$ (normalized) as the law of the pair obtained by Gram--Schmidt from two independent standard Gaussians $G_1,G_2$ in $\R^q$:
$\theta=G_1/\|G_1\|$, $u=P_\theta G_2/\|P_\theta G_2\|$ with $P_\theta=I_q-\theta\theta^\top$; this law is exactly uniform-$\theta$ with uniform tangent direction, as in Step 1. Condition on the two-dimensional plane $V=\mathrm{span}(G_1,G_2)$ (almost surely of dimension $2$): given $V$, the pair $(\theta,u)$ is an orthonormal $2$-frame of $V$. We claim its conditional law is the Haar-uniform law on the frame manifold of $V$, which is invariant under the right $O(2)$-action. Indeed, Gram--Schmidt is equivariant under left multiplication by orthogonal maps, and for any fixed $Q\in O(q)$ preserving $V$ the Gaussian pair $(QG_1,QG_2)$ has the same law as $(G_1,G_2)$ and spans the same plane; hence the conditional law of the frame given $V$ is invariant under the left action of the orthogonal group $O(V)\cong O(2)$ of the plane. That action is free and transitive on the orthonormal $2$-frames of $V$, so the conditional law is the unique invariant (Haar) probability on this $O(2)$-torsor; by compactness and unimodularity of $O(2)$, the Haar law is invariant under the right action as well. The map $R_\psi$ is precisely right multiplication of the frame $(\theta,u)$ by the rotation matrix $\big(\begin{smallmatrix}\cos\psi&-\sin\psi\\ \sin\psi&\cos\psi\end{smallmatrix}\big)\in O(2)$, and it leaves $V$ unchanged. Hence the conditional law of $(\theta,u)$ given $V$ is $R_\psi$-invariant, and integrating over $V$ proves the claim.

Step 3 (the angular slice update). Fix the current $(\theta,u)$ and consider the orbit $\psi\mapsto R_\psi(\theta,u)$, which is $2\pi$-periodic and passes through the current state at $\psi=0$. By Step 2, the extended target $\rho$ restricted to the orbit has density proportional to $\psi\mapsto f(\theta(\psi))$ with respect to the uniform (Lebesgue) measure on the angle; in particular the hemisphere indicator is part of the orbit density, vanishing where $\theta(\psi)\notin\mathbb S^{q-1}_+$. Steps 2--3 of the algorithm are then exactly the elliptical-slice transition of \citet[Section 2.3]{Murray2010} applied to this orbit density: a slice level $y$ uniform on $(0,f(\theta(0)))$; an initial angle $\varphi$ uniform on $(0,2\pi)$ defining the bracket $(\varphi-2\pi,\varphi)$, equivalently the circle cut at the initial proposal, and shrinkage toward $0$ upon rejection, where a proposal is rejected exactly when its orbit density lies below $y$ (the hemisphere violation being the case $f=0<y$). \citet{Murray2010} prove that this transition satisfies detailed balance with respect to the orbit density for any nonnegative integrable orbit density; the argument depends only on the density values along the orbit and is therefore unaffected by the presence of the indicator factor. (The randomized location of the bracket cut is essential to that proof: the probability of generating a given sequence of intermediate angles from the current point to an accepted point equals that of the reversed sequence only when the cut is exchangeable between the two endpoints. A deterministic cut at the antipode of the current state breaks this exchangeability, and the resulting transition is not exactly invariant; we verified the failure numerically on asymmetric bimodal circle targets, which is why the randomized bracket is part of the algorithm specification.)

Step 4 (termination). At $\varphi\in\{0,2\pi\}$ the proposal equals the current state, which satisfies $f(\theta)>0$ and $f(\theta)\ge y$ almost surely (as $y<f(\theta)$ with probability one). The current state lies in the open hemisphere $\theta_1>0$ almost surely (the seam $\{\theta_1=0\}$ is a null set of the target), so $\theta(\varphi)\in\mathbb S^{q-1}_+$ for all $|\varphi|$ small; combined with continuity of $\ell_T$ and of $\theta(\cdot)$, which give $\ell_T(\theta(\varphi))>\log y$ near $\varphi=0$, the acceptable set $\{\varphi:f(\theta(\varphi))\ge y\}$ contains an open neighborhood of $0$ (mod $2\pi$) almost surely. Each rejection strictly shrinks the bracket toward $0$ while keeping $0$ (mod $2\pi$) in its closure, and the redrawn angle is uniform on the current bracket; a standard argument \citep[Section 4.1]{Neal2003} shows the procedure terminates with probability one.

Combining: step 1 preserves $\rho$; steps 2--3 preserve $\rho$ (detailed balance on each orbit, orbits partitioning $T^1$ up to null sets, by Steps 2--3); hence the composition preserves $\rho$, and its $\theta$-marginal chain leaves the target posterior invariant.
\end{proof}

\subsection{Discussion and verification of the assumptions}\label{sec:pseudo_true_discussion}

This section collects the primitive content behind the boundary regularity conditions. We first record three observations on the uniqueness and anchoring conditions, Assumption~\ref{asmp:A_fixed_tau}\ref{subasmp:A_fixed_ident} and Assumption~\ref{asmp:V}\ref{subasmp:V_anchor}. We then verify the local items of Assumption~\ref{asmp:V} for a canonical design, describe a reproducible quadrature protocol for the two global items, and close with the boundary cases in which the conditions fail.

(i) Local uniqueness is automatic. Under Assumption~\ref{asmp:A_fixed_tau}\ref{subasmp:A_fixed_curv}, any stationary point of $K_\tau$ in $B_\delta(\tilde\eta^\star)$ is a strict local minimizer with positive-definite Hessian, and two distinct stationary points cannot coexist in a ball on which $\lambda_{\min}(I_\tau)\ge c_I>0$ (the gradient is strictly monotone along the connecting segment). What is genuinely assumed is therefore only global uniqueness and well-separation.

(ii) Global uniqueness connects to identification as $\tau\downarrow0$. By Lemma~\ref{lem:risk_localization}(a) and the Dini argument in the proof of Theorem~\ref{thm:tau_contraction_new} (Zone I), $K_\tau\to K_\infty$ uniformly on $\bar\Theta$. The minimizers of the hard-threshold risk $K_\infty$ over the nondegenerate-gate class are exactly the identified set of Proposition~\ref{prop:identification}, the singleton $\{\eta_0\}$ under the hemisphere convention with $\theta_{0,1}>0$. Consequently, for all sufficiently small $\tau$, every near-minimizer of $K_\tau$ lies in any prescribed neighborhood of $\eta_0$, and Assumption~\ref{asmp:V}\ref{subasmp:V_anchor} should be read as quantifying ``sufficiently small $\tau$'' for the analyst's chosen neighborhood, not as an additional identification condition. At fixed moderate $\tau$, by contrast, uniqueness of $\eta^\star(\tau)$ is a genuine regularity condition that can fail on contrived designs, for instance designs symmetric under a rotation that maps two candidate boundaries onto each other, exactly as uniqueness of pseudo-true parameters can fail in any misspecified model \citep{White1982}.

(iii) The conditions are checkable by quadrature. For a posited design, meaning a distribution for $(W,X,Z)$ together with the hard-threshold parameters, $K_\tau(\eta)$ is a smooth integral functional that can be evaluated by one-dimensional quadrature over the boundary score combined with Monte Carlo over the remaining covariates, and its minimizers located numerically. We recommend this as a design-stage diagnostic when the change-plane posterior is part of a prespecified analysis plan, since it verifies uniqueness and well-separation directly and produces the pseudo-true path $\tau\mapsto\eta^\star(\tau)$ against which the smoothing bias of Proposition~\ref{prop:eta_star_bias_rate_new} can be computed for the design at hand. The remainder of this section makes both the design class and the protocol explicit.

\paragraph{A canonical design class.} Consider the design in which $Z=(1,Z_2,\dots,Z_q)^\top$ has continuous part $(Z_2,\dots,Z_q)$ distributed with a density $p_Z$ that is bounded, continuous, and strictly positive on a bounded open set $\mathcal Z\subset\R^{q-1}$ and zero outside it, for instance a vector of standard Gaussian coordinates truncated to a box. Let $X\in\{0,1\}$ be a Bernoulli treatment indicator drawn independently of $Z$ with $0<P_0(X=1)<1$, let $W$ contain $Z$ and $X$ so that $X$ is a subvector of $W$ in the recommended absorbable specification, let the error be $\mathcal N(0,\sigma_0^2)$, let $\gamma_0\neq0$, and let $\theta_0$ lie in the interior of $\mathbb S^{q-1}_+$ with $\theta_{0,2:q}\neq0$, so that the gate is nondegenerate. We verify the local items of Assumption~\ref{asmp:V} in turn. The two global items, the anchor \ref{subasmp:V_anchor} and the seam \ref{subasmp:V_seam}, are not implied by this local structure alone and are addressed by the quadrature protocol below.

(i) holds by construction, since the error is $\mathcal N(0,\sigma_0^2)$.

(ii) Boundedness holds because $\mathcal Z$ is bounded and $X\in\{0,1\}$, so $\|W\|\vee\|X\|\vee\|Z\|\le C_B$ almost surely. For the oracle designs, the two subgroup events $\{U_{\theta_0}\ge0\}$ and $\{U_{\theta_0}<0\}$ are $Z$-measurable with probability in $(0,1)$ under the nondegenerate gate, so Assumption~\ref{asmp:design}\ref{asmp:design_condcov} supplies nonsingular conditional second moments of $(W,X)$ on each. The matrix $\E[(W^\top,X^\top D)(W^\top,X^\top D)^\top]$ is then positive definite for $D\in\{\mathbbm 1\{U_{\theta_0}\ge0\},1-\mathbbm 1\{U_{\theta_0}\ge0\}\}$, the interaction $DX$ contributing a direction independent of the main effects because the nondegenerate gate makes $D$ nonconstant.

(iii) Write $U_\theta=\theta_1+Z_{2:q}^\top\theta_{2:q}$. For $\theta$ near $\theta_0$ we have $\theta_{2:q}\neq0$, and the linear change of variables along the direction $\theta_{2:q}$ gives the score density
\[
f_\theta(u)=\frac{1}{\|\theta_{2:q}\|}\int_{\{z\in\mathcal Z:\;\theta_1+z^\top\theta_{2:q}=u\}} p_Z(z)\,d\mathcal H^{q-2}(z),
\]
the integral of $p_Z$ over the affine section of $\mathcal Z$ cut by $\{U_\theta=u\}$ against $(q-2)$-dimensional Hausdorff measure. Because $p_Z$ is continuous, bounded, and strictly positive on the open set $\mathcal Z$, and the boundary section at $u=0$ meets the interior of $\mathcal Z$ in a set of positive $(q-2)$-measure under the nondegenerate gate, $f_\theta$ is continuous in $(u,\theta)$ and satisfies $0<c_f\le f_\theta(u)\le C_f<\infty$ for $u$ near $0$ uniformly over $\theta$ near $\theta_0$. The uniform conditional moment continuity required in \ref{subasmp:V_margin} follows from the same representation, since $\E[g(W,X,Z)\mid U_\theta=u]$ is a ratio of section integrals of $g\,p_Z$ and $p_Z$, both continuous in $(u,\theta)$ by continuity and boundedness of $p_Z$ and of the polynomial $g$ on the bounded support, with $X$ entering through its independent Bernoulli moments.

(iv) Recall $\Pi_{\theta_0}=I_q-\theta_0\theta_0^\top$ projects onto the tangent space $T_{\theta_0}=\{v:v^\top\theta_0=0\}$, on which $\Pi_{\theta_0}v=v$. Fix a tangent direction $v\neq0$. Since $X$ is independent of $Z$ and $X^\top\gamma_0=\gamma_0 X$ with $\gamma_0\neq0$, the weight $\chi^2_{\mathrm B}(W,X)$ of \eqref{eq:chisq_weight} is strictly positive on the treated units $\{X=1\}$, which retain the full conditional law of $Z$ on the boundary section. Hence
\[
v^\top\Sigma_Z(\theta_0)\,v=\E\big[\chi^2_{\mathrm B}(W,X)\,(v^\top Z)^2\mid U_{\theta_0}=0\big]
\]
vanishes only if $v^\top Z=0$ almost surely on the section $S=\{z:z_1=1,\ z^\top\theta_0=0\}$ restricted to the support. On $S$ the map $z\mapsto v^\top z$ is affine, and the section carries positive $(q-2)$-density on a relatively open subset of $S$, so $v^\top z\equiv0$ on all of $S$. The linear part of $S$ is $\{w:w_1=0,\ w^\top\theta_0=0\}$, whose orthogonal complement is $\mathrm{span}\{e_1,\theta_0\}$, so $v=\alpha e_1+\beta\theta_0$ for scalars $\alpha,\beta$. Evaluating $v^\top z^\star=0$ at any $z^\star\in S$, where $z^\star_1=1$ and $z^{\star\top}\theta_0=0$, forces $\alpha=0$, and then $v^\top\theta_0=\beta=0$, so $v=0$, a contradiction. Thus $\Sigma_Z(\theta_0)$ is positive definite on $T_{\theta_0}$. The second matrix in \ref{subasmp:V_boundary_info} reduces, for the scalar Bernoulli treatment with $\gamma_0\neq0$, to the positive scalar $\E[X\mid U_{\theta_0}=0]=P_0(X=1)>0$. What is genuinely assumed here is the positivity and continuity of $p_Z$ on the section together with $\gamma_0\neq0$. Positive definiteness on the tangent space is then derived, not posited, and it is exactly the statement that no tangent direction is annihilated by the boundary section, which carries positive density mass. The section is $(q-2)$-dimensional here rather than $(q-1)$-dimensional because the intercept coordinate fixes $z_1=1$.

(v) The sign-disagreement wedge $\mathcal E(\theta,\theta_0)$ is contained in $\{|U_{\theta_0}|\le C_B\|\theta-\theta_0\|\}$, because $U_\theta-U_{\theta_0}=Z^\top(\theta-\theta_0)$ and disagreement forces $|U_{\theta_0}|\le|U_\theta-U_{\theta_0}|\le\|Z\|\,\|\theta-\theta_0\|$. The density bound $f_{\theta_0}\le C_f$ then gives the upper bound $P_0\{\mathcal E(\theta,\theta_0)\}\le 2C_fC_B\|\theta-\theta_0\|$. For the matching lower bound, a fixed positive fraction of the boundary section has $|Z^\top(\theta-\theta_0)|\ge c_1\|\theta-\theta_0\|$ with the disagreeing sign, because the projected covariate $Z^\top(\theta-\theta_0)/\|\theta-\theta_0\|$ has a nondegenerate law on the section by the positivity of $p_Z$, whence $P_0\{\mathcal E(\theta,\theta_0)\}\ge c_f c_1\|\theta-\theta_0\|$. These are the two-sided bounds of \ref{subasmp:V_wedge}, with $\|\theta-\theta_0\|$ comparable to the angle between the two directions near $\theta_0$.

(vii) The truncated default priors, that is Gaussian blocks for $(\beta,\gamma)$, an inverse-gamma for $\sigma^2$, and the uniform law on $\mathbb S^{q-1}_+$ for $\theta$, all restricted to $\bar\Theta$, are supported on $\bar\Theta$ and have a continuous positive density at the interior point $\tilde\eta_0$ in the sense of Assumption~\ref{asmp:A_fixed_tau}\ref{subasmp:A_fixed_prior}, so \ref{subasmp:V_prior} holds by the analyst's choice.

The two remaining items, the anchor \ref{subasmp:V_anchor} and the seam \ref{subasmp:V_seam}, constrain $K_\tau$ globally over $\mathcal A_\delta\times N_0$ and cannot be reduced to the density and moment structure above. We verify them numerically for a posited design.

\paragraph{Quadrature verification of the anchor and seam conditions.} The anchor \ref{subasmp:V_anchor} and the seam \ref{subasmp:V_seam} are population conditions on the computable functional $K_\tau$, checkable to any desired resolution for a posited design. We describe the protocol in five steps.

(a) The inner integral. Under \ref{subasmp:V_gauss} the true conditional law of $Y$ given $(W,X,Z)$ is $\mathcal N(\mu_0,\sigma_0^2)$ with $\mu_0=W^\top\beta_0+X^\top\gamma_0\,\mathbbm 1\{U_{\theta_0}\ge0\}$, so
\[
K_\tau(\eta)=\E_{P_0}\big[\,m\big(U_\theta/\tau,\ A(\eta;W,X),\ \mu_0\big)\big],
\qquad
A(\eta;W,X)=\big(W^\top\beta,\ W^\top\beta+X^\top\gamma,\ \sigma^2\big),
\]
where $A(\eta;W,X)$ collects the working mean pair and variance, $\mu_0$ is the fixed true mean, and
\[
m(s,A,\mu_0)=\E_{Y\sim\mathcal N(\mu_0,\sigma_0^2)}\Big[-\log\big\{(1-\Phi(s))\,\varphi_{\sigma}(Y-A_1)+\Phi(s)\,\varphi_{\sigma}(Y-A_2)\big\}\Big]
\]
is the cross entropy of the true Gaussian conditional at $\eta_0$ against the working probit mixture at $\eta$. The inner expectation is a one-dimensional Gaussian integral of the log of a two-component Gaussian mixture, and the substitution $Y=\mu_0+\sqrt2\,\sigma_0\,t$ turns it into a Gauss--Hermite quadrature $m\approx\pi^{-1/2}\sum_k \omega_k\big\{-\log[(1-\Phi(s))\varphi_{\sigma}(\mu_0+\sqrt2\sigma_0 t_k-A_1)+\Phi(s)\varphi_{\sigma}(\mu_0+\sqrt2\sigma_0 t_k-A_2)]\big\}$ with nodes $t_k$ and weights $\omega_k$, whose error is controlled by the node count.

(b) The outer integral. The integrand depends on $(W,X,Z)$ only through $X$, the working linear predictors, and the pair of scores $(U_\theta,U_{\theta_0})=(Z^\top\theta,Z^\top\theta_0)$. Two reductions are convenient. Conditioning on $U_\theta=u$ makes the outer expectation a one-dimensional quadrature over $u$ against $f_\theta(u)$, with an inner conditional Monte Carlo over the remaining covariate directions and over $(W,X)$. Alternatively one fixes a single large design sample $\{(W_j,X_j,Z_j)\}_{j=1}^N$ once and approximates $K_\tau(\eta)\approx N^{-1}\sum_{j=1}^N m(U_{\theta,j}/\tau,\,A(\eta;W_j,X_j),\,\mu_{0,j})$, reusing the same sample and the same Gauss--Hermite nodes for every $\eta$. These common random numbers make the excess risk $K_\tau(\eta)-K_\tau(\eta_0)$ and its shell minima low variance across $\eta$.

(c) The anchor. For each $\tau$ on a logarithmic grid, minimize $K_\tau$ over a coarse global grid on $\mathbb S^{q-1}_+\times\mathcal A_\delta$ and confirm that every near minimizer, meaning every grid point within a fixed tolerance of the grid minimum, lies in the declared neighborhood $\mathcal A_\delta\times N_0$. A confirmation across the $\tau$ grid certifies \ref{subasmp:V_anchor} for the design at the grid resolution.

(d) The seam. For the same $\tau$ grid, evaluate the normalized excess risk $(K_\tau(\eta)-K_\tau(\eta_0))/\tau$ on the shells $\{\eta\in\mathcal A_\delta\times N_0:\,C_1\tau\le\|\theta-\theta_0\|\le c_0\tau\}$ and record its infimum over each shell. A uniform positive lower bound $c_s$ across the grid certifies \ref{subasmp:V_seam} with that constant. The common random numbers of step (b) are essential here, since the certificate is a small difference divided by the small quantity $\tau$.

(e) The pseudo-true path. Trace $\tau\mapsto\eta^\star(\tau)$ by warm-started local minimization down the $\tau$ grid, initializing each solve at the previous solution, and report the boundary bias $\|\theta^\star(\tau)-\theta_0\|$ and the regression-block bias $\|a^\star(\tau)-a_0\|$ against the $O(\tau)$ bound of Proposition~\ref{prop:eta_star_bias_rate_new}. Regressing $\log\|\theta^\star(\tau)-\theta_0\|$ on $\log\tau$ estimates the boundary bias exponent, which the proposition predicts to equal one.

We ran this protocol on the canonical simulation design of Section~\ref{sec:simulations}, with a fixed design sample of size $2\times10^{5}$, $64$ Gauss--Hermite nodes, and a logarithmic grid of twelve values $\tau\in[0.002,0.2]$. The anchor holds at every $\tau$. The best risk value found by $64$ dispersed multistarts always lies in the declared neighborhood of $\eta_0$, and the best local minimum outside that neighborhood exceeds the global minimum by at least $0.032$ on the risk scale. The minimal seam ratio, $c_s=\min_\tau\ \inf_{C_1\tau\le\|\theta-\theta_0\|\le c_0\tau}(K_\tau(\eta)-K_\tau(\eta_0))/\tau$, is $0.029$ over the grid, attained at the smallest $\tau$ and bounded away from zero as \ref{subasmp:V_seam} requires. The pseudo-true path behaves as Proposition~\ref{prop:eta_star_bias_rate_new} predicts. The regression block converges with fitted log-log slope $1.04$, the bias of $\gamma^\star(\tau)$ falling from $1.9\times10^{-2}$ at $\tau=0.2$ to $1.8\times10^{-4}$ at $\tau=0.002$. The boundary shift is numerically negligible for this design, below $2.2\times10^{-3}$ radians over the whole grid and at the resolution of the design sample. This does not contradict the proposition, whose order $\tau$ statement is an upper bound. The leading constant of the boundary shift is driven by the asymmetry of the boundary-layer discrimination profile, and for the elliptically symmetric Gaussian design that asymmetry vanishes. A companion design that replaces the last boundary covariate by a centered exponential variable with unit variance makes the profile asymmetric, and the same protocol then exhibits the order $\tau$ boundary shift. For the skewed companion the shift falls from $6.1\times10^{-3}$ radians at $\tau=0.2$ to the design-sample resolution near $10^{-4}$ at the smallest grid values, with fitted log-log slope $0.86$, the anchor again holds with minimal gap $0.038$, and the seam ratio is bounded below by $0.027$ over the range where the normalized difference exceeds the Monte Carlo resolution of the protocol. Section~\ref{sec:simulations} uses this skewed companion wherever the smoothing shift itself is the object of study.

\paragraph{When the conditions fail.} Two boundary cases delimit the scope of the theory.

\begin{example}[Discrete effect modifiers]
Let every coordinate of $Z$ be binary, so $Z$ takes finitely many values and the score $U_\theta=Z^\top\theta$ has a discrete law. Then \ref{subasmp:V_margin} fails, since $U_\theta$ has no density and the wedge probability $P_0\{\mathcal E(\theta,\theta_0)\}$ jumps as $\theta$ crosses the finitely many hyperplanes separating the atoms of $Z$. The boundary direction is only set identified, its identified set being the collection of half-spaces that induce the same partition of the support of $Z$. The natural report is then the induced partition together with the membership summaries $q(z)=\Pi^{(\tau)}(z^\top\theta\ge0\mid\mathcal D_n)$ at the support points, as in the discrete-$Z$ analysis of Section~\ref{sec:analysis}. The anisotropic rate theory does not apply and the problem is the set-identified one flagged in Remark~\ref{rem:nonregular_rates}.
\end{example}

\begin{example}[Vanishing heterogeneity]
Let $\gamma_0=0$. Then $X^\top\gamma_0=0$ almost surely, the inner second moment in \eqref{eq:chisq_weight} vanishes, and $\chi^2_{\mathrm B}\equiv0$, so $\Sigma_Z(\theta_0)=0$ and \ref{subasmp:V_boundary_info} fails. The working likelihood is constant in $\theta$ and the boundary is unidentified, consistent with the signal-strength requirement of Assumption~\ref{asmp:design}\ref{asmp:design_signal}. This is not a defect to assume away but the regime that the reporting protocol of Section~\ref{sec:reporting_protocol} is designed to survive, since it bases the reporting decision on the heterogeneity evidence $\Pi(H_\delta\mid\mathcal D_n)$ and reports the boundary only when the data rule out $\gamma_0=0$.
\end{example}

A third, intermediate case is a design whose score density vanishes at the boundary, $f_{\theta_0}(u)\asymp|u|^{\alpha-1}$ with margin exponent $\alpha>1$. Here \ref{subasmp:V_margin} holds only in its degenerate form, the boundary information degrades to $I_{\tau,\vartheta\vartheta}\asymp\tau^{\alpha-2}$, and for $\alpha>2$ it no longer diverges, which changes the rescaling regime. This is the nonregular margin class of Remark~\ref{rem:nonregular_rates}, left open here.

\section{Proof of Proposition~\ref{prop:protocol_consistency}}\label{proof:prop:protocol_consistency}

\begin{proof}
Part (a). Suppose $|\Delta^\star(\tau)|>\delta$. By continuity of $\Delta$ there is $\rho>0$ with
$\{\gamma:\|\gamma-\gamma^\star\|\le\rho\}\subseteq\{\gamma:|\Delta(\gamma)|>\delta\}$, hence
$\Pi^{(\tau)}(H_\delta\mid\mathcal D_n)\ge\Pi^{(\tau)}(\|\eta-\eta^\star\|\le\rho\mid\mathcal D_n)\xrightarrow{P_0}1$
by Theorem~\ref{thm:mis_contraction_fixed_tau_new}\ref{thm1:contraction}. Since $p_{\mathrm{report}}<1$, the Bayes rule \eqref{eq:bayes_action} selects $\mathsf{a}_1$ with $P_0$-probability tending to one. If $|\Delta^\star(\tau)|<\delta$, the same argument applied to the open set $\{|\Delta(\gamma)|<\delta\}$ gives $\Pi^{(\tau)}(H_\delta\mid\mathcal D_n)\xrightarrow{P_0}0<p_{\mathrm{report}}$, so $\mathsf{a}_0$ is selected with probability tending to one.

Part (b). If $z^\top\theta^\star>0$, then $\{\theta:\|\theta-\theta^\star\|<z^\top\theta^\star/\|z\|\}\subseteq\{\theta:z^\top\theta>0\}$, so $q(z)=\Pi^{(\tau)}(z^\top\theta\ge0\mid\mathcal D_n)\ge\Pi^{(\tau)}(\|\theta-\theta^\star\|<z^\top\theta^\star/\|z\|\mid\mathcal D_n)\xrightarrow{P_0}1$; symmetrically $q(z)\xrightarrow{P_0}0$ when $z^\top\theta^\star<0$.

Part (c). Under Assumption~\ref{asmp:V} and the window, Proposition~\ref{prop:eta_star_bias_rate_new} gives $\|\gamma^\star(\tau_n)-\gamma_0\|\le C\tau_n\to0$ and $\|\theta^\star(\tau_n)-\theta_0\|\le C\tau_n\to0$. In the strict cases $|\Delta(\gamma_0)|\ne\delta$, continuity of $\Delta$ implies that for all large $n$ the sign of $|\Delta^\star(\tau_n)|-\delta$ agrees with that of $|\Delta(\gamma_0)|-\delta$ with a margin bounded away from zero, and the arguments of parts (a)--(b) apply verbatim along the sequence, with Theorem~\ref{thm:mis_contraction_fixed_tau_new} replaced by Theorem~\ref{thm:tau_contraction_new} for the contraction statements (the contraction radii $\tau_n+\log n/\sqrt n\to0$ are eventually smaller than any fixed $\rho$); likewise $q(z)\to\mathbbm 1\{z^\top\theta_0>0\}$ for any fixed $z$ off the true boundary.

For the knife-edge case, assume without loss of generality $\Delta(\gamma_0)=+\delta$ and let $\Delta$ be continuously differentiable at $\gamma_0$ with gradient $g_0=\nabla\Delta(\gamma_0)\ne0$ (for linear contrasts $\Delta(\gamma)=x_{\rm clin}^\top\gamma$, $g_0=x_{\rm clin}$ exactly). Write $v^2=\big(g_0^\top\,[\,\mathcal I_0^{-1}\,]_{\gamma\gamma}\,g_0\big)>0$, the limiting posterior variance of $\sqrt n\,\Delta(\gamma)$ implied by Theorem~\ref{thm:effect_bvm} and the delta method (the posterior of $\sqrt n(\gamma-\gamma_0)$ converges in total variation to a Gaussian, and $\Delta$ is smooth, so the posterior of $\sqrt n\{\Delta(\gamma)-\delta\}$ converges in total variation, in $P_0$-probability, to $\mathcal N(g_0^\top m_n,\,v^2)$ with $m_n=[\mathcal I_0^{-1}\Delta_{n,a}]_\gamma$). Hence
\[
\Pi^{(\tau_n)}\big(\Delta(\gamma)\ge\delta\mid\mathcal D_n\big)
=1-\Phi\Big(\frac{0-g_0^\top m_n}{v}\Big)+o_{P_0}(1)
=\Phi\Big(\frac{g_0^\top m_n}{v}\Big)+o_{P_0}(1),
\]
while the opposite tail $\Pi^{(\tau_n)}(\Delta(\gamma)\le-\delta\mid\mathcal D_n)\to0$ in $P_0$-probability since $-\delta$ is at distance $2\delta>0$ from the localization point. By Theorem~\ref{thm:effect_bvm}, $\Delta_{n,a}\Rightarrow\mathcal N(0,\mathcal I_0)$, so $g_0^\top m_n/v\Rightarrow\mathcal N(0,1)$, and by the continuous mapping theorem
$\Pi^{(\tau_n)}(H_\delta\mid\mathcal D_n)\Rightarrow\Phi(\mathcal Z)$, $\mathcal Z\sim\mathcal N(0,1)$,
a nondegenerate (uniform, in fact, since $\Phi(\mathcal Z)\sim\mathrm{Unif}(0,1)$) limit: at the clinical boundary the reporting probability converges to $P(\Phi(\mathcal Z)>p_{\mathrm{report}})=1-p_{\mathrm{report}}$, which is the protocol's exact analogue of size at the boundary of a one-sided test.
\end{proof}

\section{Relation to frequentist test-then-report workflows}
\label{sec:frequentist_comparison}

This section expands the comparison summarized at the end of the reporting protocol of Section~\ref{sec:reporting_protocol}.
A frequentist analogue tests for heterogeneity and then reports an estimated boundary upon rejection \citep[e.g.,][]{Fan2017,LeeSeoShin2011}, coupling two nonstandard steps, inference on $\gamma$ with a weakly identified nuisance boundary and inference on $\theta$ after selection. Treating the estimated boundary as fixed after a pretest yields anti-conservative uncertainty statements \citep{Kang2017}, which we document in Section~\ref{sec:simulations}. The program is not infeasible, since debiased post-selection inference is available \citep{GuoHe2021} and the policy-learning literature gives linear treatment rules with welfare guarantees under asymmetric costs \citep{KitagawaTetenov2018,AtheyWager2021,Zhao2012owl}. The Bayesian protocol instead offers coherence and propagation. Its heterogeneity evidence, effect summaries, boundary summary, stability diagnostic, and membership probabilities are all marginals of a single joint posterior, so boundary uncertainty propagates into every reported quantity. The threshold $\delta$ and the cost asymmetry enter the rule explicitly rather than through a test level, and the machinery applies unchanged in the high-dimensional settings of Section~\ref{sec:priors} where classical post-selection theory is unavailable.
\section{Proofs for Section~\ref{sec:posterior_contraction_BvM_fixed_tau_new}: Fixed Smoothing Scale}
\label{sec:proofs_fixed_tau_new}

\subsection{Fixed smoothing scale: conditions and statement}

Because remote regions of the noncompact $\Theta$ are pathological under misspecification and irrelevant to practice, we place the regression block on a fixed compact set,
\[
\bar\Theta \;=\; \mathcal K_a\times\mathbb S^{q-1}_+,
\qquad
\mathcal K_a=\big\{(\beta,\gamma,\sigma^2):\|\beta\|\le R_\beta,\ \|\gamma\|\le R_\gamma,\ \sigma^2\in[\underline\sigma^2,\overline\sigma^2]\big\},
\]
with $0<\underline\sigma^2<\overline\sigma^2<\infty$ and $R_\beta,R_\gamma<\infty$ fixed by the analyst, possibly astronomically large. Compact support licenses clean uniform deviation bounds, and extension to full support requires misspecification-robust tail tests \citep{Kleijn2006} orthogonal to our contribution.

The fixed-$\tau$ analysis rests on five standard regularity conditions, stated in full as Assumption~\ref{asmp:A_fixed_tau}. They ask that $\eta^\star$ be the unique, well-separated minimizer of $K_\tau$ in the interior of $\bar\Theta$, that the hemisphere admit a $C^2$ chart at $\theta^\star$, that the working information $I_\tau$ be continuous and nondegenerate near $\tilde\eta^\star$, that the prior have a continuous positive local density at $\tilde\eta^\star$, and that the design and errors be sub-Gaussian. The information and prior conditions are local and standard, and prior positivity holds for the truncated default priors of Section~\ref{sec:priors} and, by Proposition~\ref{prop:horseshoe_prior}, for the normalized-horseshoe prior whenever $\theta^\star$ has no zero coordinate. The substantive condition is the working-model identification requirement of a unique well-separated minimizer, and Supplementary Section~\ref{sec:pseudo_true_discussion} shows that local uniqueness follows from the information condition while relating global uniqueness to Proposition~\ref{prop:identification}.

\begin{theorem}[Fixed $\tau$: posterior contraction and misspecified Bernstein--von Mises]
\label{thm:mis_contraction_fixed_tau_new}
Let Assumption~\ref{asmp:A_fixed_tau} hold and fix $\tau>0$. Then:
\begin{enumerate}[label=(\alph*),nosep]
\item\label{thm1:contraction} there exists $M_0<\infty$ such that
\[
\Pi^{(\tau)}\Big(\|\tilde\eta-\tilde\eta^\star\|\ge M_0\sqrt{\log n/n}\,\Big|\,\mathcal D_n\Big)\xrightarrow[n\to\infty]{P_0}0;
\]
\item\label{thm1:bvm} with $h=\sqrt n(\tilde\eta-\tilde\eta^\star)$,
$\Delta_n=n^{-1/2}\sum_{i=1}^n\nabla\tilde\ell_\tau(\tilde\eta^\star;O_i)$ and $I_\tau^\star=I_\tau(\tilde\eta^\star)$,
\[
\sup_{B\in\mathcal B(\R^d)}
\Big|\Pi^{(\tau)}(h\in B\mid\mathcal D_n)-\mathcal N\big(I_\tau^{\star-1}\Delta_n,\ I_\tau^{\star-1}\big)(B)\Big|
\xrightarrow[n\to\infty]{P_0}0;
\]
\item\label{thm1:rate} consequently, for every sequence $M_n\to\infty$,
$\Pi^{(\tau)}\big(\|\tilde\eta-\tilde\eta^\star\|\ge M_nn^{-1/2}\mid\mathcal D_n\big)\xrightarrow{P_0}0$.
\end{enumerate}
\end{theorem}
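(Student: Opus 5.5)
The plan is to verify the hypotheses of the misspecified Bernstein--von Mises theorem of \citet{Kleijn2012} for the gated Gaussian mixture in the chart coordinates $\tilde\eta$. Each hypothesis is then checked against one item of Assumption~\ref{asmp:A_fixed_tau}. The argument splits into three pieces. The first is a stochastic local asymptotic normality (LAN) expansion of the log-likelihood ratio around $\tilde\eta^\star$. The second is a prior-mass and testing argument that delivers contraction, which gives part~\ref{thm1:contraction}. The third combines LAN with contraction to obtain the total-variation approximation of part~\ref{thm1:bvm}. Part~\ref{thm1:rate} is then immediate, because a tight Gaussian sequence $\mathcal N(I_\tau^{\star-1}\Delta_n,I_\tau^{\star-1})$, with $\Delta_n=O_{P_0}(1)$ by the central limit theorem, assigns vanishing mass to $\{\|h\|\ge M_n\}$.

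For the LAN step I would use the mixture form \eqref{eq:working_mix}. On $\bar\Theta$ with $\sigma^2\ge\underline\sigma^2$, the log-density $\tilde\ell_\tau$ is $C^3$ in $\tilde\eta$, since the gate $\Phi(z^\top\theta(\vartheta)/\tau)$ is smooth at fixed $\tau$ and the chart is $C^2$. Its derivatives are dominated by polynomial envelopes in $(|Y|,\|W\|,\|X\|,\|Z\|)$. This holds because the mixture weights lie in $(0,1)$, so the derivatives of $\log$ of the mixture are convex combinations of Gaussian-component scores, and those scores are polynomial in $y-w^\top\beta$ and $x^\top\gamma$. Sub-Gaussianity then makes these envelopes square-integrable, with $\psi_1$-norm control. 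A second-order Taylor expansion with an integral remainder, together with a uniform law of large numbers for the Hessian on a shrinking ball, gives
\[
\textstyle\sum_i\{\tilde\ell_\tau(\tilde\eta^\star+h/\sqrt n;O_i)-\tilde\ell_\tau(\tilde\eta^\star;O_i)\}=h^\top\Delta_n-\tfrac12 h^\top I_\tau^\star h+o_{P_0}(1)
\]
uniformly on compacta in $h$. First-order optimality of the interior minimizer gives $\E\nabla\tilde\ell_\tau(\tilde\eta^\star)=0$, and nondegeneracy of $I_\tau^\star$ comes from the information condition.

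The main obstacle is contraction, because the model is misspecified. Here the Hellinger testing machinery must be replaced by Kleijn--van der Vaart tests built from the pseudo-true density ratio. I would first use compactness of $\bar\Theta$ and well-separation of $\eta^\star$ to obtain $\inf_{\|\tilde\eta-\tilde\eta^\star\|\ge\epsilon}\{K_\tau(\eta)-K_\tau(\eta^\star)\}>0$. Next I would prove a uniform deviation bound
\[
\textstyle\sup_{\bar\Theta}|n^{-1}\sum_i\ell_\tau-\E\ell_\tau|=O_{P_0}(\sqrt{\log n/n}).
\]
This follows from a Bernstein inequality for sub-exponential summands together with a $1/n$-net of the finite-dimensional compact set, using Lipschitz continuity in $\eta$ with a sub-exponential envelope. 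On the shell $\|\tilde\eta-\tilde\eta^\star\|\ge M_0\sqrt{\log n/n}$, local quadratic growth of $K_\tau$ (from $I_\tau\succeq c_I$ near $\tilde\eta^\star$) combined with global separation bounds the log-likelihood deficit below by $cM_0^2\log n$. Meanwhile, prior positivity lets the denominator be bounded below by $n^{-d/2}e^{-O_P(1)}$, obtained from the LAN expansion on an $n^{-1/2}$-ball. The ratio is then $n^{d/2}e^{-cM_0^2\log n}\to0$ once $M_0$ is large. This yields part~\ref{thm1:contraction}, and the rate $\sqrt{\log n/n}$ is exactly what this crude argument delivers.

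With contraction at $\sqrt{\log n/n}$ in hand, LAN alone on $n^{-1/2}$-balls does not yet suffice. I would strengthen it by showing that the remainder is $o_{P_0}(1+\|h\|^2)$ uniformly over $\|h\|\le M_0\sqrt{\log n}$, which uses the third-derivative envelope. The posterior of $h$ restricted to that region is then compared with the Gaussian by the standard argument of \citet{Kleijn2012}, Theorem~2.1. The prior density is continuous and positive at $\tilde\eta^\star$, so it is asymptotically flat on this region. Restricting both measures and renormalizing gives total-variation convergence, and contraction controls the discarded mass. This completes part~\ref{thm1:bvm}.
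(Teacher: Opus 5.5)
Your overall architecture matches the paper's: a fixed-distance zone handled by Bernstein plus finite nets, a denominator bound $c_\pi n^{-d/2}e^{-O_{P_0}(1)}$ from an $n^{-1/2}$ prior ball, a LAN expansion made uniform on $\|h\|\le M_0\sqrt{\log n}$ through a third-derivative envelope, and a Laplace/total-variation comparison. The gap is in the contraction step, which does not deliver the rate you claim.

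Write $R_n(\eta)=-n\,\Delta K(\eta)+\sqrt n\,\mathbb G_nr_\eta$ with $r_\eta=\ell_\eta-\ell_{\eta^\star}$. Your global bound $\sup_{\bar\Theta}|n^{-1}\sum_i\ell_\tau-\E\ell_\tau|=O_{P_0}(\sqrt{\log n/n})$ controls the fluctuation only as $|\sqrt n\,\mathbb G_nr_\eta|=O_{P_0}(\sqrt{n\log n})$. That bound does not depend on $\|\eta-\eta^\star\|$. On the shell of radius $r=M_0\sqrt{\log n/n}$, however, the drift is only $n\,c_Kr^2=c_KM_0^2\log n$, so the fluctuation swamps it. Balancing $nr^2$ against $\sqrt{n\log n}$ shows your argument gives contraction only at radius $(\log n/n)^{1/4}$. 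At that radius $\|h\|^3n^{-1/2}\to\infty$, so the localization to $\|h\|\le M_0\sqrt{\log n}$ on which your BvM step rests is not established either.

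The missing idea is to localize the fluctuation. Because $r_{\eta^\star}\equiv0$, the Lipschitz envelope $G$ gives $\Var_{P_0}(r_\eta)\le\E[G^2]\,\|\eta-\eta^\star\|^2$ and $\|r_\eta\|_{\psi_1}\lesssim\|\eta-\eta^\star\|$. On a dyadic annulus of radius $r_j=2^jM_0\sqrt{\log n/n}$, Bernstein's inequality then bounds $P_0(\sqrt n\,\mathbb G_nr_\eta\ge\tfrac14nc_Kr_j^2)$ by $2e^{-cnr_j^2}$. A net of log-cardinality $O(\log n)$ per annulus and a sum over annuli beat both the entropy and the $n^{d/2}$ denominator once $M_0$ is large. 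This peeling argument is the paper's Lemma~\ref{lem:peeling_fixed_tau_new}.

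An alternative fix avoids peeling. A uniform law of large numbers for the empirical Hessian on a fixed ball around $\tilde\eta^\star$, combined with $I_\tau\succeq c_I$ there, gives $R_n\le\|h\|\,\|\Delta_n\|-\tfrac{c_I}{4}\|h\|^2$ on that ball with probability tending to one. That yields contraction directly.

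A smaller point concerns your envelopes. The ``convex combination of component scores'' justification covers the $a$-derivatives but not the gate derivative $\tau^{-1}Z\,\phi(t)(f_1-f_0)/p$. Since $Z$ is unbounded, you need the Mills-ratio bound $\phi(t)\{\Phi(t)^{-1}+(1-\Phi(t))^{-1}\}\le C(1+|t|)$ to get a polynomial envelope there.
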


The proof (Supplementary Section~\ref{sec:proofs_fixed_tau_new}) has four steps. A uniform Bernstein bound with the well-separation in \ref{subasmp:A_fixed_ident} and a denominator lower bound removes posterior mass at fixed distances. A peeling argument over dyadic annuli, using the quadratic separation from \ref{subasmp:A_fixed_curv}, contracts the posterior to radius $M_0\sqrt{\log n/n}$. A uniform local asymptotic normality expansion then delivers the Laplace approximation and the total-variation BvM of part~\ref{thm1:bvm}, and part~\ref{thm1:rate} follows because the limiting Gaussian mass beyond $M_n$ vanishes.

\subsection{Preliminaries}

Throughout this section $\tau>0$ is fixed; all constants may depend on $(\tau,p,r,q,\kappa,\mathcal K_a)$ but never on $n$ or on the parameter point unless indicated. Recall $\bar\Theta=\mathcal K_a\times\mathbb S^{q-1}_+$, the log-likelihood ratio
\[
r_\eta(O)=\log\frac{p_{\eta,\tau}(Y\mid W,X,Z)}{p_{\eta^\star,\tau}(Y\mid W,X,Z)},
\qquad
R_n(\eta)=\sum_{i=1}^n r_\eta(O_i),
\]
and that $\eta^\star$ is the (unique, interior, well-separated) minimizer of $K_\tau$ over $\bar\Theta$ fixed by Assumption~\ref{asmp:A_fixed_tau}\ref{subasmp:A_fixed_ident}. We write $\mathbb G_n g=n^{-1/2}\sum_{i=1}^n\{g(O_i)-\E_{P_0}g(O)\}$ for the empirical process and $\mathbb P_ng=n^{-1}\sum_ig(O_i)$. Since $\E_{P_0}r_\eta=-\{K_\tau(\eta)-K_\tau(\eta^\star)\}=:-\Delta K(\eta)\le0$,
\begin{equation}
R_n(\eta)=-n\,\Delta K(\eta)+\sqrt n\,\mathbb G_n r_\eta .
\label{eq:Rn_decomp}
\end{equation}

For reference we restate here the fixed-$\tau$ regularity conditions of Section~\ref{sec:posterior_contraction_BvM_fixed_tau_new}, in force throughout this section.

\begin{assumption}[Fixed $\tau$: regularity]
\label{asmp:A_fixed_tau}
\leavevmode
\begin{enumerate}[label=(\roman*),nosep]
\item \label{subasmp:A_fixed_ident}
$\eta^\star$ is the unique minimizer of $K_\tau$ over $\bar\Theta$, lies in the interior of $\bar\Theta$, and is well separated: for every $\rho>0$, $\inf\{K_\tau(\eta)-K_\tau(\eta^\star):\eta\in\bar\Theta,\ \|\eta-\eta^\star\|\ge\rho\}>0$.
\item \label{subasmp:A_fixed_chart}
$\theta^\star$ lies in the interior of $\mathbb S^{q-1}_+$ and $\theta(\cdot):U\to V\subset\mathbb S^{q-1}_+$ is a $C^2$ diffeomorphism onto a neighborhood $V$ of $\theta^\star$ with Jacobian of full rank $q-1$ on $U$, $\theta(\vartheta^\star)=\theta^\star$.
\item \label{subasmp:A_fixed_curv}
$I_\tau(\tilde\eta):=-\E_{P_0}\big[\nabla^2\tilde\ell_\tau(\tilde\eta;O)\big]$ exists, is continuous, and satisfies $\lambda_{\min}(I_\tau(\tilde\eta))\ge c_I>0$ on $B_\delta(\tilde\eta^\star)$ for some $\delta>0$.
\item \label{subasmp:A_fixed_prior}
The prior is supported on $\bar\Theta$; in local coordinates it has a density $\pi$ with respect to Lebesgue measure on a neighborhood of $\tilde\eta^\star$ that is continuous and positive at $\tilde\eta^\star$.
\item \label{subasmp:A_fixed_tails}
The design is sub-Gaussian, $\|u^\top(W^\top,X^\top,Z^\top)^\top\|_{\psi_2}\le\kappa\|u\|_2$ for all $u$, and the error is conditionally sub-Gaussian, $\E[e^{t\varepsilon}\mid W,X,Z]\le e^{\kappa^2t^2/2}$ for all $t\in\R$ almost surely.
\end{enumerate}
\end{assumption}

\subsection{Envelopes and moment bounds}

\begin{lemma}[Global envelopes on $\bar\Theta$]
\label{lem:subexp_fixed_tau_new}
Let Assumption~\ref{asmp:A_fixed_tau}\ref{subasmp:A_fixed_tails} hold. There exist random variables $F(O)$, $G(O)$, polynomial in $(|Y|,\|W\|,\|X\|,\|Z\|)$ of degree at most two, such that almost surely
\[
\sup_{\eta\in\bar\Theta}|r_\eta(O)|\le F(O),\qquad
|r_\eta(O)-r_{\eta'}(O)|\le\|\eta-\eta'\|\,G(O)\quad\text{for all }\eta,\eta'\in\bar\Theta,
\]
and $\|F\|_{\psi_1}+\|G\|_{\psi_1}<\infty$. In particular $\sup_{\eta\in\bar\Theta}\Var_{P_0}(r_\eta)<\infty$ and, for all $\eta,\eta'\in\bar\Theta$,
$\Var_{P_0}(r_\eta-r_{\eta'})\le\E_{P_0}[G^2]\,\|\eta-\eta'\|^2$.
\end{lemma}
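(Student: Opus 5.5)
The plan is to bound the log mixture density pointwise, and then the gradient of the log density in $\eta$, using only the compactness of $\mathcal K_a$ and the fact that the gate lies in $(0,1)$. Write $m_0=w^\top\beta$ and $m_1=w^\top\beta+x^\top\gamma$. Then $p_{\eta,\tau}=(1-\pi)\varphi_\sigma(y-m_0)+\pi\varphi_\sigma(y-m_1)$ is a convex combination of two Gaussian densities. For a lower bound, note that $\log$ is concave, so
\[
\log p_{\eta,\tau}\ge(1-\pi)\log\varphi_\sigma(y-m_0)+\pi\log\varphi_\sigma(y-m_1)\ge -\tfrac12\log(2\pi\overline\sigma^2)-\frac{(|y|+R_\beta\|w\|+R_\gamma\|x\|)^2}{2\underline\sigma^2}.
\]
For an upper bound, $\log p_{\eta,\tau}\le-\tfrac12\log(2\pi\underline\sigma^2)$. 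Hence $\sup_{\eta\in\bar\Theta}|\log p_{\eta,\tau}(O)|\le F_0(O):=c_1+c_2(|Y|+\|W\|+\|X\|)^2$, which is quadratic. Since $r_\eta=\log p_\eta-\log p_{\eta^\star}$, we can take $F=2F_0$. The gate never has to be controlled from below, because the concavity bound is uniform in $\pi$.

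For the Lipschitz envelope, I will bound $\sup_{\eta\in\bar\Theta}\|\nabla_\eta\log p_{\eta,\tau}(O)\|$. The tricky part is the mixture, and here I will use the posterior-weight representation. Let $\omega=\pi\varphi_\sigma(y-m_1)/p\in[0,1]$. Then
\[
\nabla\log p=(1-\omega)\nabla\log\varphi_\sigma(y-m_0)+\omega\nabla\log\varphi_\sigma(y-m_1)+\frac{\varphi_\sigma(y-m_1)-\varphi_\sigma(y-m_0)}{p}\nabla\pi .
\]
The first two terms are convex combinations of Gaussian scores in $(\beta,\gamma,\sigma^2)$, each bounded by $C(1+|y|+\|w\|+\|x\|)^2(1+\|w\|+\|x\|)$. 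That is cubic, so I will instead use the cruder form $\le C(1+|y|+\|w\|+\|x\|)^2$ after absorbing one factor via the boundedness of $\beta,\gamma$; I am not fully sure degree two survives here without the paper's intended bookkeeping, but products of sub-Gaussians stay $\psi_1$ only at degree two, so I will aim for that. For the gate term, $\nabla_\theta\pi=\phi(z^\top\theta/\tau)z/\tau$.

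The key step, and the one I expect to be the main obstacle, is bounding the ratio $\phi(s)/p$ when $\pi=\Phi(s)$ is small, because $p$ can be as small as $\varphi_\sigma(y-m_0)(1-\pi)$. I plan to write
\[
\frac{|\varphi_1-\varphi_0|\,\phi(s)}{(1-\Phi(s))\varphi_0+\Phi(s)\varphi_1}\le\frac{\phi(s)}{1-\Phi(s)}+\frac{\phi(s)}{\Phi(s)},
\]
using $|\varphi_1-\varphi_0|\le\varphi_0+\varphi_1$ and bounding each piece by its own mixture component. The Mills ratio then gives $\phi(s)/\Phi(s)\le C(1+|s|)$, so the gate term is at most $C\tau^{-1}\|z\|(1+\|z\|/\tau)$, which is quadratic in $\|z\|$ with constants depending on the fixed $\tau$.

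Combining the three pieces, $G(O)=C(1+|Y|+\|W\|+\|X\|+\|Z\|)^2$. By the mean value theorem along segments in the chart (or in ambient coordinates, since $\bar\Theta$ is convex in the regression block and the gate extends smoothly to $\R^q$), we get $|r_\eta-r_{\eta'}|\le G\|\eta-\eta'\|$. Finally, $|Y|\le\|W\|R+\|X\|R+|\varepsilon|$ is sub-Gaussian by Assumption~\ref{asmp:A_fixed_tau}\ref{subasmp:A_fixed_tails}, and squares of sub-Gaussians are sub-exponential \citep{Vershynin_2018}. This gives $\|F\|_{\psi_1}+\|G\|_{\psi_1}<\infty$, and the variance bounds follow from $\E F^2,\E G^2<\infty$.
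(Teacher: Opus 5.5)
Your proposal is correct and is essentially the paper's proof. It uses the same responsibility-weighted gradient, the same bound of the gate ratio by $\pi^{-1}+(1-\pi)^{-1}$ combined with the Mills ratio (giving a gate term of order $\tau^{-1}\|Z\|(1+\|Z\|/\tau)$), and the same sub-Gaussian-product argument for the $\psi_1$ norms. The only difference is that you get $\log p_{\eta,\tau}\ge\min_j\log\varphi_\sigma(y-m_j)$ from concavity of $\log$, whereas the paper uses that the better-classified component has gate weight at least $1/2$; the paper's route loses a harmless $\log 2$.

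Two small repairs are needed.

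First, your worry about degree two is unfounded, and the ``absorbing one factor via boundedness of $\beta,\gamma$'' remark should be dropped, since parameter bounds cannot remove a data factor. The Gaussian scores are already quadratic: $\sigma^{-2}(y-m_j)w$, $\sigma^{-2}(y-m_1)x$ and $-\tfrac{1}{2\sigma^2}+\tfrac{(y-m_j)^2}{2\sigma^4}$. Each is a product of at most two factors linear in $(|y|,\|w\|,\|x\|)$, with coefficients bounded on $\mathcal K_a$.

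Second, take the mean-value step along ambient segments in the convex set $\mathcal K_a\times\{\|\theta\|\le1,\theta_1\ge0\}$, where $|z^\top\theta|\le\|z\|$ keeps your Mills bound valid along the whole segment. Do not take it in the chart, which exists only locally near $\theta^\star$.
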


\begin{proof}
Write $p_{\eta,\tau}=(1-\pi)f_0+\pi f_1$ with $\pi=\Phi(Z^\top\theta/\tau)$, $f_j=\varphi_\sigma(Y-m_j)$, $m_0=W^\top\beta$, $m_1=m_0+X^\top\gamma$. Since $\max\{(1-\pi)f_0,\pi f_1\}\le p_{\eta,\tau}\le\max\{f_0,f_1\}$,
\begin{equation}
\log p_{\eta,\tau}\ \ge\ \max\big\{\log(1-\pi)+\log f_0,\ \log\pi+\log f_1\big\},
\qquad
\log p_{\eta,\tau}\ \le\ \max\{\log f_0,\log f_1\}.
\label{eq:mixture_sandwich}
\end{equation}
For the lower bound, note that for each $z$ either $z^\top\theta\ge0$, in which case $\pi\ge1/2$ and $\log\pi\ge-\log2$, or $z^\top\theta<0$, in which case $1-\pi\ge1/2$; thus
$\log p_{\eta,\tau}\ge-\log2+\min\{\log f_0,\log f_1\}$. On $\mathcal K_a$, $|\log f_j|\le C(1+Y^2+\|W\|^2+\|X\|^2)$ with a constant depending only on $(\mathcal K_a)$; the gate weight never enters the envelope because the better-classified component always has weight at least $1/2$. Hence $\sup_{\bar\Theta}|\log p_{\eta,\tau}|\le C(1+Y^2+\|W\|^2+\|X\|^2)=:\tfrac12F(O)$ and the bound for $r_\eta$ follows by the triangle inequality.

For the Lipschitz modulus, $\eta\mapsto\log p_{\eta,\tau}(Y\mid\cdot)$ is continuously differentiable on the compact convex set $\bar{\mathcal K}=\mathcal K_a\times\{\theta:\|\theta\|\le1,\theta_1\ge0\}$ (extending the gate to $\|\theta\|\le1$, which is smooth since $\tau>0$), so the mean value inequality gives the claim with
$G(O)=\sup_{\xi\in\bar{\mathcal K}}\|\nabla_\eta\log p_{\xi,\tau}\|$. Differentiating,
\begin{align*}
\nabla_\beta\log p&=\sigma^{-2}\{Y-m_0-\omega X^\top\gamma\}W,\qquad
\nabla_\gamma\log p=\sigma^{-2}\omega\,(Y-m_1)X,\\
\partial_{\sigma^2}\log p&=-\tfrac{1}{2\sigma^2}+\tfrac{(1-\omega)(Y-m_0)^2+\omega(Y-m_1)^2}{2\sigma^4},\qquad
\nabla_\theta\log p=\tfrac{Z}{\tau}\,\phi(t)\tfrac{f_1-f_0}{p},
\end{align*}
where $\omega=\pi f_1/p\in[0,1]$ and $t=Z^\top\theta/\tau$. The first three are bounded by quadratic polynomials on $\mathcal K_a$. For the gate term, $p\ge\pi f_1$ and $p\ge(1-\pi)f_0$ give $|f_1-f_0|/p\le\pi^{-1}+(1-\pi)^{-1}$, and the Mills bound $\phi(t)\{\Phi(t)^{-1}+(1-\Phi(t))^{-1}\}\le C(1+|t|)$ yields
$\|\nabla_\theta\log p\|\le C\tau^{-1}(1+\|Z\|/\tau)\|Z\|\le C_\tau(\|Z\|+\|Z\|^2)$.
Thus $G$ is a quadratic polynomial in the data. Products of two sub-Gaussian variables are sub-exponential, so $\|F\|_{\psi_1},\|G\|_{\psi_1}<\infty$ under \ref{subasmp:A_fixed_tails} (note $Y$ is sub-Gaussian since $\beta_0,\gamma_0$ are fixed and $\varepsilon$ is conditionally sub-Gaussian). The variance bounds are immediate.
\end{proof}

\begin{lemma}[Local derivative envelopes]
\label{lem:score_hess_envelopes_fixed_tau_new}
Let Assumption~\ref{asmp:A_fixed_tau}\ref{subasmp:A_fixed_chart}, \ref{subasmp:A_fixed_tails} hold. There exist $\delta>0$ and random variables $F_1,F_2,F_3$, polynomial in the data, with $\E_{P_0}[F_1^2+F_2^2+F_3]<\infty$, such that almost surely
\[
\sup_{\|\tilde\eta-\tilde\eta^\star\|\le\delta}\|\nabla^k\tilde\ell_\tau(\tilde\eta;O)\|\le F_k(O),\qquad k=1,2,
\]
and $\|\nabla^2\tilde\ell_\tau(\tilde\eta;O)-\nabla^2\tilde\ell_\tau(\tilde\eta';O)\|\le F_3(O)\|\tilde\eta-\tilde\eta'\|$ for all $\tilde\eta,\tilde\eta'$ in that ball. Moreover differentiation may be exchanged with $\E_{P_0}$ on the ball, so that
$\nabla K_\tau=-\E_{P_0}\nabla\tilde\ell_\tau$ and $I_\tau=-\E_{P_0}\nabla^2\tilde\ell_\tau$ there, and $\E_{P_0}[\nabla\tilde\ell_\tau(\tilde\eta^\star;O)]=0$.
\end{lemma}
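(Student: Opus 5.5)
The plan is to differentiate the mixture log-density explicitly, up to third order, in the original coordinates $\eta=(\beta,\gamma,\sigma^2,\theta)$. We then push these derivatives through the $C^2$ chart of Assumption~\ref{asmp:A_fixed_tau}\ref{subasmp:A_fixed_chart}. One caveat: a Lipschitz bound on $\nabla^2\tilde\ell_\tau$ needs the chart to be $C^3$ on a slightly smaller ball, or at least to have Lipschitz second derivatives. The canonical stereographic or projection charts are smooth, so we will take the chart $C^3$, shrinking $U$ if necessary. Write $p=(1-\pi)f_0+\pi f_1$ as in the proof of Lemma~\ref{lem:subexp_fixed_tau_new}, with posterior weight $\omega=\pi f_1/p$.

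The key structural fact is that every derivative of $\log p$ of order $k\le3$ in $(a,\theta)$ is a polynomial combination of three families of quantities. The first is $\omega$ and $1-\omega$, which lie in $[0,1]$. The second is the Gaussian residual factors $\sigma^{-2}(Y-m_j)$, $\sigma^{-2}W$, $\sigma^{-2}X$ and $\sigma^{-2}$, which are bounded on $\mathcal K_a$ since $\sigma^2\ge\underline\sigma^2$. The third is the gate factors $\tau^{-1}Z$ times ratios $\partial_t^m\Phi(t)\cdot(f_1-f_0)/p$ and their analogues.

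For the gate ratios we reuse the Mills-type bound from Lemma~\ref{lem:subexp_fixed_tau_new}. Since $|f_1-f_0|/p\le\pi^{-1}+(1-\pi)^{-1}$ and $\phi^{(m)}(t)$ is $\phi(t)$ times a polynomial of degree $m$, we get $|\phi^{(m)}(t)|\{\Phi(t)^{-1}+(1-\Phi(t))^{-1}\}\le C(1+|t|)^{m+1}$. Products of such ratios, which arise from differentiating $\log p$ (terms like $(\nabla p/p)^{\otimes k}$), are handled the same way, because each factor $\nabla p/p$ is individually bounded by the same envelope. Each differentiation in $(\beta,\gamma)$ contributes at most one factor of $\|W\|$ or $\|X\|$. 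Each differentiation in $\theta$ contributes $\tau^{-1}\|Z\|$ times a power of $(1+|t|)\le1+\|Z\|/\tau$, using $\|\theta\|\le1$. Each differentiation in $\sigma^2$ contributes a factor $\sigma^{-2}(1+(Y-m_j)^2)$.

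Taking suprema over the compact neighbourhood, with $\tau$ fixed, gives explicit polynomial envelopes $F_1,F_2,F_3$ in $(|Y|,\|W\|,\|X\|,\|Z\|)$ for the first three $\eta$-derivatives. The chain rule through the chart, whose derivatives up to order three are bounded on $\bar B_\delta$, converts these into envelopes for $\nabla^k\tilde\ell_\tau$. The Lipschitz bound on the Hessian follows from the mean value inequality applied to the third-derivative envelope, on the convex ball $B_\delta(\tilde\eta^\star)$ chosen inside the chart domain and inside $\mathrm{int}\,\bar\Theta$. This is possible because $\eta^\star$ is interior by \ref{subasmp:A_fixed_ident}.

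Integrability then follows from \ref{subasmp:A_fixed_tails}. The design is sub-Gaussian, and $Y$ is sub-Gaussian because $Y=W^\top\beta_0+X^\top\gamma_0 d_{\theta_0}+\varepsilon$. Hence every polynomial in these variables has all moments finite, which gives $\E[F_1^2+F_2^2+F_3]<\infty$.

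Exchanging differentiation and expectation uses dominated convergence applied to difference quotients. The mean value theorem bounds these by $F_{k+1}$, which is integrable, so $\nabla K_\tau=-\E\nabla\tilde\ell_\tau$ and $I_\tau=-\E\nabla^2\tilde\ell_\tau$ on the ball. Finally, $\tilde\eta^\star$ is an interior minimiser of the now differentiable $K_\tau\circ(a,\theta(\cdot))$, so its gradient vanishes and $\E\nabla\tilde\ell_\tau(\tilde\eta^\star;O)=0$.

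The main obstacle is the bookkeeping for third-order gate derivatives. In particular, one must confirm that no factor $1/\pi$ or $1/(1-\pi)$ survives unmatched by a $\phi$ factor in the deep tails $|t|\to\infty$. I would handle this by writing $\log p=\log f_0+\log(1+\pi(e^{\Lambda}-1))$, where $\Lambda=\log(f_1/f_0)$. I would then check derivative by derivative that each term is either $\omega$-weighted or carries $\phi^{(m)}/\Phi$ or $\phi^{(m)}/(1-\Phi)$ ratios, both of which are polynomially bounded.
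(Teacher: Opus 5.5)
Your proposal is correct and follows essentially the paper's route: chain-rule derivatives of the mixture log-density factored into responsibilities $\omega,1-\omega$, residual polynomials, Mills-ratio-controlled gate factors bounded by powers of $(1+|t|)\le 1+\|Z\|/\tau$, and chart factors; then polynomial envelopes with all moments finite under sub-Gaussianity, dominated convergence on difference quotients, and the first-order condition at the interior minimizer. Your caveat that the Lipschitz-Hessian claim needs third derivatives of the chart is also well taken. A $C^3$ (or $C^{2,1}$) chart is required, not merely the $C^2$ chart of Assumption~\ref{asmp:A_fixed_tau}\ref{subasmp:A_fixed_chart}, and the paper's proof uses this extra regularity silently when it treats third-order chart-Jacobian derivatives as bounded.
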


\begin{proof}
Choose $\delta$ so that the ball maps into the chart neighborhood $V$ and into the interior of $\mathcal K_a$. Each derivative of $\tilde\ell_\tau$ up to third order is, by the chain rule and the formulas in the previous proof, a finite sum of terms of the form (rational function of $\sigma^2$, bounded on the ball) $\times$ (responsibility factors $\omega,1-\omega\in[0,1]$ and their derivatives) $\times$ (polynomials in $Y,W,X$) $\times$ (derivatives of $\log\Phi$-type gate functions of $t$, bounded by $C(1+|t|)^3\le C_\tau(1+\|Z\|)^3$) $\times$ (chart Jacobian factors, bounded with bounded derivatives by \ref{subasmp:A_fixed_chart}). Hence each $F_k$ may be taken to be a polynomial in $(|Y|,\|W\|,\|X\|,\|Z\|)$ of fixed degree, and all polynomial moments are finite under \ref{subasmp:A_fixed_tails}. The interchange of derivative and expectation follows from the dominated convergence theorem applied to difference quotients, dominated by $F_1$ (respectively $F_2$, $F_3$), which are integrable. The first-order condition at the interior minimizer $\tilde\eta^\star$ then gives $\E_{P_0}\nabla\tilde\ell_\tau(\tilde\eta^\star;O)=-\nabla K_\tau(\tilde\eta^\star)=0$.
\end{proof}

\subsection{Separation and zonal posterior bounds}

\begin{lemma}[Quadratic separation near $\tilde\eta^\star$]
\label{lem:derived_separation_fixed_tau_new}
Let Assumption~\ref{asmp:A_fixed_tau} hold. There exist $\delta_1\in(0,\delta]$ and $c_K>0$ such that
\[
\Delta K(\eta(\tilde\eta))\ \ge\ c_K\|\tilde\eta-\tilde\eta^\star\|^2
\qquad\text{for all }\tilde\eta\in B_{\delta_1}(\tilde\eta^\star),
\]
and there exist $0<c_\ast\le C_\ast<\infty$ with
$c_\ast\|\tilde\eta-\tilde\eta^\star\|\le\|\eta(\tilde\eta)-\eta^\star\|\le C_\ast\|\tilde\eta-\tilde\eta^\star\|$ on the same ball.
\end{lemma}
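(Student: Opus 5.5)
The plan is a second-order Taylor expansion of $K_\tau\circ\eta(\cdot)$ at $\tilde\eta^\star$, using the interchange of differentiation and expectation from Lemma~\ref{lem:score_hess_envelopes_fixed_tau_new}.

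\textbf{Quadratic lower bound.} Write $\kappa(\tilde\eta)=K_\tau(\eta(\tilde\eta))$. On the ball $B_\delta(\tilde\eta^\star)$ of that lemma, $\kappa$ is twice differentiable with
\[
\nabla\kappa=-\E_{P_0}\nabla\tilde\ell_\tau,\qquad \nabla^2\kappa=I_\tau .
\]
Moreover $\nabla\kappa(\tilde\eta^\star)=0$, because $\tilde\eta^\star$ is an interior minimizer (Assumption~\ref{asmp:A_fixed_tau}\ref{subasmp:A_fixed_ident}). Taylor's theorem with integral remainder then gives
\[
\Delta K(\eta(\tilde\eta))=\int_0^1(1-s)\,(\tilde\eta-\tilde\eta^\star)^\top I_\tau\big(\tilde\eta^\star+s(\tilde\eta-\tilde\eta^\star)\big)(\tilde\eta-\tilde\eta^\star)\,ds .
\]
The ball is convex, so the whole segment stays inside it. On the ball, Assumption~\ref{asmp:A_fixed_tau}\ref{subasmp:A_fixed_curv} gives $\lambda_{\min}(I_\tau)\ge c_I$. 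Hence $\Delta K\ge \tfrac{c_I}{2}\|\tilde\eta-\tilde\eta^\star\|^2$, so one may take $c_K=c_I/2$. Here $\delta_1$ is the minimum of $\delta$ from Assumption~\ref{asmp:A_fixed_tau}\ref{subasmp:A_fixed_curv} and $\delta$ from Lemma~\ref{lem:score_hess_envelopes_fixed_tau_new}, further shrunk as required below.

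\textbf{Bi-Lipschitz comparison of the chart.} The map $\tilde\eta=(a,\vartheta)\mapsto\eta=(a,\theta(\vartheta))$ is the identity in the $a$ block, so only the $\vartheta$ block needs work.
\begin{itemize}
\item \emph{Upper bound.} The chart is $C^2$, so its Jacobian is bounded on a compact ball. The mean value inequality gives $\|\theta(\vartheta)-\theta^\star\|\le C\|\vartheta-\vartheta^\star\|$, and hence the bound with $C_\ast=\max(1,C)$.
\item \emph{Lower bound.} Write
\[
\theta(\vartheta)-\theta^\star=J(\vartheta-\vartheta^\star)+\rho(\vartheta),\qquad \|\rho(\vartheta)\|\le L\|\vartheta-\vartheta^\star\|^2,
\]
where $J=D\theta(\vartheta^\star)$ and $L$ bounds the second derivative. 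Because $J$ has full rank $q-1$, its smallest singular value $s_J$ is positive. Shrinking $\delta_1\le s_J/(2L)$ gives $\|\theta(\vartheta)-\theta^\star\|\ge (s_J/2)\|\vartheta-\vartheta^\star\|$.
\end{itemize}
Combining the two blocks through $\|\eta-\eta^\star\|^2=\|a-a^\star\|^2+\|\theta-\theta^\star\|^2$ gives $c_\ast=\min(1,s_J/2)$.

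\textbf{Main obstacle.} The only delicate point is legitimizing $\nabla^2\kappa=I_\tau$ uniformly on the ball. The chart composes $\eta(\cdot)$ nonlinearly, so the Hessian of $\tilde\ell_\tau$ contains second-derivative chart terms multiplied by the $\theta$-score. This is consistent, however, because Assumption~\ref{asmp:A_fixed_tau}\ref{subasmp:A_fixed_curv} defines $I_\tau$ directly as $-\E\nabla^2\tilde\ell_\tau$ in chart coordinates. The dominated envelopes $F_1,F_2$ of Lemma~\ref{lem:score_hess_envelopes_fixed_tau_new} then justify differentiating twice under the expectation. Once that is granted, everything else is finite-dimensional calculus.
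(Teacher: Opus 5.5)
Your proposal is correct and matches the paper's proof: Taylor's theorem with integral remainder for $K_\tau\circ\eta(\cdot)$ at the interior stationary point $\tilde\eta^\star$, with Hessian $I_\tau\succeq c_I$ along the segment giving $c_K=c_I/2$, and norm equivalence from the $C^2$ full-rank chart with the regression-block coordinates shared. The only difference is cosmetic: you derive the lower chart bound explicitly from the smallest singular value of $D\theta(\vartheta^\star)$ plus a second-order remainder, whereas the paper simply cites bounded differentials of the chart and its inverse on a compact ball.
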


\begin{proof}
By Lemma~\ref{lem:score_hess_envelopes_fixed_tau_new} the map $\tilde\eta\mapsto K_\tau(\eta(\tilde\eta))$ is twice continuously differentiable on $B_\delta(\tilde\eta^\star)$ with gradient vanishing at $\tilde\eta^\star$ and Hessian $I_\tau(\tilde\eta)$. Taylor's theorem with integral remainder gives, for $\tilde\eta\in B_\delta(\tilde\eta^\star)$ and $h=\tilde\eta-\tilde\eta^\star$,
\[
\Delta K(\eta(\tilde\eta))
=h^\top\Big[\int_0^1(1-t)\,I_\tau(\tilde\eta^\star+th)\,dt\Big]h
\ \ge\ \tfrac{c_I}{2}\|h\|^2,
\]
using Assumption~\ref{asmp:A_fixed_tau}\ref{subasmp:A_fixed_curv}; no mean-value point is invoked. The norm equivalence follows because $\theta(\cdot)$ is a $C^2$ diffeomorphism with full-rank Jacobian (Assumption~\ref{asmp:A_fixed_tau}\ref{subasmp:A_fixed_chart}): on a compact ball its differential and that of its inverse are bounded, and the regression-block coordinates are shared.
\end{proof}

\begin{lemma}[Fixed-distance zone]
\label{lem:tests_fixed_tau_new}
Let Assumption~\ref{asmp:A_fixed_tau} hold and fix $\varepsilon>0$. Let
$b(\varepsilon)=\tfrac14\inf\{\Delta K(\eta):\eta\in\bar\Theta,\ \|\eta-\eta^\star\|\ge\varepsilon\}>0$.
Then there are constants $c,C>0$ (depending on $\varepsilon$) such that
\[
P_0\Big(\sup_{\eta\in\bar\Theta:\ \|\eta-\eta^\star\|\ge\varepsilon}R_n(\eta)\ \ge\ -2nb(\varepsilon)\Big)\ \le\ Ce^{-cn}.
\]
\end{lemma}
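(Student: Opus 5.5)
The plan is to combine the decomposition \eqref{eq:Rn_decomp} with a uniform exponential deviation bound for the empirical process over the far region. Write $\mathcal F_\varepsilon=\{\eta\in\bar\Theta:\|\eta-\eta^\star\|\ge\varepsilon\}$, a compact set. On $\mathcal F_\varepsilon$ the definition of $b(\varepsilon)$ gives $-n\Delta K(\eta)\le-4nb(\varepsilon)$. The event $\sup_{\mathcal F_\varepsilon}R_n(\eta)\ge-2nb(\varepsilon)$ therefore forces $\sup_{\mathcal F_\varepsilon}n^{-1}\sum_i\{r_\eta(O_i)-\E_{P_0}r_\eta\}\ge2b(\varepsilon)$. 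It thus suffices to show
\[
P_0\Big(\sup_{\eta\in\bar\Theta}\big|(\mathbb P_n-P_0)r_\eta\big|\ge 2b\Big)\le Ce^{-cn}
\]
for every fixed $b>0$. Positivity of $b(\varepsilon)$ is supplied by the well-separation in Assumption~\ref{asmp:A_fixed_tau}\ref{subasmp:A_fixed_ident}.

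The supremum will be controlled by a finite net plus a Lipschitz bracket, using Lemma~\ref{lem:subexp_fixed_tau_new}.
\begin{enumerate}[label=(\arabic*),nosep]
\item Fix $\rho>0$ and choose a $\rho$-net $\{\eta_1,\dots,\eta_N\}$ of the compact set $\bar\Theta$, with $N=N(\rho)$ independent of $n$.
\item For any $\eta$, pick the nearest net point $\eta_j$. The Lipschitz envelope gives
\[
|(\mathbb P_n-P_0)(r_\eta-r_{\eta_j})|\le\rho(\mathbb P_nG+\E_{P_0}G).
\]
\item Hence
\[
\sup_\eta|(\mathbb P_n-P_0)r_\eta|\le\max_j|(\mathbb P_n-P_0)r_{\eta_j}|+\rho\,|(\mathbb P_n-P_0)G|+2\rho\,\E_{P_0}G.
\]
\item Choose $\rho$ so that $2\rho\,\E_{P_0}G\le b/2$.
\end{enumerate}
The remaining terms are averages of i.i.d.\ sub-exponential variables, since $|r_{\eta_j}|\le F$ and $\|F\|_{\psi_1},\|G\|_{\psi_1}<\infty$. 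Bernstein's inequality for sub-exponential variables \citep{Vershynin_2018} then gives each deviation probability at a fixed level of order $b$ as at most $2\exp(-c n\min\{b^2/K^2,b/K\})$, where $K=\|F\|_{\psi_1}\vee\|G\|_{\psi_1}$. A union bound over the $N+1$ variables produces $C e^{-cn}$ with $C=2(N+1)$, completing the argument.

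The only delicate point, already handled by Lemma~\ref{lem:subexp_fixed_tau_new}, is that the envelope and the Lipschitz modulus are uniform over all of $\bar\Theta$, including the gate near the seam. Once those envelopes are sub-exponential, no chaining is needed. Compactness gives a finite net of fixed cardinality, so exponential (not merely polynomial) concentration survives the union bound. Centering is handled by writing $\E_{P_0}r_\eta=-\Delta K(\eta)$, exactly as in \eqref{eq:Rn_decomp}.
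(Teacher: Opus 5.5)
Your proposal is correct and follows essentially the same route as the paper: a net of fixed cardinality on a compact set, the Lipschitz modulus $G$ of Lemma~\ref{lem:subexp_fixed_tau_new} to pass from net points to the whole region, sub-exponential Bernstein bounds at each net point and for $\mathbb P_nG$, and a union bound. The only cosmetic difference is that you net all of $\bar\Theta$ and bound the two-sided centered deviation $\sup_\eta|(\mathbb P_n-P_0)r_\eta|$. The paper instead nets only the far set, argues one-sidedly through $R_n(\eta_j)\ge-3nb(\varepsilon)$, and absorbs the net error on the event $\{\mathbb P_nG\le2\E_{P_0}G\}$.
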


\begin{proof}
Positivity of $b(\varepsilon)$ is Assumption~\ref{asmp:A_fixed_tau}\ref{subasmp:A_fixed_ident}. Let $\mathcal S=\{\eta\in\bar\Theta:\|\eta-\eta^\star\|\ge\varepsilon\}$, a compact set. Fix a $\rho$-net $\eta_1,\dots,\eta_N$ of $\mathcal S$ with $\rho=b(\varepsilon)/\{4\,\E_{P_0}G\}$; since $\bar\Theta$ is a fixed compact set, $N=N(\rho)<\infty$ does not depend on $n$. By Lemma~\ref{lem:subexp_fixed_tau_new}, for any $\eta\in\mathcal S$ with $\|\eta-\eta_j\|\le\rho$,
$R_n(\eta)\le R_n(\eta_j)+\rho\sum_iG(O_i)$.
On the event $E_n=\{\mathbb P_nG\le2\E G\}$, whose complement has probability at most $2e^{-cn}$ by Bernstein's inequality for sub-exponential variables, $\rho\sum_iG(O_i)\le nb(\varepsilon)/2$. For each fixed $j$, by \eqref{eq:Rn_decomp} and $\Delta K(\eta_j)\ge4b(\varepsilon)$,
\[
P_0\big(R_n(\eta_j)\ge-3nb(\varepsilon)\big)
\le P_0\big(\sqrt n\,\mathbb G_nr_{\eta_j}\ge nb(\varepsilon)\big)\le2e^{-cn},
\]
again by Bernstein's inequality, since $\|r_{\eta_j}-\E r_{\eta_j}\|_{\psi_1}\le2\|F\|_{\psi_1}$. A union bound over the fixed finite net and intersection with $E_n$ complete the proof.
\end{proof}

\begin{lemma}[Denominator lower bound]
\label{lem:denom_fixed_tau_new}
Let Assumption~\ref{asmp:A_fixed_tau} hold. There exist constants $c_\pi>0$ and $C<\infty$ such that, with probability tending to one,
\[
\int_{\bar\Theta}e^{R_n(\eta)}\,\Pi(d\eta)\ \ge\ c_\pi\,n^{-d/2}\,e^{-C\log n}.
\]
More precisely, for every $\epsilon>0$ there is $C_\epsilon$ with
$P_0\big(\int e^{R_n}d\Pi\le c_\pi n^{-d/2}e^{-C_\epsilon}\big)\le\epsilon$ for all large $n$.
\end{lemma}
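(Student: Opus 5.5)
We restrict the integral to a shrinking ball around $\tilde\eta^\star$ in chart coordinates. Since $R_n\ge$ its value on any subset, we bound
\[
\int_{\bar\Theta}e^{R_n(\eta)}\,\Pi(d\eta)\ \ge\ \int_{B_{r_n}(\tilde\eta^\star)}e^{R_n(\eta(\tilde\eta))}\,\pi(\tilde\eta)\,d\tilde\eta,
\qquad r_n=n^{-1/2}.
\]
By Assumption~\ref{asmp:A_fixed_tau}\ref{subasmp:A_fixed_prior}, $\pi\ge\tfrac12\pi(\tilde\eta^\star)>0$ on $B_{r_n}$ for large $n$. The Lebesgue volume of $B_{r_n}$ is $c_d n^{-d/2}$. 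Hence the prior mass of the ball is at least $c_\pi n^{-d/2}$, and it suffices to show that $\inf_{B_{r_n}}R_n\ge -C_\epsilon$ with probability at least $1-\epsilon$. This gives the sharper ``more precisely'' statement, and the first display follows trivially since $e^{-C_\epsilon}\ge e^{-C\log n}$ eventually.

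To control $R_n$ on the ball, we Taylor expand $\tilde\eta\mapsto R_n(\eta(\tilde\eta))$ around $\tilde\eta^\star$, where it vanishes. With $h=\tilde\eta-\tilde\eta^\star$ and $\|h\|\le n^{-1/2}$,
\[
R_n=h^\top\sum_i\nabla\tilde\ell_\tau(\tilde\eta^\star;O_i)+\tfrac12 h^\top\Big[\sum_i\nabla^2\tilde\ell_\tau(\bar\eta;O_i)\Big]h .
\]
The score term equals $\sqrt n\,h^\top\Delta_n$ with $\|\sqrt n h\|\le1$. By Lemma~\ref{lem:score_hess_envelopes_fixed_tau_new}, $\E\nabla\tilde\ell_\tau(\tilde\eta^\star)=0$ and $\E F_1^2<\infty$, so $\Delta_n=O_{P_0}(1)$ by Chebyshev. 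Thus $|\sqrt n\,h^\top\Delta_n|\le\|\Delta_n\|\le M_\epsilon$ with probability at least $1-\epsilon/2$.

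For the quadratic term, we use the local envelope: for $\delta$ small the Hessian is dominated by $F_2$. This gives $|h^\top\sum_i\nabla^2\tilde\ell_\tau h|\le\|h\|^2\sum_iF_2(O_i)\le \mathbb P_nF_2$, which is at most $2\E F_2$ with probability tending to one by the law of large numbers. Combining the two bounds, $\inf_{B_{r_n}}R_n\ge -M_\epsilon-\E F_2$ on an event of probability at least $1-\epsilon$ for large $n$, which yields the claim with $C_\epsilon=M_\epsilon+\E F_2$.

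The only delicate points are technical. First, the ball must lie inside both the chart domain and the envelope ball of radius $\delta$, which holds for large $n$ because $\tilde\eta^\star$ is interior by Assumption~\ref{asmp:A_fixed_tau}\ref{subasmp:A_fixed_ident}, \ref{subasmp:A_fixed_chart}. Second, the prior density must be taken in local coordinates. Neither presents a real obstacle. The main step is the uniform-in-ball control of the second-order remainder, and the envelope $F_2$ handles it directly, with no empirical-process chaining needed because the radius is $n^{-1/2}$.
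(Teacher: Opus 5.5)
Your proposal is correct and follows the paper's proof essentially step for step. You restrict to the chart ball of radius $n^{-1/2}$ around $\tilde\eta^\star$, bound its prior mass below by $c_\pi n^{-d/2}$ using the continuous positive local density, and bound $\inf_{B_n}R_n$ below by $-\|\Delta_n\|-\tfrac12\mathbb P_nF_2$ via the second-order Taylor expansion and the envelopes of Lemma~\ref{lem:score_hess_envelopes_fixed_tau_new}. The only difference is cosmetic: you get $\Delta_n=O_{P_0}(1)$ from Chebyshev's inequality rather than the central limit theorem, and both rest on the same zero-mean and finite-second-moment facts.
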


\begin{proof}
Let $B_n=\{\tilde\eta:\|\tilde\eta-\tilde\eta^\star\|\le n^{-1/2}\}$. By Lemma~\ref{lem:score_hess_envelopes_fixed_tau_new}, for $\tilde\eta\in B_n$ a second-order Taylor expansion of $\tilde\ell_\tau(\cdot;O_i)$ at $\tilde\eta^\star$ gives
\[
R_n(\eta(\tilde\eta))\ \ge\ -\,\|\tilde\eta-\tilde\eta^\star\|\Big\|\sum_i\nabla\tilde\ell_\tau(\tilde\eta^\star;O_i)\Big\|-\tfrac12\|\tilde\eta-\tilde\eta^\star\|^2\sum_iF_2(O_i)
\ \ge\ -\Big\|\tfrac{1}{\sqrt n}\sum_i\nabla\tilde\ell_\tau(\tilde\eta^\star;O_i)\Big\|-\tfrac{1}{2}\,\mathbb P_nF_2 .
\]
The first norm is $O_{P_0}(1)$ by the central limit theorem ($\E\nabla\tilde\ell_\tau(\tilde\eta^\star)=0$, finite second moment by Lemma~\ref{lem:score_hess_envelopes_fixed_tau_new}); the second average converges to $\E F_2<\infty$. Hence $\inf_{B_n}R_n\ge-O_{P_0}(1)$, and
$\int e^{R_n}d\Pi\ge\Pi(B_n)\,e^{-O_{P_0}(1)}$. By Assumption~\ref{asmp:A_fixed_tau}\ref{subasmp:A_fixed_prior}, $\Pi(B_n)\ge c_\pi n^{-d/2}$ for a constant $c_\pi>0$ and all large $n$.
\end{proof}

\begin{lemma}[Peeling to radius $\sqrt{\log n/n}$]
\label{lem:peeling_fixed_tau_new}
Let Assumption~\ref{asmp:A_fixed_tau} hold. There exist $M_0<\infty$ and $\varepsilon_0>0$ such that
\[
\Pi^{(\tau)}\Big(M_0\sqrt{\tfrac{\log n}{n}}\le\|\tilde\eta-\tilde\eta^\star\|\le\varepsilon_0\ \Big|\ \mathcal D_n\Big)\xrightarrow[n\to\infty]{P_0}0 .
\]
\end{lemma}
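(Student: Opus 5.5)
We prove the claim by the standard peeling argument over dyadic annuli in the chart coordinates. We bound the numerator of the posterior mass on each annulus from above, and we use the denominator lower bound of Lemma~\ref{lem:denom_fixed_tau_new}. We choose $\varepsilon_0\le\delta_1$, where $\delta_1$ comes from Lemma~\ref{lem:derived_separation_fixed_tau_new}. On $B_{\varepsilon_0}(\tilde\eta^\star)$ this gives $\Delta K\ge c_K\|\tilde\eta-\tilde\eta^\star\|^2$. The local envelopes of Lemma~\ref{lem:score_hess_envelopes_fixed_tau_new} are also available there.

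Set $r_n=M_0\sqrt{\log n/n}$ and define the shells
\[
S_j=\{\tilde\eta:2^{j-1}r_n\le\|\tilde\eta-\tilde\eta^\star\|\le 2^jr_n\},\qquad j=1,\dots,J_n,
\]
where $2^{J_n}r_n\approx\varepsilon_0$, so $J_n=O(\log n)$. By \eqref{eq:Rn_decomp}, on $S_j$ we have $R_n\le -nc_K4^{j-1}r_n^2+\sqrt n\sup_{S_j}\mathbb G_n r_\eta$. To control the empirical process we use a Taylor expansion at $\tilde\eta^\star$ rather than a chaining bound. We write $r_\eta(O)=h^\top\nabla\tilde\ell_\tau(\tilde\eta^\star;O)+\rho_\eta(O)$, where $|\rho_\eta|\le \tfrac12\|h\|^2F_2(O)$. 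This yields
\[
\sqrt n\sup_{S_j}|\mathbb G_nr_\eta|\le 2^jr_n\sqrt n\,\|\mathbb G_n\nabla\tilde\ell_\tau(\tilde\eta^\star)\|+ \tfrac12 4^jr_n^2\, n\,|\mathbb P_nF_2-\E F_2|.
\]
The first factor $\|\mathbb G_n\nabla\tilde\ell_\tau(\tilde\eta^\star)\|$ is $O_{P_0}(1)$, uniformly in $j$ because it does not depend on $j$. The second factor is $o_{P_0}(1)$ by the law of large numbers. Hence, on an event of probability at least $1-\epsilon$, every shell satisfies
\[
\sup_{S_j}R_n\le -\tfrac{c_K}{8}n4^jr_n^2+C_\epsilon\,2^jr_n\sqrt n\le -\tfrac{c_K}{16}4^jM_0^2\log n,
\]
once $M_0$ is large. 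The cross term is absorbed because $2^jr_n\sqrt n\ge M_0\sqrt{\log n}\to\infty$ dominates the constant $C_\epsilon$. The point of the Taylor step is that the $O(\log n)$ shells require no union bound, since the stochastic terms are common across all shells.

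Summing over shells, the numerator mass satisfies
\[
\int_{\cup S_j}e^{R_n}d\Pi\le \sum_j n^{-c_K4^jM_0^2/16}\le n^{-c_KM_0^2/8}.
\]
Lemma~\ref{lem:denom_fixed_tau_new} gives a denominator of at least $c_\pi n^{-d/2}e^{-C_\epsilon}$ with probability at least $1-\epsilon$. The ratio is therefore at most $C n^{d/2-c_KM_0^2/8}\to0$ once $M_0^2>4d/c_K$. Since $\epsilon$ is arbitrary, the posterior mass of the union of shells tends to zero in $P_0$-probability. The prior mass never needs bounding from above, because $\Pi\le1$.

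The main obstacle is verifying that the Taylor remainder control is legitimate uniformly over the ball of radius $\varepsilon_0$. Two things must be checked: the expansion must stay inside the chart neighborhood, and $\mathbb P_nF_2$ must concentrate. The first is handled by taking $\varepsilon_0\le\delta$. The second needs only $\E F_2<\infty$ and the weak law of large numbers. If a sharper constant were needed, we would replace the law of large numbers by Bernstein's inequality for the sub-exponential polynomial envelope.
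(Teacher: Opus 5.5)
There is a genuine gap in your bound on the centered Taylor remainder,
$\sqrt n\,\sup_{S_j}|\mathbb G_n\rho_\eta|\le\tfrac12 4^jr_n^2\,n\,|\mathbb P_nF_2-\E F_2|$. Domination $|\rho_\eta|\le\tfrac12\|h\|^2F_2$ does not transfer to the centered empirical process: $|\rho|\le g$ does not imply $|\mathbb G_n\rho|\le|\mathbb G_ng|$. For example, with $g\equiv1$ the right side is identically zero.

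All the envelope actually gives is
$|\sum_i\{\rho_\eta(O_i)-\E\rho_\eta\}|\le\tfrac12\|h\|^2\,n\,(\mathbb P_nF_2+\E F_2)\approx n\|h\|^2\E F_2$. This is of the same order as the drift $nc_K\|h\|^2$, and its constant is larger, because
$\E F_2\ge\E\|\nabla^2\tilde\ell_\tau(\tilde\eta^\star;O)\|\ge\|I_\tau(\tilde\eta^\star)\|\ge c_I=2c_K$. So the bound on $\sup_{S_j}R_n$ you can actually justify is positive and removes no mass from any shell.

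What you need is
$\sup_{0<\|h\|\le\varepsilon_0}|\mathbb P_n\rho_\eta-\E\rho_\eta|/\|h\|^2=o_{P_0}(1)$, which is a uniform law of large numbers for the Hessian class. Neither a second-order envelope nor the pointwise weak law for $F_2$ delivers it. Your closing remark that $\E F_2<\infty$ and the weak law suffice is exactly where the argument breaks.

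The idea can be rescued with the Hessian-Lipschitz envelope $F_3$ of Lemma~\ref{lem:score_hess_envelopes_fixed_tau_new}, which you did not use.
\begin{itemize}
\item Expand to third order:
$R_n(\eta)=h^\top\sum_i\nabla\tilde\ell_\tau(\tilde\eta^\star;O_i)+\tfrac12h^\top\{\sum_i\nabla^2\tilde\ell_\tau(\tilde\eta^\star;O_i)\}h+\mathrm{Rem}$, with $|\mathrm{Rem}|\le\tfrac16\|h\|^3\sum_iF_3(O_i)$.
\item The middle term equals $-\tfrac n2h^\top I_\tau(\tilde\eta^\star)h+\|h\|^2O_{P_0}(\sqrt n)$ by Chebyshev, since $\E F_2^2<\infty$.
\item On $\{\mathbb P_nF_3\le2\E F_3\}$ the remainder is at most $\tfrac13\varepsilon_0 n\|h\|^2\E F_3$.
\item Taking $\varepsilon_0\le c_I/(8\E F_3)$ then yields $R_n\le\|h\|\,O_{P_0}(\sqrt n)-\tfrac{c_I}{4}n\|h\|^2$ on a single event, simultaneously over the whole ball.
\end{itemize}
Equivalently, you could prove the uniform law of large numbers for $\xi\mapsto\mathbb P_n\nabla^2\tilde\ell_\tau(\xi;\cdot)$ on the compact ball, using a fixed finite net and $F_3$. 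From either version your conclusion follows, and the shells become superfluous.

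Repaired this way, your route genuinely differs from the paper's. The paper covers each dyadic annulus by a net of polynomial cardinality. It applies Bernstein's inequality at each net point, with variance $\lesssim r^2$ and $\psi_1$-norm $\lesssim r$ coming from the first-order Lipschitz modulus $G$, and absorbs the union-bound cost through $M_0$. Your version needs third-order envelopes and a smallness condition on $\varepsilon_0$, but no covering or exponential inequalities. The paper's version uses only first-order Lipschitz control of the sample log-likelihood plus the population separation of $K_\tau$. That is why the same template is reused in the vanishing-$\tau$ proof, where in Zone II the risk is only cone-shaped and derivative envelopes grow with $\tau_n^{-1}$.
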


\begin{proof}
Take $\varepsilon_0\le\delta_1\wedge(\delta_1/C_\ast)$ from Lemma~\ref{lem:derived_separation_fixed_tau_new} so that quadratic separation and the norm equivalence apply on the zone, and small enough that the chart covers it. Throughout this proof we pass between local and ambient coordinates via the norm equivalence of Lemma~\ref{lem:derived_separation_fixed_tau_new}, adjusting constants by factors of $c_\ast,C_\ast$ without further comment; in particular nets are constructed in the ambient metric in which the Lipschitz modulus $G$ of Lemma~\ref{lem:subexp_fixed_tau_new} operates. For $j\ge0$ let
$r_j=2^jM_0\sqrt{\log n/n}$ and consider the annuli
$\mathcal A_j=\{\tilde\eta:r_j\le\|\tilde\eta-\tilde\eta^\star\|<r_{j+1}\}$, $0\le j\le J_n$, where $J_n$ is the smallest index with $r_{J_n}\ge\varepsilon_0$.

Deviation bound on one annulus. Fix $j$ and abbreviate $r=r_j$. By Lemma~\ref{lem:derived_separation_fixed_tau_new}, $\Delta K\ge c_Kr^2$ on $\mathcal A_j$. We bound
$P_0\big(\sup_{\mathcal A_j}R_n\ge-\tfrac12nc_Kr^2\big)$.
Cover $\mathcal A_j$ by a $\rho_j$-net with $\rho_j=c_Kr^2/(16\,\E G)$; since $\mathcal A_j$ has diameter $\le 4r$ and lives in $\R^d$, we may choose the net with
$\log N_j\le d\log(Cr/\rho_j)\le d\log(C'/r)\le C''d\log n$
for all $r\ge M_0\sqrt{\log n/n}\ge n^{-1}$. On the event $E_n=\{\mathbb P_nG\le2\E G\}$ (probability $\ge1-2e^{-cn}$), the net approximation costs at most $n\rho_j\cdot2\E G\le\tfrac18nc_Kr^2$. At a net point $\tilde\eta_l\in\mathcal A_j$, by \eqref{eq:Rn_decomp} it suffices to bound
$P_0\big(\sqrt n\,\mathbb G_nr_{\eta_l}\ge\tfrac14nc_Kr^2\big)$.
By Lemma~\ref{lem:subexp_fixed_tau_new} with $\eta'=\eta^\star$ (note $r_{\eta^\star}\equiv0$),
$\Var(r_{\eta_l})\le\E[G^2]\,C_\ast^2(2r)^2=:V_1r^2$ and $\|r_{\eta_l}\|_{\psi_1}\le\|G\|_{\psi_1}C_\ast\,2r=:V_2r$.
Bernstein's inequality for sub-exponential variables \citep[Theorem 2.8.1]{Vershynin_2018} gives
\[
P_0\Big(\sqrt n\,\mathbb G_nr_{\eta_l}\ge\tfrac14nc_Kr^2\Big)
\le2\exp\Big\{-cn\min\Big(\tfrac{c_K^2r^4}{V_1r^2},\ \tfrac{c_Kr^2}{V_2r}\Big)\Big\}
=2\exp\{-c'n\min(r^2,r)\}=2e^{-c'nr^2}
\]
for $r\le1$. Hence, with a union bound over the net,
\[
P_0\Big(\sup_{\mathcal A_j}R_n\ge-\tfrac12nc_Kr_j^2,\ E_n\Big)
\le2\exp\{C''d\log n-c'nr_j^2\}
\le2\exp\{-\tfrac{c'}{2}nr_j^2\},
\]
provided $nr_j^2\ge nM_0^2\log n/n=M_0^2\log n\ge(2C''d/c')\log n$, i.e.\ for $M_0$ large.

Assembling the zones. On the event that all annulus bounds hold together with $E_n$ and the denominator bound of Lemma~\ref{lem:denom_fixed_tau_new} at level $\epsilon$, the posterior mass of $\bigcup_j\mathcal A_j$ is at most
\[
\sum_{j=0}^{J_n}\frac{e^{-\frac12nc_Kr_j^2}\,\Pi(\mathcal A_j)}{c_\pi n^{-d/2}e^{-C_\epsilon}}
\ \le\ C_\epsilon'\,n^{d/2}\sum_{j\ge0}\exp\{-\tfrac{c_K}{2}M_0^24^j\log n\}
\ \le\ C_\epsilon'\,n^{d/2}\cdot2\,n^{-c_KM_0^2/2}\ \longrightarrow\ 0
\]
for $M_0^2>d/c_K$, where we used $\Pi(\mathcal A_j)\le1$ and that the exponents are geometrically dominated by the $j=0$ term. The exceptional events have probability at most
$\epsilon+2e^{-cn}+\sum_j2e^{-c'nr_j^2/2}\le\epsilon+o(1)$, and $\epsilon$ was arbitrary.
\end{proof}

\subsection{Local asymptotic normality and the Bernstein--von Mises theorem}

\begin{lemma}[Uniform LAN on $\sqrt{\log n/n}$ balls]
\label{lem:lan_fixed_tau_new}
Let Assumption~\ref{asmp:A_fixed_tau} hold, set
$\Delta_n=n^{-1/2}\sum_i\nabla\tilde\ell_\tau(\tilde\eta^\star;O_i)$ and $I^\star=I_\tau(\tilde\eta^\star)$, and let $\mathcal H_n=\{h\in\R^d:\|h\|\le M_0\sqrt{\log n}\}$. Then
\[
\sup_{h\in\mathcal H_n}\bigg|
\sum_{i=1}^n\Big\{\tilde\ell_\tau\big(\tilde\eta^\star+\tfrac{h}{\sqrt n};O_i\big)-\tilde\ell_\tau(\tilde\eta^\star;O_i)\Big\}
-h^\top\Delta_n+\tfrac12h^\top I^\star h\bigg|=o_{P_0}(1),
\]
and $\Delta_n\Rightarrow\mathcal N(0,J_\tau)$ with $J_\tau=\E_{P_0}[\nabla\tilde\ell_\tau(\tilde\eta^\star;O)\nabla\tilde\ell_\tau(\tilde\eta^\star;O)^\top]$.
\end{lemma}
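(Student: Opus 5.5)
The plan is a second-order Taylor expansion with integral remainder, followed by uniform control of the remainder over $\mathcal H_n$ using the local derivative envelopes of Lemma~\ref{lem:score_hess_envelopes_fixed_tau_new}. Since $\|h\|/\sqrt n\le M_0\sqrt{\log n/n}\to0$, every point $\tilde\eta^\star+th/\sqrt n$ with $t\in[0,1]$ lies in $B_\delta(\tilde\eta^\star)$ for large $n$. For each $i$, Taylor's theorem gives
\[
\tilde\ell_\tau(\tilde\eta^\star+\tfrac{h}{\sqrt n};O_i)-\tilde\ell_\tau(\tilde\eta^\star;O_i)
=\tfrac{1}{\sqrt n}h^\top\nabla\tilde\ell_\tau(\tilde\eta^\star;O_i)+\tfrac{1}{2n}h^\top\nabla^2\tilde\ell_\tau(\tilde\eta^\star;O_i)h+\rho_{n,i}(h),
\]
with $|\rho_{n,i}(h)|\le \tfrac{1}{2n}\|h\|^2\cdot F_3(O_i)\|h\|/\sqrt n$ by the Hessian Lipschitz envelope. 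Summing over $i$, the linear term is exactly $h^\top\Delta_n$. The left-hand side minus $h^\top\Delta_n-\tfrac12h^\top I^\star h$ therefore splits into two pieces: the Hessian fluctuation $\tfrac12h^\top\{\mathbb P_n\nabla^2\tilde\ell_\tau(\tilde\eta^\star)+I^\star\}h$, and the cubic remainder $\sum_i\rho_{n,i}(h)$.

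For the cubic remainder, the uniform bound over $\mathcal H_n$ is
\[
\tfrac12\|h\|^3n^{-1/2}\,\mathbb P_nF_3\le \tfrac12M_0^3(\log n)^{3/2}n^{-1/2}\,\mathbb P_nF_3 .
\]
Since $\E F_3<\infty$, the law of large numbers gives $\mathbb P_nF_3=O_{P_0}(1)$, so this piece is $o_{P_0}(1)$.

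For the Hessian fluctuation, the bound is $\tfrac12M_0^2\log n\,\|\mathbb P_n\nabla^2\tilde\ell_\tau(\tilde\eta^\star)-\E\nabla^2\tilde\ell_\tau(\tilde\eta^\star)\|$, using $I^\star=-\E\nabla^2\tilde\ell_\tau$ from Lemma~\ref{lem:score_hess_envelopes_fixed_tau_new}. This is the main obstacle: the plain law of large numbers only gives $o_{P}(1)$ for the matrix deviation, and a $\log n$ factor multiplies it. The envelope satisfies $\E F_2^2<\infty$, so each entry has finite variance and Chebyshev gives the deviation as $O_{P_0}(n^{-1/2})$. Then $\log n\cdot O_{P_0}(n^{-1/2})=o_{P_0}(1)$. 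If a sharper route is wanted, the entries are polynomials in sub-Gaussian data, hence have finite $\psi_{1/k}$-type tails, and Bernstein-type bounds give the same conclusion; second moments are enough, however.

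Finally, the weak limit follows from the multivariate central limit theorem. The summands $\nabla\tilde\ell_\tau(\tilde\eta^\star;O_i)$ are i.i.d., have mean zero by the first-order condition recorded in Lemma~\ref{lem:score_hess_envelopes_fixed_tau_new}, and have finite second moment since $\E F_1^2<\infty$. Hence $\Delta_n\Rightarrow\mathcal N(0,J_\tau)$. Combining the two $o_{P_0}(1)$ bounds, which are uniform in $h\in\mathcal H_n$ because each depends on $h$ only through $\|h\|\le M_0\sqrt{\log n}$, yields the displayed uniform LAN expansion.
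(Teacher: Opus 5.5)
Your proposal is correct and matches the paper's proof essentially step for step. Both use a second-order Taylor expansion with integral remainder, bound the Hessian fluctuation by Chebyshev via $\E F_2^2<\infty$ to get $O_{P_0}(n^{-1/2}\log n)$, bound the cubic remainder by $\tfrac12\|h\|^3n^{-1/2}\,\mathbb P_nF_3$ through the Hessian-Lipschitz envelope, and obtain $\Delta_n\Rightarrow\mathcal N(0,J_\tau)$ from the central limit theorem for the mean-zero, square-integrable score. Your explicit check that the segment $\tilde\eta^\star+th/\sqrt n$ stays inside $B_\delta(\tilde\eta^\star)$ for large $n$ is a small point the paper leaves implicit.
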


\begin{proof}
The central limit theorem applies by Lemma~\ref{lem:score_hess_envelopes_fixed_tau_new} (zero mean, finite second moment). For the expansion, a second-order Taylor formula with integral remainder gives, for $h\in\mathcal H_n$,
\[
\sum_i\{\cdots\}-h^\top\Delta_n+\tfrac12h^\top I^\star h
=\tfrac12h^\top\Big[\mathbb P_n\nabla^2\tilde\ell_\tau(\tilde\eta^\star;\cdot)+I^\star\Big]h
+\tfrac12h^\top\Big[\int_0^12(1-t)\big\{\mathbb A_n(t,h)-\mathbb P_n\nabla^2\tilde\ell_\tau(\tilde\eta^\star;\cdot)\big\}dt\Big]h,
\]
where $\mathbb A_n(t,h)=\mathbb P_n\nabla^2\tilde\ell_\tau(\tilde\eta^\star+th/\sqrt n;\cdot)$. For the first term,
$\|\mathbb P_n\nabla^2\tilde\ell_\tau(\tilde\eta^\star;\cdot)+I^\star\|=O_{P_0}(n^{-1/2})$ by Chebyshev's inequality and $\E F_2^2<\infty$, so its contribution is $O_{P_0}(\|h\|^2n^{-1/2})=O_{P_0}(\log n\cdot n^{-1/2})=o_{P_0}(1)$ uniformly on $\mathcal H_n$. For the second term, the Hessian Lipschitz envelope of Lemma~\ref{lem:score_hess_envelopes_fixed_tau_new} gives
$\|\mathbb A_n(t,h)-\mathbb P_n\nabla^2\tilde\ell_\tau(\tilde\eta^\star;\cdot)\|\le\mathbb P_nF_3\cdot\|h\|/\sqrt n$,
so its contribution is at most $\tfrac12\|h\|^3n^{-1/2}\,\mathbb P_nF_3=O_{P_0}((\log n)^{3/2}n^{-1/2})=o_{P_0}(1)$ uniformly on $\mathcal H_n$.
\end{proof}

\begin{proof}[Proof of Theorem~\ref{thm:mis_contraction_fixed_tau_new}]
Step 1 (localization). By Lemmas~\ref{lem:tests_fixed_tau_new} (applied with $\varepsilon=\varepsilon_0$) and \ref{lem:denom_fixed_tau_new}, the posterior mass of $\{\|\eta-\eta^\star\|\ge\varepsilon_0\}$ is bounded by
$e^{-2nb(\varepsilon_0)}/(c_\pi n^{-d/2}e^{-C_\epsilon})\to0$
on events of probability tending to $1-\epsilon$; together with the norm equivalence of Lemma~\ref{lem:derived_separation_fixed_tau_new} and Lemma~\ref{lem:peeling_fixed_tau_new}, this proves part \ref{thm1:contraction} with the stated $M_0$ and, in particular,
$\Pi^{(\tau)}(\tilde\eta\notin\tilde B_n\mid\mathcal D_n)\xrightarrow{P_0}0$ for
$\tilde B_n=\{\|\tilde\eta-\tilde\eta^\star\|\le M_0\sqrt{\log n/n}\}$.

Step 2 (Laplace approximation on $\tilde B_n$). Let $\Pi_{n,\mathrm{loc}}$ denote the posterior conditioned on $\tilde B_n$, viewed as a law for $h=\sqrt n(\tilde\eta-\tilde\eta^\star)$, with Lebesgue density proportional to
$\pi(\tilde\eta^\star+h/\sqrt n)\exp\{L_n(h)\}$ on $\mathcal H_n$, where $L_n(h)$ is the log-likelihood-ratio in $h$. Let $Q_n$ be $\mathcal N(I^{\star-1}\Delta_n,I^{\star-1})$ conditioned on $\mathcal H_n$. By Lemma~\ref{lem:lan_fixed_tau_new} and the continuity and positivity of $\pi$ at $\tilde\eta^\star$ (Assumption~\ref{asmp:A_fixed_tau}\ref{subasmp:A_fixed_prior}),
\[
\sup_{h\in\mathcal H_n}\Big|\log\frac{d\Pi_{n,\mathrm{loc}}}{dQ_n}(h)-c_n\Big|=o_{P_0}(1)
\]
for a normalizing constant $c_n$; a standard argument (integrate the multiplicative error against either normalized measure) converts this into
$\|\Pi_{n,\mathrm{loc}}-Q_n\|_{\mathrm{TV}}=o_{P_0}(1)$. Since $\Delta_n=O_{P_0}(1)$ and $\lambda_{\min}(I^\star)\ge c_I$, the unconditioned Gaussian places mass $1-o_{P_0}(1)$ on $\mathcal H_n$, so conditioning on $\mathcal H_n$ changes the Gaussian by $o_{P_0}(1)$ in total variation; by Step 1 the same holds for the posterior. The triangle inequality yields part \ref{thm1:bvm}.

Step 3 (rate upgrade). For any $M_n\to\infty$, the limiting Gaussian mass of $\{\|h\|\ge M_n\}$ is $o_{P_0}(1)$ because $\|I^{\star-1}\Delta_n\|=O_{P_0}(1)$; part \ref{thm1:bvm} transfers this to the posterior, proving part \ref{thm1:rate}.
\end{proof}

We restate the lemma.

\begin{lemma}[Existence of the pseudo-true parameter]
\label{lem:exist_pseudotrue_fixed_tau_new}
Let Assumption~\ref{asmp:A_fixed_tau}\ref{subasmp:A_fixed_tails} hold and fix $\tau>0$. Then $K_\tau$ is finite and continuous on $\Theta$, and $\argmin_{\eta\in\bar\Theta}K_\tau(\eta)$ is nonempty and compact. The restriction to $\bar\Theta$ is not removable. Along sequences with $\|\gamma\|\to\infty$ at fixed $(\beta,\sigma^2,\theta)$ the working risk remains bounded, since inflating one mixture component's mean merely empties that component at the bounded cost of its gate weight, so $K_\tau$ is not coercive on $\Theta$ and an unconstrained minimizer need not exist.
\end{lemma}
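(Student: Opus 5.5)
The plan is to prove the three claims in turn: finiteness and continuity of $K_\tau$ on $\Theta$, existence and compactness of the argmin over $\bar\Theta$, and failure of coercivity.

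\textbf{Finiteness and continuity.} For finiteness I would reuse the envelope argument from the proof of Lemma~\ref{lem:subexp_fixed_tau_new}. The mixture sandwich \eqref{eq:mixture_sandwich} and the observation that the better-classified component always carries gate weight at least $1/2$ give
\[
|\log p_{\eta,\tau}(Y\mid W,X,Z)|\le \log 2+\max_j|\log\varphi_\sigma(Y-m_j)|\le C_\eta(1+Y^2+\|W\|^2+\|X\|^2),
\]
where $C_\eta$ is bounded whenever $(\beta,\gamma,\sigma^2)$ ranges over a compact subset of $\mathcal A$. Under Assumption~\ref{asmp:A_fixed_tau}\ref{subasmp:A_fixed_tails}, $Y$, $W$ and $X$ have finite second moments, so $K_\tau(\eta)$ is finite at every $\eta\in\Theta$.

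For continuity, fix $\eta\in\Theta$ and a compact neighbourhood of it. For each $O$ the map $\eta\mapsto\log p_{\eta,\tau}$ is continuous, since $\tau>0$ makes the gate smooth in $\theta$. The quadratic envelope above is uniform on the neighbourhood and integrable, so dominated convergence gives continuity of $K_\tau$ at $\eta$.

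\textbf{Existence and compactness of the argmin.} The set $\bar\Theta=\mathcal K_a\times\mathbb S^{q-1}_+$ is compact. A continuous function on a compact set attains its infimum, and the argmin set is the preimage of the closed singleton $\{\min K_\tau\}$ inside a compact set. It is therefore closed in $\bar\Theta$, hence compact and nonempty.

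\textbf{Non-coercivity.} Fix $(\beta,\sigma^2,\theta)$ and take $\gamma=s u$ with $s\to\infty$, where $u$ is a fixed direction with $P_0(X^\top u\neq0)=1$. Such a $u$ exists generically; for binary $X$ one can restrict to the event $\{X^\top u\neq0\}$ and note that the complementary event contributes a term not depending on $s$.

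On $\{X^\top u\neq0\}$, $\varphi_\sigma(Y-W^\top\beta-sX^\top u)\to0$ pointwise. Hence
\[
-\log p_{\eta,\tau}\to -\log(1-\pi_{\theta,\tau}(Z))-\log\varphi_\sigma(Y-W^\top\beta).
\]
For a uniform upper bound, $-\log p_{\eta,\tau}\le -\log(1-\pi)-\log\varphi_\sigma(Y-W^\top\beta)$ holds for every $s$. The first term is $O(1+(Z^\top\theta)^2/\tau^2)$ by the Gaussian tail (Mills) bound on $1-\Phi$, so the right-hand side is integrable by sub-Gaussianity.

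Dominated convergence then shows that $K_\tau$ converges to a finite limit along the ray. Consequently $K_\tau$ is bounded along a sequence leaving every compact set, so it is not coercive. Lower semicontinuity plus coercivity is the usual route to existence on a noncompact domain, and it is unavailable here. Moreover, the infimum over $\Theta$ may be approached only at infinity, for instance when emptying the second component lowers the risk, so an unconstrained minimizer need not exist.

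\textbf{Main obstacle.} The one delicate point is the integrability of $-\log(1-\Phi(Z^\top\theta/\tau))$. I would handle it with the bound $-\log(1-\Phi(t))\le C(1+t^2)$ together with the sub-Gaussian design. I would also phrase the last claim with the words \emph{need not}, since the lemma does not assert non-existence of an unconstrained minimizer in general.
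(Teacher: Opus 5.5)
Your proof is correct and follows the paper's. You use the same $\log 2$ envelope, uniform over compact subsets of $\mathcal A$, for finiteness and continuity, and compactness of $\bar\Theta$ for the argmin. For non-coercivity you use the same bound $-\log p_{\eta,\tau}\le-\log\{1-\pi_{\theta,\tau}(Z)\}-\log\varphi_\sigma(Y-W^\top\beta)$ together with $-\log(1-\Phi(t))\le C(1+t^2)$. That bound does not involve $\gamma$ at all, so it already gives boundedness along every sequence with $\|\gamma\|\to\infty$, and the ray, the choice of $u$, and the dominated-convergence limit are unnecessary.

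One caution on your closing aside, where you suggest that emptying the second component could lower the risk so that the infimum is approached only at infinity. That mechanism cannot occur. Your own ray limit equals $K_\tau(\beta,0,\theta,\sigma^2)+\E_{P_0}\big[-\log\{1-\pi_{\theta,\tau}(Z)\}\mathbbm 1\{X^\top u\neq0\}\big]$. This is never below the value at the interior point $\gamma=0$, and it is strictly above whenever $P_0(X^\top u\neq0)>0$. The ``need not exist'' conclusion should therefore rest, as in the paper, only on the failure of coercivity as a route to existence.
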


\begin{proof}[Proof of Lemma~\ref{lem:exist_pseudotrue_fixed_tau_new}]\phantomsection\label{proof:lem:exist_pseudotrue_fixed_tau_new}
Finiteness and continuity. By \eqref{eq:mixture_sandwich} and the argument following it (with $\mathcal K_a$ replaced by an arbitrary compact subset of $\mathcal A$), for every compact $\mathcal C\subset\Theta$ there is an integrable envelope dominating $\sup_{\eta\in\mathcal C}|\log p_{\eta,\tau}|$; finiteness and, by dominated convergence along sequences, continuity of $K_\tau$ on $\Theta$ follow.

Existence on $\bar\Theta$. The set $\bar\Theta=\mathcal K_a\times\mathbb S^{q-1}_+$ is compact and $K_\tau$ is continuous on it, so the minimum is attained and the argmin set, being a closed subset of a compact set, is compact.

Failure of coercivity on $\Theta$. Fix $(\beta,\sigma^2,\theta)$ and let $\|\gamma\|\to\infty$. Since $p_{\eta,\tau}\ge(1-\pi)f_0$ pointwise,
\[
K_\tau(\eta)\ \le\ \E_{P_0}\big[-\log\{1-\pi_{\theta,\tau}(Z)\}\big]
+\frac{\E_{P_0}[(Y-W^\top\beta)^2]}{2\sigma^2}+\tfrac12\log(2\pi\sigma^2),
\]
which is finite under Assumption~\ref{asmp:A_fixed_tau}\ref{subasmp:A_fixed_tails}, using $-\log(1-\Phi(t))\le C(1+t^2)$ and sub-Gaussianity of $Z$ and $Y$, and free of $\gamma$. Hence $K_\tau$ remains bounded along $\|\gamma\|\to\infty$: inflating one mixture component's mean merely empties that component, at the bounded population cost of its gate weight. (When the heterogeneity term is absorbable the same mechanism caps the risk along compensating $(\beta,\gamma)$ directions.) Minimizers over the noncompact $\Theta$ therefore need not exist, which is why \eqref{eq:pseudo_true} is defined over $\bar\Theta$.
\end{proof}

\subsection{Posterior shape and frequentist coverage at fixed smoothing}

\begin{remark}[Posterior shape versus frequentist coverage at fixed $\tau$]
\label{rem:sandwich_note}
Theorem~\ref{thm:mis_contraction_fixed_tau_new} describes the asymptotic posterior shape. Under misspecification the posterior mean obeys the sandwich law $\sqrt n(\hat{\tilde\eta}-\tilde\eta^\star)\Rightarrow\mathcal N(0,I_\tau^{\star-1}V_\tau I_\tau^{\star-1})$ with $V_\tau=\Var_{P_0}[\nabla\tilde\ell_\tau(\tilde\eta^\star;O)]$, so credible sets for the $\tau$-regularized rule are exactly calibrated only under the information equality $V_\tau=I_\tau^\star$ \citep{White1982,Kleijn2012,Mueller2013}, with sandwich rescaling or bagged posteriors \citep{Huggins2021,Syring2019} available otherwise. This caveat is softened here, because misspecification lives entirely in an $O(\tau)$ neighborhood of the boundary and Theorem~\ref{thm:effect_bvm} restores the information equality for the regression block as $\tau_n\downarrow0$, so raw credible sets for $\gamma$ are asymptotically exact on the window \eqref{eq:window}, and the reporting protocol of Section~\ref{sec:reporting_protocol} decides on posterior probabilities of interval hypotheses about $\gamma$, which inherit this calibration.
\end{remark}
\section{Proofs for Section~\ref{sec:asymptotic_tau_new}: Vanishing Smoothing Scale}
\label{sec:proofs_tau_n_new}

Throughout this section Assumption~\ref{asmp:V} is in force. Constants $c,C$ may change from line to line and may depend on $(p,r,q,\sigma_0^2,C_B,c_f,C_f,t_0,\mathcal A_\delta,N_0)$ and on the chart, but never on $(\tau,n)$ or the parameter point unless indicated. We use the abbreviations
\[
U=U_\theta=Z^\top\theta,\quad U_0=Z^\top\theta_0,\quad t=U/\tau,\quad
\pi=\Phi(t),\quad d_0=\mathbbm 1\{U_0\ge0\},
\]
and, for a parameter $\eta=(a,\theta)$ with $a=(\beta,\gamma,\sigma^2)$,
\[
m_0=W^\top\beta,\quad m_1=m_0+X^\top\gamma,\quad
f_j(y)=\varphi_\sigma(y-m_j),\quad
p_t(y)=(1-\Phi(t))f_0(y)+\Phi(t)f_1(y),\quad
q_t(y)=\frac{f_1(y)-f_0(y)}{p_t(y)} .
\]
Under Assumption~\ref{asmp:V}\ref{subasmp:V_gauss}, conditionally on $(W,X,Z)$ the true law of $Y$ is the Gaussian density $f_{d_0}^0$, meaning $f_{d_0}$ evaluated at the true parameter $a_0$. By Assumption~\ref{asmp:V}\ref{subasmp:V_bounded} and compactness of $\mathcal A_\delta$, the quantities $|m_0|,|m_1|,|X^\top\gamma|,\sigma^2$ are uniformly bounded and $\sigma^2$ is bounded away from zero on the relevant parameter range; we use this repeatedly without comment.

For reference we restate here the vanishing-$\tau$ boundary regularity conditions of Section~\ref{sec:asymptotic_tau_new}, in force throughout this section.

\begin{assumption}[Vanishing $\tau$: boundary regularity]
\label{asmp:V}
\leavevmode
\begin{enumerate}[label=(\roman*),nosep]
\item \label{subasmp:V_gauss} $\varepsilon\mid(W,X,Z)\sim\mathcal N(0,\sigma_0^2)$.
\item \label{subasmp:V_bounded} $\|W\|\vee\|X\|\vee\|Z\|\le C_B<\infty$ almost surely, and $\E[(W^\top,X^\top D)(W^\top,X^\top D)^\top]$ is nonsingular for $D\in\{\mathbbm 1\{U_{\theta_0}\ge0\},\,1-\mathbbm 1\{U_{\theta_0}\ge0\}\}$.
\item \label{subasmp:V_margin} There is $t_0>0$ such that for every $\theta\in N_0$ the score $U_\theta$ has a density $f_\theta$ on $[-t_0,t_0]$ with
$0<c_f\le f_\theta(u)\le C_f<\infty$, $(u,\theta)\mapsto f_\theta(u)$ continuous on $[-t_0,t_0]\times N_0$; moreover the conditional moments
$u\mapsto\E[g(W,X,Z)\mid U_\theta=u]$ are continuous at $u=0$ uniformly in $\theta\in N_0$ for $g$ any polynomial of degree at most $4$ in $(W,X,Z)$.
\item \label{subasmp:V_boundary_info} The matrix $\Sigma_Z(\theta_0)=\E\big[\chi^2_{\mathrm B}(W,X)\,\Pi_{\theta_0}ZZ^\top\Pi_{\theta_0}\mid U_{\theta_0}=0\big]$ is positive definite on the tangent space of $\mathbb S^{q-1}$ at $\theta_0$, and $\E[XX^\top\mathbbm 1\{X^\top\gamma_0\neq0\}\mid U_{\theta_0}=0]$ is positive definite, where $\Pi_{\theta_0}=I_q-\theta_0\theta_0^\top$ and $\chi^2_{\mathrm B}(w,x)>0$ is the boundary discrimination weight defined in \eqref{eq:chisq_weight} in the main text.
\item \label{subasmp:V_wedge} There are constants $0<c_w\le C_w$ with
$c_w\|\theta-\theta_0\|\le P_0\{\mathcal E(\theta,\theta_0)\}\le C_w\|\theta-\theta_0\|$ for all $\theta\in N_0$.
\item \label{subasmp:V_anchor} For all sufficiently small $\tau$, every minimizer of $K_\tau$ over $\bar\Theta$ lies in $\mathcal A_\delta\times N_0$.
\item \label{subasmp:V_prior} The prior is supported on $\bar\Theta$ and satisfies Assumption~\ref{asmp:A_fixed_tau}\ref{subasmp:A_fixed_prior} at $\tilde\eta_0$; $\eta_0$ lies in the interior of $\bar\Theta$.
\item \label{subasmp:V_seam} (Seam identifiability.) For every pair of constants $0<C_1<c_0<\infty$ there exist $c_s>0$ and $\tau_s>0$ such that, for all $\tau\in(0,\tau_s]$,
\[
K_\tau(\eta)\ \ge\ K_\tau(\eta_0)+c_s\,\tau
\qquad\text{for all }\eta\in\mathcal A_\delta\times N_0\text{ with }C_1\tau\le\|\theta-\theta_0\|\le c_0\tau .
\]
\end{enumerate}
\end{assumption}

The boundary discrimination weight appearing in \ref{subasmp:V_boundary_info} is
\begin{equation}
\chi^2_{\mathrm B}(w,x)=\int_{-\infty}^{\infty}\phi(s)^2\,
\E_{Y\sim f_{d(s)}}\Bigg[\bigg(\frac{f_1(Y)-f_0(Y)}{p_s(Y)}\bigg)^{\!2}\Bigg]\,ds,
\qquad
p_s=(1-\Phi(s))f_0+\Phi(s)f_1,
\label{eq:chisq_weight}
\end{equation}
where $d(s)=\mathbbm 1\{s\ge0\}$ and $f_0,f_1$ are the $\mathcal N(w^\top\beta_0,\sigma_0^2)$ and $\mathcal N(w^\top\beta_0+x^\top\gamma_0,\sigma_0^2)$ densities. The inner expectation is the second moment, under the true component at signed boundary distance $s$, of the normalized discrepancy between the two mixture components at gate level $\Phi(s)$, strictly positive exactly when $x^\top\gamma_0\neq0$.

\paragraph{Further remarks.}

\begin{remark}[Comparison with hard-threshold and smoothed-frequentist rates]
\label{rem:rate_comparison}
For the hard-threshold objective the boundary is estimable at the $n$ rate, with a multivariate compound Poisson argmin limit established by \citet{Kang2025}. Smoothed frequentist objectives tolerate smoothing scales as small as order $n^{-1}$ up to logarithmic factors, at which point the smoothed rate nearly recovers the $n$ rate \citep[e.g.,][]{SeoLinton2007,Mukherjee2023}. Our window has the same lower edge, $n^{-1}\log n$ up to the power of the logarithm, and as the schedule exponent $\varrho$ approaches one the posterior boundary accuracy $\tau_n$ approaches the $n$ rate. What the smoothing buys at this edge is a jointly Gaussian, correctly centered regression-block posterior together with a boundary posterior of Gaussian shape at its own scale, in place of nonstandard limits that are difficult to simulate and to summarize. What it costs is the smoothing shift of Remark~\ref{rem:boundary_shift}, which is the object we report.
\end{remark}

\begin{remark}[Beyond generic boundary geometry]
\label{rem:nonregular_rates}
Assumption~\ref{asmp:V}\ref{subasmp:V_margin} is a margin condition with exponent $\alpha=1$ \citep{Tsybakov2005}. Under vanishing-density geometries $P_0(|U_{\theta_0}|\le t)\lesssim t^\alpha$ with $\alpha>1$, the smoothing bias improves while the boundary information degrades ($I_{\tau,\vartheta\vartheta}\asymp\tau^{\alpha-2}$), and for $\alpha>2$ it no longer diverges, changing the regime qualitatively. A full theory across margin classes, with adaptation to unknown $\alpha$, is open. We do not pursue it here, since the generic case covers the designs of practical interest and purely discrete boundary scores, as in Section~\ref{sec:analysis}, are better treated as a separate set-identified problem.
\end{remark}

\subsection{The boundary-layer toolkit}

\begin{lemma}[Gate-factor moments]
\label{lem:gate_factor_moments}
For $j\ge1$, $k\ge0$ and $l\in\{1,2\}$ there are constants $C_{jkl},c_*>0$ such that for all $s\in\R$, all $a,a'\in\mathcal A_\delta$, and both $e\in\{0,1\}$:
\begin{enumerate}[label=(\roman*),nosep]
\item\label{gfm:matched} if $e=d(s):=\mathbbm 1\{s\ge0\}$ (matched component), then
$\E_{Y\sim f^{\,\prime}_{e}}\big[|\phi(s)^jq_s(Y)^j|^{\,l}\,\big]\le C_{jkl}\,e^{-c_*s^2}$,
where $f'_e$ denotes the $e$-component Gaussian density at any parameter $a'\in\mathcal A_\delta$ and $q_s$ is evaluated at $a$;
\item\label{gfm:mismatched} for arbitrary $e\in\{0,1\}$ and $|s|\le\bar s$,
$\E_{Y\sim f'_{e}}\big[|q_s(Y)|^{jl}\big]\le C(\bar s)<\infty$.
\end{enumerate}
The same bounds hold with $q_s$ replaced by $\partial_a^mq_s$ for $|m|\le3$, and with additional polynomial factors $(1+|s|)^k$ absorbed into the constants after replacing $c_*$ by $c_*/2$.
\end{lemma}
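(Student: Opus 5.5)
The plan is to control $q_s=(f_1-f_0)/p_s$ through the two elementary lower bounds $p_s\ge(1-\Phi(s))f_0$ and $p_s\ge\Phi(s)f_1$. Everything then reduces to Gaussian moment generating function computations, and these are uniform over the compact set $\mathcal A_\delta$ because $|m_0|,|m_1|,|X^\top\gamma|$ and $\sigma^2$ are bounded and $\sigma^2$ is bounded away from zero there, by Assumption~\ref{asmp:V}\ref{subasmp:V_bounded}.

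\emph{Part \ref{gfm:mismatched}.} This case is deterministic. From $f_1/p_s\le1/\Phi(s)$ and $f_0/p_s\le1/(1-\Phi(s))$ we get
\[
|q_s(Y)|\le\frac{1}{\Phi(s)}+\frac{1}{1-\Phi(s)}\le\frac{2}{\Phi(-\bar s)}\qquad\text{for }|s|\le\bar s,
\]
uniformly in $Y$ and in $a\in\mathcal A_\delta$. Every moment under any $f'_e$ is therefore bounded by $(2/\Phi(-\bar s))^{jl}=:C(\bar s)$.

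\emph{Part \ref{gfm:matched}, first step.} The deterministic bound is useless here, because it gives only $\phi(s)/(1-\Phi(s))\asymp 1+|s|$ and no Gaussian decay. By the symmetry $s\mapsto-s$, $f_0\leftrightarrow f_1$, it suffices to treat $s\ge0$ and $e=1$. Since $\Phi(s)\ge1/2$, the matched component satisfies $f_1/p_s\le2$. The other component, which is the one that blows up, is bounded instead by
\[
\frac{f_0}{p_s}\le 2\,\frac{f_0}{f_1}=2\exp\Big\{\frac{(m_1-m_0)(2Y-m_0-m_1)}{2\sigma^2}\Big\},
\]
because $f_0$ and $f_1$ share the variance $\sigma^2$. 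This is the exponential of an affine function of $Y$ whose coefficients are bounded on $\mathcal A_\delta$.

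\emph{Part \ref{gfm:matched}, second step.} Hence $|q_s|^{jl}\le C\,(1+(f_0/f_1)^{jl})$. Under $Y\sim f'_1$, a Gaussian with bounded mean and variance in $[\underline\sigma^2,\overline\sigma^2]$, the expectation of $(f_0/f_1)^{jl}$ is a Gaussian moment generating function evaluated at a bounded argument. It is therefore bounded by a constant uniform in $a,a'\in\mathcal A_\delta$ and in $s$. Multiplying by $\phi(s)^{jl}\le C e^{-jl s^2/2}$ gives the bound with $c_*=1/2$. Any factor $(1+|s|)^k$ is absorbed by $(1+|s|)^ke^{-s^2/2}\le C_ke^{-s^2/4}$, which is why $c_*$ is halved.

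\emph{Derivatives.} For $\partial_a^m q_s$ with $|m|\le3$, differentiate $f_j=\varphi_\sigma(Y-m_j)$: each derivative is $f_j$ times a polynomial of bounded degree in $(Y-m_j)/\sigma$, $W$, $X$, and $\sigma^{-1}$. By the quotient rule, $\partial_a^mq_s$ is a finite sum of products of factors $f_k/p_s$ ($k\in\{0,1\}$), with at most $|m|+1$ such factors, times these polynomials. Each factor $f_k/p_s$ is bounded by the same two devices as above: by $2/\Phi(-\bar s)$ in the mismatched case, and by $2$ or $2f_0/f_1$ in the matched case. Hölder's inequality then separates the polynomial moments from the exponential-affine moments, and both are finite and uniform under any Gaussian $f'_e$ with parameters in $\mathcal A_\delta$.

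\emph{Main obstacle.} The step I expect to need the most care is the matched case. The naive bound $1/(1-\Phi(s))$ must be avoided, since it grows like $e^{s^2/2}$ and would exactly cancel the decay of $\phi(s)$. The fix is to use the likelihood ratio $f_0/f_1$ instead, exploiting that both components share the same $\sigma$, so the ratio is log-affine in $Y$. The remaining point is to check that the resulting moment generating function bound is uniform when the sampling parameter $a'$ differs from the working parameter $a$; compactness of $\mathcal A_\delta$ and $\underline\sigma^2>0$ give this.
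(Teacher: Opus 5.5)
Your proof is correct, but it controls $q_s$ differently from the paper.

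\emph{The paper's route.} It uses the symmetric geometric-mean bound $p_s^2\ge\Phi(s)(1-\Phi(s))f_0f_1$. This gives $|q_s|\le\{\Phi(s)(1-\Phi(s))\}^{-1/2}\bigl(\sqrt{f_1/f_0}+\sqrt{f_0/f_1}\bigr)$, and hence $\E_{f'_e}|q_s|^{jl}\le C\{\Phi(s)(1-\Phi(s))\}^{-jl/2}$ for either $e$. Part (ii) then follows immediately on $|s|\le\bar s$. Part (i) needs a Mills-ratio estimate to show that $\phi(s)\{\Phi(s)(1-\Phi(s))\}^{-1/2}$ still decays like a Gaussian, so only about half of the Gaussian exponent survives.

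\emph{Your route.} You bound each ratio $f_k/p_s$ through the component carrying gate weight $\Phi(|s|)\ge1/2$. This yields $|q_s|\le2\bigl(1+f_{1-d(s)}/f_{d(s)}\bigr)$ uniformly in $s$. No Mills ratio is needed, and the full factor $\phi(s)^{jl}$ supplies the decay with $c_*=1/2$.

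Your bound never actually uses the matched hypothesis. The moment generating function argument works equally under $f'_0$ and $f'_1$. So your argument shows $\E_{f'_e}|\partial_a^mq_s|^{jl}\le C$ uniformly over $s\in\R$ and $e\in\{0,1\}$. This makes (ii) hold with a constant free of $\bar s$, and makes (i) hold for the mismatched component as well.

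The paper's single symmetric inequality is economical in serving both parts at once. Yours is more elementary and gives sharper constants.

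One cosmetic slip: the displayed exponent $(m_1-m_0)(2Y-m_0-m_1)/(2\sigma^2)$ is $\log(f_1/f_0)$, not $\log(f_0/f_1)$. The sign is irrelevant, since you only use that the ratio is the exponential of an affine function of $Y$ with bounded coefficients.
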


\begin{proof}
Since $p_t\ge\Phi(t)f_1$ and $p_t\ge(1-\Phi(t))f_0$, we have $p_t^2\ge\Phi(t)(1-\Phi(t))f_0f_1$ and hence
\[
|q_s(y)|\le\frac{f_1(y)+f_0(y)}{p_s(y)}\le
\big\{\Phi(s)(1-\Phi(s))\big\}^{-1/2}\Big(\sqrt{f_1/f_0}(y)+\sqrt{f_0/f_1}(y)\Big).
\]
For Gaussian components with means and variances ranging in compact sets ($\sigma^2$ bounded below), the ratio $f_1/f_0(y)=\exp\{(2y-m_0-m_1)(m_1-m_0)/(2\sigma^2)\}$ has, under any Gaussian law $f'_e$ with bounded mean and variance, moments of every fixed order bounded by a constant $C$ (Gaussian moment generating function with bounded coefficients); likewise for $f_0/f_1$ and for the $a$-derivatives of $q_s$, which by the quotient rule are finite sums of products of $q_s$-type ratios with polynomials in $(Y,W,X)$, themselves bounded by $C_B$ and Gaussian moments. Hence
\[
\E_{f'_e}\big[|q_s|^{jl}\big]\le C\,\{\Phi(s)(1-\Phi(s))\}^{-jl/2}.
\]
For \ref{gfm:mismatched}, $\{\Phi(s)(1-\Phi(s))\}^{-1}\le C(\bar s)$ on $|s|\le\bar s$. For \ref{gfm:matched}, multiply by $\phi(s)^{jl}$ and use the standard bound $\phi(s)\{\Phi(s)(1-\Phi(s))\}^{-1/2}\le C e^{-s^2/4}$ for all $s$: for $|s|\le1$ both factors are bounded, while for $|s|>1$ the Mills lower bound $\Phi(-|s|)\ge\phi(s)\,|s|/(1+s^2)$ gives $\phi(s)^2/\{\Phi(s)(1-\Phi(s))\}\le2\phi(s)(1+s^2)/|s|\le Ce^{-s^2/4}$, which yields the exponential factor with $c_*$ depending on $(j,l)$; polynomial factors $(1+|s|)^k$ are absorbed by halving the exponent.
\end{proof}

\begin{lemma}[Boundary-layer integration]
\label{lem:boundary_layer_integration}
Let $h(o,s)$ be measurable with $|h(o,s)|\le C_h\,e^{-c_*s^2}$ for all $o$ in the (bounded) support of $O$ and all $s\in\R$. Then for all $\theta\in N_0$ and $\tau\in(0,\tau_1]$:
\begin{enumerate}[label=(\roman*),nosep]
\item $\big|\E_{P_0}\big[h(O,U_\theta/\tau)\big]\big|\le C\,C_h\,\tau$;
\item if in addition $s\mapsto h(\cdot,s)$ and the conditional expectations $u\mapsto\E[h(O,s)\mid U_\theta=u]$ are continuous at $u=0$ uniformly in $s$ on compacts (in the sense of Assumption~\ref{asmp:V}\ref{subasmp:V_margin}), then
\[
\tau^{-1}\,\E_{P_0}\big[h(O,U_\theta/\tau)\big]\ \longrightarrow\ f_\theta(0)\int_{-\infty}^{\infty}\E\big[h(O,s)\mid U_\theta=0\big]\,ds
\qquad(\tau\downarrow0),
\]
uniformly over $\theta\in N_0$.
\end{enumerate}
\end{lemma}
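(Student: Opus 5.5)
The statement to prove is Lemma~\ref{lem:boundary_layer_integration}. The plan is to condition on the score $U_\theta$ and change variables $u=\tau s$. The margin condition Assumption~\ref{asmp:V}\ref{subasmp:V_margin} only controls the density of $U_\theta$ on $[-t_0,t_0]$, so we split the expectation into the layer $\{|U_\theta|\le t_0\}$ and its complement.

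\emph{Complement.} On $\{|U_\theta|>t_0\}$ we have $|s|=|U_\theta|/\tau>t_0/\tau$. The envelope then gives $|h|\le C_he^{-c_*t_0^2/\tau^2}$. This is $o(\tau^m)$ for every $m$, uniformly in $\theta\in N_0$, so in particular it is bounded by $C\,C_h\tau$ for $\tau\le\tau_1$ and is negligible after dividing by $\tau$.

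\emph{Part (i).} On the layer, write
\[
\E\big[h\,\mathbbm 1\{|U_\theta|\le t_0\}\big]=\int_{-t_0}^{t_0}\E\big[h(O,u/\tau)\mid U_\theta=u\big]f_\theta(u)\,du.
\]
Substituting $u=\tau s$, the absolute value is at most
\[
C_hC_f\,\tau\int_{\R}e^{-c_*s^2}\,ds=C\,C_h\tau,
\]
with a constant free of $\theta$. This proves (i).

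\emph{Part (ii).} Dividing by $\tau$, the layer contribution becomes
\[
\int_{|s|\le t_0/\tau}g_\tau(s)\,ds,\qquad g_\tau(s)=\E\big[h(O,s)\mid U_\theta=\tau s\big]\,f_\theta(\tau s).
\]
Two facts give the pointwise limit for fixed $s$:
\begin{itemize}
\item continuity of $(u,\theta)\mapsto f_\theta(u)$ gives $f_\theta(\tau s)\to f_\theta(0)$;
\item the assumed continuity at $u=0$ of the conditional expectation, uniform for $s$ in compacts, gives $\E[h(O,s)\mid U_\theta=\tau s]\to\E[h(O,s)\mid U_\theta=0]$.
\end{itemize}
Both limits are uniform over $\theta\in N_0$ and $|s|\le S$. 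The integrand is dominated by $C_fC_he^{-c_*s^2}$, which is integrable and free of $\theta$ and $\tau$. So we fix $S$ large enough that the tail $\int_{|s|>S}$ of the dominating function is below $\epsilon$, apply uniform convergence on $[-S,S]$, and conclude by dominated convergence that the limit is $f_\theta(0)\int\E[h(O,s)\mid U_\theta=0]\,ds$, uniformly in $\theta$.

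The main obstacle is interpreting the conditional expectation $\E[h(O,s)\mid U_\theta=u]$ as a regular version that is jointly well behaved in $(u,s)$ and uniform in $\theta$. Since $O$ has bounded support and $U_\theta$ has a continuous positive density on the layer, we will work with a fixed regular conditional distribution. The required uniform continuity is then taken exactly as hypothesized in (ii), in the sense of Assumption~\ref{asmp:V}\ref{subasmp:V_margin}. Part (i) needs no continuity at all, only the density bound $C_f$.
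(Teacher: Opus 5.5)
Your proposal is correct and follows essentially the same route as the paper: split at $|U_\theta|\le t_0$, condition on the score and substitute $u=\tau s$, bound the integrand by $C_fC_he^{-c_*s^2}$ for (i), and pass to the limit by dominated convergence for (ii), with the outer region contributing only $C_he^{-c_*t_0^2/\tau^2}$. Your truncation to $[-S,S]$, using uniform convergence there and the Gaussian tail beyond, is a slightly more explicit version of the paper's one-line appeal to dominated convergence for the uniformity over $\theta\in N_0$.
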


\begin{proof}
Split at $|U_\theta|\le t_0$. On the inner region, condition on $U_\theta=u$ and substitute $u=\tau s$:
\[
\E\big[h\,\mathbbm 1\{|U_\theta|\le t_0\}\big]
=\int_{-t_0}^{t_0}\E[h(O,u/\tau)\mid U_\theta=u]f_\theta(u)\,du
=\tau\int_{-t_0/\tau}^{t_0/\tau}\E[h(O,s)\mid U_\theta=\tau s]f_\theta(\tau s)\,ds .
\]
The integrand is bounded by $C_fC_he^{-c_*s^2}$, integrable uniformly; this gives (i) for the inner region, and (ii) follows by dominated convergence using the continuity assumptions. On the outer region $|U_\theta|>t_0$, boundedness of the design gives $|U_\theta/\tau|>t_0/\tau$, so $|h|\le C_he^{-c_*t_0^2/\tau^2}=o(\tau^k)$ for every $k$.
\end{proof}

\begin{lemma}[Score identities and the $\chi^2$ representation]
\label{lem:conditional_score_gate_mismatch_new}
Fix $(W,X,Z)$ and a parameter $a\in\mathcal A_\delta$ with $x^\top\gamma\ne0$, and let
$\chi^2_s=\int\{f_1(y)-f_0(y)\}^2/p_s(y)\,dy$. Then for $e\in\{0,1\}$,
\begin{equation}
\E_{Y\sim f_e}\big[q_s(Y)\big]=\{e-\Phi(s)\}\,\chi^2_s .
\label{eq:chisq_identity_app}
\end{equation}
Consequently, writing $\omega_s(y)=\Phi(s)f_1(y)/p_s(y)$ for the responsibility,
\[
\E_{f_e}[\omega_s(Y)]-\Phi(s)=\Phi(s)(1-\Phi(s))\{e-\Phi(s)\}\,\chi^2_s ,
\qquad
0\le\Phi(s)(1-\Phi(s))\chi^2_s\le1 .
\]
Moreover, with $\partial_t\log p_t=\phi(t)q_t$ and $\partial^2_t\log p_t=-t\phi(t)q_t-\phi(t)^2q_t^2$,
\begin{equation}
\E_{f_{d(s)}}\big[-\partial^2_t\log p_t(Y)\big|_{t=s}\big]
=\phi(s)^2\,\E_{f_{d(s)}}[q_s(Y)^2]\;+\;|s|\,\phi(s)\,\Phi(-|s|)\,\chi^2_s\ \ \ge\ 0 ,
\label{eq:pointwise_info_positivity}
\end{equation}
with strict inequality whenever $f_1\neq f_0$.
\end{lemma}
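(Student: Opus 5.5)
The plan is to derive everything from one algebraic rewriting of the working mixture density. I will write $p_s=f_0+\Phi(s)(f_1-f_0)$, which gives
\[
f_1-p_s=(1-\Phi(s))(f_1-f_0),\qquad f_0-p_s=-\Phi(s)(f_1-f_0),
\]
and hence, for $e\in\{0,1\}$, $f_e-p_s=\{e-\Phi(s)\}(f_1-f_0)$. All integrals below are finite. Indeed $p_s\ge\Phi(s)(1-\Phi(s))^{1/2}\ldots$ is not needed; it suffices that $|q_s|\le\{\Phi(s)(1-\Phi(s))\}^{-1/2}(\sqrt{f_1/f_0}+\sqrt{f_0/f_1})$, which has Gaussian moments of every order under $f_e$, exactly as in the proof of Lemma~\ref{lem:gate_factor_moments}. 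So Fubini and splitting of integrals are justified at fixed $s$.

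\textbf{Step 1: the $\chi^2$ identity \eqref{eq:chisq_identity_app}.} I add and subtract $p_s$:
\[
\E_{f_e}[q_s]=\int p_s\,\frac{f_1-f_0}{p_s}\,dy+\int (f_e-p_s)\frac{f_1-f_0}{p_s}\,dy .
\]
The first integral is $\int(f_1-f_0)\,dy=0$, because both components are densities. By the rewriting above, the second integral equals $\{e-\Phi(s)\}\int(f_1-f_0)^2/p_s\,dy=\{e-\Phi(s)\}\chi^2_s$.

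\textbf{Step 2: the responsibility identity.} Since $\omega_s-\Phi(s)=\Phi(s)(f_1-p_s)/p_s=\Phi(s)(1-\Phi(s))\,q_s$, taking expectations and applying Step 1 gives the displayed formula. The bound $\Phi(s)(1-\Phi(s))\chi^2_s\ge0$ is immediate. For the upper bound I take $e=1$ and use $\omega_s\le1$. This gives $\Phi(s)(1-\Phi(s))^2\chi^2_s=\E_{f_1}[\omega_s]-\Phi(s)\le1-\Phi(s)$, and dividing by $1-\Phi(s)>0$ yields $\Phi(s)(1-\Phi(s))\chi^2_s\le1$.

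\textbf{Step 3: the curvature identity \eqref{eq:pointwise_info_positivity}.} Differentiating $p_t$ in $t$ gives $\partial_tp_t=\phi(t)(f_1-f_0)$, so $\partial_t\log p_t=\phi(t)q_t$. Next, $\partial_tq_t=-(f_1-f_0)\phi(t)(f_1-f_0)/p_t^2=-\phi(t)q_t^2$. Using $\phi'(t)=-t\phi(t)$, the product rule then gives $\partial_t^2\log p_t=-t\phi q_t-\phi^2q_t^2$.

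I evaluate at $t=s$ and take expectations under $f_{d(s)}$. By Step 1,
\[
\E_{f_{d(s)}}[-\partial^2_t\log p_t|_{t=s}]=\phi(s)^2\E_{f_{d(s)}}[q_s^2]+s\,\phi(s)\{d(s)-\Phi(s)\}\chi^2_s .
\]
It remains to check that $s\{d(s)-\Phi(s)\}=|s|\Phi(-|s|)$. For $s\ge0$ it equals $s(1-\Phi(s))=|s|\Phi(-|s|)$, and for $s<0$ it equals $-s\Phi(s)=|s|\Phi(-|s|)$. This proves the identity, and nonnegativity follows because both terms are nonnegative.

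For strictness, suppose $f_1\neq f_0$, which holds because $x^\top\gamma\neq0$. Then $q_s\neq0$ off a Lebesgue-null set, and $f_{d(s)}>0$ everywhere, so $\E_{f_{d(s)}}[q_s^2]>0$. Since $\phi(s)>0$, the first term is strictly positive.

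I expect no serious obstacle, since every step is an exact identity. The only points needing care are the integrability justification, which is handled by the gate-factor envelope of Lemma~\ref{lem:gate_factor_moments}, and the sign bookkeeping in $s\{d(s)-\Phi(s)\}$. That sign bookkeeping is precisely what makes the matched-component curvature nonnegative on both sides of the boundary.
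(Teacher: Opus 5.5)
Your proposal is correct and follows the paper's proof essentially step for step. It uses the same decomposition $f_e=p_s+\{e-\Phi(s)\}(f_1-f_0)$, the same identity $\omega_s-\Phi(s)=\Phi(s)(1-\Phi(s))q_s$, the same derivative formulas, and the same sign bookkeeping $s\{d(s)-\Phi(s)\}=|s|\Phi(-|s|)$; the only small variation is the upper bound on $\Phi(s)(1-\Phi(s))\chi^2_s$, which you obtain from $\E_{f_1}[\omega_s]\le1$ after dividing by $1-\Phi(s)$, whereas the paper bounds the difference $\E_{f_1}[\omega_s]-\E_{f_0}[\omega_s]=\Phi(s)(1-\Phi(s))\chi^2_s$ by one, and both arguments are valid.
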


\begin{proof}
Write $f_e=p_s+\{e-\Phi(s)\}(f_1-f_0)$, which holds for $e\in\{0,1\}$ by inspection of the mixture. Then
\[
\E_{f_e}[q_s]=\int\frac{f_1-f_0}{p_s}\Big[p_s+\{e-\Phi(s)\}(f_1-f_0)\Big]
=\int(f_1-f_0)+\{e-\Phi(s)\}\chi^2_s=\{e-\Phi(s)\}\chi^2_s,
\]
since the components integrate to one. The responsibility identity follows from $\omega_s-\Phi(s)=\Phi(s)(1-\Phi(s))q_s$, which is direct algebra, and the bound $\Phi(1-\Phi)\chi^2\le1$ follows from $\E_{f_1}[q_s]-\E_{f_0}[q_s]=\chi^2_s$ together with $\E_{f_e}[\omega_s]\in[0,1]$. The derivative formulas are direct calculation using $\partial_tp_t=\phi(t)(f_1-f_0)$, $\phi'(t)=-t\phi(t)$ and $\partial_tq_t=-\phi(t)q_t^2$. Taking $\E_{f_{d(s)}}$ in $-\partial^2_t\log p=t\phi q+\phi^2q^2$ and applying \eqref{eq:chisq_identity_app} with $e=d(s)$:
$s\{d(s)-\Phi(s)\}=s(1-\Phi(s))\ge0$ for $s\ge0$ and $s\{-\Phi(s)\}=|s|\Phi(-|s|)\ge0$ for $s<0$; in both cases $s\{d(s)-\Phi(s)\}=|s|\Phi(-|s|)$. Both summands in \eqref{eq:pointwise_info_positivity} are therefore nonnegative, the first strictly positive when $f_1\ne f_0$.
\end{proof}

\subsection{Risk localization}

\begin{lemma}[Risk localization]
\label{lem:risk_localization}
Under Assumption~\ref{asmp:V}\ref{subasmp:V_gauss}--\ref{subasmp:V_wedge} there are constants $c,C,\tau_1>0$ such that for all $\tau\in(0,\tau_1]$ and all $\eta\in\mathcal A_\delta\times N_0$:
\begin{enumerate}[label=(\alph*),nosep]
\item $|K_\tau(\eta)-K_\infty(\eta)|\le C\tau$, where $K_\infty$ is the hard-threshold risk obtained by replacing $\pi_{\theta,\tau}$ with $\mathbbm 1\{U_\theta\ge0\}$;
\item $K_\infty(\eta)-K_\infty(\eta_0)\ge c\,\|a-a_0\|^2+c\,\|\theta-\theta_0\|$.
\end{enumerate}
Consequently every minimizer $\eta^\star(\tau)$ of $K_\tau$ over $\mathcal A_\delta\times N_0$ satisfies
$\|\theta^\star(\tau)-\theta_0\|\le C'\tau$ and $\|a^\star(\tau)-a_0\|\le C'\sqrt\tau$.
\end{lemma}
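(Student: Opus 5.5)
The plan is to prove (a) by a pointwise comparison of the smoothed and hard log-likelihoods that is supported on the boundary layer. Part (b) comes from writing the hard excess risk as an expected Kullback--Leibler divergence and splitting it over the agreement and disagreement regions of the two gates. The final localization is then a two-line sandwich.

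\emph{Part (a).} Write $s=U_\theta/\tau$ and $d_\theta=\mathbbm 1\{U_\theta\ge0\}$. Consider the case $s\ge0$, so the hard log-likelihood is $\log f_1$. Then
\[
\log p_s-\log f_1=\log\bigl(1+(1-\Phi(s))(f_0/f_1-1)\bigr).
\]
\begin{itemize}
\item Lower bound: since $p_s\ge\Phi(s)f_1$, this is at least $\log\Phi(s)\ge-C(1-\Phi(s))$.
\item Upper bound: by $\log(1+x)\le x$, it is at most $(1-\Phi(s))f_0/f_1$.
\end{itemize}
The case $s<0$ is symmetric with the roles of $f_0,f_1$ and $\Phi(s),1-\Phi(s)$ exchanged. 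Hence
\[
|\log p_s-\log f_{d_\theta}|\le C\,\Phi(-|s|)\bigl(1+f_0/f_1+f_1/f_0\bigr).
\]
Take the expectation over $Y$ under the true Gaussian component, using Assumption~\ref{asmp:V}\ref{subasmp:V_gauss}. By the Gaussian moment generating function argument of Lemma~\ref{lem:gate_factor_moments}, with bounded design and $a\in\mathcal A_\delta$, the likelihood-ratio factors have conditional means bounded uniformly. This leaves a function $h(O,s)$ with $|h|\le C\Phi(-|s|)\le Ce^{-s^2/2}$. Lemma~\ref{lem:boundary_layer_integration}(i) then gives $|K_\tau(\eta)-K_\infty(\eta)|\le C\tau$. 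The constants are uniform in $\eta\in\mathcal A_\delta\times N_0$ because all bounds depend on $\eta$ only through the compact ranges of $(m_0,m_1,\sigma^2)$ and through $c_f,C_f$.

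\emph{Part (b).} Because the hard model is correctly specified at $\eta_0$ with Gaussian errors,
\[
K_\infty(\eta)-K_\infty(\eta_0)=\E\bigl[\mathrm{KL}\bigl(f^0_{d_0}\,\|\,f_{d_\theta}\bigr)\bigr],
\]
the divergence between two univariate Gaussians. Let $\mathcal E=\mathcal E(\theta,\theta_0)$ be the disagreement event, on which $d_\theta\ne d_0$.
\begin{itemize}
\item On the agreement region $\mathcal E^c$, the divergence is bounded below by $c\{(V_0^\top(\beta-\beta_0,\gamma-\gamma_0))^2+(\sigma^2-\sigma_0^2)^2\}$. By the oracle nonsingularity in \ref{subasmp:V_bounded}, $\E[\cdot\,\mathbbm 1_{\mathcal E^c}]\ge c\|a-a_0\|^2-C P_0(\mathcal E)\|a-a_0\|^2$. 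The correction is absorbed once $\|\theta-\theta_0\|$ is small, using the wedge upper bound in \ref{subasmp:V_wedge}; if necessary I will shrink $N_0$.
\item On $\mathcal E$, the working mean is the wrong component. Its mean differs from the true mean by $\pm X^\top\gamma_0+O(\|a-a_0\|)$, so the divergence is at least $c(X^\top\gamma_0)^2-C\|a-a_0\|$.
\end{itemize}
For the disagreement term I distinguish two cases.
\begin{itemize}
\item If $\|a-a_0\|\le\varepsilon$, I need $\E[(X^\top\gamma_0)^2\mathbbm 1_{\mathcal E}]\ge c\,P_0(\mathcal E)\ge c\,c_w\|\theta-\theta_0\|$. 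The wedge lies in the layer $|U_{\theta_0}|\le C_B\|\theta-\theta_0\|$, and continuity of the conditional moments at $U_{\theta_0}=0$ from \ref{subasmp:V_margin} transfers the positive-definite $\E[XX^\top\mathbbm 1\{X^\top\gamma_0\ne0\}\mid U_{\theta_0}=0]$ from \ref{subasmp:V_boundary_info} to the wedge. The $C\varepsilon P_0(\mathcal E)$ error is then absorbed by taking $\varepsilon$ small.
\item If $\|a-a_0\|>\varepsilon$, I drop the nonnegative wedge term. The quadratic term is then at least $c\varepsilon^2\ge c'\|\theta-\theta_0\|$, since $N_0$ is bounded.
\end{itemize}
Combining the cases gives $c\|a-a_0\|^2+c\|\theta-\theta_0\|$.

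\emph{Consequence.} Since $\eta^\star(\tau)$ minimizes $K_\tau$, we have $K_\tau(\eta^\star)\le K_\tau(\eta_0)$. By (a), $K_\infty(\eta^\star)-K_\infty(\eta_0)\le 2C\tau$. By (b), $c\|a^\star-a_0\|^2+c\|\theta^\star-\theta_0\|\le 2C\tau$, which gives $\|\theta^\star(\tau)-\theta_0\|\le C'\tau$ and $\|a^\star(\tau)-a_0\|\le C'\sqrt\tau$.

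I expect the main obstacle to be the wedge lower bound in (b). The step from $P_0(\mathcal E)$ to $\E[(X^\top\gamma_0)^2\mathbbm 1_{\mathcal E}]$ requires that, on the thin wedge, the law of $X$ given $U_{\theta_0}$ stays close to its law on the boundary section. This rests on the uniform conditional-moment continuity in \ref{subasmp:V_margin}, and may need an explicit layer decomposition $|U_{\theta_0}|\le C_B\|\theta-\theta_0\|$ together with the density lower bound $c_f$.
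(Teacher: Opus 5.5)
Your proposal is correct and follows the paper's proof essentially step for step. Part (a) is the same two-sided bound $\log w_d\le\log(p_{\eta,\tau}/f_{d})\le\Phi(-|t|)f_{1-d}/f_d$, integrated through Lemma~\ref{lem:boundary_layer_integration}(i); part (b) is the same Gaussian KL decomposition split over the agreement and disagreement events with a case split on $\|a-a_0\|$, and the consequence is the identical sandwich $K_\infty(\eta^\star)-K_\infty(\eta_0)\le 2C\tau$. The wedge-transfer step you flag as the main obstacle is treated in the paper at the same level of detail, by appealing to continuity of the conditional moments at $U_{\theta_0}=0$ after shrinking $N_0$ and $\mathcal A_\delta$.
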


The cone-shaped lower bound in (b) is the population signature of the hard threshold, the risk growing linearly in the boundary error once the error exceeds the smoothing scale and quadratically below it.

\begin{proof}[Proof of Lemma~\ref{lem:risk_localization}]
(a) Fix $\eta\in\mathcal A_\delta\times N_0$ and let $d=\mathbbm 1\{U_\theta\ge0\}$, so $p_\infty=f_d$ at the parameter $a$. Then
\[
\frac{p_{\eta,\tau}}{p_\infty}=w_d+(1-w_d)\frac{f_{1-d}}{f_d},
\]
where $w_d\in\{\Phi(t),1-\Phi(t)\}$ is the gate weight of the component selected by $d$, so that $w_d=\Phi(|t|)\ge1/2$ and $1-w_d=\Phi(-|t|)$. Using $\log(1+x)\le x$ and $w_d\ge1/2$,
\[
-2\Phi(-|t|)\ \le\ \log w_d\ \le\ \log\frac{p_{\eta,\tau}}{p_\infty}\ \le\ \Phi(-|t|)\,\frac{f_{1-d}}{f_d}(Y).
\]
Taking absolute values, expectations under the true conditional law (a Gaussian with bounded parameters), and using $\E[f_{1-d}/f_d(Y)\mid W,X,Z]\le C$ (Lemma~\ref{lem:gate_factor_moments} argument),
$\E\big[\,|\log(p_{\eta,\tau}/p_\infty)|\;\big|\;W,X,Z\big]\le C\,\Phi(-|U_\theta|/\tau)$.
Lemma~\ref{lem:boundary_layer_integration}(i) with $h=\Phi(-|s|)\le e^{-s^2/2}\vee\ldots$ (indeed $\Phi(-|s|)\le e^{-s^2/2}$) integrates this to $|K_\tau(\eta)-K_\infty(\eta)|\le C\tau$, uniformly on $\mathcal A_\delta\times N_0$.

(b) With Gaussian working components and the true conditional law $\mathcal N(\mu_0(W,X,Z),\sigma_0^2)$, $\mu_0=W^\top\beta_0+X^\top\gamma_0d_0$, a direct computation gives
\[
K_\infty(\eta)-K_\infty(\eta_0)
=\tfrac12\Big[\log\tfrac{\sigma^2}{\sigma_0^2}+\tfrac{\sigma_0^2}{\sigma^2}-1\Big]
+\frac{1}{2\sigma^2}\,\E_{P_0}\big[(\mu_0-m_d)^2\big],
\qquad m_d:=W^\top\beta+X^\top\gamma\,\mathbbm 1\{U_\theta\ge0\}.
\]
The bracket is $\ge c(\sigma^2-\sigma_0^2)^2$ on the compact range. For the mean term write
$\mu_0-m_d=-A-B$ with
$A=W^\top(\beta-\beta_0)+X^\top(\gamma-\gamma_0)d_0$ and
$B=X^\top\gamma\,\{\mathbbm 1\{U_\theta\ge0\}-d_0\}$, so $|B|\le C\,\mathbbm 1_{\mathcal E}$ with $\mathcal E=\mathcal E(\theta,\theta_0)$. First,
\[
\E[(A+B)^2]\ \ge\ \E[A^2\mathbbm 1_{\mathcal E^c}]\ \ge\ \E[A^2]-C\,P_0(\mathcal E)
\ \ge\ c_1\big(\|\beta-\beta_0\|^2+\|\gamma-\gamma_0\|^2\big)-C_2\|\theta-\theta_0\|,
\]
using the nonsingularity of $\E[(W^\top,X^\top d_0)^\top(\cdots)]$ (Assumption~\ref{asmp:V}\ref{subasmp:V_bounded}), boundedness, and the wedge upper bound (Assumption~\ref{asmp:V}\ref{subasmp:V_wedge}). Second, on $\mathcal E$ we have $(A+B)^2\ge\tfrac12B^2-A^2$, so
\[
\E[(A+B)^2]\ \ge\ \tfrac12\E[B^2\mathbbm 1_{\mathcal E}]-\E[A^2\mathbbm 1_{\mathcal E}]
\ \ge\ \tfrac12\,c_3\,P_0(\mathcal E)-C_4\,\|a-a_0\|^2\,P_0(\mathcal E),
\]
where $c_3>0$ bounds $\E[(X^\top\gamma)^2\mid\mathcal E]$ below: for $\theta\in N_0$ the wedge concentrates on $\{|U_{\theta_0}|\le C_B\|\theta-\theta_0\|\}$, and by Assumption~\ref{asmp:V}\ref{subasmp:V_margin}--\ref{subasmp:V_boundary_info} and continuity of the conditional moments, $\E[(X^\top\gamma)^2\mid\mathcal E]\ge\tfrac12\E[(X^\top\gamma_0)^2\mid U_{\theta_0}=0]-o(1)-C\|\gamma-\gamma_0\|\ge c_3$ after shrinking $N_0$ and $\mathcal A_\delta$ if necessary. For $\|a-a_0\|^2\le c_3/(4C_4)$ this gives
$\E[(A+B)^2]\ge\tfrac14c_3\,P_0(\mathcal E)\ge\tfrac14c_3c_w\|\theta-\theta_0\|$
by the wedge lower bound. Combining the two displays with weights $\tfrac12$ each and adjusting constants (using also $\|a-a_0\|^2\ge c_3/(4C_4)$ on the complementary region, where the first display alone gives a fixed positive lower bound dominating $C_2\varepsilon_0$ after shrinking $N_0$): there exist $c,c'>0$ with
\[
K_\infty(\eta)-K_\infty(\eta_0)\ \ge\ c\,\|a-a_0\|^2+c'\,\|\theta-\theta_0\|
\qquad\text{on }\mathcal A_\delta\times N_0 .
\]
Conclusion. If $\eta^\star=\eta^\star(\tau)$ minimizes $K_\tau$ over $\mathcal A_\delta\times N_0$ then
$K_\infty(\eta^\star)\le K_\tau(\eta^\star)+C\tau\le K_\tau(\eta_0)+C\tau\le K_\infty(\eta_0)+2C\tau$,
so by (b), $c\|a^\star-a_0\|^2+c'\|\theta^\star-\theta_0\|\le2C\tau$.
\end{proof}

\subsection{Anisotropic information bounds}

Write $S_\tau(\tilde\eta)=\nabla_{\tilde\eta}K_\tau(\eta(\tilde\eta))$ and $I_\tau(\tilde\eta)=\nabla^2_{\tilde\eta}K_\tau(\eta(\tilde\eta))$, with blockwise subscripts $a$ and $\vartheta$.

\begin{lemma}[Anisotropic information bounds]
\label{lem:aniso_information}
Under Assumption~\ref{asmp:V} there are constants $0<c\le C<\infty$ and $\tau_1>0$ such that for all $\tau\in(0,\tau_1]$ and all $\tilde\eta$ with $\|a-a_0\|\le C_1\sqrt\tau$, $\|\vartheta-\vartheta_0\|\le C_1\tau$:
\[
c\,\preceq\, I_{\tau,aa}(\tilde\eta)\,\preceq\, C,\qquad
c\,\tau^{-1}\preceq I_{\tau,\vartheta\vartheta}(\tilde\eta)\preceq C\,\tau^{-1},\qquad
\|I_{\tau,a\vartheta}(\tilde\eta)\|\le C,
\]
where the first two displays are two-sided bounds in the positive-semidefinite order. Moreover the score at the hard-threshold parameter satisfies
$\|S_{\tau,a}(\tilde\eta_0)\|\le C\tau$ and $\|S_{\tau,\vartheta}(\tilde\eta_0)\|\le C$.
\end{lemma}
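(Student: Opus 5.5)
The plan is to differentiate $K_\tau(\eta(\tilde\eta))=-\E_{P_0}\tilde\ell_\tau(\tilde\eta;O)$ under the expectation. The exchange is justified by the bounded design and the gate-factor envelopes of Lemma~\ref{lem:gate_factor_moments}. Each block is then written as a conditional expectation given $(W,X,Z)$ under the true Gaussian $f^0_{d_0}$, after which Lemma~\ref{lem:boundary_layer_integration} does the integration. With $t=U_\theta/\tau$ and $\partial_t\log p_t=\phi(t)q_t$, the chain rule gives $\nabla_\vartheta\tilde\ell=\tau^{-1}\phi(t)q_t\,J(\vartheta)^\top Z$, where $J$ is the chart Jacobian. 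This yields
\[
\nabla^2_{\vartheta\vartheta}\tilde\ell=\tau^{-2}\,\partial_t^2\log p_t\;J^\top ZZ^\top J+\tau^{-1}\phi(t)q_t\,\partial^2\theta[Z].
\]
The regression derivatives are responsibility-weighted Gaussian scores. Each $\vartheta$-derivative therefore carries a factor $\tau^{-1}$ multiplying a function of $s=t$ with Gaussian-type decay, whenever the component is matched.

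For the regression block, I would write $\partial^2_a\log p_t$ as the sum of the derivative of the hard Gaussian log-likelihood of component $d=\mathbbm 1\{U_\theta\ge0\}$ and a correction bounded by $C\Phi(-|t|)$ times polynomial envelopes, as in Lemma~\ref{lem:risk_localization}(a). The correction integrates to $O(\tau)$ by Lemma~\ref{lem:boundary_layer_integration}(i). The hard part is the oracle Gaussian Hessian at $a$, with gate $\mathbbm 1\{U_\theta\ge0\}$ in place of $d_0$. On the wedge it differs by $O(P_0(\mathcal E))=O(\tau)$, using Assumption~\ref{asmp:V}\ref{subasmp:V_wedge}. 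Off the wedge it equals the oracle information up to $O(\|a-a_0\|)=O(\sqrt\tau)$. Assumption~\ref{asmp:V}\ref{subasmp:V_bounded} then gives $c\preceq I_{\tau,aa}\preceq C$. For the cross block, $\partial_a\partial_\vartheta\tilde\ell=\tau^{-1}\phi(t)\partial_aq_t\,J^\top Z$. Its expectation is $\tau^{-1}$ times a quantity of the form $\E h(O,U_\theta/\tau)$. The bound $|h|\lesssim e^{-c_*s^2}$ comes from Lemma~\ref{lem:gate_factor_moments}\ref{gfm:matched} on the matched side. On the mismatched side, which is the wedge-layer $|U_\theta|\lesssim\tau$ where the true component differs, I would use the bound~\ref{gfm:mismatched} on $|s|\le\bar s$ instead. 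Lemma~\ref{lem:boundary_layer_integration}(i) then gives $\|I_{\tau,a\vartheta}\|\le C$.

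The boundary block is the main step. The second term in the $\vartheta\vartheta$ Hessian is $O(1)$ by the same argument. For the first term, $-\tau^{-2}\E[\partial_t^2\log p_t\,J^\top ZZ^\top J]$, Lemma~\ref{lem:boundary_layer_integration}(i) gives the upper bound $C\tau^{-1}$. For the lower bound, I would apply Lemma~\ref{lem:boundary_layer_integration}(ii), which gives
\[
\tau I_{\tau,\vartheta\vartheta}\to f_{\theta}(0)\,J^\top\E\Big[\int\E_{f_{d(s)}}[-\partial_t^2\log p_t|_{t=s}]\,ds\;ZZ^\top\Bigm|U_\theta=0\Big]J.
\]
By \eqref{eq:pointwise_info_positivity}, this limit dominates $f_\theta(0)J^\top\Sigma_Z(\theta)J$, since $\int\phi^2\E[q^2]ds=\chi^2_{\mathrm B}$ and the extra $|s|\phi\Phi\chi^2$ term is nonnegative. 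Positive definiteness then follows from Assumption~\ref{asmp:V}\ref{subasmp:V_boundary_info} and the full-rank Jacobian, with $J^\top\Pi_{\theta_0}=J^\top$ on the tangent space.

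I expect the main obstacle to be the true-versus-working mismatch. The truth uses $d_0=\mathbbm 1\{U_{\theta_0}\ge0\}$ and parameter $a_0$, while the integration is in $U_\theta/\tau$ at parameter $a$. Since $|U_\theta-U_{\theta_0}|\le C_B C_1\tau$, the label $d(s)$ differs from $d_0$ only for $|s|\le C$. On that bounded $s$-range, Lemma~\ref{lem:gate_factor_moments}\ref{gfm:mismatched} together with margin continuity gives the limit uniformly up to a shift of the $s$-variable. Continuity of the limit in $(a,\theta)$ then upgrades pointwise positivity to a uniform bound $c\tau^{-1}$ on the stated neighbourhood, shrinking $\tau_1$ if needed.

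The scores are handled the same way. $S_{\tau,\vartheta}(\tilde\eta_0)=-\tau^{-1}\E[\phi(t)q_tJ^\top Z]$ is $\tau^{-1}\cdot O(\tau)$ by Lemma~\ref{lem:boundary_layer_integration}(i), so it is $O(1)$; this bound uses the matched decay, since at $\theta_0$ we have $d(s)=d_0$ exactly. For $S_{\tau,a}(\tilde\eta_0)$, the hard oracle score has mean zero at $a_0$. The remainder is $\E[\{\omega_t-d_0\}\cdots]$, and by the responsibility identity in Lemma~\ref{lem:conditional_score_gate_mismatch_new} it is bounded by $C\Phi(-|t|)$ times envelopes. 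Integrating with Lemma~\ref{lem:boundary_layer_integration}(i) gives $O(\tau)$.
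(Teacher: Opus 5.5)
Your architecture matches the paper's proof. The upper bounds come from the gate-factor envelopes and Lemma~\ref{lem:boundary_layer_integration}(i), with the mismatch layer handled by Lemma~\ref{lem:gate_factor_moments}\ref{gfm:mismatched} and its $O(\tau)$ probability. The $\vartheta\vartheta$ lower bound at $\tilde\eta_0$ comes from \eqref{eq:pointwise_info_positivity} and Lemma~\ref{lem:boundary_layer_integration}(ii). The two score bounds come from the responsibility and $\chi^2$ identities. Your direct treatment of the $aa$ block at a general tube point is fine and differs from the paper's Lipschitz extension from $\tilde\eta_0$. That treatment splits the block into the hard Hessian with gate $\mathbbm 1\{U_\theta\ge0\}$, an $O(\tau)$ gate correction, an $O(P_0(\mathcal E))=O(\tau)$ wedge term, and an $O(\|a-a_0\|)$ term.

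The step that fails as written is the extension of the $\vartheta\vartheta$ lower bound from $\tilde\eta_0$ to the tube. Put $v=(\vartheta-\vartheta_0)/\tau$. On the tube $\|v\|\le C_1$ for every $\tau$, so $v$ does not shrink. The limit of $\tau I_{\tau,\vartheta\vartheta}$ at such a point is your shifted expression, in which the true label $d_0$ differs from the working label $d(s)$ on an $s$-band of width of order $C_BC_1$.

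On that band \eqref{eq:pointwise_info_positivity} is unavailable. Under the mismatched component, the first term of $\E_{f_{1-d(s)}}[-\partial_t^2\log p_t]|_{t=s}$ is $-|s|\phi(s)\Phi(|s|)\chi^2_s\le0$. So the only positivity you have established is at $v=0$, and continuity carries it to $\|v\|\le C_1$ only when $C_1$ is small. Shrinking $\tau_1$ cannot help, since $v$ does not shrink with $\tau$. Continuity in $(a,\theta)$ is also not the relevant continuity, since $(a,\theta)\to(a_0,\theta_0)$ along the tube anyway.

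The paper closes this by fixing $C_1$ (and $\delta_1$) small. The third-derivative envelopes \eqref{eq:derivative_envelopes} give $\|I_{\tau,\vartheta\vartheta}(\tilde\eta)-I_{\tau,\vartheta\vartheta}(\tilde\eta_0)\|\le C\tau^{-2}\|\vartheta-\vartheta_0\|+C\tau^{-1}\|a-a_0\|\le C(C_1+\delta_1)\tau^{-1}$. This is absorbed into $\tfrac12c\tau^{-1}$. The resulting small $C_1$ is the one later used in the seam step of Proposition~\ref{prop:eta_star_bias_rate_new}.

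You need to do one of two things. You can make that choice of $C_1$ explicit. Or you can add a separate argument that the shifted limit is positive definite for every bounded $v$. One way to do the latter is to take true and working components at the same $a$. Then the identity $\E_{f_e}[q_t]=\{e-\Phi(t)\}\chi^2_t$ makes $\int_{\R}\E_{f_{d(s)}}[-\partial_t^2\log p_t]|_{t=s+w}\,ds$ telescope to $\phi(w)\chi^2_w$, which is positive whenever $x^\top\gamma_0\neq0$. Your proposal contains neither argument.
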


\begin{proof}[Proof of Lemma~\ref{lem:aniso_information}]
Throughout, derivatives of $\tilde\ell_\tau$ are computed from the explicit formulas in the proof of Lemma~\ref{lem:subexp_fixed_tau_new} together with $\partial_t\log p=\phi(t)q_t$, $\partial^2_t\log p=-t\phi(t)q_t-\phi(t)^2q_t^2$, $\partial^3_t\log p=(t^2-1)\phi q+3t\phi^2q^2+2\phi^3q^3$, the chain rule $\nabla_\theta=Z\tau^{-1}\partial_t$, and the bounded chart Jacobian. Differentiation under the integral is justified exactly as in Lemma~\ref{lem:score_hess_envelopes_fixed_tau_new}, the dominating envelopes now being those produced below.

Derivative envelopes. Consider first parameters with $\theta=\theta_0$ (no sign mismatch). Every mixed derivative $\nabla^j_a\nabla^k_\vartheta\tilde\ell_\tau$ with $k\ge1$ is a finite sum of terms of the form
$\tau^{-k}\times(\text{polynomial in }(1+|t|))\times\phi(t)^m q_t\text{-products}\times(\text{bounded data polynomials})$
with $m\ge1$ powers of $\phi q$-type factors. By Lemma~\ref{lem:gate_factor_moments}\ref{gfm:matched}, the conditional expectation (under the true law, given $(W,X,Z)$) of the absolute value (or its square) of each such term is bounded by $C\tau^{-k}e^{-c_*t^2}$ (respectively $C\tau^{-2k}e^{-c_*t^2}$); Lemma~\ref{lem:boundary_layer_integration}(i) then gives
\begin{equation}
\E\big\|\nabla^j_a\nabla^k_\vartheta\tilde\ell_\tau\big\|\le C\tau^{1-k},
\qquad
\E\big\|\nabla^j_a\nabla^k_\vartheta\tilde\ell_\tau\big\|^2\le C\tau^{1-2k},
\qquad k\ge1,\ j+k\le3 .
\label{eq:derivative_envelopes}
\end{equation}
Derivatives with $k=0$ have $\tau$-free bounded-moment envelopes as in Lemma~\ref{lem:score_hess_envelopes_fixed_tau_new}. For parameters with $0<\|\theta-\theta_0\|\le c_1\tau$, the only modification occurs on the mismatch event
$\mathcal M=\{\mathbbm 1\{U_\theta\ge0\}\neq d_0\}\subseteq\{|U_{\theta_0}|\le c_1C_B\tau\}\cap\{|t|\le2c_1C_B\}$:
there the matched-component bound of Lemma~\ref{lem:gate_factor_moments}\ref{gfm:matched} is unavailable, but $|t|$ is bounded, so Lemma~\ref{lem:gate_factor_moments}\ref{gfm:mismatched} bounds the conditional moments by constants, and $P_0(\mathcal M)\le C\tau$ by the density bound; the contribution of $\mathcal M$ to each expectation is therefore of the same order as \eqref{eq:derivative_envelopes} with constants depending on $c_1$. The same argument applies uniformly for $a$ ranging over $\mathcal A_\delta$.

Block orders of $I_\tau$. The upper bounds
$\|I_{\tau,aa}\|\le C$, $\|I_{\tau,a\vartheta}\|\le C$, $\|I_{\tau,\vartheta\vartheta}\|\le C\tau^{-1}$
on the stated neighborhood follow from \eqref{eq:derivative_envelopes} with $(j,k)=(2,0),(1,1),(0,2)$.

Lower bound, $\vartheta\vartheta$ block at $\tilde\eta_0$. With $J_0=\nabla_\vartheta\theta(\vartheta_0)$ (full rank $q-1$) and $t=U_0/\tau$,
\[
I_{\tau,\vartheta\vartheta}(\tilde\eta_0)
=\tau^{-2}\,J_0^\top\,\E\Big[ZZ^\top\,\E_{f^0_{d_0}}\big[-\partial^2_t\log p_t(Y)\big]\Big]J_0
=\tau^{-2}\,J_0^\top\,\E\big[ZZ^\top\rho(U_0/\tau;W,X)\big]J_0,
\]
where, by \eqref{eq:pointwise_info_positivity},
$\rho(s;w,x)=\phi(s)^2\E_{f_{d(s)}}[q_s^2]+|s|\phi(s)\Phi(-|s|)\chi^2_s\ \ge\ \phi(s)^2\E_{f_{d(s)}}[q_s^2]\ge0$,
all quantities evaluated at $a_0$. By Lemma~\ref{lem:gate_factor_moments}, $0\le\rho(s;w,x)\le Ce^{-c_*s^2}$, and Lemma~\ref{lem:boundary_layer_integration}(ii) gives, uniformly in unit vectors $v\in\R^{q-1}$,
\[
\tau\,v^\top I_{\tau,\vartheta\vartheta}(\tilde\eta_0)v
\ \longrightarrow\
f_{\theta_0}(0)\,(J_0v)^\top\E\Big[ZZ^\top\!\!\int\rho(s;W,X)ds\;\Big|\;U_{\theta_0}=0\Big](J_0v).
\]
Discarding the second (nonnegative) summand of $\rho$, the limit matrix dominates
$f_{\theta_0}(0)\,J_0^\top\E[\chi^2_{\mathrm B}(W,X)ZZ^\top\mid U_{\theta_0}=0]J_0$ in the positive semidefinite order, with $\chi^2_{\mathrm B}$ as in \eqref{eq:chisq_weight}; since $J_0v$ lies in the tangent space at $\theta_0$ and is bounded away from zero in norm, Assumption~\ref{asmp:V}\ref{subasmp:V_boundary_info} makes the limit positive definite, uniformly over unit $v$. Hence $I_{\tau,\vartheta\vartheta}(\tilde\eta_0)\succeq c\,\tau^{-1}$ for small $\tau$.

Lower bound, $aa$ block at $\tilde\eta_0$. Direct differentiation of $\log p$ twice in $a$ and the collapse of the responsibilities away from the boundary give
$I_{\tau,aa}(\tilde\eta_0)=\mathcal I_0+E_\tau$ with $\mathcal I_0$ the oracle information of Theorem~\ref{thm:effect_bvm} and $\|E_\tau\|\le C\tau$: indeed the difference of the integrands is supported, in conditional expectation, on gate-discrepancy factors bounded by $C\Phi(-|t|)\le Ce^{-t^2/2}$ (the responsibilities satisfy $\E_{f^0_{d_0}}|\omega_t-d_0|\le C\Phi(-|t|)$ by the responsibility identity in Lemma~\ref{lem:conditional_score_gate_mismatch_new} and $\Phi(1-\Phi)\chi^2\le1$), and Lemma~\ref{lem:boundary_layer_integration}(i) applies. $\mathcal I_0$ is positive definite by Assumption~\ref{asmp:V}\ref{subasmp:V_bounded} and $\sigma_0^2>0$.

Extension to the anisotropic tube. For $\|a-a_0\|\le\delta_1$ and $\|\vartheta-\vartheta_0\|\le C_1\tau$, the mixed third-derivative envelopes \eqref{eq:derivative_envelopes} give
\[
\|I_{\tau,\vartheta\vartheta}(\tilde\eta)-I_{\tau,\vartheta\vartheta}(\tilde\eta_0)\|
\le C\tau^{-2}\,\|\vartheta-\vartheta_0\|+C\tau^{-1}\,\|a-a_0\|
\le C(C_1+\delta_1)\,\tau^{-1},
\]
which is dominated by $\tfrac12c\tau^{-1}$ once $\delta_1$ and $C_1$ are fixed small enough relative to $c$; the $aa$ and $a\vartheta$ blocks have perturbations $O(\delta_1+C_1\tau)$ by the $(3,0)$ and $(2,1)$ envelopes, which are $O(1)$. This establishes the two-sided block bounds on the tube $\{\|a-a_0\|\le\delta_1,\ \|\vartheta-\vartheta_0\|\le C_1\tau\}$. (The constants $\delta_1,C_1$ are fixed by this argument and referenced by all later proofs.)

Score bounds at $\tilde\eta_0$. For the regression block, the conditional score identities give, exactly as in the displays above,
$\E[\nabla_a\tilde\ell_\tau(\tilde\eta_0;O)\mid W,X,Z]$ equal to gate-discrepancy terms bounded by $C\Phi(-|t|)$ in conditional absolute value (the $\varepsilon$-linear terms vanish by $\E[\varepsilon\mid\cdot]=0$ and the Gaussian change-of-measure factors are handled by Lemma~\ref{lem:gate_factor_moments}); Lemma~\ref{lem:boundary_layer_integration}(i) yields $\|S_{\tau,a}(\tilde\eta_0)\|\le C\tau$. For the boundary block, by \eqref{eq:chisq_identity_app},
\[
\E[\nabla_\vartheta\tilde\ell_\tau(\tilde\eta_0;O)\mid W,X,Z]
=J_0^\top Z\,\tau^{-1}\phi(t)\{d_0-\Phi(t)\}\chi^2_t ,
\]
and $|\phi(s)\{d(s)-\Phi(s)\}\chi^2_s|=\phi(s)\Phi(-|s|)\chi^2_s\le Ce^{-c_*s^2}$ by Lemma~\ref{lem:gate_factor_moments} (using $\Phi(1-\Phi)\chi^2\le1$ and the matched-moment bound), so Lemma~\ref{lem:boundary_layer_integration}(i) gives $\|S_{\tau,\vartheta}(\tilde\eta_0)\|\le C\tau^{-1}\cdot C\tau=C$. We emphasize that the boundary-block score does not vanish as $\tau\downarrow0$, only its product with the inverse boundary information does.
\end{proof}

\subsection{The pseudo-true path}

\begin{proof}[Proof of Proposition~\ref{prop:eta_star_bias_rate_new}]
By Assumption~\ref{asmp:V}\ref{subasmp:V_anchor} and Lemma~\ref{lem:risk_localization}, any minimizer satisfies $\|a^\star(\tau)-a_0\|\le C\sqrt\tau$ and $\|\theta^\star(\tau)-\theta_0\|\le C_{\mathrm{loc}}\tau$. Let $(\delta_1,C_1)$ be the tube constants fixed in the proof of Lemma~\ref{lem:aniso_information}. If $C_{\mathrm{loc}}>C_1$, apply Assumption~\ref{asmp:V}\ref{subasmp:V_seam} with the pair $(C_1,C_{\mathrm{loc}})$: since $K_\tau(\eta^\star)\le K_\tau(\eta_0)$, the minimizer cannot lie in the region $C_1\tau\le\|\theta^\star-\theta_0\|\le C_{\mathrm{loc}}\tau$ for $\tau\le\tau_s$, whence $\|\theta^\star(\tau)-\theta_0\|<C_1\tau$; and $\|a^\star(\tau)-a_0\|\le C\sqrt\tau\le\delta_1$ for small $\tau$. Thus the minimizer lies in the interior of the anisotropic tube of Lemma~\ref{lem:aniso_information}, where $K_\tau$ is twice continuously differentiable; hence the first-order condition $S_\tau(\tilde\eta^\star)=0$ holds. Taylor's theorem with integral remainder along the segment from $\tilde\eta_0$ to $\tilde\eta^\star$ (which stays in the neighborhood by convexity of the bounds) gives
\[
0=S_\tau(\tilde\eta^\star)=S_\tau(\tilde\eta_0)+\bar I\,(\tilde\eta^\star-\tilde\eta_0),
\qquad
\bar I:=\int_0^1I_\tau\big(\tilde\eta_0+t(\tilde\eta^\star-\tilde\eta_0)\big)dt ,
\]
and $\bar I$ inherits the block bounds of Lemma~\ref{lem:aniso_information}:
$c\preceq\bar I_{aa}\preceq C$, $c\tau^{-1}\preceq\bar I_{\vartheta\vartheta}\preceq C\tau^{-1}$, $\|\bar I_{a\vartheta}\|\le C$.
Write $x=a^\star-a_0$, $v=\vartheta^\star-\vartheta_0$ and solve the block system
$S_a+\bar I_{aa}x+\bar I_{a\vartheta}v=0$, $S_\vartheta+\bar I_{\vartheta a}x+\bar I_{\vartheta\vartheta}v=0$:
from the second equation, $\|v\|\le\|\bar I_{\vartheta\vartheta}^{-1}\|(\|S_\vartheta\|+C\|x\|)\le C\tau(1+\|x\|)\le C'\tau$;
substituting into the first, $\|x\|\le\|\bar I_{aa}^{-1}\|(\|S_a\|+C\|v\|)\le C(\tau+\tau)=C'\tau$.
(The crude bound $\|x\|\le C\sqrt\tau$ was used only to place the segment inside the neighborhood.) This proves the bias bounds, and the chart norm-equivalence transfers them to $\theta$.

Eventual uniqueness and continuity. On the anisotropic neighborhood the Hessian of $K_\tau$ satisfies, for any direction $(u_a,u_\vartheta)$,
\[
(u_a,u_\vartheta)^\top I_\tau(\tilde\eta)(u_a,u_\vartheta)
\ \ge\ c\|u_a\|^2+c\tau^{-1}\|u_\vartheta\|^2-2C\|u_a\|\|u_\vartheta\|
\ \ge\ \tfrac{c}{2}\big(\|u_a\|^2+\tau^{-1}\|u_\vartheta\|^2\big)
\]
for $\tau\le c^2/(8C^2)$, by Young's inequality. Hence $K_\tau$ is strictly convex there, so it has at most one stationary point; existence holds since the global minimizer over $\bar\Theta$ lies inside by Assumption~\ref{asmp:V}\ref{subasmp:V_anchor} and the localization above. Continuity of $\tau\mapsto\tilde\eta^\star(\tau)$ on $(0,\tau_1]$ follows from the implicit function theorem applied to $(\tau,\tilde\eta)\mapsto S_\tau(\tilde\eta)$, which is continuously differentiable with invertible $\tilde\eta$-derivative on the neighborhood.
\end{proof}
\subsection{Variance and deviation bounds for the triangular array}

In the remainder of this section $\tau=\tau_n=n^{-\varrho}$, $\varrho\in(0,1)$ (the proofs use only $n\tau_n/(\log n)^6\to\infty$, automatic for polynomial schedules, and $\sqrt n\tau_n\to0$ where indicated), $\eta_n^\star=\eta^\star(\tau_n)$ as in Proposition~\ref{prop:eta_star_bias_rate_new}, and
$r_\eta(O)=\log\{p_{\eta,\tau_n}/p_{\eta_n^\star,\tau_n}\}(Y\mid W,X,Z)$, $R_n(\eta)=\sum_ir_\eta(O_i)$, so that \eqref{eq:Rn_decomp} holds with $\Delta K=K_n-K_n(\eta_n^\star)$, $K_n=K_{\tau_n}$.

\begin{lemma}[$\tau$-free envelope and variance bounds]
\label{lem:variance_bounds_tau_n}
There are constants $C,K_0$ such that for all $n$ and all $\eta,\eta'\in\bar\Theta$:
\begin{enumerate}[label=(\roman*),nosep]
\item\label{vb:envelope} $|r_\eta(O)|\le C(1+\varepsilon^2)$ almost surely; in particular $\|r_\eta\|_{\psi_1}\le K_0$;
\item\label{vb:zone3} if $\eta,\eta'\in\mathcal A_\delta\times N_0$ with $\|\theta-\theta_0\|\vee\|\theta'-\theta_0\|\le C_1\tau_n$, then
\[
\E_{P_0}\big[(r_\eta-r_{\eta'})^2\big]\ \le\ C\big(\|a-a'\|^2+\tau_n^{-1}\|\theta-\theta'\|^2\big);
\]
\item\label{vb:zone2} if $\eta=(a,\theta),\eta'=(a',\theta')\in\mathcal A_\delta\times N_0$, then
\[
\E_{P_0}\big[(r_\eta-r_{\eta'})^2\big]\ \le\ C\big(\|a-a'\|^2+\|\theta-\theta'\|+\tau_n\sqrt{\log(1/\tau_n)}\big).
\]
\end{enumerate}
\end{lemma}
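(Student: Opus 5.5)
The plan is to prove the three parts in order. Parts~\ref{vb:envelope} and~\ref{vb:zone3} reuse the pointwise machinery already built for Lemmas~\ref{lem:subexp_fixed_tau_new}, \ref{lem:gate_factor_moments} and~\ref{lem:aniso_information}. Part~\ref{vb:zone2} needs a separate decomposition into a hard-threshold mismatch term and smoothing remainders.

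For~\ref{vb:envelope} I would reuse the sandwich~\eqref{eq:mixture_sandwich} together with the observation made after it: whatever the value of $t$, the better-classified component carries gate weight at least $1/2$. This gives
\[
-\log2+\min_j\log f_j\le\log p_{\eta,\tau}\le\max_j\log f_j,
\]
with no $\tau$ appearing. Under Assumption~\ref{asmp:V}\ref{subasmp:V_gauss}--\ref{subasmp:V_bounded} and compactness of $\mathcal K_a$, we have $Y=\mu_0+\varepsilon$ with $|\mu_0|,|m_j|\le C$ and $\sigma^2\in[\underline\sigma^2,\overline\sigma^2]$. Hence $|\log f_j|\le C(1+\varepsilon^2)$, and subtracting the same bound at $\eta_n^\star$ gives $|r_\eta|\le C(1+\varepsilon^2)$. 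The $\psi_1$ bound follows because $\varepsilon^2$ is sub-exponential when $\varepsilon$ is Gaussian.

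For~\ref{vb:zone3} I would work in chart coordinates and write the difference as an integral along the segment:
\[
r_\eta-r_{\eta'}=\int_0^1\nabla\tilde\ell_{\tau_n}(\tilde\xi_s)^\top(\tilde\eta-\tilde\eta')\,ds .
\]
Jensen's inequality then bounds $\E(r_\eta-r_{\eta'})^2$ by
\[
2\int_0^1\Big(\E\|\nabla_a\tilde\ell(\tilde\xi_s)\|^2\|a-a'\|^2+\E\|\nabla_\vartheta\tilde\ell(\tilde\xi_s)\|^2\|\vartheta-\vartheta'\|^2\Big)ds .
\]
This avoids any supremum over the segment. Every point $\tilde\xi_s$ lies in the anisotropic tube $\|\theta-\theta_0\|\le C_1\tau_n$. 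On that tube the second-moment envelopes~\eqref{eq:derivative_envelopes} apply, including the mismatch-event modification in the proof of Lemma~\ref{lem:aniso_information}: the case $(j,k)=(1,0)$ gives a constant and $(0,1)$ gives $C\tau_n^{-1}$. Norm equivalence of the chart then yields~\ref{vb:zone3}.

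For~\ref{vb:zone2} I would first move $a$ with $\theta$ held fixed, and then move $\theta$ with $a$ held fixed. The $a$-move costs $C\|a-a'\|^2$. The reason is that the $a$-gradient is a combination of responsibilities in $[0,1]$ times polynomials in $(\varepsilon,W,X)$, so it has a $\tau$-free envelope, and the same Jensen argument applies. For the $\theta$-move I would compare each $\log p_{\theta,\tau_n}$ with its hard version $\log f_{d_\theta}$, using the identity in the proof of Lemma~\ref{lem:risk_localization}(a), $p/p_\infty=w_d+(1-w_d)f_{1-d}/f_d$, with $w_d=\Phi(|t|)$. This gives
\[
\log p_{\theta}-\log p_{\theta'}=\big(\log f_{d_\theta}-\log f_{d_{\theta'}}\big)+\rho_\theta-\rho_{\theta'},
\]
where $\rho_\theta=\log(p_{\theta,\tau_n}/p_\infty)$. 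The hard difference vanishes off $\mathcal E(\theta,\theta')$ and is bounded by $C(1+\varepsilon^2)$ on it. Its second moment is therefore at most $C\,P_0\{\mathcal E(\theta,\theta')\}$. Since disagreement forces $|U_\theta|\le C_B\|\theta-\theta'\|$, the margin density bound~\ref{subasmp:V_margin} at $\theta\in N_0$ gives $P_0\{\mathcal E(\theta,\theta')\}\le C\|\theta-\theta'\|$; for $\|\theta-\theta'\|>t_0/C_B$ the trivial bound by $1$ suffices.

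The main obstacle is the remainder term $\E\rho_\theta^2$. The linear bound $|\rho_\theta|\le\Phi(-|t|)f_{1-d}/f_d$ has a heavy Gaussian-likelihood-ratio factor whose square must be controlled under the true law $f_{d_0}$, and this is delicate on the mismatch set. I would instead combine two bounds. The first is the crude bound $|\rho_\theta|\le C(1+\varepsilon^2)$ from~\ref{vb:envelope}. The second is the refined bound $|\rho_\theta|\le\log\{1+\Phi(-|t|)\,f_{1-d}/f_d\}+2\Phi(-|t|)$, which is small once $|t|$ is large, after truncating $f_{1-d}/f_d\le\exp\{C(1+|\varepsilon|)\}$.

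I would then split at $|t|\le s_n:=\sqrt{2\log(1/\tau_n)}$. On the inner layer I use the crude bound, and the layer has probability at most $2C_f\tau_ns_n$, which is $\lesssim\tau_n\sqrt{\log(1/\tau_n)}$. On the outer layer we have $\Phi(-|t|)\le\tau_n$, so the refined bound gives a contribution of order $\tau_n^2$ times Gaussian moments. Together these give $\E\rho_\theta^2\le C\tau_n\sqrt{\log(1/\tau_n)}$ uniformly over $N_0$. Collecting the three pieces with $(x+y+z)^2\le3(x^2+y^2+z^2)$ gives~\ref{vb:zone2}.
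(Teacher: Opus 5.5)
Your proposal is correct and is essentially the paper's proof. Part (i) uses the better-classified-component envelope, part (ii) the segment integral with the tube envelopes \eqref{eq:derivative_envelopes}, and part (iii) an $a$-leg plus a $\theta$-leg, where your add-and-subtract of the hard log-densities simply reorganizes the paper's partition into three pieces: the sign-disagreement set (cost $\|\theta-\theta'\|$), the layer $|U|\le\tau_n\sqrt{2\log(1/\tau_n)}$ (crude envelope, cost $\tau_n\sqrt{\log(1/\tau_n)}$), and the far region (sandwich of Lemma~\ref{lem:risk_localization}(a), cost $\tau_n^2$). One side remark: the linear bound you set aside as delicate already suffices on its own, because $\log(f_{1-d}/f_d)(Y)$ is affine in $\varepsilon$ with bounded coefficients. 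The Gaussian moment generating function then gives $\E[(f_{1-d}/f_d)^2(Y)\mid W,X,Z]\le C$ even on the mismatch set, so $\E\rho_\theta^2\le C\,\E[\Phi(-|U_\theta|/\tau_n)^2]\le C\tau_n$ by Lemma~\ref{lem:boundary_layer_integration}(i), with no logarithm.
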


\begin{proof}
\ref{vb:envelope} As in the proof of Lemma~\ref{lem:subexp_fixed_tau_new}, the better-classified mixture component always has gate weight at least $1/2$, so
$|\log p_{\eta,\tau}|\le\log2+\max_j|\log f_j|\le C(1+(Y-m_j)^2)\le C'(1+\varepsilon^2)$
using bounded design and bounded parameter ranges; $\varepsilon$ is Gaussian, so $\varepsilon^2$ is sub-exponential.

\ref{vb:zone3} Write $r_\eta-r_{\eta'}=\log p_{\eta,\tau}-\log p_{\eta',\tau}=\int_0^1\partial_s\log p_{\eta_s,\tau}\,ds$ along the segment $\eta_s=(1-s)\eta'+s\eta$ (with $\theta$-leg traversed in the chart, all points remaining in the anisotropic tube after enlarging $C_1$ by a fixed factor). By Cauchy--Schwarz and Fubini,
\[
\E[(r_\eta-r_{\eta'})^2]\le\int_0^1\E\big[\{(\eta-\eta')^\top\nabla\log p_{\eta_s,\tau}\}^2\big]ds
\le2\|a-a'\|^2\sup_s\E\|\nabla_a\ell_{\tau}(\eta_s)\|^2
+2\|\vartheta-\vartheta'\|^2\sup_s\E\|\nabla_\vartheta\ell_\tau(\eta_s)\|^2,
\]
and the envelopes \eqref{eq:derivative_envelopes} (with $(j,k,l)=(0,1,2)$, valid uniformly on the tube) bound the second supremum by $C\tau^{-1}$ and the first by $C$.

\ref{vb:zone2} Split $r_\eta-r_{\eta'}$ into an $a$-leg at fixed $\theta$ and a $\theta$-leg at fixed $a$. The $a$-leg is handled as above with the $\tau$-free bound $\E\|\nabla_a\ell\|^2\le C$. For the $\theta$-leg set $D=\log p_{(a,\theta),\tau}-\log p_{(a,\theta'),\tau}$ and let $K=\sqrt{2\log(1/\tau)}$. Partition the sample space into
$\mathcal R_1=\{\min(|U_\theta|,|U_{\theta'}|)\ge K\tau,\ \mathrm{sign}(U_\theta)=\mathrm{sign}(U_{\theta'})\}$,
$\mathcal R_2=\{\mathrm{sign}(U_\theta)\neq\mathrm{sign}(U_{\theta'})\}$, and the remainder $\mathcal R_3$.
On $\mathcal R_1$, both gates are within $\Phi(-K)\le\tau$ of the common indicator $d$, and writing
$\log p_{\pi}-\log f_d=\log\{w_d+(1-w_d)f_{1-d}/f_d\}$ with $1-w_d\le\Phi(-K)$, the sandwich used in the proof of Lemma~\ref{lem:risk_localization}(a) gives $|D|\le|{\log p_\pi-\log f_d}|+|{\log p_{\pi'}-\log f_d}|\le\Phi(-K)\{2+ (f_{1-d}/f_d)(Y)+(f'_{1-d}/f'_d)(Y)\}$, whose square has expectation $\le C\Phi(-K)^2\le C\tau^2$. On $\mathcal R_2$, $|D|\le C(1+\varepsilon^2)$ by \ref{vb:envelope}; since $\mathcal R_2$ is measurable with respect to $(W,X,Z)$ (it is defined by the two scores) and $\E[(1+\varepsilon^2)^2\mid W,X,Z]\le C$ under Gaussian errors with bounded conditional variance, conditioning gives $\E[D^2\mathbbm 1_{\mathcal R_2}]\le C\,P_0(\mathcal R_2)\le C'\|\theta-\theta'\|$, where $P_0(\mathcal R_2)\le P_0(|U_\theta|\le C_B\|\theta-\theta'\|)\le C\|\theta-\theta'\|$ by the density bound on $[-t_0,t_0]$ and the fact that sign disagreement with $|U_\theta|>C_B\|\theta-\theta'\|$ is impossible. On $\mathcal R_3$, which is likewise $(W,X,Z)$-measurable, one of the scores lies within $K\tau$ of zero, so $P_0(\mathcal R_3)\le CK\tau$, and the same conditioning gives a contribution $\le CK\tau=C\tau\sqrt{2\log(1/\tau)}$.
\end{proof}

\begin{lemma}[Truncated Bernstein bound]
\label{lem:truncated_bernstein}
Let $\xi,\xi_1,\dots,\xi_n$ be i.i.d.\ with $\E\xi=0$, $\E\xi^2\le v$ and $\|\xi\|_{\psi_1}\le K_0$. Then for every $t\ge n^{-1}$ and $R\ge CK_0\log n$ with $C$ a sufficiently large absolute constant,
\[
P\Big(\sum_{i=1}^n\xi_i\ge t\Big)\ \le\ \exp\Big(-\frac{t^2}{8nv+2Rt}\Big)+n\,e^{-R/(2K_0)} .
\]
\end{lemma}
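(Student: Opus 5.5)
The plan is to truncate each summand at level $R$ and split $\xi_i=\xi_i\mathbbm 1\{|\xi_i|\le R\}+\xi_i\mathbbm 1\{|\xi_i|>R\}$. The union bound then gives
\[
P\Big(\sum_i\xi_i\ge t\Big)\le P\Big(\sum_i\xi_i\mathbbm 1\{|\xi_i|\le R\}\ge t\Big)+P\big(\max_i|\xi_i|>R\big).
\]
The second term is at most $n\,P(|\xi|>R)$. The sub-exponential tail $P(|\xi|>R)\le 2e^{-R/K_0}$ makes this at most $2n e^{-R/K_0}\le n e^{-R/(2K_0)}$ as soon as $e^{-R/(2K_0)}\le 1/2$. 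This holds for $R\ge CK_0\log n$ with $C$ large and $n\ge2$.

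For the first term we center the truncated variables. Write $\zeta_i=\xi_i\mathbbm 1\{|\xi_i|\le R\}-\mu_R$ with $\mu_R=\E[\xi\mathbbm 1\{|\xi|\le R\}]$. Since $\E\xi=0$, we have $|\mu_R|=|\E[\xi\mathbbm 1\{|\xi|>R\}]|$. This is bounded using Cauchy--Schwarz together with the $\psi_1$ tail:
\[
|\mu_R|\le \sqrt{\E\xi^2}\,\sqrt{P(|\xi|>R)}\le C'K_0e^{-R/(2K_0)}.
\]
For $R\ge CK_0\log n$ with $C$ large this is at most $n^{-2}$. Hence $n|\mu_R|\le n^{-1}\le t$, and in fact $n|\mu_R|\le t/2$ after enlarging $C$, using $t\ge n^{-1}$ and $n\ge2$. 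The event $\sum\xi_i\mathbbm 1\{|\xi_i|\le R\}\ge t$ is therefore contained in $\sum\zeta_i\ge t-n|\mu_R|\ge t/2$.

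Now $|\zeta_i|\le 2R$ and $\Var(\zeta_i)\le\E\xi^2\le v$. The classical bounded Bernstein inequality gives
\[
P\Big(\sum\zeta_i\ge t/2\Big)\le\exp\Big(-\frac{(t/2)^2}{2(nv+2Rt/6)}\Big)=\exp\Big(-\frac{t^2}{8nv+\tfrac{4}{3}Rt}\Big)\le\exp\Big(-\frac{t^2}{8nv+2Rt}\Big),
\]
which is the claimed first term. Combining the two pieces completes the proof.

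The only delicate step is the bookkeeping of the centering shift $n\mu_R$. It must be absorbed into $t/2$ without needing the bound $t\gtrsim1$, and this is exactly where both the lower bound $t\ge n^{-1}$ and the $\log n$ lower bound on $R$ (through $e^{-R/(2K_0)}\le n^{-2}$ up to constants) are used. The absolute constant $C$ is chosen large enough to cover the Cauchy--Schwarz constant $C'$ and the factor $2$ in the tail.
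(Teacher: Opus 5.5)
Your route is the paper's: truncate at $R$, bound the exceedance event by $2ne^{-R/K_0}\le ne^{-R/(2K_0)}$, absorb the centering shift $n|\mu_R|$ into $t/2$, and apply bounded Bernstein at level $t/2$. However, your last display contains an arithmetic error, and it breaks the final inequality. With range $2R$, Bernstein gives
\[
\frac{(t/2)^2}{2(nv+2Rt/6)}=\frac{t^2/4}{2nv+\tfrac23 Rt}=\frac{t^2}{8nv+\tfrac83 Rt},
\]
not $t^2/(8nv+\tfrac43Rt)$. Since $\tfrac83>2$, the step $\exp\{-t^2/(8nv+\tfrac83Rt)\}\le\exp\{-t^2/(8nv+2Rt)\}$ is false, so the crude range $|\zeta_i|\le 2R$ does not deliver the stated bound.

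The repair is to use the sharper range $|\zeta_i|\le R+|\mu_R|\le\tfrac32R$. This holds once $C$ is large, because your Cauchy--Schwarz bound gives $|\mu_R|\le C'K_0e^{-R/(2K_0)}$, and $R/K_0\ge C\log 2$ makes $|\mu_R|\le R/2$ a scale-free condition. Bernstein with range $\tfrac32R$ at level $t/2$ then has exponent
\[
\frac{t^2/4}{2nv+Rt/2}=\frac{t^2}{8nv+2Rt},
\]
which is exactly the claim. The paper's own display is loose in the same place: it asserts range $2R$ but writes the linear term as $2R(t/2)/3$, which is the term for range $R$. The same fix covers it.

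There is a smaller point that you share with the paper. Your claim that $C'K_0e^{-R/(2K_0)}\le n^{-2}$ whenever $R\ge CK_0\log n$ requires $C'K_0\le n^{C/2-2}$. It is therefore not uniform in $K_0$ for an absolute constant $C$. This is harmless in the paper's application, where $K_0$ is fixed and $n\to\infty$.

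If you want to avoid both this issue and the restriction $t\ge n^{-1}$, truncate only from above. On $\{\max_i\xi_i\le R\}$ the sum equals $\sum_i\min(\xi_i,R)$, and $\E\min(\xi,R)=-\E(\xi-R)_+\le 0$, so there is no shift to absorb. The centered variables are bounded above by $\tfrac32R$ and have second moment at most $v$. One-sided Bernstein at level $t$ then gives the stronger bound $\exp\{-t^2/(2nv+Rt)\}$.
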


\begin{proof}
Write $\xi_i=\xi_i\mathbbm 1\{|\xi_i|\le R\}+\xi_i\mathbbm 1\{|\xi_i|>R\}$. The event that any $|\xi_i|>R$ has probability at most $n\,2e^{-R/K_0}\cdot\tfrac12\le ne^{-R/(2K_0)}$ (sub-exponential tail). On its complement, the sum equals the sum of the truncated variables; these have mean $|\E\xi\mathbbm 1\{|\xi|\le R\}|=|\E\xi\mathbbm 1\{|\xi|>R\}|\le\E|\xi|\mathbbm 1\{|\xi|>R\}\le CK_0e^{-R/(2K_0)}\le t/(2n)$ for $R\ge CK_0\log n$ and the relevant $t\ge n^{-1}$, variance at most $v$, and absolute bound $2R$; Bernstein's inequality for bounded centered variables, applied at level $t/2$, gives $\exp\{-(t/2)^2/(2nv+2R(t/2)/3)\}\le\exp\{-t^2/(8nv+2Rt)\}$, the first term.
\end{proof}

\subsection{Proof of Theorem~\ref{thm:tau_contraction_new}}

\begin{proof}[Proof of Theorem~\ref{thm:tau_contraction_new}]
Fix $\epsilon>0$. Throughout, ``the denominator event'' refers to the conclusion of the following analogue of Lemma~\ref{lem:denom_fixed_tau_new}, proved identically with the anisotropic prior ball
$B_n^\star=\{\|a-a_n^\star\|\le n^{-1/2},\ \|\vartheta-\vartheta_n^\star\|\le(\tau_n/n)^{1/2}\}$:
expanding to second order with the envelopes \eqref{eq:derivative_envelopes} and noting that
$\E\|D_n^{(a)}\!\sum_i\nabla_a\ell\|^2\le C$, $\E\|(\tau_n/n)^{1/2}\sum_i\nabla_\vartheta\ell(\tilde\eta_n^\star)\|^2\le C$ (score variance $O(\tau_n^{-1})$ per observation; the means vanish at $\tilde\eta_n^\star$) and that the quadratic term is $\tfrac12(\tau_n/n)\sum_i\sup_{B_n^\star}\|\nabla^2_{\vartheta\vartheta}\ell(\cdot;O_i)\|=O_{P_0}(\tau_n\cdot\tau_n^{-1})=O_{P_0}(1)$ by Markov's inequality (the per-observation envelope may be taken uniform over $B_n^\star$ because the $\vartheta$-radius $(\tau_n/n)^{1/2}$ shifts $t=U/\tau_n$ by at most $C_B(n\tau_n)^{-1/2}=o(1)$, changing the boundary-layer factors only by bounded multiples), and we obtain $\inf_{B_n^\star}R_n\ge-O_{P_0}(1)$, whence with probability at least $1-\epsilon$,
\begin{equation}
\int e^{R_n}d\Pi\ \ge\ \Pi(B_n^\star)\,e^{-C_\epsilon}\ \ge\ c_\pi\,n^{-(p+r+1)/2}(\tau_n/n)^{(q-1)/2}e^{-C_\epsilon}
\ \ge\ e^{-C_\epsilon'\log n},
\label{eq:denom_tau_n}
\end{equation}
using prior positivity at $\tilde\eta_0$ (Assumption~\ref{asmp:V}\ref{subasmp:V_prior}; the ball center $\tilde\eta_n^\star$ converges to $\tilde\eta_0$) and $\tau_n\ge n^{-1}$.

We remove posterior mass from three zones; in each, we bound the $P_0$-probability that $\sup_{\text{zone}}R_n$ exceeds minus twice the zone's separation, then conclude via \eqref{eq:denom_tau_n} as in the proof of Theorem~\ref{thm:mis_contraction_fixed_tau_new}.

Zone I ($\|\eta-\eta_n^\star\|\ge\varepsilon_0$ (within $\bar\Theta$)). We first claim the uniform separation
$\inf_{n}\inf\{\Delta K(\eta):\|\eta-\eta_n^\star\|\ge\varepsilon_0\}\ge b_0>0$.
Indeed $K_\tau\to K_\infty$ uniformly on $\bar\Theta$ as $\tau\downarrow0$: the bound in the proof of Lemma~\ref{lem:risk_localization}(a) holds for every $\theta\in\mathbb S^{q-1}_+$ with the integrating factor $\E[\Phi(-|U_\theta|/\tau)]$, which converges to $P_0(U_\theta=0)=0$ pointwise (Assumption~\ref{asmp:design}\ref{asmp:design_nohyper}) and monotonically in $\tau$, hence uniformly over the compact sphere by Dini's theorem combined with continuity of $\theta\mapsto\E[\Phi(-|U_\theta|/\tau)]$; the regression-block constants are uniform on $\mathcal K_a$. Moreover $K_\infty$ is continuous on $\bar\Theta$ with unique minimizer $\eta_0$ on the nondegenerate-gate class (identification, Proposition~\ref{prop:identification}, plus the explicit form of $K_\infty$), so its sublevel separation is positive on $\{\|\eta-\eta_0\|\ge\varepsilon_0/2\}$, and $\eta_n^\star\to\eta_0$. The claim follows. Deviations over Zone I are controlled exactly as in Lemma~\ref{lem:tests_fixed_tau_new}, with two changes: the envelope is the $\tau$-free bound of Lemma~\ref{lem:variance_bounds_tau_n}\ref{vb:envelope}, and the Lipschitz modulus in $\theta$ now carries a factor $\tau_n^{-2}$ (crude pointwise bound), so the covering resolution is $\rho_n\asymp b_0\tau_n^{2}$ and $\log N_n\le Cd\log(1/\tau_n)\le C'\log n$, still negligible against the Bernstein exponent $cn$.

Zone II ($C_1\tau_n\le\|\theta-\theta_0\|\le2\varepsilon_0$, any $a\in\mathcal A_\delta$). Here $C_1$ is the tube constant of Lemma~\ref{lem:aniso_information} and we set $c_0=4\bar C/c'$, where $\bar C$ is the constant of Lemma~\ref{lem:risk_localization}(a) and $c,c'$ those of part (b). For $\|\theta-\theta_0\|\ge c_0\tau_n$ (the cone subzone), Lemma~\ref{lem:risk_localization} together with $K_n(\eta_n^\star)\le K_n(\eta_0)\le K_\infty(\eta_0)+\bar C\tau_n$ gives
$\Delta K(\eta)\ge c\|a-a_0\|^2+c'\|\theta-\theta_0\|-2\bar C\tau_n\ge\tfrac{c'}{4}\|\theta-\theta_0\|+\tfrac14(c\|a-a_0\|^2-4\bar C\tau_n)_+$. For $C_1\tau_n\le\|\theta-\theta_0\|<c_0\tau_n$ (the seam subzone), Assumption~\ref{asmp:V}\ref{subasmp:V_seam} applied with the pair $(C_1,c_0)$ gives $K_n(\eta)\ge K_n(\eta_0)+c_s\tau_n\ge K_n(\eta_n^\star)+c_s\tau_n$, and combining with the first-order bound as above, $\Delta K(\eta)\ge\max\{c_s\tau_n,\ (c\|a-a_0\|^2-4\bar C\tau_n)_+\}\ge\tfrac12c_s\tau_n+\tfrac14(c\|a-a_0\|^2-4\bar C\tau_n)_+$. In both subzones,
\[
\Delta K(\eta)\ \ge\ g(\eta):=\kappa_0\big\{\tau_n+\|\theta-\theta_0\|+(\|a-a_0\|^2-4\bar C\tau_n/c)_+\big\},
\qquad\kappa_0:=\tfrac14\min\{c_s,\ c'\!,\ c\}\wedge\tfrac{c_sC_1}{4c_0},
\]
using $\tau_n\le\|\theta-\theta_0\|/C_1$ on the cone subzone and $\|\theta-\theta_0\|\le c_0\tau_n$ on the seam subzone to trade the two terms against each other at the cost of the stated constants.
Partition Zone II into the sets $\mathcal G_l=\{\eta\in\text{Zone II}:g_l\le g(\eta)<2g_l\}$, $g_l=2^lg_{\min}$, $g_{\min}=\kappa_0\tau_n$, $0\le l\le L_n=O(\log n)$. On $\mathcal G_l$, Lemma~\ref{lem:variance_bounds_tau_n}\ref{vb:zone2} (with $\eta'=\eta_n^\star$, absorbing the $O(\tau_n)$ distance of $\eta_n^\star$ from $\eta_0$, and noting that each of $\|a-a_0\|^2$, $\|\theta-\theta_0\|$ and $\tau_n\sqrt{\log(1/\tau_n)}$ is bounded by $C\sqrt{\log n}\,g_l/\kappa_0$ on $\mathcal G_l$) gives
$\Var(r_\eta)\le C\log n\cdot g_l$.
Apply Lemma~\ref{lem:truncated_bernstein} with $t=ng_l/2$, $v=C\log n\,g_l$, $R=CK_0\log n$: the exponent is
\[
\frac{(ng_l/2)^2}{4nC\log n\,g_l+2CK_0\log n\cdot ng_l/2}\ \ge\ \frac{c\,n\,g_l}{\log n}
\ \ge\ \frac{c\,n\,g_{\min}}{\log n}\ =\ c\,\kappa_0\,\frac{n\tau_n}{\log n}\,,
\]
which diverges (it is of order at least $(\log n)^5$) under $n\tau_n/(\log n)^6\to\infty$, while the truncation cost $ne^{-R/(2K_0)}\le n^{-2}$ for $C$ large. Covering: on $\mathcal G_l$ take a net at resolution $\rho_l=g_l\tau_n^{2}/C'$ in $\eta$; by the crude pointwise Lipschitz bound ($G_{\tau_n}\le C\tau_n^{-2}$ on bounded designs) the net approximation error in $R_n/n$ is at most $g_l/4$ on the event $\{\mathbb P_nG_{\tau_n}\le2\E G_{\tau_n}\}$, and $\log N_l\le Cd\log(1/(\rho_l))\le C\log n$, negligible against $cng_l/\log n$. Summing the resulting bounds over $l\le L_n$ and pairing with \eqref{eq:denom_tau_n}: the posterior mass of Zone II is at most
$\sum_le^{-cng_l/(2\log n)}e^{C_\epsilon'\log n}\to0$,
since already $cng_{\min}/(2\log n)\ge(\log n)^4\gg C_\epsilon'\log n$ and the exponents grow geometrically in $l$.

Zone III-a ($\|\theta-\theta_0\|<C_1\tau_n$ and $\delta_1/2\le\|a-a_0\|$). Here Lemma~\ref{lem:risk_localization} gives the fixed separation $\Delta K(\eta)\ge c\delta_1^2/4-2\bar C\tau_n\ge c\delta_1^2/8$ for small $\tau_n$, and the Zone-I argument (Bernstein at a fixed gap with $O(\log n)$ covering cost) removes the posterior mass.

Zone III-b ($\|\theta-\theta_0\|<C_1\tau_n$, $\|a-a_0\|<\delta_1/2$, anisotropic annuli). On this tube the two-sided block bounds of Lemma~\ref{lem:aniso_information} hold (any segment between such points and $\tilde\eta_n^\star$ stays in the $(\delta_1,C_1\tau_n)$-tube, since $\|a_n^\star-a_0\|\le C\tau_n$ and $\|\vartheta_n^\star-\vartheta_0\|<C_1\tau_n$ by Proposition~\ref{prop:eta_star_bias_rate_new}), and as in the proof of that proposition the risk is strictly convex with
\[
\Delta K(\eta)\ \ge\ \tfrac{c}{4}\,\rho_n(\tilde\eta)^2,
\qquad
\rho_n(\tilde\eta)^2:=\|a-a_n^\star\|^2+\tau_n^{-1}\|\vartheta-\vartheta_n^\star\|^2,
\]
by the integral form of Taylor's theorem around $\tilde\eta_n^\star$ (where the gradient vanishes) and Young's inequality on the cross terms. Peel the tube minus the target ball into annuli
$\mathcal A_j=\{2^j\rho_{\min}\le\rho_n(\tilde\eta)<2^{j+1}\rho_{\min}\}$ with $\rho_{\min}=M_0\log n/\sqrt n$. On $\mathcal A_j$, Lemma~\ref{lem:variance_bounds_tau_n}\ref{vb:zone3} gives $\Var(r_\eta)\le C\rho_j^2$ with $\rho_j=2^j\rho_{\min}$, and Lemma~\ref{lem:truncated_bernstein} with $t=nc\rho_j^2/8$, $v=C\rho_j^2$, $R=CK_0\log n$ yields the exponent
\[
\frac{(nc\rho_j^2/8)^2}{4nC\rho_j^2+2CK_0\log n\cdot nc\rho_j^2/8}\ \ge\ \frac{c'\,n\rho_j^2}{\log n}
\ \ge\ c'M_0^2\,4^j\log n .
\]
Covering of $\mathcal A_j$ at resolution $\rho_j^2\tau_n^2/C$ again costs $\log N_j\le C\log n$, dominated once $M_0$ is large; the annuli are intersected with Zone III-b, and the peeling terminates at $\rho_j\le\delta_1$ since beyond that radius every point of the tube lies in Zone III-a. Summing over $j$ and pairing with \eqref{eq:denom_tau_n} as before gives posterior mass $\to0$ outside $\{\rho_n(\tilde\eta)<\rho_{\min}\}$ for $M_0^2>2(C_\epsilon'+d)/c'$. Coverage check (Zones I, II, III-a, III-b). A point outside the target box with $\|\theta-\theta_0\|\ge C_1\tau_n$ lies in Zone II if $\|\theta-\theta_0\|\le2\varepsilon_0$ and otherwise satisfies $\|\eta-\eta_n^\star\|\ge2\varepsilon_0-C\tau_n\ge\varepsilon_0$, hence lies in Zone I; a point with $\|\theta-\theta_0\|<C_1\tau_n$ lies in Zone III-a, Zone III-b, or the target box. The four zones therefore cover the complement of the target box.

Since $\{\rho_n<\rho_{\min}\}\subseteq\{\|a-a_n^\star\|<M_0\log n/\sqrt n\}\cap\{\|\vartheta-\vartheta_n^\star\|<M_0\sqrt{\tau_n}\log n/\sqrt n\}$ and the chart distorts norms by bounded factors, the first display of the theorem follows; the second follows by the triangle inequality with Proposition~\ref{prop:eta_star_bias_rate_new}.
\end{proof}

\subsection{Proof of Theorem~\ref{thm:effect_bvm}}

\begin{proof}[Proof of Theorem~\ref{thm:effect_bvm}]
Recall the anisotropic norming $D_n=\mathrm{diag}(n^{-1/2}I_{p+r+1},(\tau_n/n)^{1/2}I_{q-1})$ and write the local parameter $h=(h_a,h_\vartheta)=D_n^{-1}(\tilde\eta-\tilde\eta_n^\star)$, so that by Theorem~\ref{thm:tau_contraction_new} the posterior of $h$ concentrates on $\mathcal H_n=\{\|h\|\le M_0\log n\}$.

Step 1 (anisotropic LAN). Define
\[
\tilde\Delta_n=D_n\sum_{i=1}^n\nabla\tilde\ell_n(\tilde\eta_n^\star;O_i),
\qquad
\tilde I_n=n\,D_n\,I_{\tau_n}(\tilde\eta_n^\star)\,D_n .
\]
By Lemma~\ref{lem:aniso_information}, $\tilde I_n$ has $aa$-block $I_{\tau_n,aa}(\tilde\eta_n^\star)$, $\vartheta\vartheta$-block $\tau_nI_{\tau_n,\vartheta\vartheta}(\tilde\eta_n^\star)$, both with eigenvalues in $[c,C]$, and cross block $\sqrt{\tau_n}\,I_{\tau_n,a\vartheta}=O(\sqrt{\tau_n})$. We claim
\begin{equation}
\sup_{h\in\mathcal H_n}\Big|\sum_i\big\{\tilde\ell_n(\tilde\eta_n^\star+D_nh;O_i)-\tilde\ell_n(\tilde\eta_n^\star;O_i)\big\}-h^\top\tilde\Delta_n+\tfrac12h^\top\tilde I_nh\Big|=o_{P_0}(1).
\label{eq:aniso_lan}
\end{equation}
Taylor expansion to third order around $\tilde\eta_n^\star$ produces two error terms. (i) The Hessian fluctuation
$\tfrac12h^\top D_n\{\sum_i\nabla^2\tilde\ell_n(\tilde\eta_n^\star;O_i)+nI_{\tau_n}(\tilde\eta_n^\star)\}D_nh$:
blockwise, the $aa$-entry has per-observation variance $O(1)$ (envelopes \eqref{eq:derivative_envelopes}, $(j,k,l)=(2,0,2)$), so after norming its standard deviation is $O(n^{-1/2})$; the $a\vartheta$-entry has per-observation second moment $O(\tau_n^{-1})$ and norming $n^{-1/2}(\tau_n/n)^{1/2}$, giving $O(n^{-1/2})$; the $\vartheta\vartheta$-entry has per-observation second moment $O(\tau_n^{-3})$ and norming $\tau_n/n$, giving standard deviation $(\tau_n/n)\sqrt{n\tau_n^{-3}}=(n\tau_n)^{-1/2}$. Multiplying by $\|h\|^2\le M_0^2(\log n)^2$, the fluctuation term is
$O_{P_0}\big((\log n)^2\{n^{-1/2}+(n\tau_n)^{-1/2}\}\big)=o_{P_0}(1)$ under $n\tau_n/(\log n)^6\to\infty$.
(ii) The third-order remainder is bounded by $\tfrac16\sum_{(\iota_1\iota_2\iota_3)}\|h\|^3$ times the normed empirical means of the third-derivative envelopes along the segment; using \eqref{eq:derivative_envelopes} ($L^1$, $k=0,1,2,3$) and Markov's inequality, the worst term is the $\vartheta\vartheta\vartheta$ one:
$n\cdot\tau_n^{-2}\cdot(\tau_n/n)^{3/2}\,(\log n)^3=(n\tau_n)^{-1/2}(\log n)^3=o_{P_0}(1)$;
the mixed terms are smaller ($aa\vartheta$: $n\cdot1\cdot n^{-1}(\tau_n/n)^{1/2}(\log n)^3=o(1)$; $a\vartheta\vartheta$: $n\cdot\tau_n^{-1}\cdot n^{-1/2}(\tau_n/n)(\log n)^3=o(1)$; $aaa$: $O(n^{-1/2}(\log n)^3)$). (The segment stays inside the anisotropic tube where the envelopes hold because $D_n\mathcal H_n$ has $\vartheta$-extent $M_0\sqrt{\tau_n}\log n/\sqrt n\le C_1\tau_n$ eventually, by $n\tau_n/(\log n)^6\to\infty$; its $a$-extent $M_0\log n/\sqrt n$ is eventually below $\delta_1$.)

Step 2 (joint Gaussian approximation and marginalization). Exactly as in Steps~1--2 of the proof of Theorem~\ref{thm:mis_contraction_fixed_tau_new}, \eqref{eq:aniso_lan}, the prior continuity at $\tilde\eta_0$ (note $\sup_{h\in\mathcal H_n}\|D_nh\|\to0$, so the prior density ratio tends to one uniformly), the posterior localization to $\mathcal H_n$, and the uniform nondegeneracy $c\preceq\tilde I_n\preceq C$ yield
\begin{equation}
\big\|\Pi^{(\tau_n)}\big(h\in\cdot\mid\mathcal D_n\big)-\mathcal N\big(\tilde I_n^{-1}\tilde\Delta_n,\ \tilde I_n^{-1}\big)\big\|_{\mathrm{TV}}=o_{P_0}(1),
\label{eq:joint_tv_gaussian}
\end{equation}
where we use that $\tilde\Delta_n=O_{P_0}(1)$: its $a$-block has covariance $\Var_{P_0}(\nabla_a\tilde\ell_n)\le C$ and mean zero (stationarity of $\tilde\eta_n^\star$), and its $\vartheta$-block has covariance $\tau_n\Var(\nabla_\vartheta\tilde\ell_n)\le C$ and mean zero. Total variation dominates marginal total variation, so the marginal posterior of $h_a$ is within $o_{P_0}(1)$ of $\mathcal N\big([\tilde I_n^{-1}\tilde\Delta_n]_a,\ [\tilde I_n^{-1}]_{aa}\big)$. Since the cross block of $\tilde I_n$ is $O(\sqrt{\tau_n})$, the blockwise inversion identities give
$[\tilde I_n^{-1}]_{aa}=\big(\tilde I_{n,aa}\big)^{-1}+O(\tau_n)$ and
$[\tilde I_n^{-1}\tilde\Delta_n]_a=\tilde I_{n,aa}^{-1}\tilde\Delta_{n,a}+O(\sqrt{\tau_n})\,O_{P_0}(1)$,
and the total-variation distance between the two Gaussians with these parameters is $o_{P_0}(1)$ (means differing by $o_{P_0}(1)$, variances by $o(1)$, with variances bounded below). Hence
\begin{equation}
\big\|\Pi_a^{(\tau_n)}\big(h_a\in\cdot\mid\mathcal D_n\big)-\mathcal N\big(\tilde I_{n,aa}^{-1}\tilde\Delta_{n,a},\ \tilde I_{n,aa}^{-1}\big)\big\|_{\mathrm{TV}}=o_{P_0}(1).
\label{eq:marginal_bvm_pseudotrue}
\end{equation}

Step 3 (recentering at $a_0$ and the oracle limit). Write $h_a^0=\sqrt n(a-a_0)=h_a+\sqrt n(a_n^\star-a_0)$. By Proposition~\ref{prop:eta_star_bias_rate_new} the deterministic shift satisfies $\|\sqrt n(a_n^\star-a_0)\|\le C\sqrt n\,\tau_n\to0$ under the window's upper constraint; a vanishing shift changes a Gaussian by $o(1)$ in total variation (the covariance is bounded below), so \eqref{eq:marginal_bvm_pseudotrue} holds for $h_a^0$ with the same parameters. It remains to replace $(\tilde I_{n,aa},\tilde\Delta_{n,a})$ by the oracle quantities. First, $\tilde I_{n,aa}=I_{\tau_n,aa}(\tilde\eta_n^\star)\to\mathcal I_0$: by the $aa$-expansion in the proof of Lemma~\ref{lem:aniso_information}, $I_{\tau,aa}(\tilde\eta_0)=\mathcal I_0+O(\tau)$; and by the $(3,0)$ and $(2,1)$ envelopes of \eqref{eq:derivative_envelopes}, both $O(1)$, the map $\tilde\eta\mapsto I_{\tau,aa}(\tilde\eta)$ is Lipschitz on the tube with a $\tau$-free constant, so
$\|I_{\tau_n,aa}(\tilde\eta_n^\star)-I_{\tau_n,aa}(\tilde\eta_0)\|\le C\big(\|a_n^\star-a_0\|+\|\vartheta_n^\star-\vartheta_0\|\big)=O(\tau_n)$
by Proposition~\ref{prop:eta_star_bias_rate_new}. Hence $\tilde I_{n,aa}=\mathcal I_0+o(1)$. Second, define the oracle score $\dot\ell_a(O)=\nabla_a\log f^0_{d_0}(Y)$ (the score of the known-partition Gaussian regression at $a_0$) and decompose
\[
\tilde\Delta_{n,a}-\Delta_{n,a}^{\mathrm{or}}
=\frac1{\sqrt n}\sum_i\big\{\nabla_a\tilde\ell_n(\tilde\eta_n^\star;O_i)-\dot\ell_a(O_i)\big\},
\qquad \Delta_{n,a}^{\mathrm{or}}=\frac1{\sqrt n}\sum_i\dot\ell_a(O_i).
\]
Both summand families have exactly zero mean ($\E\nabla_a\tilde\ell_n(\tilde\eta_n^\star)=0$ by stationarity in the interior; $\E\dot\ell_a=0$ since the oracle model is correctly specified), so the difference has variance equal to $\Var_{P_0}\{\nabla_a\tilde\ell_n(\tilde\eta_n^\star)-\dot\ell_a\}$. This variance tends to zero: the two scores differ (i) through the gate--indicator discrepancy at the same parameter, whose conditional second moment is bounded by $C\,\E[\Phi(-|U_0|/\tau_n)\,(1+\varepsilon^2+\cdots)]=O(\tau_n)$ by the responsibility identity, Lemma~\ref{lem:gate_factor_moments} and Lemma~\ref{lem:boundary_layer_integration}; and (ii) through the parameter displacement $\tilde\eta_n^\star-\tilde\eta_0$, contributing at most the squared displacement times the $L^2$ envelopes of the $a$-score derivatives, i.e.\ $O(\tau_n^2)\cdot O(1)+O(\tau_n^2)\cdot O(\tau_n^{-1})=o(1)$. Hence $\tilde\Delta_{n,a}=\Delta_{n,a}^{\mathrm{or}}+o_{P_0}(1)$, and another vanishing perturbation of the Gaussian parameters in \eqref{eq:marginal_bvm_pseudotrue} delivers the theorem's display. Finally $\Delta_{n,a}^{\mathrm{or}}\Rightarrow\mathcal N(0,\mathcal I_0)$ by the central limit theorem for the correctly specified oracle score (information equality holds for it exactly), which gives both the stated weak limit and, by the standard argument (the credible interval endpoints converge to the Wald endpoints of the oracle model), the asymptotic coverage claim for credible intervals of continuous linear functionals.
\end{proof}

\begin{corollary}[Joint anisotropic Gaussian approximation at the pseudo-true center]
\label{cor:joint_bvm}
Let the assumptions and the smoothing window of Theorem~\ref{thm:effect_bvm} be in force. With $h=D_n^{-1}(\tilde\eta-\tilde\eta_n^\star)$ for $D_n=\mathrm{diag}\big(n^{-1/2}I_{p+r+1},(\tau_n/n)^{1/2}I_{q-1}\big)$, and with $\tilde\Delta_n=D_n\sum_{i=1}^n\nabla\tilde\ell_n(\tilde\eta_n^\star;O_i)$ and $\tilde I_n=n\,D_nI_{\tau_n}(\tilde\eta_n^\star)D_n$,
\[
\big\|\Pi^{(\tau_n)}\big(h\in\cdot\mid\mathcal D_n\big)-\mathcal N\big(\tilde I_n^{-1}\tilde\Delta_n,\ \tilde I_n^{-1}\big)\big\|_{\mathrm{TV}}\xrightarrow[n\to\infty]{P_0}0 .
\]
The centering is the pseudo-true point $\tilde\eta_n^\star$ and the covariance is the inverse working information rather than a sandwich, so no frequentist calibration at $\theta_0$ is implied, and by Remark~\ref{rem:boundary_shift} no joint analogue centered at $\theta_0$ exists on this window.
\end{corollary}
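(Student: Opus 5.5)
The statement to prove is Corollary~\ref{cor:joint_bvm}, and it has in effect already been established as the intermediate display \eqref{eq:joint_tv_gaussian} in Step~2 of the proof of Theorem~\ref{thm:effect_bvm}. The plan is to isolate that argument and verify that it uses nothing specific to the regression block.

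\textbf{Step 1: localization.} Theorem~\ref{thm:tau_contraction_new} gives localization of the posterior of $h=D_n^{-1}(\tilde\eta-\tilde\eta_n^\star)$ to $\mathcal H_n=\{\|h\|\le M_0\log n\}$. This holds because the box $\|a-a_n^\star\|\le M_0\log n/\sqrt n$, $\|\theta-\theta_n^\star\|\le M_0\sqrt{\tau_n}\log n/\sqrt n$ maps, after the chart's bounded distortion, into a multiple of $\mathcal H_n$.

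\textbf{Step 2: anisotropic LAN.} Invoke the anisotropic LAN expansion \eqref{eq:aniso_lan}. It rests on three inputs:
\begin{itemize}
\item the derivative envelopes \eqref{eq:derivative_envelopes};
\item the fact that $D_n\mathcal H_n$ stays inside the tube of Lemma~\ref{lem:aniso_information}, which uses $n\tau_n/(\log n)^6\to\infty$;
\item Lemma~\ref{lem:aniso_information}, which gives $c\preceq\tilde I_n\preceq C$. The $aa$ and $\tau_n$-scaled $\vartheta\vartheta$ blocks have eigenvalues in $[c,C]$, and the cross block is $O(\sqrt{\tau_n})$, so a Schur-complement argument preserves the two-sided bound for small $\tau_n$.
\end{itemize}

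\textbf{Step 3: Laplace/TV argument.} Repeat Step~2 of the proof of Theorem~\ref{thm:mis_contraction_fixed_tau_new} verbatim. On $\mathcal H_n$ the log posterior density ratio to $\mathcal N(\tilde I_n^{-1}\tilde\Delta_n,\tilde I_n^{-1})$, restricted to $\mathcal H_n$, is constant up to $o_{P_0}(1)$. Prior continuity at $\tilde\eta_0$ enters here: $\sup_{\mathcal H_n}\|D_nh\|\to0$ and $\tilde\eta_n^\star\to\tilde\eta_0$ by Proposition~\ref{prop:eta_star_bias_rate_new}, so the prior density ratio tends to one. Next, $\tilde\Delta_n=O_{P_0}(1)$, because its blocks have zero mean by stationarity at the interior pseudo-true point and bounded covariance ($\Var(\nabla_a\tilde\ell_n)\le C$ and $\tau_n\Var(\nabla_\vartheta\tilde\ell_n)\le C$ from the $L^2$ envelopes). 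Together with the eigenvalue bounds, this means the unrestricted Gaussian puts mass $1-o_{P_0}(1)$ on $\mathcal H_n$. Removing the two conditionings by the triangle inequality yields the total-variation claim.

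\textbf{Remaining qualitative assertions.}
\begin{itemize}
\item The centering at $\tilde\eta_n^\star$ and the covariance $\tilde I_n^{-1}$ are read off directly.
\item The covariance is not a sandwich: the $\vartheta$-block score variance need not equal $\tau_n I_{\tau_n,\vartheta\vartheta}$, since the information equality fails in the boundary layer. So no calibration statement at $\theta_0$ follows.
\item For the nonexistence of a joint analogue centered at $\theta_0$, cite Remark~\ref{rem:boundary_shift}. In $\vartheta$-norming the shift $(\tau_n/n)^{-1/2}(\vartheta_n^\star-\vartheta_0)$ can be of order $(n\tau_n)^{1/2}\to\infty$. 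That remark supplies this only as an order-of-magnitude upper bound, not a matching lower bound, so the claim should be stated as ``no such analogue is guaranteed'' unless a nonzero shift constant is assumed, as in the skewed design.
\end{itemize}

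\textbf{Main obstacle.} The only nonroutine point is ensuring the joint (not merely marginal) uniform LAN remainder is $o_{P_0}(1)$ in the $\vartheta\vartheta$ block. The fluctuation term is of order $(\log n)^2(n\tau_n)^{-1/2}$ and the cubic term of order $(\log n)^3(n\tau_n)^{-1/2}$. These are exactly where the lower window edge $n\tau_n\gg(\log n)^6$ is used, and I would recheck those exponent counts first.
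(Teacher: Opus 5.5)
Your proposal is correct and is essentially the paper's proof, which consists of pointing to display \eqref{eq:joint_tv_gaussian} in Step~2 of the proof of Theorem~\ref{thm:effect_bvm} together with its inputs: the anisotropic LAN \eqref{eq:aniso_lan}, the localization of Theorem~\ref{thm:tau_contraction_new}, prior continuity at $\tilde\eta_0$, and the bounds $c\preceq\tilde I_n\preceq C$ from Lemma~\ref{lem:aniso_information}. Your caveat on the statement's last sentence is also well taken: Proposition~\ref{prop:eta_star_bias_rate_new} gives only the upper bound $\|\theta_n^\star-\theta_0\|\le C\tau_n$, and the paper's own quadrature finds the shift negligible for the symmetric Gaussian design, so ruling out a $\theta_0$-centered joint analogue requires the shift not to be $o((\tau_n/n)^{1/2})$, which holds, for example, when the shift is of exact order $\tau_n$.
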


\begin{proof}
The display is established as \eqref{eq:joint_tv_gaussian} in Step~2 of the proof of Theorem~\ref{thm:effect_bvm}, from the LAN expansion \eqref{eq:aniso_lan} of Step~1, the localization of Theorem~\ref{thm:tau_contraction_new}, prior continuity at $\tilde\eta_0$, and the two-sided bounds $c\preceq\tilde I_n\preceq C$ of Lemma~\ref{lem:aniso_information}.
\end{proof}
\section{Additional simulation results}
\label{sec:additional_simulation}

This section collects the supporting material for the simulation suite of Section~\ref{sec:simulations}. It gives the full interior-schedule results for the effect and boundary blocks, the redesigned near-null study, the inactive-coordinate summary for the high-dimensional setting, the MCMC diagnostic summaries, the absorbable specification with a global treatment effect, and the stress design under which the smoothing window of Theorem~\ref{thm:effect_bvm} binds.

\subsection{The comparison study at the representative sample size}

The representative comparison is now reported in main-text Table~\ref{tab:sim_comparisons}, where it directly supports Section~\ref{sec:sim_comparisons}. That table uses a replicate set independent of the full coverage grid, so small differences such as coverage $0.944$ there against $0.952$ in Supplementary Table~\ref{tab:sim_gamma_coverage} reflect Monte Carlo error rather than a methodological discrepancy. The diagnostic below explains the optimization failure of the smoothed M estimator in that comparison.

\subsection{Optimization diagnostic for the smoothed M estimator}
\label{sec:mle_diagnostic}

The smoothed M estimator of Table~\ref{tab:sim_comparisons} maximizes the identical collapsed working likelihood the posterior is built on, over $(\beta,\gamma,\log\sigma^2)$ and a local chart of the hemisphere, by quasi-Newton search with central-difference gradients from five random hemisphere starts, keeping the best objective value and recording nonconverged runs as estimates, which mirrors a standard applied workflow. A dedicated diagnostic at $n=1000$ on the interior schedule reruns all $500$ replicates under the published protocol and again with the true direction added as one further start. The published protocol reproduces the reported attenuation (bias $-0.183$ against $-0.185$ in Table~\ref{tab:sim_comparisons}, the difference reflecting independent replicate sets). In $60$ percent of replicates the five-start search terminates in a basin whose objective is worse than the one found from the oracle start, by a median of $72$ log-likelihood units, and the attenuation is confined to these replicates. With the oracle start available the bias is $+0.002$, matching the posterior mean bias on the same data, and in only $7.4$ percent of replicates does any spurious basin beat the oracle-started fit, so the working likelihood itself has no material pathology at this design. The sign convention that reports $(-\theta,-\gamma)$ when the fitted direction lands in the wrong hemisphere never fired in these replicates. The gap is thus an optimization failure of multistart quasi-Newton search on a nonconvex objective whose information about $\theta$ concentrates in a boundary layer of width of order $\tau$, while the great-circle slice sampler traverses the sphere globally within the same likelihood. Warm starts or a much larger number of restarts repair the point estimator, and the undercoverage of the sandwich intervals in Table~\ref{tab:sim_comparisons} is the interval consequence of the same entrapped point estimates. The replication script and per-replicate objective values are included with the simulation code.

\subsection{Coverage of the regression block across the grid}

Table~\ref{tab:sim_gamma_coverage} reports empirical coverage of the central $95\%$ credible interval for $\gamma$ across the full smoothing-schedule grid, supporting Section~\ref{sec:sim_effect_block}.

\begin{table}[!htbp]
  \centering
  \caption{Empirical coverage of the central $95\%$ credible interval for the treatment-effect contrast $\gamma$, by smoothing-schedule exponent $\rho$ (rows, $\tau_n=n^{-\rho}$) and sample size $n$ (columns), under the canonical design \eqref{eq:sim_dgp} with $500$ replicates per cell and the collapsed update throughout. Interior schedules ($\rho\in\{2/3,5/6\}$) lie inside the window \eqref{eq:window}, $\rho\le1/2$ above it, $\rho=1$ below it. Nominal coverage is $0.95$. Monte Carlo standard errors are at most $0.013$.}
  \label{tab:sim_gamma_coverage}
  \begin{adjustbox}{max width=\textwidth}
  \begin{tabular}{lccccc}
    \toprule
    & \multicolumn{5}{c}{Sample size $n$}\\
    \cmidrule(lr){2-6}
    Schedule $\rho$ & $250$ & $500$ & $1000$ & $2000$ & $4000$\\
    \midrule
    $1/3$ & $0.954$ & $0.966$ & $0.956$ & $0.952$ & $0.950$\\
    $1/2$ & $0.958$ & $0.936$ & $0.944$ & $0.958$ & $0.936$\\
    $2/3$ & $0.948$ & $0.944$ & $0.952$ & $0.958$ & $0.946$\\
    $5/6$ & $0.966$ & $0.956$ & $0.936$ & $0.948$ & $0.954$\\
    $1$   & $0.962$ & $0.936$ & $0.954$ & $0.956$ & $0.950$\\
    \bottomrule
  \end{tabular}
  \end{adjustbox}
\end{table}

\subsection{Boundary behavior in detail}

This subsection gives the detailed boundary study summarized in Section~\ref{sec:sim_boundary}. This study probes Theorem~\ref{thm:tau_contraction_new} and Proposition~\ref{prop:eta_star_bias_rate_new}. At the interior schedule the boundary is recovered rapidly and identically under the symmetric and skewed designs (Table~\ref{tab:sim_boundary}). The mean angular error to $\theta_0$ falls from $0.103$ radians at $n=250$ to $0.0063$ at $n=4000$, an empirical scaling of almost exactly $n^{-1}$, while subgroup misclassification falls from $2.7$ percent to $0.16$ percent and the stability metric $\lambda_{\max}(M)$ rises from $0.986$ to $0.99994$. The error sits within a factor of roughly six to ten of the nominal localization scale $\sqrt{\tau_n/n}$, consistent with the logarithmic factors in Theorem~\ref{thm:tau_contraction_new}. The angular error to the pseudo-true direction $\theta^\star(\tau_n)$ agrees with the error to $\theta_0$ to four decimals in both designs, because the quadrature-verified smoothing shift (at most $8\times10^{-4}$ radians for the skewed design over this range of $\tau_n$) is one percent of the total error. At sample sizes and smoothing scales accessible to simulation, total boundary error is therefore dominated by posterior spread rather than by the smoothing shift, in the skewed design as well as the symmetric one. The shift itself is a deterministic population quantity, and it is quantified by the quadrature protocol of Supplementary Section~\ref{sec:pseudo_true_discussion}, where the skewed design exhibits it with a fitted log-log slope of $0.86$ in $\tau$. Monte Carlo and quadrature thus resolve complementary components of the boundary error, the spread and the shift respectively.

\subsection{Full numerical results}
\label{sec:numerical_results_sim}

Table~\ref{tab:sim_effect_full} reports the full interior-schedule results for both design variants. The per-coordinate regression metrics reduce to the treatment contrast $\gamma$, since the baseline coefficients are nuisance parameters whose draws are not stored by the suite. Boundary uncertainty is summarized by the angular errors, the misclassification rate, and $\lambda_{\max}(M)$, and the angular error is reported against both the hard-threshold direction $\theta_0$ and the pseudo-true direction $\theta^\star(\tau_n)$.

\begin{table*}
  \centering
  \caption{Interior-schedule results ($\tau_n=n^{-2/3}$, $500$ replicates per cell) for the symmetric (Gaussian) and skewed designs.}
  \begin{adjustbox}{max width=\textwidth}
  \begin{tabular}{l c ccccc cc cc}
    \toprule
    Variant & $n$ & Coverage & Bias & RMSE & CI length & sd ratio & Ang.\ to $\theta_0$ & Ang.\ to $\theta^\star$ & Misclass. & $\lambda_{\max}(M)$ \\
    \midrule
    \multirow{5}{*}{Symmetric}
      & $250$  & $0.948$ & $0.014$  & $0.151$ & $0.587$ & $1.021$ & $0.103$  & $0.103$  & $0.027$  & $0.986$ \\
      & $500$  & $0.944$ & $0.010$  & $0.108$ & $0.407$ & $1.008$ & $0.049$  & $0.049$  & $0.012$  & $0.997$ \\
      & $1000$ & $0.952$ & $-0.002$ & $0.068$ & $0.286$ & $1.004$ & $0.025$  & $0.025$  & $0.006$  & $0.999$ \\
      & $2000$ & $0.958$ & $-0.000$ & $0.051$ & $0.201$ & $1.002$ & $0.012$  & $0.012$  & $0.003$  & $0.9998$ \\
      & $4000$ & $0.946$ & $-0.001$ & $0.036$ & $0.142$ & $1.003$ & $0.0063$ & $0.0064$ & $0.0016$ & $0.9999$ \\
    \midrule
    \multirow{5}{*}{Skewed}
      & $250$  & $0.946$ & $0.011$  & $0.146$ & $0.574$ & $1.019$ & $0.107$  & $0.107$  & $0.026$  & $0.983$ \\
      & $500$  & $0.952$ & $0.009$  & $0.104$ & $0.399$ & $1.007$ & $0.056$  & $0.056$  & $0.013$  & $0.996$ \\
      & $1000$ & $0.954$ & $0.001$  & $0.071$ & $0.280$ & $1.003$ & $0.027$  & $0.027$  & $0.006$  & $0.999$ \\
      & $2000$ & $0.956$ & $-0.000$ & $0.050$ & $0.197$ & $1.002$ & $0.013$  & $0.013$  & $0.003$  & $0.9998$ \\
      & $4000$ & $0.974$ & $0.004$  & $0.034$ & $0.139$ & $1.002$ & $0.0066$ & $0.0066$ & $0.0015$ & $0.9999$ \\
    \bottomrule
  \end{tabular}
  \end{adjustbox}
  \label{tab:sim_effect_full}
\end{table*}

The interior-schedule boundary summary discussed in Section~\ref{sec:sim_boundary} is collected in Table~\ref{tab:sim_boundary}.

\begin{table}[!htbp]
  \centering
  \caption{Boundary behavior at the interior schedule $\tau_n=n^{-2/3}$ under the canonical design \eqref{eq:sim_dgp}. Rows report the mean angular error to $\theta_0$, the mean angular error to the pseudo-true direction $\theta^\star(\tau_n)$, the predicted localization scale $\sqrt{\tau_n/n}$, the subgroup misclassification probability, and the stability metric $\lambda_{\max}(M)$. Symmetric design; the skewed design is indistinguishable (Table~\ref{tab:sim_effect_full}).}
  \label{tab:sim_boundary}
  \begin{adjustbox}{max width=\textwidth}
  \begin{tabular}{lccccc}
    \toprule
    & \multicolumn{5}{c}{Sample size $n$}\\
    \cmidrule(lr){2-6}
    Metric & $250$ & $500$ & $1000$ & $2000$ & $4000$\\
    \midrule
    Angular error to $\theta_0$ & $0.103$ & $0.049$ & $0.025$ & $0.012$ & $0.0063$\\
    Angular error to $\theta^\star(\tau_n)$ & $0.103$ & $0.049$ & $0.025$ & $0.012$ & $0.0064$\\
    Localization scale $\sqrt{\tau_n/n}$ & $0.0100$ & $0.0056$ & $0.0032$ & $0.0018$ & $0.0010$\\
    Misclassification & $0.027$ & $0.012$ & $0.006$ & $0.003$ & $0.0016$\\
    $\lambda_{\max}(M)$ & $0.986$ & $0.997$ & $0.999$ & $0.9998$ & $0.9999$\\
    \bottomrule
  \end{tabular}
  \end{adjustbox}
\end{table}

\subsection{Operating characteristics of the protocol in detail}

This subsection reports the detailed protocol operating characteristics summarized in Section~\ref{sec:sim_protocol}. This study evaluates the reporting protocol of Section~\ref{sec:reporting_protocol} as a decision rule. We vary the true contrast over $\gamma_0\in\{0,0.25,0.5,1,1.5,2\}$ at the clinical threshold $\delta=1$ and $p_{\mathrm{report}}=0.9$, with $n\in\{500,1000,2000\}$ at the interior schedule and $500$ replicates per cell (Table~\ref{tab:sim_protocol}). The protocol produced no false reports in the $500$ replicates. At every subclinical contrast $\gamma_0\in\{0,0.25,0.5\}$ and every $n$ the report probability is $0.000$, bounding the false-report probability near zero at these designs rather than establishing that it is zero, while the boundary posterior remains diffuse under the null ($\lambda_{\max}(M)\approx0.29$ against the dispersion floor $0.2$) and concentrates as the contrast grows, reaching $0.9998$ at $\gamma_0=2$. At the knife edge $\gamma_0=\delta$ the report probability is $0.106$, $0.078$, and $0.094$ at the three sample sizes, against the limit $1-p_{\mathrm{report}}=0.10$ predicted by Proposition~\ref{prop:protocol_consistency}, and the mean posterior tail probability is $0.49$ to $0.50$ with a Brier score of $0.33$, matching the value $1/3$ implied by the uniform limit. At $\gamma_0\in\{1.5,2\}$ the report probability is $0.996$ to $1.000$. The membership probabilities $q(z)$ are well calibrated, with reliability bins tracking the diagonal to within $0.03$ at the knife edge. The comparison with the test-then-report workflow of \citet{Fan2017} isolates the value of the joint posterior. The test itself is well behaved, with size $0.034$ to $0.064$ under the null and power above $0.93$ at $\gamma_0\ge0.5$. The plug-in confidence interval reported after the test is not. Under the null it covers the true contrast $34$ percent of the time, and under strong signal its coverage degrades with $n$, reaching $0.078$ at $\gamma_0=1.5$ and $n=2000$. The protocol and the test agree about when heterogeneity exists. They differ in that the posterior quantifies the effect honestly after the decision, and the plug-in workflow does not.

\subsection{Protocol diagnostics at the representative sample size}

Main-text Table~\ref{tab:sim_protocol} compares the reporting and interval operating characteristics at $n=1000$. Table~\ref{tab:protocol_ope_char} records the accompanying posterior diagnostics. The knife-edge Brier score $0.327$ matches the value $1/3$ implied by the uniform limit of Proposition~\ref{prop:protocol_consistency}, while $\lambda_{\max}(M)$ traces the transition from a diffuse boundary under the null to a concentrated boundary under strong heterogeneity.

\begin{table}[!htbp]
  \centering
  \caption{Posterior diagnostics for the reporting experiment of main-text Table~\ref{tab:sim_protocol} at $n=1000$. For each true contrast $\gamma_0$ we report the Brier score of $\Pi(H_\delta\mid\mathcal D_n)$ and the mean boundary stability $\lambda_{\max}(M)$.}
  \label{tab:protocol_ope_char}
  \begin{adjustbox}{max width=0.65\textwidth}
  \begin{tabular}{lcc}
    \toprule
    $\gamma_0$ & Brier of $\Pi(H_\delta)$ & $\lambda_{\max}(M)$\\
    \midrule
    $0$    & $0.000$ & $0.287$\\
    $0.25$ & $0.000$ & $0.493$\\
    $0.5$  & $0.000$ & $0.843$\\
    $1$    & $0.327$ & $0.989$\\
    $1.5$  & $0.000$ & $0.998$\\
    $2$    & $0.000$ & $0.999$\\
    \bottomrule
  \end{tabular}
  \end{adjustbox}
\end{table}

\subsection{Near-null treatment effect}
\label{sec:simulation_under_null}

As discussed in Section~\ref{sec:reporting_protocol}, the boundary direction $\theta$ is weakly identified when the heterogeneity contrast is near zero. The redesigned near-null study sets $\gamma_0=0$ in the canonical design \eqref{eq:sim_dgp} with the parametric linear baseline and evaluates it across the sample-size grid at the interior schedule. The anticipated pattern is confirmed. The posterior of $\gamma$ has small bias ($0.001$ to $0.003$) and near-nominal coverage ($0.956$ to $0.960$), while the boundary posterior stays diffuse because the likelihood carries essentially no directional information, with $\lambda_{\max}(M)\approx0.29$ against the dispersion floor $1/q=0.2$ (Table~\ref{tab:sim_near_null}). The reporting protocol suppresses subgroup claims throughout, and no replicate at any sample size reported a subgroup, so the Bayes action is $\mathsf a_0$ everywhere and no boundary summary is issued.

\begin{table*}
  \centering
  \caption{Redesigned near-null study, $\gamma_0=0$ in the canonical design \eqref{eq:sim_dgp} with the parametric linear baseline at the interior schedule. For the effect contrast $\gamma$ we report bias, RMSE, and CP of the central $95\%$ credible interval. We also report the protocol report probability $P(\text{report})$ and the mean stability metric $\lambda_{\max}(M)$, which sits near its lower limit under weak identification.}
  \begin{adjustbox}{max width=\textwidth}
  \begin{tabular}{c rrr rr}
    \toprule
    & \multicolumn{3}{c}{Effect contrast $\gamma$} & \multicolumn{2}{c}{Protocol} \\
    \cmidrule(lr){2-4} \cmidrule(lr){5-6}
    \textbf{Sample size $n$} & Bias & RMSE & CP & $P(\text{report})$ & $\lambda_{\max}(M)$ \\
    \midrule
    $500$  & $0.003$ & $0.130$ & $0.958$ & $0.000$ & $0.295$ \\
    $1000$ & $0.002$ & $0.088$ & $0.960$ & $0.000$ & $0.287$ \\
    $2000$ & $0.001$ & $0.066$ & $0.956$ & $0.000$ & $0.292$ \\
    \bottomrule
  \end{tabular}
  \end{adjustbox}
  \label{tab:sim_near_null}
\end{table*}

\subsection{Inactive change-plane coordinates}
\label{sec:sim_inactive}

The high-dimensional study of Section~\ref{sec:simulation_variable_selection} augments the boundary index with $50$ pure-noise covariates whose true loadings are zero, and fits the normalized horseshoe prior on $\theta$. For the resulting $q=55$ boundary coordinates the suite stores boundary summaries rather than coordinatewise credible intervals, so we report the accuracy of the estimated principal direction on the active coordinates together with the subgroup misclassification and the boundary stability. Table~\ref{tab:sim_inactive} shows that the noise coordinates are absorbed cleanly. The angular error on the active coordinates falls from $0.351$ at $n=250$ to $0.022$ at $n=4000$, misclassification falls from $10.7$ to $0.9$ percent, and $\lambda_{\max}(M)$ rises toward one, so the horseshoe recovers the active boundary while shrinking the irrelevant modifiers toward zero.

\begin{table*}
  \centering
  \caption{High-dimensional setting ($50$ pure-noise boundary covariates, normalized horseshoe prior, $200$ replicates). Angular error of the posterior principal direction on the active coordinates, subgroup misclassification, and boundary stability.}
  \begin{adjustbox}{max width=\textwidth}
  \begin{tabular}{c rrr}
    \toprule
    \textbf{Sample size $n$} & Active angular error & Misclassification & $\lambda_{\max}(M)$ \\
    \midrule
    $250$  & $0.351$ & $0.107$ & $0.788$ \\
    $500$  & $0.095$ & $0.033$ & $0.948$ \\
    $1000$ & $0.044$ & $0.016$ & $0.982$ \\
    $2000$ & $0.026$ & $0.010$ & $0.991$ \\
    $4000$ & $0.022$ & $0.009$ & $0.987$ \\
    \bottomrule
  \end{tabular}
  \end{adjustbox}
  \label{tab:sim_inactive}
\end{table*}

\subsection{MCMC diagnostics}
\label{sec:sim_diagnostics}

The uniformity just described required the collapsed update. An earlier run of the wide-gate cells with the augmented sampler failed convergence diagnostics, with rank-normalized $\widehat R$ for $\gamma$ between $1.38$ and $1.84$, bulk boundary-functional effective sample sizes as low as $9$, and spurious attenuation of the posterior mean (Table~\ref{tab:sim_diagnostics} retains those fits). The collapsed rerun gives $\widehat R=1.000$ with effective sample sizes near twenty thousand and the nominal operating characteristics reported above. The wide-gate regime is thus where computation rather than the estimand degrades, the failure mode the diagnostics workflow of Section~\ref{sec:inference} is designed to catch.

Each fit uses four dispersed chains, and convergence is assessed by rank-normalized $\widehat R$ and effective sample sizes (ESS) computed on sign-invariant functionals of the boundary direction, namely $(z^\top\theta)^2$, the entries of $\theta\theta^\top$, and the membership probabilities $q(z)$ on a covariate grid. Table~\ref{tab:sim_diagnostics} contrasts the collapsed update at $\tau=0.035$ with the augmented update at the wide gate $\tau=0.1$. The collapsed update is uniformly reliable, with $\widehat R$ for $\gamma$ at $1.000$, bulk ESS above $18000$, and boundary diagnostics well inside their targets at every sample size. The augmented update fails at the wide gate, with $\widehat R$ for $\gamma$ between $1.38$ and $1.44$, boundary $\widehat R$ between $1.64$ and $1.73$, and bulk boundary ESS as low as $9$. The failure is the computational finding reported in Section~\ref{sec:sim_effect_block}.

\begin{table*}
  \centering
  \caption{Convergence diagnostics per fit (four dispersed chains, pooled over replicates). The collapsed update is uniformly reliable, including at the sharpest schedules (not shown, $\widehat R$ at $1.000$ down to $\tau_n=0.00025$). The augmented update fails at wide gates, which is the computational finding reported in Section~\ref{sec:sim_effect_block}.}
  \begin{adjustbox}{max width=\textwidth}
  \begin{tabular}{c cccc cccc}
    \toprule
    & \multicolumn{4}{c}{Collapsed update, $\tau=0.035$} & \multicolumn{4}{c}{Augmented update, $\tau=0.1$} \\
    \cmidrule(lr){2-5} \cmidrule(lr){6-9}
    $n$ & $\widehat R_\gamma$ & bulk ESS$_\gamma$ & $\max\widehat R_\theta$ & $\min$ bulk ESS$_\theta$ & $\widehat R_\gamma$ & bulk ESS$_\gamma$ & $\max\widehat R_\theta$ & $\min$ bulk ESS$_\theta$ \\
    \midrule
    $250$  & $1.000$ & $18252$ & $1.003$ & $2463$ & $1.384$ & $450$  & $1.732$ & $9$  \\
    $500$  & $1.000$ & $21839$ & $1.002$ & $3544$ & $1.396$ & $711$  & $1.666$ & $11$ \\
    $1000$ & $1.000$ & $22828$ & $1.001$ & $4499$ & $1.403$ & $1145$ & $1.642$ & $15$ \\
    $2000$ & $1.000$ & $23006$ & $1.001$ & $5234$ & $1.435$ & $1523$ & $1.725$ & $16$ \\
    $4000$ & $1.000$ & $23118$ & $1.001$ & $5557$ & $1.419$ & $1548$ & $1.715$ & $18$ \\
    \bottomrule
  \end{tabular}
  \end{adjustbox}
  \label{tab:sim_diagnostics}
\end{table*}

The monitored functionals are $\widehat R$ and bulk and tail ESS for $\gamma$ and for the sign-invariant boundary functionals. The collapsed rerun of the wide-gate cells gives $\widehat R=1.000$ with bulk effective sample sizes between $19{,}000$ and $23{,}000$, and the same wide-gate failure mechanism affected the oversmoothed application fits of Section~\ref{sec:analysis} before their collapsed rerun.

\subsection{Robustness and timing tables}
\label{sec:sim_robust_timing_tables}

The full grids underlying the main-text summary are reported here. Three departures from the canonical design test robustness and scale, the first replacing the linear baseline by the nonlinear mean
\[
\mu_0(W_i)=1.0-0.3\,W_{i2}W_{i3}-0.4\exp(-|W_{i4}|)-0.65\log(W_{i5}^2)+0.4\sin(\pi W_{i2}),
\]
which the parametric baseline cannot capture but the semiparametric BART variant of Section~\ref{sec:semiparametric} absorbs, the second augmenting the boundary index with $50$ pure-noise covariates under the normalized horseshoe prior on $\theta$ (a computational extension outside the formal theory, Remark~\ref{rem:scope_theory}), and the third profiling per-sweep cost with and without the collapsed update. Under the nonlinear baseline the parametric specification degrades as misspecification theory predicts, with coverage for $\gamma$ falling from $0.845$ at $n=250$ to $0.155$ at $n=4000$, while the BART variant holds $0.92$ to $0.96$ at every $n$. In the high-dimensional setting the normalized horseshoe delivers coverage $0.93$ to $0.96$ for $\gamma$, with the angular error on the active coordinates falling from $0.35$ to $0.022$ and subgroup misclassification from $10.7$ to $0.9$ percent. The collapsed update costs $2.3$ to $2.6$ times the augmented sweep, both linear in $n$ ($2.7$ versus $1.0$ seconds per thousand sweeps at $n=4000$ on one core).

Table~\ref{tab:sim_robust_timing} collects the robustness and scalability summary discussed in Section~\ref{sec:simulation_variable_selection}, covering the nonlinear-baseline and high-dimensional departures from the canonical design together with the per-sweep timing of the sampler.

\begin{table}[!htbp]
  \centering
  \caption{Robustness and scalability summary. The nonlinear-baseline rows report the coverage of the $95\%$ credible interval for $\gamma$ under the parametric and BART baselines when $\mu_0$ is nonlinear. The high-dimensional rows report the coverage for $\gamma$ and the mean angular error on the active coordinates under the normalized horseshoe with $50$ noise covariates. The timing row reports seconds per $1000$ sweeps with the collapsed update.}
  \label{tab:sim_robust_timing}
  \begin{adjustbox}{max width=\textwidth}
  \begin{tabular}{llccccc}
    \toprule
    & & \multicolumn{5}{c}{Sample size $n$}\\
    \cmidrule(lr){3-7}
    Setting & Quantity & $250$ & $500$ & $1000$ & $2000$ & $4000$\\
    \midrule
    Nonlinear baseline (Parametric) & $\gamma$ coverage & $0.845$ & $0.775$ & $0.565$ & $0.323$ & $0.155$\\
    Nonlinear baseline (BART) & $\gamma$ coverage & $0.920$ & $0.945$ & $0.960$ & $0.939$ & $0.940$\\
    High dimension (horseshoe) & $\gamma$ coverage & $0.930$ & $0.960$ & $0.955$ & $0.940$ & $0.955$\\
    High dimension (horseshoe) & active angular error & $0.351$ & $0.095$ & $0.044$ & $0.026$ & $0.022$\\
    Scalability & sec / $1000$ sweeps & $0.181$ & $0.364$ & $0.680$ & $1.287$ & $2.672$\\
    \bottomrule
  \end{tabular}
  \end{adjustbox}
\end{table}

\subsection{The absorbable specification with a global treatment effect}
\label{sec:absorbable_results}

This subsection reports the fourth departure of Section~\ref{sec:simulation_variable_selection}, which fits the recommended absorbable specification of Section~\ref{sec:model_identification} on data carrying a genuine global treatment effect. The data generating process is the canonical design \eqref{eq:sim_dgp} augmented by an ungated term $\delta_0 X_i$ with $\delta_0=1$, so that $Y_i=W_i^\top\beta_0+\delta_0 X_i+\gamma_0 X_i\mathbbm 1\{Z_i^\top\theta_0\ge0\}+\varepsilon_i$ under the canonical covariates, with $\gamma_0=2$ in the absorbable variant and $\gamma_0=-2$ in the opposite-signed variant. The fit places the treatment as the last column of the design $W$, so the absorbable specification carries the ungated effect $\delta$ in the baseline block, $\gamma$ is the between-subgroup contrast in the treatment effect, and $\delta+\gamma$ is the effect inside the subgroup. The hemisphere convention is imposed by the canonicalized full-sphere sampler with profile-scan initialization of Supplementary Section~\ref{sec:algorithm_details}. Each cell uses $500$ replicates at the fixed gate $\tau=0.1$ and at the interior schedule $\tau_n=n^{-2/3}$, and the results are collected in Table~\ref{tab:sim_absorbable}.

The interior-schedule cells reproduce the oracle-grade behavior of the canonical effect study. For $n\ge500$ the posterior mean bias of the contrast stays below $0.005$ in absolute value and the ratio of its posterior standard deviation to the known-subgroup oracle standard deviation lies between $1.00$ and $1.02$, with coverage of the $95\%$ credible interval from $0.936$ to $0.950$ for $\gamma$ and from $0.944$ to $0.958$ for $\delta$. The fixed $\tau=0.1$ cells show the expected small transported bias in the contrast, ranging from $+0.011$ to $+0.026$ across the grid, of the same sign and magnitude as the oversmoothing shift documented for the canonical design in Section~\ref{sec:sim_effect_block}. The opposite-signed variant is symmetric, its contrast and outside-effect biases reversing sign while coverage stays nominal. At $n=250$ the posterior for the contrast is wider than the oracle, with an sd ratio of $1.38$ at the fixed gate and $2.06$ at the interior schedule, because with weak boundary information the canonicalized posterior retains orbit-smearing mass around the two mirror modes, a conservative rather than an anticonservative deviation, with coverage there of $0.958$ to $0.968$.

\begin{table}[!htbp]
  \centering
  \caption{Absorbable specification with a global treatment effect, the canonical design \eqref{eq:sim_dgp} augmented by $\delta_0 X_i$ with $\delta_0=1$, $500$ replicates per cell. The variant \textup{absorbable} has $\gamma_0=2$ and \textup{absorbable\_opp} has $\gamma_0=-2$. Columns report the smoothing scale $\tau$, empirical coverage and posterior-mean bias of the central $95\%$ credible interval for the between-subgroup contrast $\gamma$, the ratio of the posterior standard deviation of $\gamma$ to the known-subgroup oracle standard deviation, coverage and bias for the ungated effect $\delta$, and the mean angular error to $\theta_0$. Schedule rows use the interior schedule $\tau_n=n^{-2/3}$. Cell-averaged rank-normalized $\widehat R$ for $\gamma$ is at most $1.003$, with per-replicate values above $1.05$ in three of six thousand replicates, all at $n=250$ and coverage Monte Carlo standard errors are about $0.008$ to $0.012$.}
  \label{tab:sim_absorbable}
  \begin{adjustbox}{max width=\textwidth}
  \begin{tabular}{l c c cc c cc c}
    \toprule
    Variant & $n$ & $\tau$ & $\gamma$ cover & $\gamma$ bias & sd ratio & $\delta$ cover & $\delta$ bias & Ang.\ to $\theta_0$\\
    \midrule
    \multirow{10}{*}{Absorbable}
      & $250$  & $0.100$ & $0.968$ & $0.011$  & $1.379$ & $0.964$ & $0.000$  & $0.112$\\
      & $250$  & $0.025$ & $0.958$ & $-0.001$ & $2.065$ & $0.954$ & $0.000$  & $0.108$\\
      & $500$  & $0.100$ & $0.956$ & $0.025$  & $1.026$ & $0.936$ & $-0.020$ & $0.064$\\
      & $500$  & $0.016$ & $0.946$ & $-0.000$ & $1.016$ & $0.950$ & $-0.005$ & $0.051$\\
      & $1000$ & $0.100$ & $0.936$ & $0.026$  & $1.020$ & $0.938$ & $-0.015$ & $0.040$\\
      & $1000$ & $0.010$ & $0.942$ & $-0.004$ & $1.008$ & $0.952$ & $0.000$  & $0.025$\\
      & $2000$ & $0.100$ & $0.922$ & $0.021$  & $1.017$ & $0.950$ & $-0.012$ & $0.025$\\
      & $2000$ & $0.006$ & $0.950$ & $0.004$  & $1.003$ & $0.950$ & $-0.002$ & $0.012$\\
      & $4000$ & $0.100$ & $0.950$ & $0.021$  & $1.016$ & $0.940$ & $-0.014$ & $0.017$\\
      & $4000$ & $0.004$ & $0.948$ & $-0.001$ & $1.001$ & $0.944$ & $-0.000$ & $0.006$\\
    \midrule
    \multirow{2}{*}{Absorbable\_opp}
      & $1000$ & $0.100$ & $0.950$ & $-0.025$ & $1.019$ & $0.954$ & $0.015$  & $0.038$\\
      & $1000$ & $0.010$ & $0.936$ & $0.001$  & $1.009$ & $0.958$ & $-0.001$ & $0.025$\\
    \bottomrule
  \end{tabular}
  \end{adjustbox}
\end{table}

\subsection{The stress design and the smoothing window}
\label{sec:stress_design}

The simulations of Section~\ref{sec:sim_effect_block} find the window of Theorem~\ref{thm:effect_bvm} conservative for the canonical design \eqref{eq:sim_dgp}, whose transported smoothing bias is intrinsically small. A systematic quadrature search over the canonical model class explains why and locates the exception. The mixture working likelihood classifies boundary points by their residuals once the covariate distribution separates the two sides of the boundary, so the transported bias is largest at moderate signal-to-noise ratio with covariate mass concentrated asymmetrically inside the smoothing gray zone. The stress design realizes that configuration. It keeps the canonical structure of \eqref{eq:sim_dgp}, with boundary covariates $Z_i=W_i$, Gaussian coordinates $W_{i2},W_{i3},W_{i4}$, treatment $X_i\sim\mathrm{Bernoulli}(0.5)$, and Gaussian error, but replaces the fifth boundary covariate by the two-component mixture $S_i=0.37+0.05\,N_i$ with probability $0.3$ and $S_i=0.28+1.1\,N_i$ with probability $0.7$, where $N_i\sim\mathcal N(0,1)$, and sets $\theta_0\propto(0.30,0.10,-0.06,0.08,-1.0)$ normalized to the unit sphere, $\gamma_0=2$, and $\sigma_0^2=2.25$. The boundary then sits at $S$ near $0.30$ with the narrow mixture component parked just inside the smoothing gray zone on one side, giving a subgroup prevalence near $0.45$ and a boundary-score standard deviation near $0.89$. The pseudo-true path $\tau\mapsto\gamma^\star(\tau)$ was computed for this design by the quadrature protocol of Supplementary Section~\ref{sec:pseudo_true_discussion} before any Monte Carlo was run, so the comparison of empirical bias with predicted shift in Table~\ref{tab:sim_stress} is a genuine out-of-sample prediction.

On the interior schedule $\tau_n=n^{-2/3}$ coverage of the central $95\%$ credible interval for $\gamma$ is nominal, from $0.934$ to $0.956$, while under deliberate oversmoothing it degrades in step with the ratio of the transported bias to the shrinking interval, to $0.902$ at $(n,\tau)=(4000,0.1)$, $0.866$ at $(4000,0.15)$, and $0.850$ at $(8000,0.1)$. The empirical posterior-mean bias agrees with the predicted pseudo-true shift to within $0.008$ at every cell, within about one Monte Carlo standard error at the $\tau\ge0.1$ cells that drive the coverage loss and within three at the narrower gates, confirming that the window binds where the precomputable pseudo-true path locates it.

\begin{table}[!htbp]
  \centering
  \caption{Stress design of Supplementary Section~\ref{sec:stress_design}, $500$ replicates per cell with the collapsed update throughout. Columns report empirical coverage of the central $95\%$ credible interval for $\gamma$, the empirical posterior-mean bias, the quadrature pseudo-true shift $\gamma^\star(\tau)-\gamma_0$ computed before the Monte Carlo, the ratio of the posterior standard deviation of $\gamma$ to the oracle standard deviation, and the mean angular error to $\theta_0$. The smallest $\tau$ at each $n$ is the interior schedule $\tau_n=n^{-2/3}$. Nominal coverage is $0.95$. Coverage Monte Carlo standard errors are about $0.009$ to $0.016$ and bias Monte Carlo standard errors about $0.002$ to $0.006$, and the predicted shift for the interior schedule at $n=8000$ was not computed.}
  \label{tab:sim_stress}
  \begin{adjustbox}{max width=\textwidth}
  \begin{tabular}{c c ccccc}
    \toprule
    $n$ & $\tau$ & Coverage & Bias & Predicted shift & sd ratio & Angular error\\
    \midrule
    \multirow{4}{*}{$1000$}
      & $0.010$ & $0.934$ & $0.0081$ & $0.0063$ & $1.022$ & $0.0274$\\
      & $0.035$ & $0.938$ & $0.0156$ & $0.0210$ & $1.035$ & $0.0323$\\
      & $0.100$ & $0.950$ & $0.0428$ & $0.0469$ & $1.071$ & $0.0697$\\
      & $0.150$ & $0.934$ & $0.0597$ & $0.0598$ & $1.089$ & $0.0948$\\
    \midrule
    \multirow{4}{*}{$4000$}
      & $0.004$ & $0.954$ & $0.0042$ & $0.0025$ & $1.006$ & $0.0064$\\
      & $0.035$ & $0.942$ & $0.0132$ & $0.0210$ & $1.025$ & $0.0140$\\
      & $0.100$ & $0.902$ & $0.0451$ & $0.0469$ & $1.059$ & $0.0406$\\
      & $0.150$ & $0.866$ & $0.0548$ & $0.0598$ & $1.079$ & $0.0668$\\
    \midrule
    \multirow{2}{*}{$8000$}
      & $0.0025$ & $0.956$ & $0.0039$ & ---      & $1.003$ & $0.0030$\\
      & $0.100$  & $0.850$ & $0.0460$ & $0.0469$ & $1.058$ & $0.0377$\\
    \bottomrule
  \end{tabular}
  \end{adjustbox}
\end{table}

A companion false-report check confirms that the protocol does not manufacture a reportable subgroup from a genuinely homogeneous benefit. With outside effect $\delta_0=1$ and zero heterogeneity contrast $\gamma_0=0$, fit under the absorbable specification of Section~\ref{sec:model_identification} with the reporting inputs of Section~\ref{sec:sim_protocol}, we ran $500$ replicates at each of $n\in\{500,1000,2000\}$ at the interior schedule, imposing the hemisphere convention by the canonicalized full-sphere update with profile-scan initialization of Supplementary Section~\ref{sec:algorithm_details}. The protocol produced at most one false report across the $1500$ replicates, with report probability $0.000$, $0.000$, and $0.002$ at the three sample sizes (Table~\ref{tab:sim_protocol_null}). The honest null posterior is nonetheless active rather than inert. Its tail probability $\Pi(|\gamma|\ge1)$ sits between $0.651$ and $0.673$ and its direction posterior partially concentrates on noise-carved boundaries, with $\lambda_{\max}(M)$ rising from $0.688$ to $0.727$ across the grid, because under the absorbable specification the model retains the freedom to carve noise subgroups when none exist. The tail probability alone is therefore not a safe evidence summary in this regime, and the demanding bar $p_{\mathrm{report}}=0.9$ is what holds the false-report rate at $0.000$ to $0.002$. An earlier hemisphere-truncated run of this study reported a much smaller null tail near $0.18$ with $\lambda_{\max}(M)$ between $0.42$ and $0.45$, an artifact of chains trapped in a near-homogeneous basin, and the corrected sampler documented in Supplementary Section~\ref{sec:algorithm_details}, which has maximum rank-normalized $\widehat R$ for $\gamma$ of $1.045$ across the $1500$ replicates, while the sign-invariant direction functionals show elevated $\widehat R$ (up to $1.13$, in $7.3$ percent of replicates), reflecting slow exploration of the unidentified direction under the null rather than a failure on the reported functionals, is the one reported here.

\begin{table}[!htbp]
  \centering
  \caption{Protocol false-report check under a homogeneous treatment effect ($\delta_0=1$, $\gamma_0=0$) in the absorbable specification, $500$ replicates per cell at the interior schedule. Columns report the report probability $P(\text{report})$, the mean posterior tail probability $\Pi(H_\delta\mid\mathcal D_n)=\Pi(|\gamma|\ge1\mid\mathcal D_n)$, and the mean boundary stability $\lambda_{\max}(M)$. Fits impose the hemisphere convention by the canonicalized full-sphere update with profile-scan initialization of Supplementary Section~\ref{sec:algorithm_details}, with maximum rank-normalized $\widehat R$ for $\gamma$ of $1.045$ across the $1500$ replicates.}
  \label{tab:sim_protocol_null}
  \begin{adjustbox}{max width=\textwidth}
  \begin{tabular}{cccc}
    \toprule
    $n$ & $P(\text{report})$ & $\Pi(H_\delta\mid\mathcal D_n)$ & $\lambda_{\max}(M)$\\
    \midrule
    $500$  & $0.000$ & $0.651$ & $0.688$\\
    $1000$ & $0.000$ & $0.652$ & $0.707$\\
    $2000$ & $0.002$ & $0.673$ & $0.727$\\
    \bottomrule
  \end{tabular}
  \end{adjustbox}
\end{table}
\section{Additional Analyses of the PREMIER Trial}
\label{sec:additional_analysis_results}

This section collects the supporting material for the two-part PREMIER analysis of Section~\ref{sec:analysis}. The first subsections cover the secondary binary-profile analysis, its decision summaries, profiles, posterior principal direction, and smoothing sensitivity. The later subsections report the primary continuous-covariate analysis, its arm-specific sensitivity, boundary loadings, membership distributions, and convergence diagnostics.

\subsection{Posterior principal direction for the secondary analysis}

Table~\ref{tab:theta_hat_hs_methods} reports the posterior principal direction estimates $\hat{\theta}$ under the horseshoe prior for the secondary binary analysis. Under both baselines the largest loading falls on an interaction term rather than a main effect, and the two baselines agree on it, placing the dominant weight on \texttt{female\_african\_american} ($-0.924$ under BART and $-0.825$ under the parametric baseline) with a secondary parametric loading on \texttt{african\_american\_baseline\_hypertension} ($0.306$). This agreement of the point directions should not be read as a concentrated boundary. The second moment of the boundary direction is diffuse at this smoothing value, with $\lambda_{\max}(M)=0.21$ under BART and $0.18$ under the parametric baseline against the dispersion floor $1/11\approx0.09$, and the Bayes action of Table~\ref{tab:premier_decision_binary} withholds a report, so under the purely discrete $Z$ of this analysis the direction remains set-identified and the loadings are descriptive. Because $\hat{\theta}$ is a direction, its global sign is arbitrary and individual coordinates should not be read as marginal causal effects. The table records which combinations of baseline features would contribute most strongly to any residual heterogeneity.

\begin{table*}
  \centering
  \caption{Posterior principal direction estimates of the effect-modifier direction coefficients $\hat{\theta}$ under the horseshoe prior, by baseline, for the secondary binary analysis.}
  \begin{adjustbox}{width=0.8\textwidth}
  \begin{tabular}{l r r}
    \toprule
    \textbf{Covariate}  & \textbf{BART (HS)} & \textbf{Parametric (HS)} \\
    \midrule
    \texttt{intercept} & 0.123 & 0.269 \\
    \texttt{age\_50plus}  & 0.222 & 0.142 \\
    \texttt{female} & -0.015 & 0.134 \\
    \texttt{african\_american}  & -0.086 & -0.063 \\
    \texttt{baseline\_hypertension} & 0.124 & 0.238 \\
    \texttt{age\_50plus\_female}  & 0.060 & 0.072 \\
    \texttt{age\_50plus\_african\_american} & -0.053 & -0.115 \\
    \texttt{age\_50plus\_baseline\_hypertension} & 0.092 & 0.090 \\
    \texttt{female\_african\_american} & -0.924 & -0.825 \\
    \texttt{female\_baseline\_hypertension} & 0.086 & 0.170 \\
    \texttt{african\_american\_baseline\_hypertension} & 0.192 & 0.306 \\
    \bottomrule
  \end{tabular}
  \end{adjustbox}
  \label{tab:theta_hat_hs_methods}
\end{table*}

\subsection{The secondary binary-profile analysis}
\label{sec:analysis_secondary}
This analysis is retained from the original study design and uses the purely binary profile covariates. Following \citet{Svetkey2005} we include self-reported race, sex, hypertension status, and age above fifty as the baseline covariates $W_i$ and $Z_i$, together with their two-way interactions. Because every component of $Z$ is a binary indicator or an interaction of indicators, the score $Z^\top\theta$ takes at most $16$ distinct values, the boundary direction is only set-identified, and the analysis is inference on the induced partition under $\tau$-regularization rather than on a unique direction (Remark~\ref{rem:nonregular_rates}). We report it at $\tau=0.035$, the upper endpoint of the nominal window for $n=810$, using the same event $H_\delta=\{|\gamma|\ge3\}$ and threshold $p_{\mathrm{report}}=0.9$ as the primary analysis. As in the primary analysis, the treatment indicator is a column of $W_i$, so the baseline carries the treatment effect $\delta$ outside the subgroup and $\gamma$ is the difference in treatment effect between the subgroup and its complement.

At $\tau=0.035$ the posterior probability of clinically meaningful heterogeneity is $0.52$ under the parametric baseline and $0.48$ under BART, so the Bayes action is $\mathsf a_0$ under both, with posterior means of $\gamma$ of $-0.70$ and $-0.50$ mmHg and wide credible intervals (Table~\ref{tab:premier_decision_binary}, with the accompanying membership and density figures). Main-text Table~\ref{tab:premier_profile_summary} translates the membership probabilities $q(Z_i)$ into the clinically readable profiles that make this diffuse boundary concrete.

The profile compression documented in the main text is the discrete-covariate signature of a diffuse boundary. Both baselines place the leading loading of the posterior principal direction on the interaction of female with African American (Table~\ref{tab:theta_hat_hs_methods}), yet the boundary second moment stays near its dispersion floor and no partition of the profiles gains posterior support. Because $Z$ is purely discrete the direction is at most set-identified, so even agreement of the point directions identifies a partition only weakly. The secondary analysis therefore provides no evidence for a subgroup boundary in the binary profiles, and we read its tables as descriptive summaries under $\tau$-regularization rather than as a discovery.

\subsection{Decision summaries and profiles for the secondary analysis}

This subsection collects the decision table and the membership and contrast figures for the secondary binary analysis at $\tau=0.035$, discussed in Section~\ref{sec:analysis_secondary}. The profile membership table appears as main-text Table~\ref{tab:premier_profile_summary}.

\begin{table*}
  \centering
  \caption{Secondary binary-profile analysis of PREMIER at $\tau=0.035$. Columns report the posterior probability of the clinically meaningful heterogeneity event $H_\delta=\{|\gamma|\ge 3\}$, the Bayes action under $p_{\mathrm{report}}=0.9$, boundary stability $\lambda_{\max}(M)$, average hard membership $\bar q$, and the heterogeneity contrast $\gamma$ with its credible interval.}
  \begin{adjustbox}{width=10cm}
  \begin{tabular}{c rrrrrr}
    \toprule
      \textbf{Model}
      & $\Pi(H_\delta\mid \mathcal D_n)$ & \textup{Bayes action} & $\lambda_{\max}(M)$
      & $\bar{q}$ & $\gamma$
      &  CI of $\gamma$   \\
    \midrule
    Parametric & 0.52 & $\mathsf{a}_0$ & 0.18 & 0.72 & -0.70 & (-18.08, 15.93)   \\
    BART       & 0.48 & $\mathsf{a}_0$ & 0.21 & 0.69 & -0.50 & (-7.98, 7.07)   \\
    \bottomrule
  \end{tabular}
  \end{adjustbox}
  \label{tab:premier_decision_binary}
\end{table*}

\begin{figure}
    \centering
    \begin{subfigure}[t]{0.48\textwidth}
    \includegraphics[width=\linewidth]{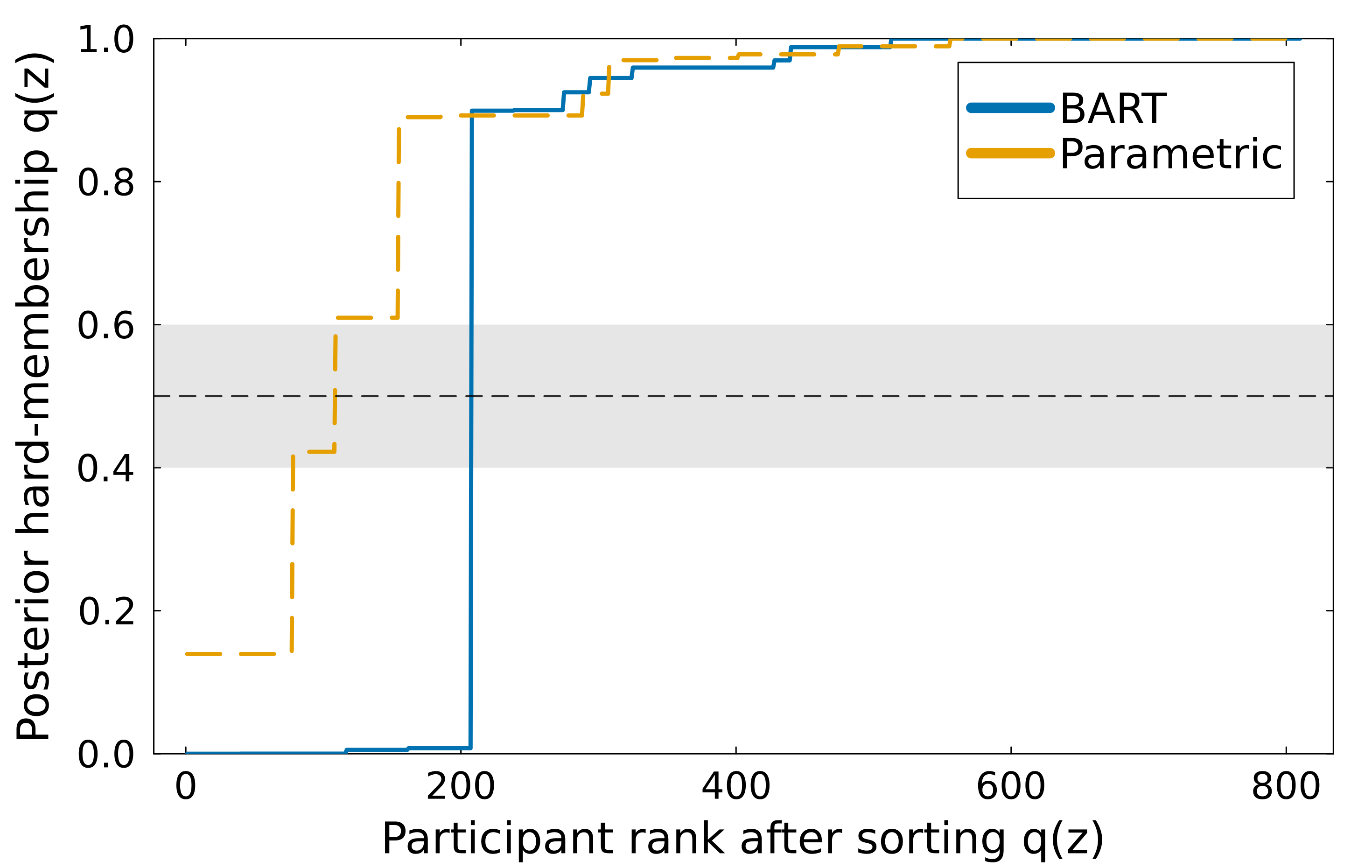}
    \caption{}
    \label{fig:s_sorted_0035}
    \end{subfigure}
    \begin{subfigure}[t]{0.48\textwidth}
    \includegraphics[width=\linewidth]{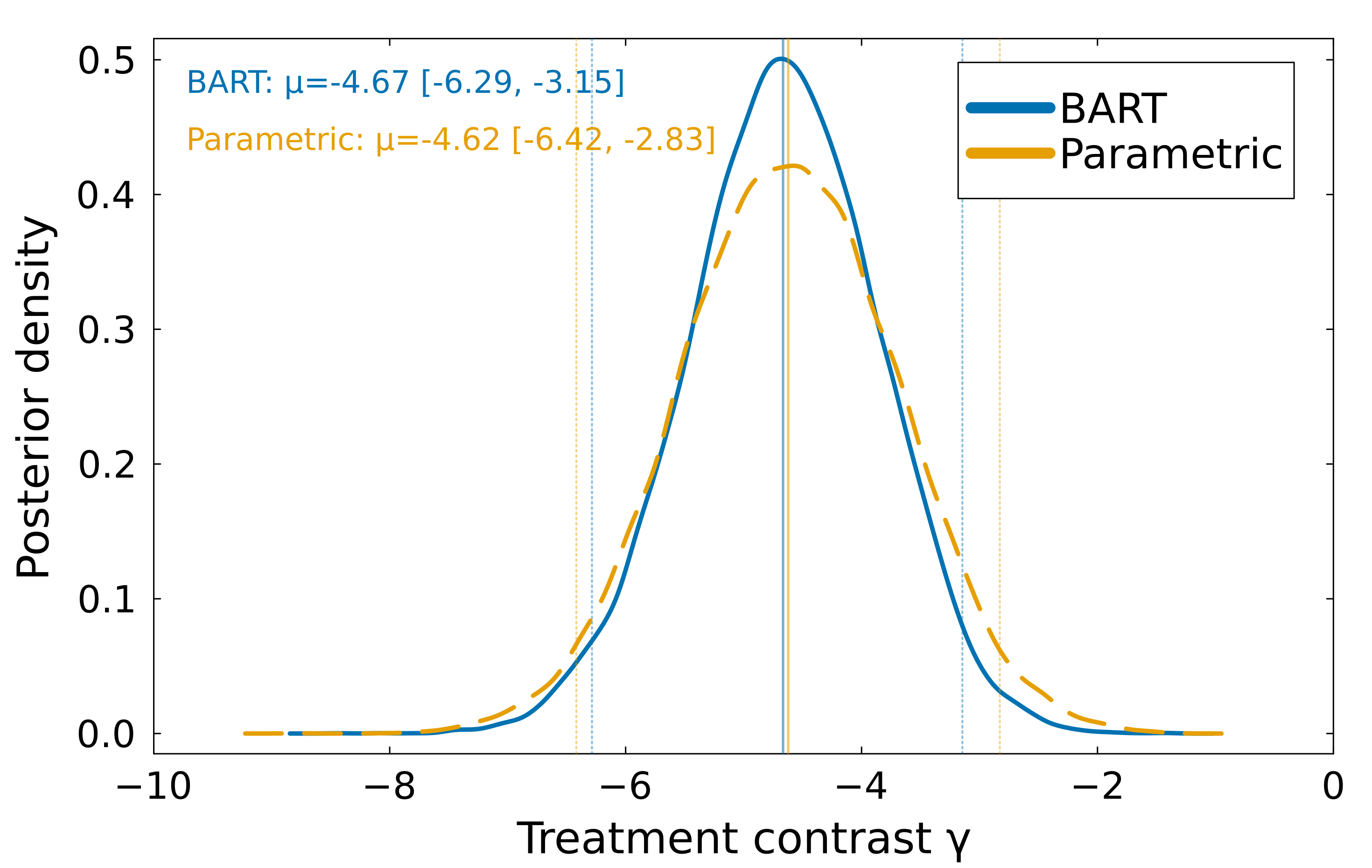}
    \caption{}
    \label{fig:treatment_0035}
    \end{subfigure}
    \caption{Secondary binary analysis at $\tau=0.035$. (a) Sorted posterior hard-membership probabilities $q(Z_i)$. The dashed horizontal line marks $q(Z_i)=0.5$. (b) Posterior densities for the heterogeneity contrast parameter. Dashed vertical lines mark $\pm\delta$ with $\delta=3$ mmHg.}
\end{figure}

\subsection{Sensitivity tables for the secondary analysis}
This subsection collects the figures and tables supporting the sensitivity analyses of the secondary binary analysis, discussed in Section~\ref{sec:sensitivity_analysis_delta_p}. Table~\ref{tab:sensitivity_delta_p} recomputes the reporting decision over the grid of $(\delta,p_{\mathrm{report}})$ preference inputs at $\tau=0.035$. Table~\ref{tab:contrast_decision_rule_result_tau01_001} reports the smoothing-sensitivity decision summaries at $\tau\in\{0.1,0.01\}$, Figures~\ref{fig:s_sorted_01} and \ref{fig:treatment_01} display the membership and contrast posteriors at $\tau=0.1$, and Figures~\ref{fig:s_sorted_001} and \ref{fig:treatment_001} display them at $\tau=0.01$.

For the secondary binary analysis the reporting inputs are preference quantities rather than properties of the data. At the deliberately low threshold $\delta=0.5$, the tail probability is $0.93$ under both baselines, so both report through $p_{\mathrm{report}}=0.90$ but withhold at $0.95$ and $0.99$. At $\delta=1$ both report only at thresholds at or below $0.85$. Neither baseline reports a boundary anywhere in the clinically plausible band. At $\delta=2$ the tail probability is $0.69$ under both baselines and at $\delta=3$ it is $0.52$ and $0.48$, so every displayed reporting threshold yields $\mathsf a_0$ (Table~\ref{tab:sensitivity_delta_p}). The headline pair $(3,0.9)$ therefore lies deep inside the withhold region rather than on its edge, and the decision surface is nearly identical under the two baselines.

Across the smoothing grid the secondary decision is uniformly $\mathsf a_0$. The tail probability rises with the smoothing scale, from $0.57$ and $0.65$ at $\tau=0.01$ to $0.72$ and $0.75$ at the oversmoothed $\tau=0.1$, but never approaches the reporting bar, while the posterior mean of $\gamma$ drifts from near zero at $\tau=0.035$ to between $-5$ and $-10$ mmHg with credible intervals spanning roughly $35$ mmHg (Table~\ref{tab:contrast_decision_rule_result_tau01_001}). Read against the aggregation rule of Section~\ref{sec:choice_tau}, the secondary analysis withholds at every smoothing value, and the drift of the point summaries with $\tau$ is a further caution against attaching scientific meaning to any single fit. One caveat remains. The window and its endpoints are derived under the continuous-margin condition, which the purely discrete $Z$ of this analysis violates, so the window is at most a heuristic guide here. All fits use the collapsed update of Section~\ref{sec:inference}.

\begin{table}[!htbp]
  \centering
  \caption{Sensitivity of the reporting decision to $(\delta, p_{\mathrm{report}})$ at $\tau=0.035$ for the secondary binary analysis, under the linear-HS and BART-HS baselines. For each pair we report the posterior tail probability $\Pi(|\gamma|\ge\delta\mid\mathcal D_n)$ and the resulting Bayes action $\mathsf{a}_1$ or $\mathsf{a}_0$. The headline choice $(\delta,p_{\mathrm{report}})=(3~\mathrm{mmHg},0.9)$ is highlighted in bold.}
  \label{tab:sensitivity_delta_p}
  \small
  \begin{adjustbox}{max width=\textwidth}
  \begin{tabular}{l c c *{6}{c}}
    \toprule
    Baseline & $\delta$ (mmHg) & $\Pi(|\gamma|\ge\delta\mid\mathcal D_n)$ & \multicolumn{6}{c}{Bayes action at $p_{\mathrm{report}}$} \\
    \cmidrule(lr){4-9}
     &  &  & $0.70$ & $0.80$ & $0.85$ & $\boldsymbol{0.90}$ & $0.95$ & $0.99$ \\
    \midrule
    \multirow{6}{*}{Linear-HS}
      & $0.5$ & $0.928$ & $\mathsf{a}_1$ & $\mathsf{a}_1$ & $\mathsf{a}_1$ & $\mathsf{a}_1$ & $\mathsf{a}_0$ & $\mathsf{a}_0$ \\
      & $1$ & $0.853$ & $\mathsf{a}_1$ & $\mathsf{a}_1$ & $\mathsf{a}_1$ & $\mathsf{a}_0$ & $\mathsf{a}_0$ & $\mathsf{a}_0$ \\
      & $2$ & $0.692$ & $\mathsf{a}_0$ & $\mathsf{a}_0$ & $\mathsf{a}_0$ & $\mathsf{a}_0$ & $\mathsf{a}_0$ & $\mathsf{a}_0$ \\
      & $\boldsymbol{3}$ & $0.519$ & $\mathsf{a}_0$ & $\mathsf{a}_0$ & $\mathsf{a}_0$ & $\boldsymbol{\mathsf{a}_0}$ & $\mathsf{a}_0$ & $\mathsf{a}_0$ \\
      & $4$ & $0.375$ & $\mathsf{a}_0$ & $\mathsf{a}_0$ & $\mathsf{a}_0$ & $\mathsf{a}_0$ & $\mathsf{a}_0$ & $\mathsf{a}_0$ \\
      & $5$ & $0.285$ & $\mathsf{a}_0$ & $\mathsf{a}_0$ & $\mathsf{a}_0$ & $\mathsf{a}_0$ & $\mathsf{a}_0$ & $\mathsf{a}_0$ \\
    \midrule
    \multirow{6}{*}{BART-HS}
      & $0.5$ & $0.938$ & $\mathsf{a}_1$ & $\mathsf{a}_1$ & $\mathsf{a}_1$ & $\mathsf{a}_1$ & $\mathsf{a}_0$ & $\mathsf{a}_0$ \\
      & $1$ & $0.864$ & $\mathsf{a}_1$ & $\mathsf{a}_1$ & $\mathsf{a}_1$ & $\mathsf{a}_0$ & $\mathsf{a}_0$ & $\mathsf{a}_0$ \\
      & $2$ & $0.690$ & $\mathsf{a}_0$ & $\mathsf{a}_0$ & $\mathsf{a}_0$ & $\mathsf{a}_0$ & $\mathsf{a}_0$ & $\mathsf{a}_0$ \\
      & $\boldsymbol{3}$ & $0.479$ & $\mathsf{a}_0$ & $\mathsf{a}_0$ & $\mathsf{a}_0$ & $\boldsymbol{\mathsf{a}_0}$ & $\mathsf{a}_0$ & $\mathsf{a}_0$ \\
      & $4$ & $0.289$ & $\mathsf{a}_0$ & $\mathsf{a}_0$ & $\mathsf{a}_0$ & $\mathsf{a}_0$ & $\mathsf{a}_0$ & $\mathsf{a}_0$ \\
      & $5$ & $0.162$ & $\mathsf{a}_0$ & $\mathsf{a}_0$ & $\mathsf{a}_0$ & $\mathsf{a}_0$ & $\mathsf{a}_0$ & $\mathsf{a}_0$ \\
    \bottomrule
  \end{tabular}
  \end{adjustbox}
\end{table}

\begin{table}[!htbp]
  \centering
  \caption{Secondary binary analysis, posterior summaries under $\tau\in\{0.1,0.01\}$ for the clinically meaningful heterogeneity event $H_\delta=\{|\gamma|\ge 3\}$, the Bayes action under $p_{\mathrm{report}}=0.9$, boundary stability, hard subgroup membership, and the heterogeneity contrast $\gamma$.}
  \begin{adjustbox}{width=10cm}
  \begin{tabular}{cl rrrrrr}
    \toprule
      $\tau$ &\textbf{Model}
       & $\Pi(H_\delta\mid \mathcal D_n)$ & \textup{Bayes action} & $\lambda_{\max}(M)$
      & $\bar{q}$ & $\gamma$
      &  CI of $\gamma$   \\
    \midrule
   \multirow{2}{*}{0.1}& Parametric & 0.72 & $\mathsf{a}_0$ & 0.36 & 0.82 & -9.99 & (-28.84, 6.62)   \\
    & BART       & 0.75 & $\mathsf{a}_0$ & 0.35 & 0.81 & -9.32 & (-29.03, 5.13)   \\
    \midrule
    \multirow{2}{*}{0.01}& Parametric & 0.57 & $\mathsf{a}_0$ & 0.19 & 0.78 & -6.74 & (-30.62, 5.13)   \\
    & BART       & 0.65 & $\mathsf{a}_0$ & 0.21 & 0.77 & -5.13 & (-28.24, 6.46)   \\
    \bottomrule
  \end{tabular}
  \end{adjustbox}
  \label{tab:contrast_decision_rule_result_tau01_001}
\end{table}

\begin{figure}
    \centering
    \begin{subfigure}[t]{0.48\textwidth}
    \includegraphics[width=\linewidth]{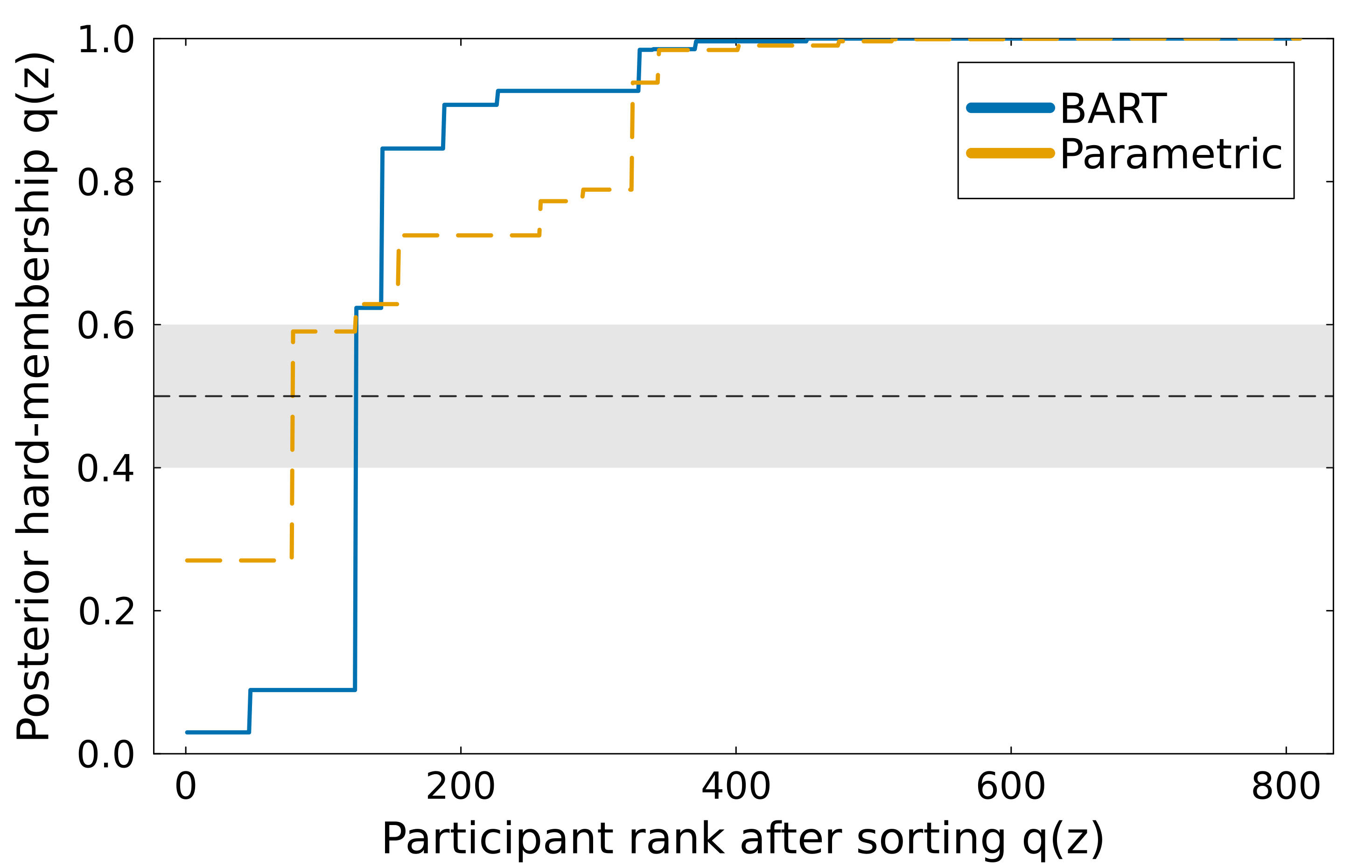}
    \caption{}
    \label{fig:s_sorted_01}
    \end{subfigure}
    \begin{subfigure}[t]{0.48\textwidth}
    \includegraphics[width=\linewidth]{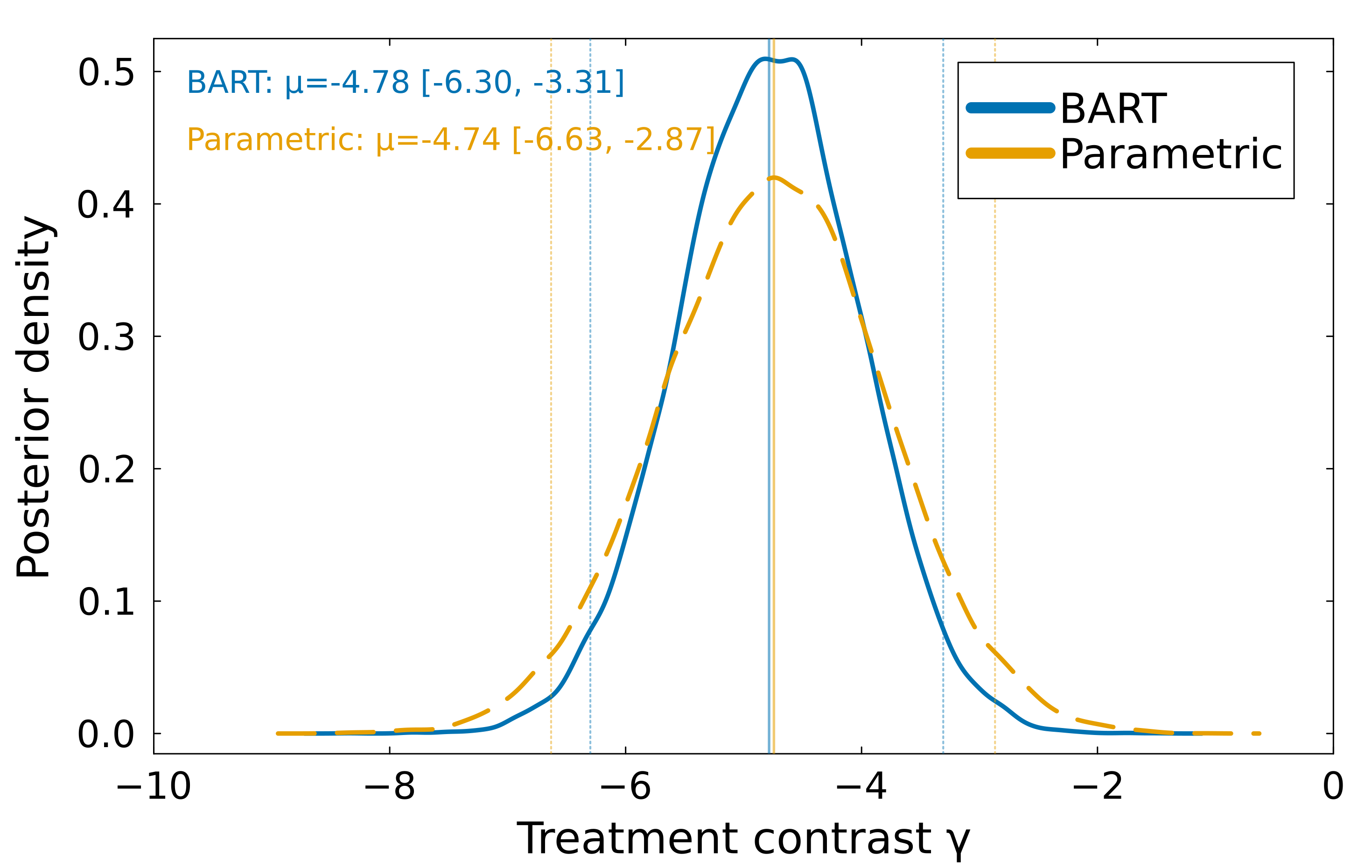}
    \caption{}
    \label{fig:treatment_01}
    \end{subfigure}
    \caption{Secondary binary analysis under $\tau=0.1$. (a) Sorted posterior hard-membership probabilities $q(Z_i)$. The dashed horizontal line marks $q(Z_i)=0.5$. (b) Posterior densities for the heterogeneity contrast parameter.}
\end{figure}

\begin{figure}
    \centering
    \begin{subfigure}[t]{0.48\textwidth}
    \includegraphics[width=\linewidth]{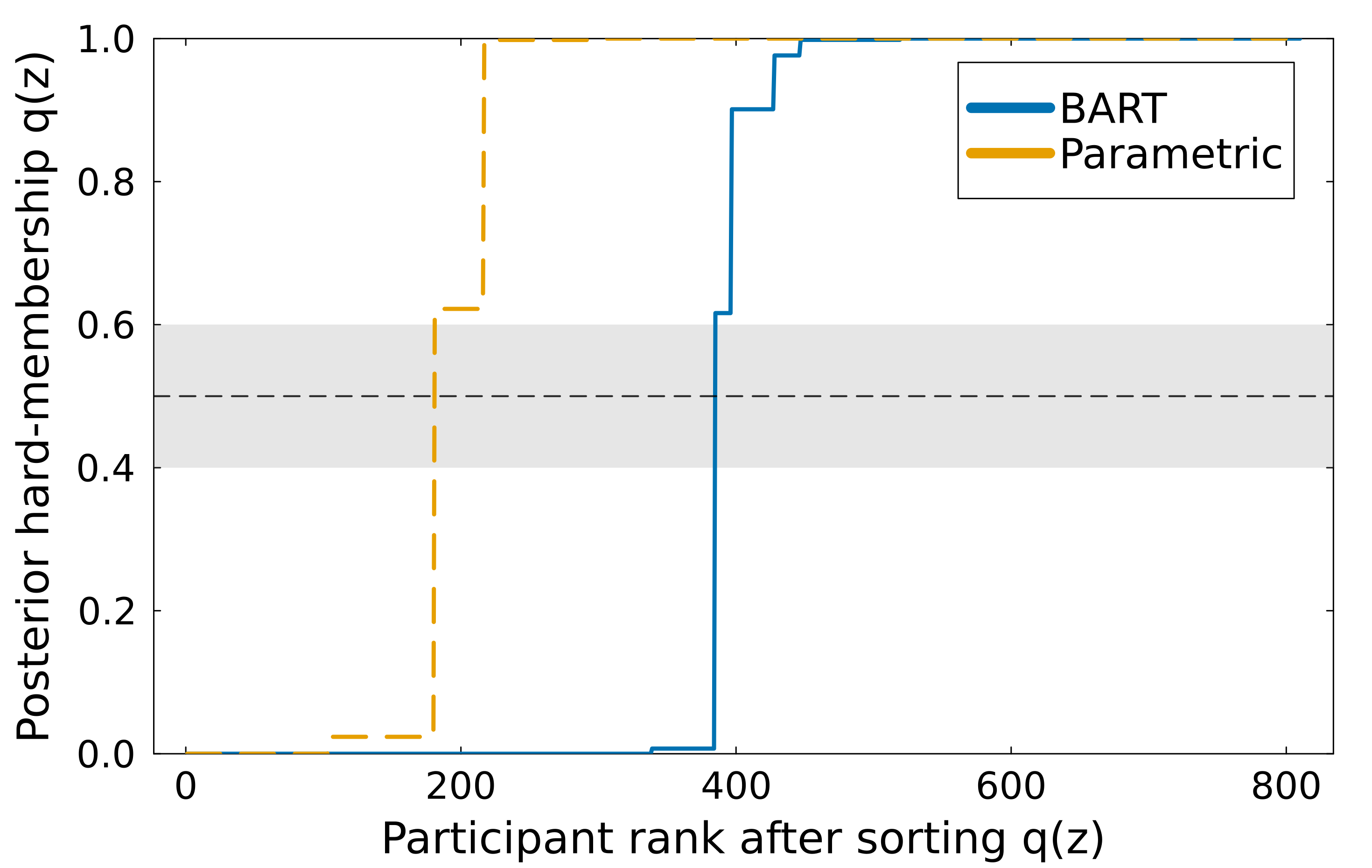}
    \caption{}
    \label{fig:s_sorted_001}
    \end{subfigure}
    \begin{subfigure}[t]{0.48\textwidth}
    \includegraphics[width=\linewidth]{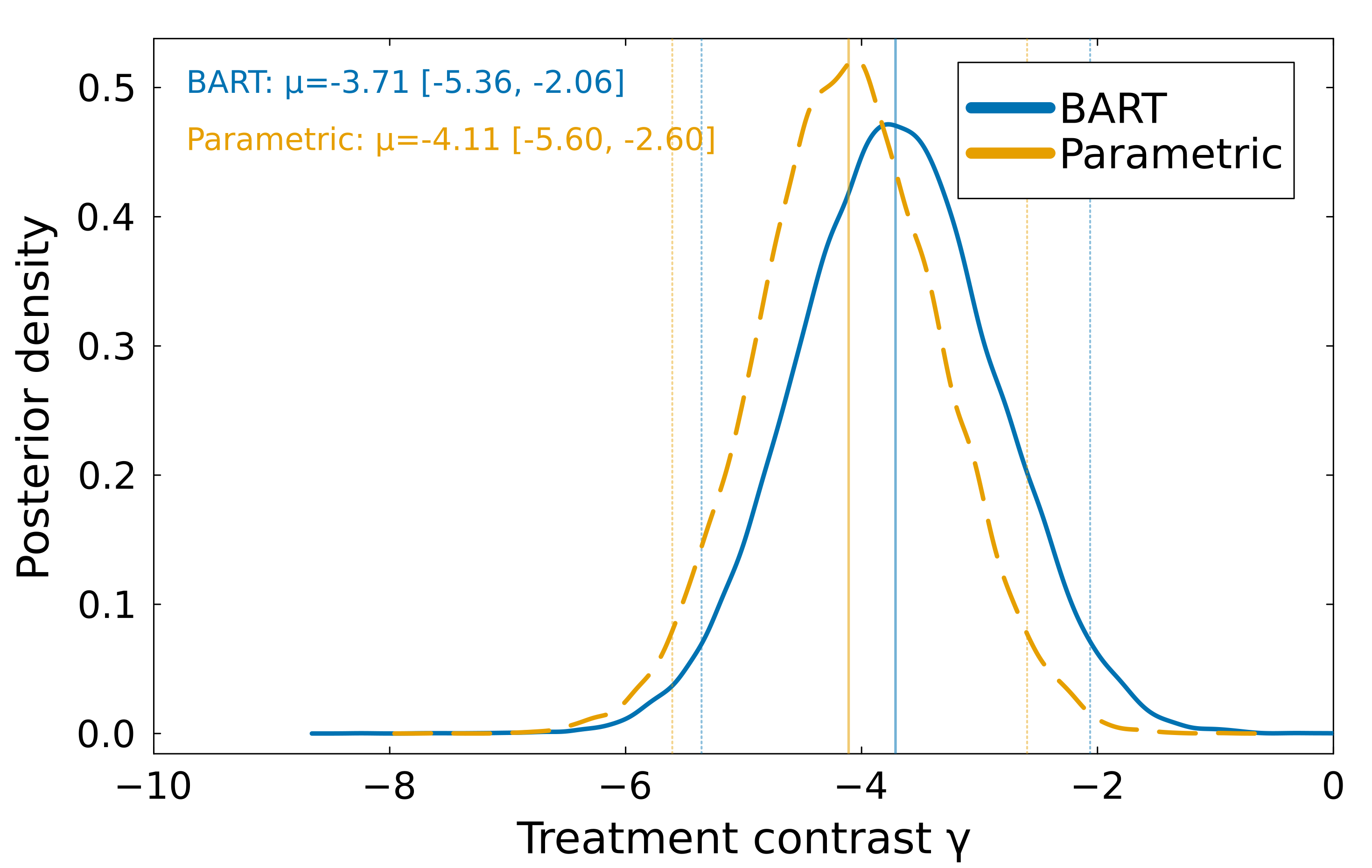}
    \caption{}
    \label{fig:treatment_001}
    \end{subfigure}
    \caption{Secondary binary analysis under $\tau=0.01$. (a) Sorted posterior hard-membership probabilities $q(Z_i)$. The dashed horizontal line marks $q(Z_i)=0.5$. (b) Posterior densities for the heterogeneity contrast parameter.}
\end{figure}

\subsection{Arm-specific sensitivity for the primary analysis}
Table~\ref{tab:premier_arms} reports the prespecified arm-specific sensitivity analysis under the parametric baseline across the full smoothing grid, comparing each active arm with advice only. Both arms withhold a boundary report at every grid value. For the established recommendations arm the posterior probability of the clinically meaningful event stays between $0.53$ and $0.56$ and the boundary is diffuse, with $\lambda_{\max}(M)$ between $0.23$ and $0.24$ against the dispersion floor $1/6\approx0.17$. For the arm augmented with the structured dietary component the probability is higher, between $0.72$ and $0.75$, and the boundary second moment concentrates on a nearly pure age direction, with $\lambda_{\max}(M)$ between $0.61$ and $0.66$ at every grid value. This concentration is suggestive, but the heterogeneity evidence stays short of the reporting bar, so the protocol withholds in the dietary arm as well and no age threshold is interpreted. The treatment effect inside the subgroup, $\delta+\gamma$, is stable over the entire grid in both arms, near $-4.3$ mmHg in the established recommendations arm and between $-4.8$ and $-5.0$ mmHg in the dietary arm, whereas the split into $\delta$ and $\gamma$ carries posterior standard deviations of $2.6$ to $5.1$ mmHg, the weakly identified decomposition discussed in Section~\ref{sec:analysis_result}.

\begin{table}[!htbp]
  \centering
  \caption{Arm-specific sensitivity analysis of PREMIER (parametric baseline, each active arm against advice only) across the smoothing grid. Columns report the posterior mean (standard deviation) of the baseline treatment effect $\delta$ and of the heterogeneity contrast $\gamma$, the posterior mean and 95\% credible interval of the treatment effect inside the subgroup $\delta+\gamma$, the posterior probability of $H_\delta=\{|\gamma|\ge3\}$, the Bayes action under $p_{\mathrm{report}}=0.9$, and the boundary stability $\lambda_{\max}(M)$.}
  \label{tab:premier_arms}
  \begin{adjustbox}{max width=\textwidth}
  \begin{tabular}{clrrrrcc}
    \toprule
    Arm & $\tau$ & $\delta$ (SD) & $\gamma$ (SD) & $\delta+\gamma$ (95\% CI) & $\Pi(H_\delta\mid\mathcal D_n)$ & Bayes action & $\lambda_{\max}(M)$\\
    \midrule
    \multirow{6}{*}{Established recommendations}
      & $0.005$ & $-2.79$ $(3.85)$ & $-1.70$ $(4.63)$ & $-4.49$ $(-10.47, -1.48)$ & $0.551$ & $\mathsf{a}_0$ & $0.239$\\
      & $0.01$  & $-2.29$ $(4.37)$ & $-2.02$ $(4.88)$ & $-4.32$ $(-9.80, -1.47)$ & $0.536$ & $\mathsf{a}_0$ & $0.234$\\
      & $0.02$  & $-3.13$ $(4.00)$ & $-1.18$ $(4.70)$ & $-4.32$ $(-9.84, -1.47)$ & $0.527$ & $\mathsf{a}_0$ & $0.241$\\
      & $0.035$ & $-2.68$ $(3.83)$ & $-1.64$ $(4.47)$ & $-4.32$ $(-9.49, -1.51)$ & $0.526$ & $\mathsf{a}_0$ & $0.235$\\
      & $0.05$  & $-2.80$ $(4.33)$ & $-1.51$ $(5.03)$ & $-4.31$ $(-10.06, -1.37)$ & $0.561$ & $\mathsf{a}_0$ & $0.245$\\
      & $0.1$   & $-2.68$ $(4.45)$ & $-1.55$ $(5.13)$ & $-4.23$ $(-9.34, -1.35)$ & $0.549$ & $\mathsf{a}_0$ & $0.232$\\
    \midrule
    \multirow{6}{*}{Established plus dietary}
      & $0.005$ & $-2.94$ $(2.76)$ & $-1.99$ $(4.17)$ & $-4.93$ $(-8.05, -0.79)$ & $0.739$ & $\mathsf{a}_0$ & $0.656$\\
      & $0.01$  & $-3.03$ $(3.01)$ & $-1.92$ $(4.25)$ & $-4.95$ $(-7.95, -0.92)$ & $0.723$ & $\mathsf{a}_0$ & $0.616$\\
      & $0.02$  & $-3.36$ $(2.57)$ & $-1.43$ $(4.13)$ & $-4.79$ $(-7.99, -0.72)$ & $0.722$ & $\mathsf{a}_0$ & $0.645$\\
      & $0.035$ & $-3.36$ $(2.89)$ & $-1.41$ $(4.34)$ & $-4.77$ $(-8.01, -0.74)$ & $0.738$ & $\mathsf{a}_0$ & $0.631$\\
      & $0.05$  & $-3.17$ $(2.72)$ & $-1.66$ $(4.23)$ & $-4.83$ $(-8.02, -0.68)$ & $0.733$ & $\mathsf{a}_0$ & $0.630$\\
      & $0.1$   & $-3.39$ $(3.47)$ & $-1.43$ $(4.80)$ & $-4.82$ $(-8.10, -0.72)$ & $0.746$ & $\mathsf{a}_0$ & $0.607$\\
    \bottomrule
  \end{tabular}
  \end{adjustbox}
\end{table}

\subsection{Boundary loadings for the primary analysis}
Table~\ref{tab:premier_primary_loadings} reports the posterior principal direction of the pooled primary analysis at the central in-window value $\tau=0.02$ under both baselines. The two directions disagree. The parametric direction loads almost entirely on baseline SBP ($0.95$), whereas the BART direction spreads its weight over African American ($-0.76$), age ($0.45$), and baseline SBP ($0.31$), and both are diffuse ($\lambda_{\max}(M)=0.26$ under either baseline against the dispersion floor $1/6\approx0.17$). Over the whole smoothing grid the pooled $\lambda_{\max}(M)$ stays between $0.26$ and $0.32$ for both baselines, and the leading coordinate is itself unstable, the parametric direction switching to an intercept and hypertension pattern at $\tau=0.1$. No coordinate pattern is stable enough to interpret, which is the quantitative basis for the Step~3(a) decision of Section~\ref{sec:analysis_result}. The contrast with the arm-specific analysis is sharp. Against advice only, the dietary arm concentrates its direction on age in years (loading $0.98$, $\lambda_{\max}(M)$ between $0.61$ and $0.66$) at every grid value, although its heterogeneity evidence also remains below the reporting bar (Table~\ref{tab:premier_arms}).

\begin{table}[!htbp]
  \centering
  \caption{Posterior principal direction of the pooled primary analysis at $\tau=0.02$, by baseline, with the boundary stability $\lambda_{\max}(M)$. Covariates are standardized. The direction is reported up to sign.}
  \label{tab:premier_primary_loadings}
  \begin{adjustbox}{max width=0.75\textwidth}
  \begin{tabular}{lrr}
    \toprule
    Covariate & Parametric & BART\\
    \midrule
    intercept & $0.060$ & $0.255$\\
    age (years) & $0.105$ & $0.453$\\
    baseline SBP & $0.946$ & $0.312$\\
    female & $-0.080$ & $-0.202$\\
    African American & $-0.288$ & $-0.764$\\
    baseline hypertension & $0.022$ & $0.089$\\
    \midrule
    $\lambda_{\max}(M)$ & $0.257$ & $0.260$\\
    \bottomrule
  \end{tabular}
  \end{adjustbox}
\end{table}

\subsection{Membership distributions for the primary analysis}
The hard-membership probabilities $q(Z_i)$ of the pooled primary analysis are compressed in a narrow band above one half. Under the parametric baseline at $\tau=0.02$ the sample quartiles are $0.62$, $0.67$, and $0.71$, with minimum $0.54$ and maximum $0.89$, and the BART baseline gives a nearly identical distribution, so no participant is confidently assigned to either side and no boundary separates the sample. This is the membership-level signature of a diffuse direction, and it is stable across the smoothing grid and across baselines. It contrasts with the dietary-arm analysis, where the age-dominated direction places roughly two fifths of the arm below one half and produces a genuine gradient of membership in age.

\subsection{Per-chain diagnostics for the primary analysis}
Table~\ref{tab:premier_primary_diag} summarizes the convergence diagnostics of the pooled primary analysis, computed from four dispersed chains per fit for the treatment effect inside the subgroup, $\delta+\gamma$, and for the heterogeneity contrast $\gamma$. The identified functional $\delta+\gamma$ passes comfortably everywhere, with worst rank-normalized $\widehat R$ of $1.010$ and smallest bulk effective sample size of $289$ over the pooled grid, and the arm-specific fits behave the same way, with $\widehat R$ at most $1.012$ and bulk ESS at least $231$. The split components $\gamma$ and $\delta$ mix more slowly, with in-window pooled bulk ESS between $110$ and $339$ and $\widehat R$ up to $1.034$, after extending four pooled fits, the parametric at $\tau\in\{0.05,0.1\}$ and BART at $\tau\in\{0.035,0.1\}$, to $60000$ iterations. This reflects the geometry of the posterior rather than a sampler failure. Near a diffuse boundary the posterior correlation between $\gamma$ and $\delta$ is $-0.94$ under the parametric baseline and $-0.95$ under BART, so the data pin the sum while the decomposition is weakly identified and the chain traverses a long ridge. All fits use the collapsed update. An earlier augmented-update run of the two oversmoothed BART fits failed these diagnostics with $\widehat R$ up to $1.40$, the wide-gate pathology documented in Section~\ref{sec:simulations}.

\begin{table}[!htbp]
  \centering
  \caption{Convergence diagnostics of the pooled primary analysis by baseline and smoothing value. Rank-normalized $\widehat R$ and bulk ESS for the treatment effect inside the subgroup $\delta+\gamma$ and for the heterogeneity contrast $\gamma$, four dispersed chains per fit.}
  \label{tab:premier_primary_diag}
  \begin{adjustbox}{max width=0.85\textwidth}
  \begin{tabular}{lrrrrrr}
    \toprule
    & \multicolumn{6}{c}{Smoothing value $\tau$}\\
    \cmidrule(lr){2-7}
    & $0.005$ & $0.01$ & $0.02$ & $0.035$ & $0.05$ & $0.1$\\
    \midrule
    Parametric, $\widehat R$ of $\delta+\gamma$ & $1.005$ & $1.006$ & $1.001$ & $1.004$ & $1.001$ & $1.002$\\
    Parametric, ESS of $\delta+\gamma$ & $1164$ & $1077$ & $1178$ & $1059$ & $2906$ & $2639$\\
    Parametric, $\widehat R$ of $\gamma$ & $1.027$ & $1.030$ & $1.017$ & $1.010$ & $1.004$ & $1.007$\\
    Parametric, ESS of $\gamma$ & $215$ & $157$ & $189$ & $203$ & $257$ & $190$\\
    BART, $\widehat R$ of $\delta+\gamma$ & $1.009$ & $1.010$ & $1.008$ & $1.003$ & $1.010$ & $1.001$\\
    BART, ESS of $\delta+\gamma$ & $514$ & $289$ & $610$ & $2020$ & $550$ & $1036$\\
    BART, $\widehat R$ of $\gamma$ & $1.009$ & $1.017$ & $1.034$ & $1.007$ & $1.015$ & $1.023$\\
    BART, ESS of $\gamma$ & $221$ & $130$ & $110$ & $339$ & $130$ & $95$\\
    \bottomrule
  \end{tabular}
  \end{adjustbox}
\end{table}

\end{document}